\newcommand{\cH}{\mathcal H}
\DeclareMathOperator{\sdp}{\it \mathrm{SDP}}
\numberwithin{equation}{section}
\newtheorem{theorem}{Theorem}[section]
\newtheorem{lemma}[theorem]{Lemma}
\newtheorem*{claim*}{Claim}
\newtheorem{conjecture}[theorem]{Conjecture}
\newtheorem{question}[theorem]{Question}
\newtheorem*{conjecture*}{Conjecture}
\newtheorem*{definition*}{Definition}
\newtheorem*{theorem*}{Theorem}
\newtheorem*{remark*}{Remark}
\newtheorem*{question*}{Question}
\newtheorem{corollary}[theorem]{Corollary}
\newtheorem{proposition}[theorem]{Proposition}
\newtheorem{definition}[theorem]{Definition}
\newtheorem{remark}[theorem]{Remark}
\newtheorem*{lem*}{Lemma}
\theoremstyle{definition}
\theoremstyle{remark}
\newcommand{\gwb}{\mathrm{GWB}}
\newcommand{\R}{\mathbb{R}}
\newcommand{\C}{\mathbb{C}}
\newcommand{\N}{\mathbb{N}}
\newcommand{\tr}{\operatorname{tr}}
\newcommand{\abs}[1]{\lvert #1 \rvert}
\newcommand{\ip}[2]{\langle #1 , #2\rangle}
\newcommand{\bigip}[2]{\bigl\langle #1, #2 \bigr\rangle}
\newcommand{\Bigip}[2]{\Bigl\langle #1, #2 \Bigr\rangle}
\newcommand{\X}{\mathcal{X}}
\newcommand{\Z}{\mathbb{Z}}
\newcommand{\M}{\mathrm{M}}
\newcommand{\fid}{\mathrm{fid}}
\newcommand{\expect}{\mathbb{E}}
\newcommand{\microspace}{\mspace{.5mu}} %
\newcommand{\ket}[1]{\ensuremath{\lvert\microspace #1
    \microspace\rangle}} %
\newcommand{\bra}[1]{\ensuremath{\langle\microspace #1
    \microspace\rvert}} %    
\newcommand{\ketbra}[2]{\ensuremath{\lvert\microspace #1
    \microspace\rangle\! \langle \microspace #2 \microspace \rvert}} %
\newcommand{\paren}[1]{(#1)}
\newcommand{\bigparen}[1]{\big(#1\big)}
\newcommand{\Bigparen}[1]{\Big(#1\Big)}
\newcommand{\class}[1]{\mathsf{#1}} %
\newcommand{\NP}{\class{NP}} %
\newcommand{\Pp}{\class{P}} %
\newcommand{\QMA}{\class{QMA}} %
\newcommand{\QMIP}{\class{QMIP}} %
\newcommand{\MIP}{\class{MIP}} %
\newcommand{\RE}{\class{RE}} %
\newcommand{\linlegacy}[2]{\mathrm{Max}\textit{-}{#1}\textit{-}\mathrm{Lin}({#2})}
\newcommand{\lin}[2]{\mathrm{Max}\textit{-}\mathrm{Lin}({#2})}
\newcommand{\slin}[2]{\mathrm{SMax}\textit{-}\mathrm{Lin}({#2})}
\newcommand{\homlin}[2]{\mathrm{HMax}\textit{-}\mathrm{Lin}({#2})}
\newcommand{\cut}[1]{\mathrm{Max}\textit{-}{#1}\textit{-}\mathrm{Cut}}
\newcommand{\maxcut}{\mathrm{Max}\textit{-}\mathrm{Cut}}
\newcommand{\ugame}[1]{\mathrm{Unique}\textit{-}\mathrm{Game}({#1})}
\newcommand{\nlin}[2]{\mathrm{NC}\textit{-}\mathrm{Max}\textit{-}\mathrm{Lin}({#2})}
\newcommand{\ulin}[2]{\mathrm{Unitary}\textit{-}\mathrm{SMax}\textit{-}\mathrm{Lin}({#2})}
\newcommand{\nslin}[2]{\mathrm{NC}\textit{-}\mathrm{SMax}\textit{-}\mathrm{Lin}({#2})}
\newcommand{\nhomlin}[2]{\mathrm{NC}\textit{-}\mathrm{HMax}\textit{-}\mathrm{Lin}({#2})}
\newcommand{\ncut}[1]{\mathrm{NC}\textit{-}\mathrm{Max}\textit{-}{#1}\textit{-}\mathrm{Cut}}
\newcommand{\sopt}{\mathrm{S}\textit{-}\mathrm{OPT}}
\newcommand{\ssdp}{\mathrm{S}\textit{-}\mathrm{SDP}}
\newcommand{\nsopt}{\mathrm{NC}\textit{-}\mathrm{S}\textit{-}\mathrm{OPT}}
\newcommand{\nopt}{\mathrm{NC}\textit{-}\mathrm{OPT}}
\newcommand{\rel}{\delta}
\newcommand{\Rel}{\Delta}
\newcommand{\val}{\mathrm{val}}
\newcommand{\Val}{\mathrm{VAL}}
\newcommand{\eps}{\varepsilon}
\newcommand{\poly}{\mathrm{poly}}
\newcommand{\coRE}{\mathsf{coRE}}
\newcommand{\Mod}[1]{\ (\mathrm{mod}\ #1)}
\renewcommand{\hat}[1]{\widehat{#1}} 
\renewcommand{\tilde}[1]{\widetilde{#1}}
\newcommand{\Id}{1}
\let\epsilon=\varepsilon %
\newcommand{\set}[1]{\{#1\}}
\newcommand\QMIP*{\ensuremath{\class{QMIP}^*}} %
\newcommand\MIP*{\ensuremath{\class{MIP}^*}} %
\def\01{\{0,1\}}
\newcommand{\braket}[2]{\langle #1 | #2 \rangle}
\DeclareMathOperator{\SDP}{SDP}
\DeclareMathOperator{\opt}{\mathrm{OPT}}
\DeclareMathOperator{\argmax}{argmax}
\newcommand{\ttt}[1]{\texttt{#1}}
\newcommand{\mc}[1]{\mathcal{#1}}
\newcommand{\mrm}[1]{\mathrm{#1}}
\newcommand{\scr}[1]{\mathscr{#1}}
\renewcommand{\Re}{\mathrm{Re}}
\DeclarePairedDelimiter\parens{\lparen}{\rparen}
\DeclarePairedDelimiter\squ{[}{]}
\DeclarePairedDelimiter\floor{\lfloor}{\rfloor}
\DeclarePairedDelimiter\absm{\lvert}{\rvert}
\DeclarePairedDelimiter\norm{\|}{\|}
\DeclareMathOperator*{\expec}{\scalerel*{\vphantom{\mathbb{E}}\text{\raisebox{0.3pt}{\scalebox{0.9}{$\mathbb{E}$}}}}{\sum}}
\DeclareDocumentCommand\ketbra{s m g}{
	\IfBooleanTF{#1}{
		\IfNoValueTF{#3}{
			\vphantom{#2}\left\lvert{#2}\middle\rangle\!\middle\langle{#2}\right\rvert
		}{
			\vphantom{#2#3}\left\lvert{#2}\middle\rangle\!\middle\langle{#3}\right\rvert
		}
	}{
		\IfNoValueTF{#3}{
			\vphantom{#2}\left\lvert\smash{#2}\middle\rangle\!\middle\langle\smash{#2}\right\rvert
		}{
			\vphantom{#2#3}\left\lvert\smash{#2}\middle\rangle\!\middle\langle\smash{#3}\right\rvert
		}
	}
}
\DeclareDocumentCommand\braket{s m g g}{
	\IfBooleanTF{#1}{
		\IfNoValueTF{#3}{
			\vphantom{#2}\left\langle{#2}\middle\vert{#2}\right\rangle
		}{
			\IfNoValueTF{#4}{
				\vphantom{#2#3}\left\langle{#2}\middle\vert{#3}\right\rangle
			}{
				\vphantom{#2#3#4}\left\langle{#2}\middle\rvert{#3}\middle\rvert{#4}\right\rangle
			}
		}
	}{
		\IfNoValueTF{#3}{
			\vphantom{#2}\left\langle\smash{#2}\middle\vert\smash{#2}\right\rangle
		}{
			\IfNoValueTF{#4}{
				\vphantom{#2#3}\left\langle\smash{#2}\middle\vert\smash{#3}\right\rangle
			}{
				\vphantom{#2#3#4}\left\langle\smash{#2}\middle\rvert\smash{#3}\middle\rvert\smash{#4}\right\rangle
			}
		}
	}
}
\DeclareDocumentCommand\group{s m g}{
	\IfBooleanTF{#1}{
		\IfNoValueTF{#3}{
			\vphantom{#2}\left\langle{#2}\right\rangle
		}{
			\vphantom{#2#3}\left\langle{#2}:{#3}\right\rangle
		}
	}{
		\IfNoValueTF{#3}{
			\vphantom{#2}\left\langle\smash{#2}\right\rangle
		}{
            \vphantom{#2#3}\left\langle\smash{#2}:\smash{#3}\right\rangle
		}
	}
}
\DeclareDocumentCommand\set{s o m g}{
	\IfBooleanTF{#1}{
		\IfNoValueTF{#4}{
			\left\{{#3}\right\}
		}{
			\left\{{#3}\middle\vert{#4}\right\}
		}
	}{
		\IfNoValueTF{#2}{
			\IfNoValueTF{#4}{
				\{{#3}\}
			}{
				\{{#3}\vert{#4}\}
			}
		}{
			\IfNoValueTF{#4}{
				#2\{{#4}#2\}
			}{
				#2\{{#3}#2\vert{#4}#2\}
			}
		}
	}
}
\DeclareFontFamily{OMX}{MnSymbolE}{}
\DeclareSymbolFont{MnLargeSymbols}{OMX}{MnSymbolE}{m}{n}
\DeclareFontShape{OMX}{MnSymbolE}{m}{n}{
    <-6>  MnSymbolE5
   <6-7>  MnSymbolE6
   <7-8>  MnSymbolE7
   <8-9>  MnSymbolE8
   <9-10> MnSymbolE9
  <10-12> MnSymbolE10
  <12->   MnSymbolE12
}{}
\DeclareFontShape{OMX}{MnSymbolE}{b}{n}{
    <-6>  MnSymbolE-Bold5
   <6-7>  MnSymbolE-Bold6
   <7-8>  MnSymbolE-Bold7
   <8-9>  MnSymbolE-Bold8
   <9-10> MnSymbolE-Bold9
  <10-12> MnSymbolE-Bold10
  <12->   MnSymbolE-Bold12
}{}
\let\llangle\@undefined
\let\rrangle\@undefined
\DeclareMathDelimiter{\llangle}{\mathopen}%
                     {MnLargeSymbols}{'164}{MnLargeSymbols}{'164}
\DeclareMathDelimiter{\rrangle}{\mathclose}%
                     {MnLargeSymbols}{'171}{MnLargeSymbols}{'171}
\begin{document}

\title{Approximation Algorithms for Noncommutative CSPs}
\author[1]{Eric Culf}
\author[2]{Hamoon Mousavi}
\author[3]{Taro Spirig}
\affil[1]{Faculty of Mathematics and Institute for Quantum Computing, University of Waterloo\footnote{\ttt{eculf@uwaterloo.ca}}}
\affil[2]{Simons Institute for Theoretical Computer Science, University of California at Berkeley\footnote{\ttt{hmousavi@berkeley.edu}}}
\affil[3]{QMATH, Department of Mathematical Sciences, University of Copenhagen\footnote{\ttt{tasp@math.ku.dk}}}
\maketitle

\begin{abstract}
\emph{Noncommutative} constraint satisfaction problems (NC-CSPs) are \emph{higher-dimensional operator} extensions of classical CSPs. Despite their significance in quantum information, their approximability remains largely unexplored. A notable example of a noncommutative CSP that is not solvable in polynomial time is NC-Max-$3$-Cut. We present a $0.864$-approximation algorithm for this problem. Our approach extends to a broader class of both classical and noncommutative CSPs. We introduce three key concepts: \emph{approximate isometry}, \emph{relative distribution}, and \emph{$\ast$-anticommutation}, which may be of independent interest.
\end{abstract}
\tableofcontents

\section{Introduction}\label{sec:introduction}
\subsection{Motivation}\label{sec:motivation-background} 
A notable example of a constraint satisfaction problem (CSP) is $\maxcut$, where we are given a graph $G = (V,E)$ and are asked to partition the vertices into two subsets such that the number of edges crossing the partition is maximized. It is natural to express this as a polynomial optimization, illustrated in Figure \ref{fig:maxcut-illustration}.

\begin{figure}[h]
\centering
\begin{subfigure}{0.49\textwidth}
    \centering
	\begin{tikzpicture}
        \node[circle,fill=green, inner sep=0.05cm] (1) at (0,0) {};
        \node[circle,fill=blue, inner sep=0.05cm] (2) at (1,1) {};
        \node[circle,fill=blue, inner sep=0.05cm] (3) at (1.366,-0.366) {};
        \node[circle,fill=green, inner sep=0.05cm] (4) at (2.41,1) {};
        \node[circle,fill=green, inner sep=0.05cm] (5) at (-1.22,0.71) {};
        \node[circle,fill=blue, inner sep=0.05cm] (6) at (-1.22,-0.71) {};
        \draw[] (1) -- (2);
        \draw[] (1) -- (3);
        \draw[] (2) -- (3);
        \draw[] (2) -- (4);
        \draw[] (1) -- (5);
        \draw[] (1) -- (6);
        \draw[thick, red, dashed] (2.45,0) -- (2.2,0.5) .. controls (1.5,2) and (0.75,2)  .. (0.5,0.5) .. controls (0.25,-1) and (-0.5,-1) .. (-1,-0.5) -- (-1.7,0.2);
        
        \node at (1) [above=0.7cm,red]{$-1$};
        \node at (3) [below right=0.2cm,red]{$+1$};
    \end{tikzpicture}
    \caption{Example of a cut}
\end{subfigure}
\hfill
\begin{subfigure}{0.49\textwidth}
    \begin{equation*}
    \openup\jot
    \begin{aligned}[t]
    \text{ maximize:}\quad &\sum_{(i,j) \in E} \frac{1-x_ix_j}{2} \\
            \text{subject to:}\quad & x_i \in \{-1,+1\}.
    \end{aligned}
    \end{equation*}
    \vspace{0cm}
    \caption{Max-Cut as a polynomial optimization}
\end{subfigure}
\caption{An assignment of $\pm 1$ to each variable $x_i$ indicates which side of the partition the corresponding vertex $i$ resides. Therefore, the expression $\tfrac{1-x_ix_j}{2}$ in the objective function is $1$ if the edge $(i,j)$ crosses the partition and zero otherwise.}
\label{fig:maxcut-illustration}
\end{figure}

What if instead of letting the variables represent unit scalars $\pm 1$, we let them represent unitary matrices\footnote{A unitary matrix is a matrix $X$ which satisfies $X X^* = X^* X=1$ where $X^*$ denotes the conjugate transpose of $X$ and $1$ denotes the identity matrix.} with $\pm 1$ eigenvalues? To get a number back out, we take their trace inner product. More precisely, we consider the following \emph{noncommutative optimization} 
\begin{equation}\label{eq:ncmaxcut-in-intro}
\openup\jot
\begin{aligned}[t]
\text{ maximize:}\quad &\sum_{(i,j)\in E} \frac{1-\bigip{X_i}{X_j}}{2} \\
        \text{subject to:}\quad & X_i\text{ unitary with eigenvalues }\pm 1.\\
\end{aligned}
\end{equation}
Here $\ip{X}{Y} = \tr(X^* Y)$ is the dimension-normalized Hilbert-Schmidt inner product. Importantly, we are putting no restriction on the dimension of the operators. Note that (classical) $\maxcut$ is the one-dimensional restriction. $\maxcut$ is $\NP$-hard, so surely the noncommutative variant should be at least as hard! Indeed, since the dimension is not bounded, it is not even clear if the noncommutative problem is decidable. Surprisingly, this turns out to be false as shown by Tsirelson \cite{tsirelson1,tsirelson2}: The value of Eq. \ref{eq:ncmaxcut-in-intro} can be computed in polynomial time. We present a proof of this fact in Section \ref{sec:approximate-isometries}. 

Figure \ref{fig:transition-max-cut} illustrates our knowledge of the complexity of both classical and noncommutative $\maxcut$. The $0.878$-approximation algorithm\footnote{The \emph{approximation ratio} of an algorithm $\mathcal{A}$ for a CSP such as $\maxcut$ is the quantity $\inf_I \frac{\mathcal{A}(I)}{\opt(I)}$
where $I$ ranges over all possible instances of the problem and $\opt(I)$ is the optimal value of the instance $I$. An $\alpha$-approximation is a polynomial-time algorithm with approximation ratio $\alpha$.} for $\maxcut$ due to Goemans and Williamson~\cite{goemansmaxcut} is known to be tight by Khot \emph{et al.} \cite{khot} assuming the Unique Games Conjecture (UGC). 

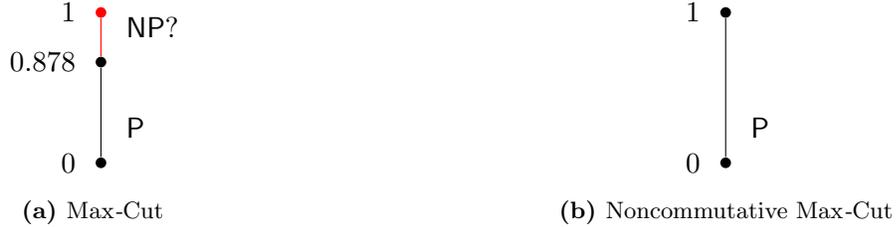
\begin{figure}[h]
\centering
\begin{subfigure}{0.49\textwidth}
    \centering
	\begin{tikzpicture}
        \node[circle,fill=black, inner sep=0.05cm] (1) at (0,0) {};
        \node[circle,fill=black, inner sep=0.05cm] (2) at (0,2*0.67) {};
        \node[circle,fill=red, inner sep=0.05cm] (3) at  (0,2)  {};
        \draw[] (1) -- (2);
        \draw[red] (2) -- (3);
        
        \node at (1) [left=0.2cm]{$0$};
        \node at (2) [left=0.2cm]{$0.878$};
        \node at (3) [left=0.2cm]{$1$};

        \node at (1) [above right =0.2cm]{$\Pp$};
        \node at (2) [above right =0.2cm]{$\NP$?};        
    \end{tikzpicture}
    \caption{$\maxcut$}
\end{subfigure}
\hfill
\begin{subfigure}{0.49\textwidth}
    \centering
	\begin{tikzpicture}
        \node[circle,fill=black, inner sep=0.05cm] (1) at (0,0) {};
        \node[circle,fill=black, inner sep=0.05cm] (2) at  (0,2)  {};
        \draw[] (1) -- (2);
        
        \node at (1) [left=0.2cm]{$0$};
        \node at (2) [left=0.2cm]{$1$};

        \node at (1) [above right =0.2cm]{$\Pp$};
    \end{tikzpicture}
    
    \caption{Noncommutative $\maxcut$}
\end{subfigure}
\caption{Transitions in complexity for classical and noncommutative $\maxcut$.}
\label{fig:transition-max-cut}
\end{figure}

Next, let us consider a simple modification of $\maxcut$ where our initial reaction regarding undecidability does end up being true. This modification is $\cut{3}$, where we are again given a graph and asked to partition the vertices into not two but three subsets such that the number of edges crossing the partitions is maximized. Some readers might have seen this problem in the context of $3$-Colouring instead: Indeed $G$ is $3$-colourable if and only if the value of $\cut{3}$ is $|E|$. This problem can also be written compactly as an optimization problem over ternary variables, as illustrated in Figure \ref{fig:maxcut-3-illustration}. 

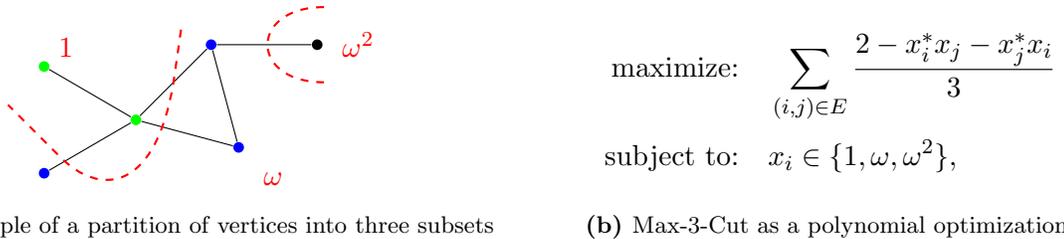
\begin{figure}[h]
\centering
\begin{subfigure}{0.49\textwidth}
    \centering
	\begin{tikzpicture}
        \node[circle,fill=green, inner sep=0.05cm] (1) at (0,0) {};
        \node[circle,fill=blue, inner sep=0.05cm] (2) at (1,1) {};
        \node[circle,fill=blue, inner sep=0.05cm] (3) at (1.366,-0.366) {};
        \node[circle,fill=black, inner sep=0.05cm] (4) at (2.41,1) {};
        \node[circle,fill=green, inner sep=0.05cm] (5) at (-1.22,0.71) {};
        \node[circle,fill=blue, inner sep=0.05cm] (6) at (-1.22,-0.71) {};
        \draw[] (1) -- (2);
        \draw[] (1) -- (3);
        \draw[] (2) -- (3);
        \draw[] (2) -- (4);
        \draw[] (1) -- (5);
        \draw[] (1) -- (6);
        \draw[thick, red, dashed] (0.6,1.2) -- (0.5,0.5) .. controls (0.25,-1) and (-0.5,-1) .. (-1,-0.5) -- (-1.7,0.2);
        \draw[thick, red, dashed] (2.5,1.5) .. controls (1.5,1.5) and (1.5,0.5) .. (2.5,0.5);
        
        \node at (1) [above left=0.7cm,red]{$1$};
        \node at (3) [below right=0.2cm,red]{$\omega$};
        \node at (4) [right=0.2cm,red]{$\omega^2$};
    \end{tikzpicture}
    \caption{Example of a partition of vertices into three subsets}
\end{subfigure}
\hfill
\begin{subfigure}{0.49\textwidth}
    \begin{equation*}
    \openup\jot
    \begin{aligned}[t]
    \text{ maximize:}\quad &\sum_{(i,j) \in E} \frac{2 - x_i^* x_j - x_j^*x_i}{3} \\
            \text{subject to:}\quad & x_i \in \{1,\omega,\omega^2\},
    \end{aligned}
    \end{equation*}
    \vspace{0cm}
    \caption{Max-$3$-Cut as a polynomial optimization}
\end{subfigure}
\caption{$\cut{3}$ and its presentation as a polynomial optimization: Here $1,\omega,\omega^2$ are the $3$rd roots of unity and $x^*$ is the complex conjugate. The term $\tfrac{2 - x_i^* x_j - x_j^*x_i}{3}$ is $1$ if $x_i\neq x_j$ and $0$ otherwise.}
\label{fig:maxcut-3-illustration}
\end{figure}

Now consider the noncommutative problem defined in the same way as before: Instead of $3$rd roots of unity, we allow unitaries with eigenvalues that are $3$rd roots of unity
\begin{equation}\label{eq:ncmax3cut-in-intro}
\openup\jot
\begin{aligned}[t]
\text{ maximize:}\quad &\sum_{(i,j)\in E} \frac{2-\bigip{X_i}{X_j}-\bigip{X_j}{X_i}}{3} \\
        \text{subject to:}\quad & X_i\text{ unitary with eigenvalues } 1,\omega,\omega^2.\\
\end{aligned}
\end{equation}
The value of this problem is known to be uncomputable following the work of Ji \cite{ji2013binary}, the celebrated $\MIP^*=\RE$ theorem of Ji et al. \cite{ji_mip_re}, and a recent work of Harris \cite{harris2023universality}:
\begin{theorem}[Ji, Ji \emph{et al.}, and Harris \cite{ji2013binary,ji_mip_re,harris2023universality}]\label{thm:harris} Given a graph $G = (V,E)$, figuring out whether the value of noncommutative $\cut{3}$ is $|E|$ or strictly less than $|E|$ is undecidable. 
\end{theorem}
But how hard is it to approximate the value of Eq. \ref{eq:ncmax3cut-in-intro}? For the classical $\cut{3}$, there is a $0.836$-approximation algorithm due to a sequence of works by Frieze and Jerrum \cite{frieze}, de Klerk \emph{et al.} \cite{klerk}, and Goemans and Williamson \cite{goemansmax3cut}. We show in this paper that we can actually do better in the case of the noncommutative $\cut{3}$:
\begin{theorem}\label{thm:main-motivation}
There exists a polynomial-time $0.864$-approximation algorithm for the value of noncommutative $\cut{3}$. 
\end{theorem}

Is this the best we can do? For the classical $\cut{3}$, if some plausible conjectures, including UGC, are true, then $0.836$ is the best approximation ratio (see Khot \emph{et al.} \cite{khot}). So the complexity of the classical problem is $\Pp$ up to the approximation ratio $0.836$ and it is conjectured to be $\NP$-hard beyond this constant. In the noncommutative setting, is a similar conjecture plausible?

\begin{question}\label{q:uncomputability}
Is approximating noncommutative $\cut{3}$ beyond the ratio $0.864$ uncomputable? Do we need some noncommutative extension of UGC to prove this?\footnote{We suspect that proving sharp complexity transitions for noncommutative $2$-CSPs (such as $\maxcut$ and $\cut{3}$) would require some additional complexity theoretic assumptions. This suspicion stems from the analogous development in the classical theory.}
\end{question}
Noncommutative CSPs can also be equivalently viewed as nonlocal games which are studied in quantum information theory (See Section \ref{sec:nonlocal_games} for more details). It is interesting that these natural problems from quantum information theory might be polynomial-time up to some constant approximation ratio and uncomputable beyond that constant. See Figure \ref{fig:transition-max-3-cut} where this possible \emph{sharp complexity transition} is illustrated both for classical and noncommutative $\cut{3}$.

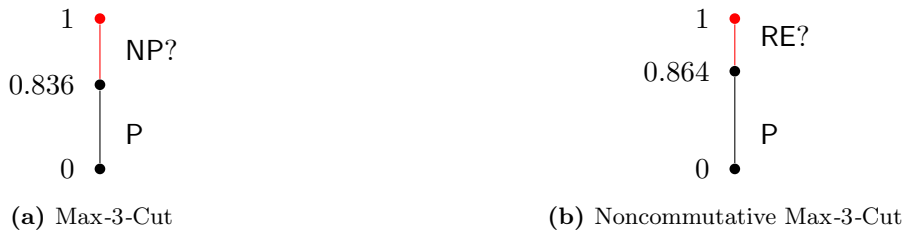
\begin{figure}[h]
\centering
\begin{subfigure}{0.49\textwidth}
    \centering
    \begin{tikzpicture}
        \node[circle,fill=black, inner sep=0.05cm] (1) at (0,0) {};
        \node[circle,fill=black, inner sep=0.05cm] (2) at (0,2*0.56) {};
        \node[circle,fill=red, inner sep=0.05cm] (3) at  (0,2)  {};
        \draw[] (1) -- (2);
        \draw[red] (2) -- (3);
        
        \node at (1) [left=0.2cm]{$0$};
        \node at (2) [left=0.2cm]{$0.836$};
        \node at (3) [left=0.2cm]{$1$};

        \node at (1) [above right =0.2cm]{$\Pp$};
        \node at (2) [above right =0.2cm]{$\NP$?};        
    \end{tikzpicture}
    \caption{$\cut{3}$}
\end{subfigure}
\hfill
\begin{subfigure}{0.49\textwidth}
    \centering
    \begin{tikzpicture}
        \node[circle,fill=black, inner sep=0.05cm] (1) at (0,0) {};
        \node[circle,fill=black, inner sep=0.05cm] (2) at (0,2*0.65) {};
        \node[circle,fill=red, inner sep=0.05cm] (3) at  (0,2)  {};
        \draw[] (1) -- (2);
        \draw[red] (2) -- (3);
        
        \node at (1) [left=0.2cm]{$0$};
        \node at (2) [left=0.2cm]{$0.864$};
        \node at (3) [left=0.2cm]{$1$};

        \node at (1) [above right =0.2cm]{$\Pp$};
        \node at (2) [above right =0.2cm]{$\RE$?};        
    \end{tikzpicture}
    
    \caption{Noncommutative $\cut{3}$}
\end{subfigure}
\caption{Conjectured transition in complexity for classical and noncommutative $\cut{3}$. $\RE$ is the class of problems reducible to the \emph{Halting problem}.}
\label{fig:transition-max-3-cut}
\end{figure}

$\maxcut$ and $\cut{3}$ are examples of $2$-CSPs. These are CSPs where every constraint involves only two variables at a time. By contrast, the phenomenon of complexity transition is better understood in the case of $3$-CSPs, both in the classical and noncommutative settings.\footnote{$k$-CSPs correspond to $k$-player nonlocal games. Read more about the connection to nonlocal games in Section \ref{sec:nonlocal_games}.} For example in the \emph{$3$-XOR problem}, we are given a system of linear equations over the boolean field such that every equation only involves three variables at a time. The simple random assignment satisfies half of the constraints in expectation. H\r{a}stad \cite{hastad} showed using the PCP theorem that it is NP-hard to do any better. 

Similar to what we did with $\maxcut$ and $\cut{3}$ one can define a natural noncommutative extension of $3$-XOR CSP.\footnote{The noncommutative extension of CSPs described in this paper, i.e. replacing scalars with unitary operators, cannot be directly applied for $3$-CSPs. This can be addressed but it is beyond the scope of this paper.} It turns out that the simple random assignment is the best we can do also in the noncommutative setting. O'Donnell and Yuen \cite{odonnell-yuen} showed, in a direct generalization of the PCP-based proof of H\r{a}stad, that approximating the noncommutative value beyond the ratio $1/2$ is uncomputable. See Figure \ref{fig:transition-3-xor}.

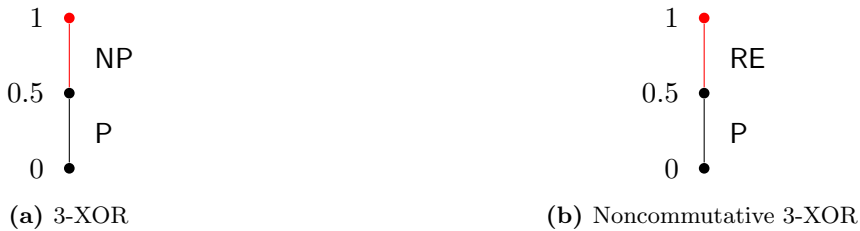
\begin{figure}[h]
\centering
\begin{subfigure}{0.49\textwidth}
    \centering
    \begin{tikzpicture}
        \node[circle,fill=black, inner sep=0.05cm] (1) at (0,0) {};
        \node[circle,fill=black, inner sep=0.05cm] (2) at (0,1) {};
        \node[circle,fill=red, inner sep=0.05cm] (3) at  (0,2)  {};
        \draw[] (1) -- (2);
        \draw[red] (2) -- (3);
        
        \node at (1) [left=0.2cm]{$0$};
        \node at (2) [left=0.2cm]{$0.5$};
        \node at (3) [left=0.2cm]{$1$};

        \node at (1) [above right =0.2cm]{$\Pp$};
        \node at (2) [above right =0.2cm]{$\NP$};        
    \end{tikzpicture}
    \caption{$3$-XOR}
\end{subfigure}
\hfill
\begin{subfigure}{0.49\textwidth}
    \centering
    \begin{tikzpicture}
        \node[circle,fill=black, inner sep=0.05cm] (1) at (0,0) {};
        \node[circle,fill=black, inner sep=0.05cm] (2) at (0,1) {};
        \node[circle,fill=red, inner sep=0.05cm] (3) at  (0,2)  {};
        \draw[] (1) -- (2);
        \draw[red] (2) -- (3);
        
        \node at (1) [left=0.2cm]{$0$};
        \node at (2) [left=0.2cm]{$0.5$};
        \node at (3) [left=0.2cm]{$1$};

        \node at (1) [above right =0.2cm]{$\Pp$};
        \node at (2) [above right =0.2cm]{$\RE$};        
    \end{tikzpicture}
    
    \caption{Noncommutative $3$-XOR}
\end{subfigure}
\caption{Transition in complexity for classical and noncommutative $3$-XOR. Unlike the examples of $\maxcut$ and $\cut{3}$ the transition in complexity is fully settled for both classical and noncommutative $3$-XOR.}
\label{fig:transition-3-xor}
\end{figure}

\subsection{Proof Idea: Approximate Isometries}\label{sec:approximate-isometries}
We introduce in this and the following two sections the main ideas that go into proving Theorem \ref{thm:main-motivation}. In this section, we generalize the construction of Tsirelson used to find optimal solutions for noncommutative Max-Cut.

It is well-known that for every $d$, there exist \emph{pairwise anticommuting} Hermitian and unitary operators $\sigma_1,\sigma_2,\ldots,\sigma_d$. That is in particular $\sigma_i\sigma_j + \sigma_j\sigma_i = 2\delta_{i,j}$. These are known as the \emph{Weyl-Brauer} operators and they can be represented as matrices in $M_{2^{O(d)}}(\C)$. Here $M_D(\C)$ is the algebra of $D$-by-$D$ matrices. It is known that the linear combination $$U(\vec{x}) \coloneqq x_1 \sigma_1 + \cdots + x_d \sigma_d$$ is a Hermitian unitary operator whenever $\vec{x} = (x_1,\ldots,x_d)$ is a real unit vector. This linear map $U$ is also an \emph{isometry}, that is $\ip{U(\vec{x})}{U(\vec{y})} = \ip{\vec{x}}{\vec{y}}$ for all vectors $\vec{x}$ and $\vec{y}$. We sometimes refer to this mapping as the \emph{vector-to-unitary construction}. Let us summarize what we have learnt so far in the following proposition.

\begin{proposition}[Isometric embedding of vectors into unitaries, Tsirelson]\label{prop:isometric-embedding} There exists a linear isometry $U:\R^d \to \mathrm{M}_{2^{O(d)}}(\mathbb{C})$ such that the image of any unit vector is a Hermitian unitary. 
\end{proposition}
The existence of (approximate) isometries similar to the kind introduced in this proposition directly corresponds to efficient algorithms for the CSPs of the previous section. To elucidate this point let us first recall the very simple proof of Tsirelson's theorem on noncommutative $\maxcut$ we advertised earlier.\footnote{This will be a slight modification of the statement and the proof of this remarkable theorem. The theorem was originally stated in the context of \emph{quantum value of XOR nonlocal games} \cite{tsirelson1,tsirelson2} which is a bit more technical to state and prove. The simpler statement given here is for the \emph{synchronous value of XOR games} albeit stated in the language of noncommutative CSPs. To see the connection between noncommutative CSPs and quantum and synchronous values of nonlocal games see Section \ref{sec:nonlocal_games}.} 

\begin{theorem}[Tsirelson's Theorem]\label{thm:tsirelson-intro} For any instance $G=(V,E)$ of $\maxcut$, the optimal value of noncommutative $\maxcut$
\begin{equation}\label{eq:nmaxcut-ideas}
\openup\jot
\begin{aligned}[t]
\text{ maximize:}\quad &\sum_{(i,j)\in E} \frac{1-\ip{X_i}{X_j}}{2} \\
        \text{subject to:}\quad & X_i^*X_i = X_i^2 = 1,\\
\end{aligned}
\end{equation} equals the optimal value of the SDP relaxation
\begin{equation}\label{eq:maxcut-sdp-ideas}
\openup\jot
\begin{aligned}[t]
\text{ maximize:}\quad &\sum_{(i,j)\in E} \frac{1-\ip{\vec{x}_i}{\vec{x}_j}}{2} \\
        \text{subject to:}\quad & \|\vec{x}_i\| = 1 \text{ and } \vec{x}_i \in \mathbb{R}^{|V|}.
\end{aligned}
\end{equation}
Since this SDP can be solved in polynomial time up to any desired accuracy, we can efficiently calculate the value of noncommutative $\maxcut$.
\end{theorem}
We leave it to the reader to verify that \eqref{eq:maxcut-sdp-ideas} is indeed an SDP and that its optimal value is always at least the optimal value of \eqref{eq:nmaxcut-ideas}. For the other direction, suppose $\vec{x}_i \in \R^{|V|}$ are a feasible solution to the SDP \eqref{eq:maxcut-sdp-ideas}. Now the set of Hermitian unitary operators $X_{i} \coloneqq U(\vec{x}_i)$ is a feasible solution to \eqref{eq:nmaxcut-ideas} due to the properties of the map $U$ in Proposition \ref{prop:isometric-embedding}. Also, by the isometric property of $U$, this noncommutative solution has the same objective value in \eqref{eq:nmaxcut-ideas} as the vectors in \eqref{eq:maxcut-sdp-ideas}.

We can now discuss at a high level the proof idea of Theorem \ref{thm:main-motivation} for noncommutative $\cut{3}$ \eqref{eq:ncmax3cut-in-intro}. Similar to $\maxcut$, we first need to come up with a natural SDP relaxation. This part is easy. The following simple modification of \eqref{eq:maxcut-sdp-ideas} is the canonical SDP relaxation for classical $\cut{3}$ used in all the previous work \cite{frieze,klerk,goemansmax3cut}
\begin{equation}\label{eq:vectorcut3-in-approximate-isometry}
\openup\jot
\begin{aligned}[t]
\text{ maximize:}\quad &\sum_{(i,j)\in E} \frac{2-\ip{\vec{x}_i}{\vec{x}_j} - \ip{\vec{x}_j}{\vec{x}_i}}{3} \\
        \text{subject to:}\quad & \|\vec{x}_i\| = 1 \text{ and } \vec{x}_i \in \mathbb{R}^{|V|},\\
                          \quad & \ip{\vec{x}_i}{\vec{x}_j} \geq -\frac{1}{2},
\end{aligned}
\end{equation}
and is easily shown to be an SDP relaxation of noncommutative $\cut{3}$ as well. We will prove a more general statement in Section \ref{sec:csp-definitions}. 

Much the same as in our discussion of Tsirelson's theorem on noncommutative $\maxcut$, we need a map that turns the vector solution of this SDP into a feasible solution of noncommutative $\cut{3}$
\begin{align*}
\vec{x}_1 &\mapsto X_1,\\
&\vdots\\
\vec{x}_{|V|} &\mapsto X_{|V|}.
\end{align*}
To preserve the objective value, we need the map to be an isometry similar to the one in Proposition \ref{prop:isometric-embedding}. However since $X_1,\ldots,X_{|V|}$ are a feasible solution for noncommutative $\cut{3}$ if they are unitaries with eigenvalues that are $3$rd roots of unity, we need this isometry to have an additional feature. In short we need an isometry from $\R^{|V|}$ to $\mathrm{M}_{D}(\mathbb{C})$, for some integer $D$, such that the image of any unit vector is a unitary with eigenvalues that are $3$rd roots of unity. In Section \ref{sec:gwb}, we show that such an isometry cannot exist (for any value of $D$, no matter how large). 

In the absence of such an isometry, we relax our requirements and search for \emph{approximate isometries}. The approximate isometries we will consider uses the power of randomness. Furthermore, we need a \emph{measure} with which we can evaluate the quality of any candidate approximate isometry. With these mind we proceed to define the notion of an $\alpha$-isometry. 

Since our goal is to ultimately solve $\cut{3}$ which has the objective function
\[\sum_{(i,j)\in E} \frac{2-\bigip{X_i}{X_j}-\bigip{X_j}{X_i}}{3}\]
a natural measure quickly emerges. Suppose $U_3:\R^d \to \mathrm{M}_{D}(\mathbb{C})$ is a randomized map with the property that the image of any unit vector is, with probability one, a unitary with eigenvalues that are $3$rd roots of unity. We say that $U_3$ is an $\alpha$-isometry, for some $\alpha \in [0,1]$, if for every pair of unit vectors $\vec{x}$ and $\vec{y}$ such that $\ip{\vec{x}}{\vec{y}}\geq -1/2$,\footnote{Since the vectors in any feasible solution of the SDP in Eq. \ref{eq:vectorcut3-in-approximate-isometry} have inner products $\geq -\frac{1}{2}$, it suffices in the definition of $\alpha$-isometry to consider only vectors $\vec{x}$ and $\vec{y}$ such that $\ip{\vec{x}}{\vec{y}} \geq -\frac{1}{2}$. This turns out to be crucial.} the quantity $$\Val \coloneqq \frac{2 - \Bigip{U_3(\vec{x})}{U_3(\vec{y})} - \Bigip{U_3(\vec{y})}{U_3(\vec{x})}}{3}$$ is, in expectation, at least $\alpha \cdot \val$, where $$\val \coloneqq \frac{2 - \ip{\vec{x}}{\vec{y}} - \ip{\vec{y}}{\vec{x}}}{3}.$$ Indeed if $U_3$ was an isometry then the two quantities $\val$ and $\Val$ would be the same so $\alpha = 1$. In this paper we give a construction of an $0.864$-isometry.

\begin{theorem}[Theorem \ref{thm:main-motivation} restated for approximate isometries]\label{thm:approximate-isometry} There exists a randomized map $U_3 : \R^d \to \mathrm{M}_{2^{O(d)}}(\C)$ that is a $0.864$-isometry such that every unit vector is mapped to a unitary with eigenvalues that are $3$rd roots of unity.
\end{theorem}

The $0.864$-approximation algorithm for the value of noncommutative $\cut{3}$ can be phrased simply as: On instance $I$, solve the SDP relaxation to obtain its optimal value $\sdp(I)$ and output $0.864\cdot\sdp(I)$. This is justified since the existence of an $0.864$-isometry guarantees that there exists a feasible solution with at least this value. Stated in the language of \emph{decision problems with a promise}, we have the following corollary
\begin{corollary}\label{cor:main-motivation} Given a graph $G = (V,E)$, it can be decided in polynomial-time whether the value of noncommutative $\cut{3}$ is $|E|$ or at most $0.864|E|$. 
\end{corollary}
Contrast this with Theorem \ref{thm:harris} on the undecidability of noncommutative $\cut{3}$.

Let us now sketch the construction of the randomized map $U_3$. The actual construction is more technical and is done in Algorithm \ref{alg:homlink}. The randomization is in the sampling of a Haar random unitary $R$ independent of the input vectors. Once a sample $R$ is fixed, then on a real vector $\vec{x} \in \R^d$, the following algorithm calculates $U_3(\vec{x})$. 

\begin{enumerate}
    \item First calculate the linear combination of Weyl-Brauer operators $X = x_1 \sigma_1 + \ldots + x_d \sigma_d$. This is guaranteed to be a unitary.
    \item Then let $\hat{X} = RX$.
    \item Finally, let $U_3(\vec{x})$ be the closest \emph{unitary with eigenvalues that are $3$rd roots of unity} to the unitary~$\hat{X}$.
\end{enumerate}

\begin{remark*}
    The algorithm above is not efficient. Nevertheless, $U_3$ does not need to be efficient for Corollary \ref{cor:main-motivation} to hold. The algorithm of Corollary \ref{cor:main-motivation} does not use this algorithm as a subroutine, but only requires its existence.
\end{remark*}

Step $1$ is exactly the same as the construction of the isometry by Tsirelson. Steps $2$ and $3$ are a noncommutative extension of the \emph{hyperplane randomized rounding scheme} of Goemans and Williamson. Thus our theorem is a {\bf simultaneous generalization} of these two results. The main innovation in proving Theorem \ref{thm:approximate-isometry}, \emph{i.e.} that $U_3$ is a $0.864$-isometry, is introduced in the next section. The reader interested in only seeing the proof idea of Theorem \ref{thm:approximate-isometry} can safely skip the rest of this section. 

We propose that the notion of an approximate isometry is an interesting mathematical primitive in its own right. Notably, there are numerous examples of approximate isometries in the classical literature on approximation algorithms, even if not explicitly framed as such. For instance, the Goemans-Williamson hyperplane rounding constructs a $0.878$-isometry of the form $\R^d \to \{\pm 1\}$. 

So far, we have focused on approximate isometries from real vector spaces. A natural question arises: does there exist an isometry from complex vector spaces? More specifically, is there an isometry $\C^d \to M_{D}(\C)$, for some $D$ possibly much larger than $d$, such that the image of any unit vector is a unitary operator? This question is particularly relevant when designing approximation algorithms for CSPs where the canonical SDP relaxation is complex. This includes several interesting CSPs, such as Unique Games.

Bri\"{e}t, Regev, and Saket \cite{iop_tight_hardness} showed that the image of $U$ from Proposition~\ref{prop:isometric-embedding} can be very far from being unitary, when $U$'s domain is extended to complex unit vectors. In fact, Kretschmer \cite{kretchmer} showed that no (exact) isometry from a complex vector space exists for any $D$. Intuitively, this suggest that \emph{unitaries cannot fully capture complex inner product spaces}. This then raises the question: 
\begin{question}\label{q:isometry-complex}
What is the best approximate isometry from complex vector spaces? 
\end{question}
This is equivalent to Question \ref{q:unbounded-little-grothendieck-inequality}, which we explored in the section titled ``Grothendieck Inequalities.''Approximate isometries have strong ties with Grothendieck inequalities, a topic further explored in that section (Section \ref{sec:grothendieck}). 

\subsection{Innovation: Relative Distribution}\label{sec:innovation-relative-distribution}
Our proof relies on the concept of \emph{relative distribution}, which we informally introduce in this section. To prove that $U_3$ is an approximate isometry, we need to argue that $\bigip{U_3(\vec{x})}{U_3(\vec{y})}$ is close to $\lambda\coloneqq \ip{\vec{x}}{\vec{y}}$ for any two vectors $\vec{x},\vec{y} \in \R^d$ such that $\ip{\vec{x}}{\vec{y}} \in [-1/2,1]$. The algorithm for $U_3$ has three steps; of these only the last step may change the inner product. 

To elaborate, the first step in calculating $U_3$ is to compute unitary operators $X \coloneqq U(\vec{x})$ and ${Y \coloneqq U(\vec{y})}$ where $U$ is the isometry introduced in Proposition \ref{prop:isometric-embedding}. Since $U$ is an isometry ${\bigip{X}{Y} = \lambda}$. The second step randomizes $X$ and $Y$ by premultiplying them with a Haar random unitary $R$, which also preserves the inner product since $\bigip{RX}{RY} = \ip{X}{Y} = \lambda$. Finally, in the third step we round $RX$ and $RY$ to the nearest order-$3$ unitary operators.\footnote{$X$ is an order-$3$ operator if $X^3$ is the identity. A unitary is order-$3$ if and only if all its eigenvalues are $3$rd roots of unity.} This is where the inner product may change. We need to show that it does not change by too much. 

Next, we discuss in more detail what goes into the rounding in Step $3$. It is known that the nearest order-$3$ unitary in Frobenius norm to any unitary $A$ is simply obtained by rounding every eigenvalue of $A$ to the nearest $3$rd root of unity \cite{hoffman}. We use the notation $\tilde{A}$ to refer to the nearest order-$3$ unitary to $A$. 

With this notation we write $U_3(\vec{x}) = \tilde{RX}$ and $U_3(\vec{y}) = \tilde{RY}$. We show in this paper that in expectation (over the Haar random unitary $R$) the inner product $\ip{\tilde{RX}}{\tilde{RY}}$ is close to $\lambda$. More precisely we prove that
\begin{equation}\label{eq:main-eq-in-innovation}
0.864(2-2\lambda) \leq \expect_R \Bigparen{2 - \ip{\tilde{RX}}{\tilde{RY}} - \ip{\tilde{RY}}{\tilde{RX}}}.
\end{equation}
This inequality implies that $U_3$ is a $0.864$-isometry by the definition of an $\alpha$-isometry.
In order to prove \eqref{eq:main-eq-in-innovation}, we need to understand the distribution of pairs of unitaries $(RX,RY)$ where $X$ and $Y$ are some fixed unitaries and $R$ is a unitary sampled according to the Haar measure. We call this distribution the \emph{fixed inner product distribution} of $X$ and $Y$. 

How do we analyse the rounding step in $U_3$? By the discussion of the previous paragraph, the key to proving \eqref{eq:main-eq-in-innovation} is to understand the joint probability distribution of the eigenvalues of pairs of operators $(RX,RY)$. It turns out that we can make do with less: We only need the distribution of the angle between random eigenvalues of $RX$ and $RY$. 

Sample an eigenvalue $\alpha$ of $RX$ and an eigenvalue $\beta$ of $RY$ with probability equal to $\ip{P_\alpha}{Q_\beta}$ where $P_\alpha$ (resp. $Q_\beta$) is the projection onto the $\alpha$-eigenspace of $RX$ (resp. $\beta$-eigenspace of $RY$). This is a well-defined probability distribution on pairs of eigenvalues of $RX$ and $RY$ --- in particular, the inner product of any pair of projectors is in the interval $[0,1]$. We call the random variable $\theta \coloneqq \beta - \alpha$ the \emph{relative angle} and call its probability distribution the \emph{relative distribution}. The randomness is both in the choice of $R$ and the distribution of pairs of eigenvalues of $RX$ and $RY$ we just defined. The definition of relative distribution is formalized in Section \ref{sec:relative-distribution} using measure theory. We denote the relative distribution of $(X,Y)$ by $\Delta_{X,Y}$.

A simple calculation shows that the expected value of $\exp(i\theta)$, where $\theta$ is the relative angle, is the inner product $\lambda$. Remarkably we are able to say much more: In large dimension, the distribution $\Delta_{X,Y}$ (and not just its mean) depends only on the inner product $\lambda$. Hence, we sometimes write $\Delta_\lambda$ instead of $\Delta_{X,Y}$. Indeed it turns out that the relative distribution is a \emph{wrapped Cauchy distribution} with parameters only depending on $\lambda$. This is captured in the following theorem, whose proof is given in Section \ref{sec:relative-distribution-definition} using tools from free probability.

\begin{theorem}[Cauchy law, informal]\label{thm:cauchy-law} As the dimension of the matrices $X$ and $Y$ tends to infinity, the relative distribution of $X$ and $Y$ converges to the wrapped Cauchy distribution with peak angle $\theta_0 = \measuredangle \lambda$ and scale factor $\gamma = -\ln|\lambda|$ where $\lambda \coloneqq \ip{X}{Y}$. 
\end{theorem}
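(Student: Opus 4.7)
The plan is to compute the Fourier coefficients $\phi(n) \coloneqq \mathbb{E}[e^{in\theta}]$ of the relative distribution $\Delta_{A,B}$ and match them against those of the wrapped Cauchy law, which are $e^{in\theta_0 - |n|\gamma} = \lambda^n$ for $n \geq 0$ and $\bar\lambda^{|n|}$ for $n < 0$. Since a probability measure on the circle is uniquely determined by its Fourier coefficients and pointwise convergence of these coefficients implies weak convergence, this reduces the theorem to computing $\phi(n)$ in the large-$d$ limit.

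First I would translate the sampling procedure of Definition \ref{def:relative-distribution-using-process} into operator language. Conditional on $(X,Y)$, spectral calculus gives
\[
\mathbb{E}[e^{in\theta} \mid X, Y] = \sum_{\alpha,\beta}\alpha^{-n}\beta^n \ip{\Pi_{X,\alpha}}{\Pi_{Y,\beta}} = \tr(X^{-n}Y^n),
\]
so $\phi(n) = \mathbb{E}_U\, \tr((UA)^{-n}(UB)^n)$. Substituting $V \coloneqq UA$ and $C \coloneqq A^{-1}B$, the invariance of Haar measure makes $V$ itself Haar distributed and gives $UB = VC$, whence $\phi(n) = \mathbb{E}_V\, \tr(V^{-n}(VC)^n)$ with $C$ a deterministic unitary satisfying $\tr(C) = \lambda$. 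By Voiculescu's asymptotic freeness theorem for Haar unitaries, as $d \to \infty$ this expectation converges to $\tr(v^{-n}(vc)^n)$ evaluated in a tracial $W^*$-probability space with $v$ a free Haar unitary $\ast$-free from $c$.

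The main computation is to show $\tr(v^{-n}(vc)^n) = \lambda^n$. Cancelling one $v^{-1}v$ in the middle rewrites it as $\tr(v^{-(n-1)} c (vc)^{n-1})$. Expanding each of the $n$ factors of $c$ as $\lambda + \tilde c$ with $\tr(\tilde c) = 0$ produces a sum over subsets $S \subseteq \{1, \ldots, n\}$ with prefactor $\lambda^{n - |S|}$. After collapsing identity factors and using cyclicity, each summand with $|S| \geq 2$ becomes an alternating word $\tilde c \cdot v^{k_1} \cdot \tilde c \cdots \tilde c \cdot v^{k_{|S|}}$ whose $v$-exponents are the successive gaps between elements of $S$ read cyclically in $\{1, \ldots, n\}$; all of these exponents are nonzero. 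Since the Haar unitary satisfies $\tr(v^k) = 0$ for $k \neq 0$ and $\tr(\tilde c) = 0$, free independence forces every such alternating word to have trace zero, while for $|S| = 1$ the word collapses directly to $\tr(\tilde c) = 0$. Only the $S = \emptyset$ term survives, yielding $\lambda^n$. The $n < 0$ case follows by complex conjugation using reality of $\theta$, matching the wrapped Cauchy Fourier coefficients and completing the proof.

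The main obstacle is recognising the free-probability universality in the last step: although the setup ostensibly depends on the full spectrum of $C = A^{-1}B$, the answer only sees $\tr(C) = \lambda$ because every mixed alternating word containing at least one $\tilde c$ has vanishing trace. Asymptotic freeness of Haar unitaries from a fixed matrix algebra is classical (Voiculescu), and could alternatively be bypassed by a direct Weingarten-calculus argument yielding explicit $O(1/d)$ error bounds.
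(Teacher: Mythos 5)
Your proposal is correct and follows essentially the same route as the paper: compute the characteristic function $\mathbb{E}_U\,\tr\bigl((UA)^{-n}(UB)^n\bigr)$, use Haar invariance to reduce to a word in a Haar unitary and the fixed unitary $A^*B$, invoke asymptotic freeness, and observe that after centering $c=\lambda+\tilde c$ every mixed alternating word vanishes, leaving $\lambda^{|n|}$ and concluding by L\'evy continuity on the circle. The only difference is cosmetic: the paper makes ``dimension $\to\infty$'' precise by tensoring $A,B$ with $I_m$, so the limiting $\ast$-moments required by Voiculescu's asymptotic-freeness theorem exist trivially, whereas your framing with $A,B$ varying in $d$ would need that device (or the Weingarten error bound you mention) to be fully rigorous.
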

Here $\measuredangle \lambda$ denotes the phase of $\lambda \in \C$. This theorem is given formally as Theorem~\ref{thm:relative-distribution}.

The probability distribution function (PDF) of the wrapped Cauchy distribution with peak angle $\theta_0$ and scale factor $\gamma$ is
\[\frac{1}{2\pi} \frac{\sinh(\gamma)}{\cosh(\gamma) - \cos(\theta - \theta_0)}.\]
In Figure \ref{fig:relative-in-innovation}, we plot the PDF of the relative distribution $\Delta_\lambda$ for some values of $\lambda$. 

\begin{figure}[h]
\includegraphics[scale=.6]{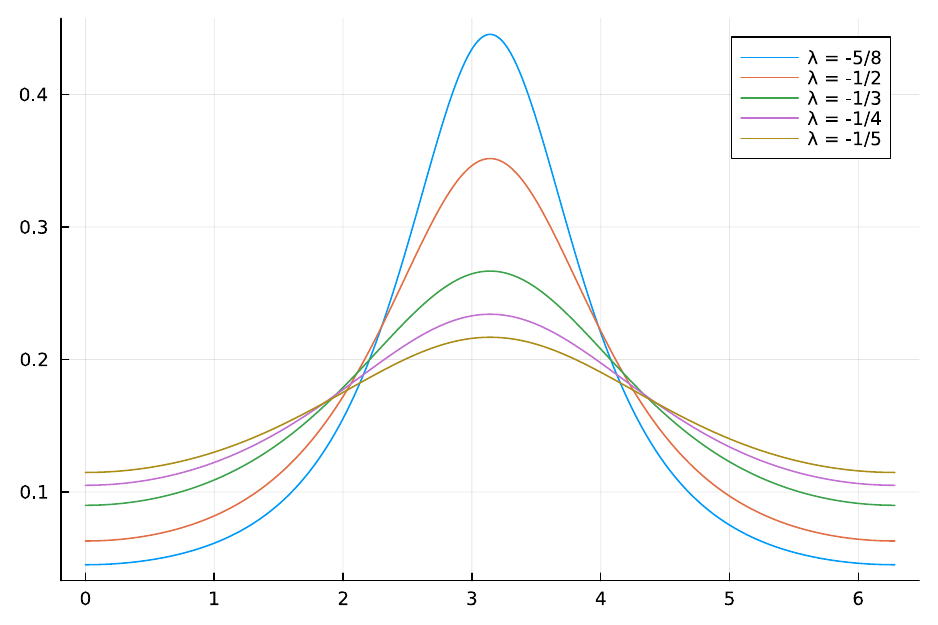}
\centering
\caption{PDF of $\Delta_\lambda$ for some values of $\lambda$. The horizontal axis is the interval $[0,2\pi)$. The distributions are peaked at the angle $\measuredangle \lambda$ which for a negative real number $\lambda$ is always $\pi$.}\label{fig:relative-in-innovation}
\end{figure}

The relative distribution not only helps in estimating the expectation $\expect_R \ip{\tilde{RX}}{\tilde{RY}}$ appearing in \eqref{eq:main-eq-in-innovation}, it allows us to calculate much more complicated expectations $\expect_R \tr\parens[\big]{P(\tilde{RX},\tilde{RY})}$ for any $*$-polynomial of the form $P(x,y) = \sum_{s,t} w_{s,t} (x^*)^sy^t$. In fact, we show that there exists a function $\fid_P$ defined on the interval $[0,2\pi)$, called the fidelity of $P$, such that
\begin{align}\label{eq:integral-formula}
    \expect_R \tr\parens[\big]{P(\tilde{RX},\tilde{RY})}=\int_0^{2\pi}\fid_P(\theta)d\Delta_{X,Y}(\theta),
\end{align}
for every $X$ and $Y$. Since $\Delta_{X,Y}$ only depends on the inner product $\ip{X}{Y}$, the expectation also depends only on the inner product. The result is presented formally as Theorem~\ref{lem:noncommutative-fid-delta-integral} in Section~\ref{sec:fidelity-integral-formula}. 

\begin{center}
\emph{Because of the generality of this integral representation, the proposed framework that combines approximate isometry and relative distribution goes beyond just the analysis of noncommutative $\cut{3}$.}
\end{center}

\vspace{0.2cm}

\noindent\textbf{Vector relative distribution. } We also give a \emph{vector analogue} of the (operator) relative distribution we introduced here which, as one might expect, has applications to classical CSPs. Indeed, the vector relative distribution is implicit in the analysis of approximation algorithms for classical CSPs, for example in \cite{goemansmax3cut,newman}. Using this distribution, and an argument much similar to the analysis of the noncommutative problem, we can recover all the approximation ratios in the work by Newman \cite{newman} (as well as the one in \cite{goemansmax3cut}) for classical $\cut{k}$. 
\begin{center}
\emph{Indeed, the idea of relative distribution simplifies the analysis of these algorithms.}    
\end{center}
We define this distribution and prove an analogue of the Cauchy Law (Theorem~\ref{thm:cauchy-law}) in Appendix~\ref{sec:classical-rel-dist}. 
\begin{remark*}
The preceding discussion indicates that a unified approximation framework may exist for both classical and noncommutative CSPs. 
\end{remark*}

\subsection{$\ast$-Anticommutation}\label{sec:innovation-generalized-anticommutation}
The primary application of the Cauchy Law for us is in the construction of approximate isometries. However, as noted in the statement of Theorem \ref{thm:cauchy-law}, this proof technique requires taking the limit of large dimension. Effectively, this leads to a construction of an approximate isometry $U_3:\R^d \to M_{D}(\C)$ with $D = \infty$. That is still sufficient for Corollary \ref{cor:main-motivation}, as we can approximate the value through a sequence of finite-dimensional solutions. But, can we prove the Theorem \ref{thm:approximate-isometry} with $D = 2^{O(d)}$? Note that an exponential blow-up is necessary even in the case of Proposition \ref{prop:isometric-embedding}, see for example \cite{Slofstra_2011}.

We show that this is possible. To achieve this we generalize the Weyl-Brauer operators introduced in the previous section. In \cite{generalizedchsh}, a variant of the anticommutation relation, $X^*Y = - Y^*X$, was used to construct optimal noncommutative solutions to some examples of CSPs. We refer to this relationship as \emph{$\ast$-anticommutation}. This reduces to usual anticommutation when the operators $X$ and $Y$ are Hermitian. 

We show that for all integers $k$ and $d$ there exist $d$ order-$k$ unitaries $\sigma_1,\ldots,\sigma_d$ that pairwise satisfy the $\ast$-anticommutation relation $\sigma_i^* \sigma_j = -\sigma_j^*\sigma_i$ for all $i\neq j$. We call these the \emph{generalized Weyl-Brauer operators} (GWB) and denote the group they generate by $\gwb_d^k$. 
\begin{theorem}\label{thm:gwp-exists}
Generalized Weyl-Brauer operators exist for every $k$ and $d$. Moreover these operators can be represented on a Hilbert space of dimension $2^{O(kd)}$.
\end{theorem}
The proof represents one of the key technical challenges we addressed in this work. This and the Cauchy Law (Theorem~\ref{thm:cauchy-law}) comprise the technical backbone of this work.

We conclude this section by highlighting a property of GWB that plays an important role in our proofs. Similar to Weyl-Brauer operators (see Proposition \ref{prop:isometric-embedding}), when $\sigma_1,\ldots,\sigma_d$ are generalized Weyl-Brauer operators, the linear combination \[U^{(k)}(\vec{x}) \coloneqq x_1 \sigma_1 + \cdots + x_d \sigma_d\] remains a unitary operator for any real unit vector $\vec{x}$, and the mapping is isometric. Moreover $U^{(k)}(\vec{x})$ is almost an order-$k$ unitary. Together, this means that the linear combination of generalized Weyl-Brauer operators nearly preserves all the properties of the original operators. See Figure \ref{fig:almost-order-k} for an illustration of this. This is made formal in Section~\ref{sec:construct-gwb} and proved in Corollary \ref{cor:fin-dim-weak-k-power}.

\begin{figure}[h]
\centering
\begin{subfigure}{0.49\textwidth}
    \includegraphics[scale=.5]{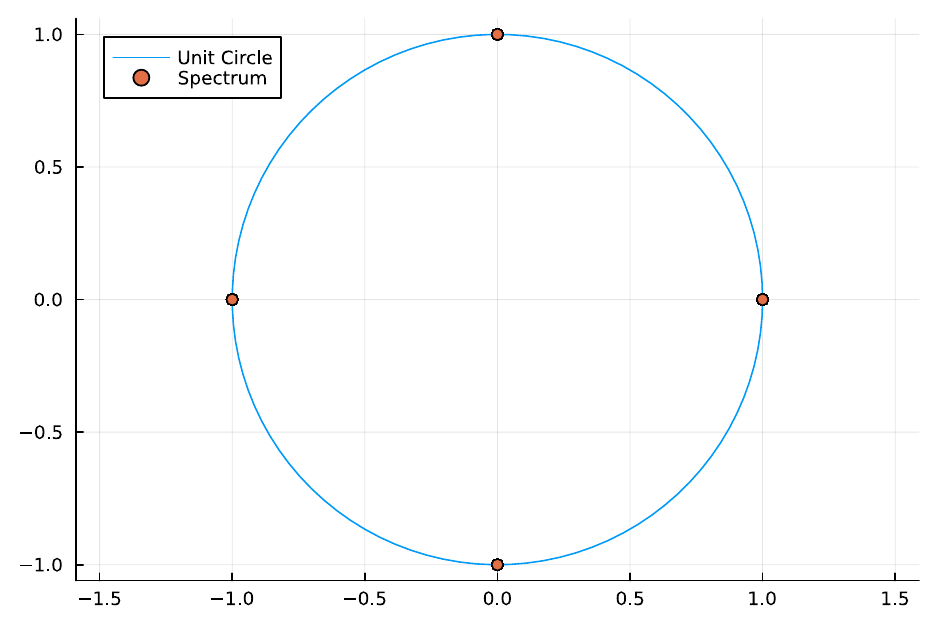}
    \caption{Spectrum of an operator in $\gwb_d^4$}
\end{subfigure}
\hfill
\begin{subfigure}{0.49\textwidth}
    \includegraphics[scale=.5]{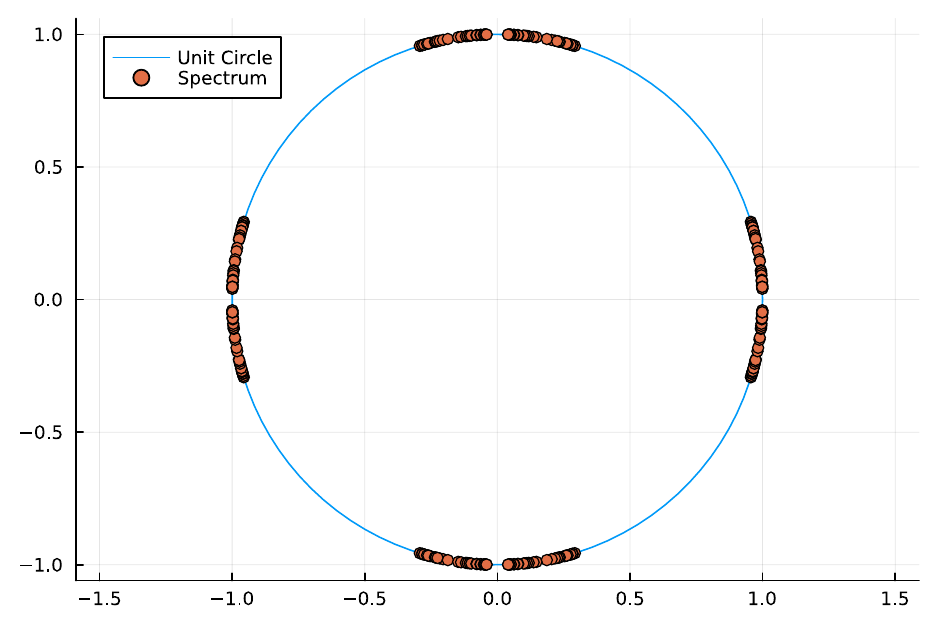}
    \vspace{0cm}
    \caption{Spectrum of a linear combination of $\gwb_d^4$ operators}
\end{subfigure}
\caption{Linear combination of order-$4$ generalized Weyl-Brauer operators are unitaries that are \emph{almost} order-$4$: their spectrum is a perturbation of $4$th roots of unity.}
\label{fig:almost-order-k}
\end{figure}

\subsection{Future Directions}\label{sec:open-problems}
This work opens up numerous new directions for further research. Among these, we think the following could be most helpful in improving our understanding of CSPs: 

\begin{enumerate}
\item {\bf Gapped Hardness}: Deciding if the (normalized) value of noncommutative $\cut{3}$ is $1$ or strictly less than $1$ is $\RE$-hard \cite{ji2013binary,ji_mip_re,harris2023universality}. In fact, it is also $\coRE$-hard \cite{pi2}. This means that the problem is strictly harder than even the Halting problem. Does there exist a constant $c$, such that deciding if the value is $1$ or at most $1-c$ is still $\RE$-hard? In this paper, we show that deciding whether the value is $1$ or at most $0.864$ can be done in polynomial time. Could this constant $c$ match the gap of our algorithm? In other words, could it be that $c = 1 - 0.864$? See Figure~\ref{fig:transition-max-3-cut}.

\item {\bf Integrality Gaps}: What is the integrality gap (Definition \ref{def:integrality-gap}) of the canonical SDP relaxation~\eqref{eq:vectorcut3-in-approximate-isometry} for noncommutative $\cut{3}$~\eqref{eq:ncmax3cut-in-intro}? Does the integrality gap match the approximation ratio of $0.864$? Instances of integrality gap arguments in the noncommutative setting appear in~\cite{iop_tight_hardness,bandeira,quantum-max-cut-integrality-gap}. 

Note that a positive answer to the gapped hardness question (the previous problem) would imply that $0.864$ is the best approximation ratio among all algorithms for $\ncut{3}$. In comparison, in this question, we ask a weaker question: Is $0.864$ the best approximation ratio among all algorithms based on the canonical SDP relaxation?

\item {\bf Rounding Higher-Level SDPs}: There is a whole hierarchy of SDP relaxations for noncommutative polynomial optimization~\cite{NPA}. Here we gave a rounding scheme from the first level of this SDP hierarchy. Could we achieve a better approximation ratio using the second level? A similar hierarchy exists in the commutative setting and Raghavendra \cite{prasad} showed that in the case of classical CSPs, assuming UGC, the first level always gives the best approximation ratio. Could we expect noncommutative CSPs to behave differently? 

\item {\bf Rounding to Classical Solution}: Given a noncommutative solution to a CSP, can we extract a good classical solution from it? More precisely, does there exist a rounding from noncommutative solutions to classical solutions such that, if the value of the noncommutative solution is $\Val$, the value of the rounded classical solution is at least $\alpha \cdot \Val$ for some constant~$\alpha$?

All the best known algorithms for classical CSPs are based on rounding vector solutions of SDPs. Could rounding operator solutions do as well or even better? See Section \ref{sec:brave-new-world} for more discussion on this.

A related question is: What is the ratio between the optimal classical value and the optimal noncommutative value for the CSPs discussed in this paper? Similar questions have been asked in the context of nonlocal games for the ratio of classical and quantum values, see for example~\cite{briet2009multiplayer,rosicka2024constructive}.

\item {\bf Other CSPs}: The construction of approximate isometries discussed earlier crucially relies on the vectors being real. Thus, as it currently stands, our algorithm works only for the classes of CSPs introduced in Section \ref{sec:csp-definitions}, \emph{i.e.} \emph{homogeneous CSPs} and \emph{smooth CSPs}. These are subclasses of linear $2$-CSPs where the canonical SDP relaxation can always be assumed to be real. Can we extend the framework of approximate isometries to all linear $2$-CSPs? See also Question~\ref{q:isometry-complex} and Section~\ref{sec:grothendieck}.

\item {\bf Non-synchronous Regime}: From the perspective of nonlocal games, our algorithm approximates the \emph{synchronous quantum value}. Could this framework be extended to also approximate the \emph{quantum value}? See Section~\ref{sec:nonlocal_games} where we define the quantum and synchronous quantum values of a nonlocal game.

\item {\bf Finite-Dimensionality}: We mentioned that noncommutative $\cut{3}$ is $\coRE$-hard. This implies that there is an instance for which the optimal value is not attained in any finite dimensional Hilbert space. Can we explicitly construct an example? 

Explicit examples of general nonlocal games that exhibit this remarkable behavior are known. See, for example~\cite{slofstra_set_of_quantum_correlations,Coladangelo_separation,coladangelo-inherently-infinite}.

In contrast, for noncommutative $\maxcut$, there always exists a finite-dimensional optimal solution. In fact, an optimal solution can always be found in a Hilbert space of exponential dimension~\cite{Slofstra_2011}.

\item {\bf Sum-of-Squares}: We also mentioned that $\cut{3}$ is $\RE$-hard. This means that there is an instance for which no sum-of-squares can certify the optimal value. Can we construct an explicit example?

In contrast, for noncommutative $\maxcut$, a sum-of-squares of quadratic degree is sufficient to certify the optimal value.

\item {\bf Parallel Repetition}: The $n$-th \emph{parallel repetition} of a nonlocal game $\mc{G}$ is the game $\mc{G}^n$ where the referee samples $n$ pairs of question $(i_1,j_1),\ldots,(i_n,j_n)$ independently and sends $(i_1,\ldots,i_n)$ to Alice and $(j_1,\ldots,j_n)$ to Bob. Alice and Bob each respond with $n$ answers and they win if and only if they would have won each game individually. 

Let $\val(\mc{G})$ denote the quantum value of the game. We know that $\val(\mc{G}^n) = \val(\mc{G})^n$ when $\mc{G}$ is an XOR game \cite{cleve2008perfect}. We also know that for Unique Games $(1-\eps)^n \leq \val(\mc{G}^n) \leq (1-\eps/4)^n$ when $\val(\mc{G}) = 1-\eps$ \cite{kempe}. Let us call an inequality of this form the \emph{strong parallel repetition} property. Do games arising from $\cut{k}$ CSPs also satisfy the strong parallel repetition property? 

This amounts to showing that the SDP relaxation of these CSPs satisfies a certain \emph{multiplicative property}. See \cite{feige-parallel,cleve2008perfect,mittal-product,kempe} for more on this.
\end{enumerate}

This work offers additional open problems. They appear in the paper as Questions \ref{q:uncomputability}, \ref{q:isometry-complex}, \ref{q:dimension-efficient-algorithm}, \ref{q:reduce-randomness}, \ref{q:efficient-computation-of-spectrum}, \ref{q:efficient-implementation-on-quantum-computers}, \ref{q:best-ratio-for-cut-5}, \ref{q:noncommutative-smooth-csp}, \ref{q:classical-smooth-csp}, \ref{q:relative-dist-for-3}, \ref{q:which-ncsps-are-approximable}, \ref{q:complex-grothendieck-for-csp-laplacian}, \ref{q:grothendieck-constant-for-order-k-problems}, \ref{q:unbounded-little-grothendieck-inequality}, and \ref{q:order-k-unbounded-little-grothendieck}.

\subsection{Overview}

A sketch of the proof of our result for $\cut{3}$ is given in Section \ref{sec:case-study}. The full proof is given in Section \ref{sec:algorithm}. 

In Section \ref{sec:csp-definitions} we introduce the classes of CSPs we consider in this work together with their canonical SDP relaxations. \emph{Homogeneous CSPs} are introduced in Section \ref{sec:def-homogeneous-csps} together with our quantitative results about them. The analysis of our algorithm for this class is given in Section \ref{sec:algorithm-homogeneous}. Similarly, \emph{smooth CSPs} are introduced in Section \ref{sec:smooth-csps} and the analysis of our algorithm is given in Section \ref{sec:algorithm-smooth}. 

The main ingredient in the analysis of the algorithm, \emph{relative distribution}, is defined and characterized in Section \ref{sec:relative-distribution-definition}. The proof of the Cauchy Law is given in Theorem \ref{thm:relative-distribution}. We see how to apply the Cauchy Law in Section \ref{sec:fidelity-integral-formula}.

The aforementioned sections comprise the \emph{analytic approach} to constructing approximate isometries. Sections \ref{sec:gwb}, \ref{sec:algebraic-reldist}, and \ref{sec:efficient-algorithm}, develop the \emph{algebraic approach}. Section \ref{sec:gwb} constructs representations of the operators used in this approach, \emph{i.e.} \emph{generalized Weyl-Brauer} operators. Section \ref{sec:algebraic-reldist} introduces the notion of \emph{algebraic relative distribution}, which allows us to bring tools from the analytic approach to this context. In Section \ref{sec:efficient-algorithm}, the dimension-efficient algorithm based on these algebraic ideas is presented.

In Section \ref{sec:connection-with-previous-work}, we explore the connection between our work with related topics. In Section \ref{sec:brave-new-world} we give a brief history of approximation algorithms and hardness results known for classical CSPs. There we explore the possibility of a deeper link between classical CSPs and their noncommutative variants. In Sections \ref{sec:nonlocal_games} and \ref{sec:entangled-unique-games}, we first give the nonlocal game viewpoint on noncommutative CSPs, then present previous algorithmic results on approximating their values. In Section \ref{sec:grothendieck} we discuss the close ties between our work on noncommutative CSPs and the framework of Grothendieck inequalities. Finally Section \ref{sec:quantum-max-cut} discusses a famous variant of noncommutative Max-Cut which is an important topic studied in Hamiltonian complexity.

The technical background and notations are collected in the preliminaries in Section \ref{sec:prilim}.

\paragraph{Acknowledgments.} 
We thank Tarun Kathuria and Nikhil Srivastava for pointing to us the connection with free probability. We also thank Henry Yuen for many valuable comments on the first version of this paper. EC is supported by a CGS-D from Canada's NSERC. EC thanks Richard Cleve and William Slofstra for their invaluable support; and Archishna Bhattacharyya, Alex Frei, and Romi Lifshitz for helpful discussions. HM is supported by DOE NQISRC QSA grant \#FP00010905. This material is based upon work supported by the U.S. Department of Energy, Office of Science, National Quantum Information Science Research Centers, Quantum Systems Accelerator. Part of this work was done while TS was at MIT and supported by the Hasler Foundation. TS thanks Anand Natarajan for hosting him during that time. TS is supported by the European Union under the Grant Agreement No 101078107, QInteract and VILLUM FONDEN via Villum Young Investigator grant (No 37532) and the QMATH Centre of Excellence (Grant No 10059).

\section{Proof Sketches}\label{sec:case-study}
The goal of this section is to see the main ideas of the paper in action on the example of noncommutative $\cut{3}$ (or $\ncut{3}$ for short). For a full formal treatment of these ideas and the way they are applied to the more general settings of homogeneous and smooth CSPs see Sections \ref{sec:gwb}, \ref{sec:relative-distribution}, \ref{sec:algorithm}, \ref{sec:algebraic-reldist} and \ref{sec:efficient-algorithm}. 

Recall that $\ncut{3}$ is the optimization problem
\begin{equation}\label{eq:ncut3}
\openup\jot
\begin{aligned}[t]
\text{ maximize:}\quad &\sum_{i,j=1}^N w_{ij} \frac{2 - \ip{X_i}{X_j} - \ip{X_j}{X_i}}{3} \\
        \text{subject to:}\quad & X_i^*X_i = X_i^3 = 1.
\end{aligned}
\end{equation}
Here the graph instance has $N$ vertices and there is a weight $w_{ij} \geq 0$ on edge $(i,j)$. The SDP relaxation is
\begin{equation}\label{eq:vectorcut3}
\openup\jot
\begin{aligned}[t]
\text{ maximize:}\quad &\sum_{i,j=1}^N w_{ij} \frac{2-\ip{\vec{x}_i}{\vec{x}_j} - \ip{\vec{x}_j}{\vec{x}_i}}{3} \\
        \text{subject to:}\quad & \|\vec{x}_i\| = 1 \text{ and } \vec{x}_i \in \mathbb{R}^N,\\
                          \quad & \ip{\vec{x}_i}{\vec{x}_j} \geq -\frac{1}{2}.
\end{aligned}
\end{equation}
We see in Section \ref{sec:csp-definitions} that this is indeed a relaxation. 

To solve $\ncut{3}$, our first attempt is to solve the SDP, then apply Tsirelson's \emph{vector-to-unitary construction} (that is the linear map $U$ in Proposition \ref{prop:isometric-embedding}) to the SDP vectors. Doing so we obtain Hermitian unitary operators, but we need order-$3$ unitary operators to be feasible in $\ncut{3}$. To fix this issue, we may try and modify the vector-to-unitary construction by switching the Weyl-Brauer operators with a different set of operators. Namely, we could look for operators $\sigma_1,\ldots,\sigma_N$ such that, for real vectors $\vec{x} = (x_1,\ldots,x_N) \in \R^n$, the operators $U_{\vec{x}} \coloneqq \sum x_i \sigma_i$ satisfy the following two properties:
\begin{enumerate}
    \item Whenever $\vec{x}$ is a unit vector, $U_{\vec{x}}$ is an order-$3$ unitary operator , \emph{i.e.} $U_{\vec{x}}^3 = U_{\vec{x}}^*U_{\vec{x}} = 1$.
    \item The map $\vec{x}\mapsto U_{\vec{x}}$ preserves the inner product, \emph{i.e.} $\ip{U_{\vec{x}}}{U_{\vec{y}}} = \ip{\vec{x}}{\vec{y}}$ for all $\vec{x},\vec{y} \in \R^N$.
\end{enumerate}
However, as we see in Section~\ref{sec:v-u-intro}, no such set of operators $\sigma_1,\ldots,\sigma_N$ exists.

\subsection{Analytic Approach}\label{sec:case-study-analytic-approach}
Nevertheless, let $\vec{x}_1,\ldots,\vec{x}_N$ be any feasible SDP solution in \eqref{eq:vectorcut3} and apply Tsirelson's vector-to-unitary construction to obtain unitary operators $X_1,\ldots,X_N$. These operators, by the isometry property, satisfy $\ip{X_i}{X_j} = \ip{\vec{x}_i}{\vec{x}_j}$; next we round these unitaries to the nearest order-$3$ unitaries. In Frobenius norm, the closest order-$3$ unitary to a unitary $X$ is obtained by rounding the eigenvalues of $X$ to the nearest $3$-rd roots of unity. That is, if $X = \sum_{s} \lambda_s P_s$ is the spectral decomposition of $X$ where $\lambda_s$ are eigenvalues and $P_s$ are projections onto the corresponding eigenspaces, then $\tilde{X} = \sum_{s} \tilde{\lambda}_s P_s$ is the closest order-$3$ unitary to $X$ in Frobenius norm, where $\tilde{\lambda}_s$ is the closest $3$-rd root of unity to $\lambda_s$. Unfortunately, as is, this rounding scheme recovers an approximation ratio that is even less than the ratio $0.836$ of the Goemans and Williamson \cite{goemansmax3cut} algorithm for classical $\cut{3}$. 

To improve on this we use randomization. Sample a unitary $R$ from the Haar measure and let $U_3(\vec{x}_i)$ be the closest order-$3$ unitary to $RX_i$, that is $U_3(\vec{x}_i) = \tilde{RX_i}$. We prove that
\begin{equation}\label{eq:inequality_cut3}
\expect [2 - \ip{U_3(\vec{x}_i)}{U_3(\vec{x}_j)} - \ip{U_3(\vec{x}_j)}{U_3(\vec{x}_i)}] \geq 0.864 \paren{2 - \ip{\vec{x}_i}{\vec{x}_j} - \ip{\vec{x}_j}{\vec{x}_i}}.
\end{equation}
By linearity of expectation we conclude that the rounded solution $U_3(\vec{x}_1),\ldots,U_3(\vec{x}_N)$ on average has a value in \eqref{eq:ncut3} that is at least $0.864$ times the SDP value. The map $\vec{x}_i \mapsto U_3(\vec{x}_i)$ is the approximate isometry in Theorem \ref{thm:approximate-isometry}. To summarize, the algorithm is

\vspace{10pt}
\IncMargin{1em}
\begin{algorithm}[H]\label{alg:max-cut3}
\DontPrintSemicolon
Solve the SDP \eqref{eq:vectorcut3} to obtain vectors $\vec{x}_1,\ldots,\vec{x}_N$.

Apply the vector-to-unitary construction to obtain unitary operators $X_i = U_{\vec{x}_i}$.

Sample a Haar random unitary $R$.

Return order-$3$ unitaries $\tilde{RX}_1,\ldots,\tilde{RX}_N$.
\caption{Algorithm for approximate isometry for $\ncut{3}$}
\end{algorithm}\DecMargin{1em}
\vspace{10pt}

Inspecting the definition of the SDP \eqref{eq:vectorcut3} we have \[\ip{X_i}{X_j} = \ip{\vec{x}_i}{\vec{x}_j} \geq -1/2,\] for all $i,j$. Therefore to prove the $0.864$ approximation ratio, we just need to prove the following theorem. 
\begin{theorem}\label{thm:ratio-max-cut3}
Let $A$ and $B$ be any two unitaries such that $\lambda \coloneqq \ip{A}{B} \geq -1/2$. Then
\begin{align}
    \expect [2 - \ip{\tilde{X}}{\tilde{Y}} - \ip{\tilde{Y}}{\tilde{X}}] &\geq 0.864 \paren{2 - \ip{A}{B} - \ip{B}{A}}.\label{eq:inequality-that-needs-proof}
\end{align}
where $(X,Y)$ is sampled from $D_{A,B}$. 
\end{theorem}
Recall that $D_{A,B}$ is the \emph{fixed inner product distribution} of the pair of unitaries $A,B$: This is the distribution $(RA,RB)$ where $R$ is sampled from the Haar measure. 

We now sketch the proof of this theorem. Let $(X,Y)$ be a sample from $D_{A,B}$ and note that $\ip{X}{Y} = \ip{A}{B}$. We are done if we prove the inequality 
\begin{align*}
    \expect[2 - \ip{\tilde{X}}{\tilde{Y}} - \ip{\tilde{Y}}{\tilde{X}}] &\geq 0.864 \paren{2 - \ip{A}{B} - \ip{B}{A}}\\
    &= 0.864 \paren{2 - \ip{X}{Y} - \ip{Y}{X}}.
\end{align*}
Let the spectral decompositions of $X$ and $Y$ be $X = \sum_r \alpha_r P_r$ and $Y = \sum_s \beta_s Q_s$. We now can write 
\begin{equation}\label{eq:innerproduct-formula}
2 - \ip{X}{Y} - \ip{Y}{X} = \sum_{r,s} (2 - \alpha_r^*\beta_s - \beta_s^*\alpha_r)\ip{P_r}{Q_s},
\end{equation}
and
\begin{equation}\label{eq:tilde-formula}
2 - \ip{\tilde{X}}{\tilde{Y}} - \ip{\tilde{Y}}{\tilde{X}} = \sum_{r,s} (2 - \tilde{\alpha}_r^*\tilde{\beta}_s - \tilde{\beta}_s^*\tilde{\alpha}_r)\ip{P_r}{Q_s},
\end{equation}
where as before $\tilde{\alpha}_r$ and $\tilde{\beta}_s$ are the closest $3$-rd roots of unity to $\alpha_r$ and $\beta_s$, respectively. The two quantities 
\begin{align}
&2 - \alpha_r^*\beta_s - \beta_s^*\alpha_r,\label{eq:two-quantities-1}\\
&2 - \tilde{\alpha}_r^*\tilde{\beta}_s - \tilde{\beta}_s^*\tilde{\alpha}_r,\label{eq:two-quantities-2}
\end{align}
constitute the only differences in \eqref{eq:innerproduct-formula} and \eqref{eq:tilde-formula}. One way to prove \eqref{eq:inequality-that-needs-proof} is to show that \eqref{eq:two-quantities-1} and \eqref{eq:two-quantities-2} are close to each other. The \emph{fidelity function} helps us quantify this closeness. 

\begin{paragraph}{Fidelity.}
To compare \eqref{eq:two-quantities-1} and \eqref{eq:two-quantities-2} we need to understand the distribution over $(\alpha_r,\beta_s)$ of pairs of eigenvalues of $X$ and $Y$ sampled from $D_{A,B}$. For now consider the following simpler distribution. Fix an angle $\theta$, suppose $\alpha$ is sampled uniformly at random from the unit circle, and let $\beta = \alpha \exp(i\theta)$. In other words $(\alpha,\beta)$ has uniform distribution over all pairs of points on the unit circle that are a phase $\theta$ apart, \emph{i.e.} $\alpha^*\beta = \exp(i\theta)$. We would like to compare the two quantities
\begin{align*}
2 - \alpha^*\beta - \beta^*\alpha,\\
2 - \tilde{\alpha}^*\tilde{\beta} - \tilde{\beta}^*\tilde{\alpha},
\end{align*}
on average. Of course the first quantity is simply
\[2 - \alpha^*\beta - \beta^*\alpha = 2 - 2\cos(\theta).\]
We define the \emph{fidelity at angle} $\theta$ to be the average of the second quantity \[\fid(\theta) \coloneqq \expect[2 - \tilde{\alpha}^*\tilde{\beta} - \tilde{\beta}^*\tilde{\alpha}],\]
where the randomness is that of $(\alpha,\beta)$. We give the formal definition of fidelity in a more general setting in Definition \ref{def:fidelity}.
The fidelity ends up being a very simple piecewise linear function of $\theta$ (this is formally stated and proved in Lemma \ref{lem:general-fidelity}). We draw its plot in Figure \ref{fig:fidelity_3}, where we compare it with the plot of $2 - 2\cos(\theta)$. When $\theta \in \{0, 2\pi/3,4\pi/3\}$ (the angles of $3$-rd roots of unity $\{1,\omega,\omega^2\}$) we see that the two functions are the same. For other angles the fidelity function provides an approximation for $2-2\cos(\theta)$. 
\begin{figure}[h]
\includegraphics[scale=.6]{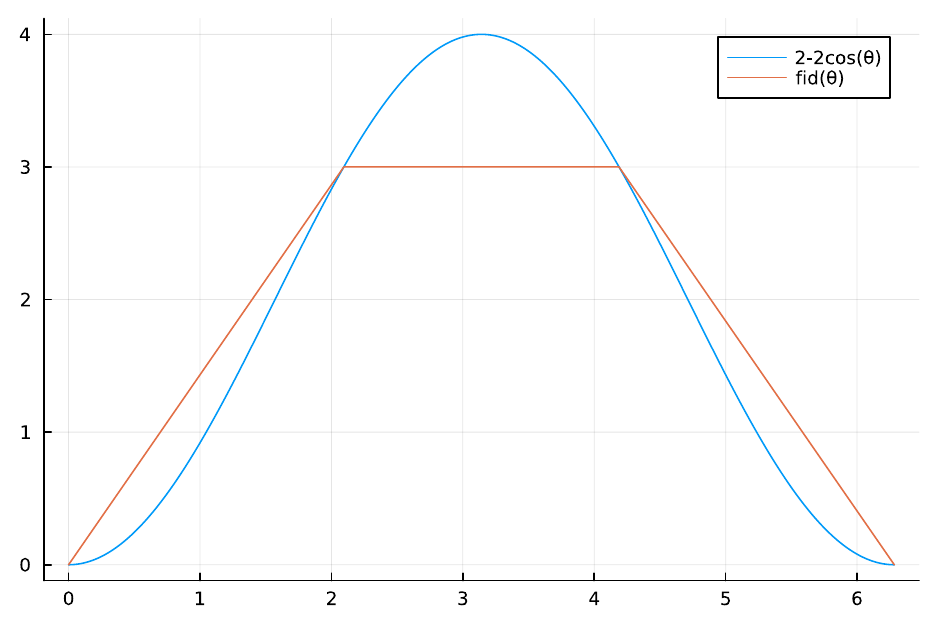}
\centering
\caption{Fidelity as a function of the relative angle $\theta \in [0,2\pi)$ compared with the function $2-2\cos(\theta)$.}\label{fig:fidelity_3}
\end{figure}
\end{paragraph}
 
In the definition of fidelity we assumed $(\alpha,\beta)$ is uniformly distributed. However, the distribution of eigenvalues $(\alpha_r,\beta_s)$ may not be uniform. To understand the distribution of pairs of eigenvalues we need the definition of relative distribution.
\begin{paragraph}{Relative distribution.}
In Section \ref{sec:innovation-relative-distribution} we introduced $\Delta_{A,B}$, the relative distribution of $(A,B)$, as the distribution of the angle $\theta$ in the random process
\begin{enumerate}
    \item Sample $(X,Y)$ from $D_{A,B}$.
    \item Sample a pair of eigenvalues $(\alpha,\beta)$ of $X$ and $Y$ with probability $\ip{P}{Q}$, where $P$ is the projection onto the $\alpha$-eigenspace of $X$ and $Q$ is the projection onto the $\beta$-eigenspace of $Y$.
    \item Let $\theta \in [0,2\pi)$ be the relative angle between $\alpha$ and $\beta$, that is $\theta \coloneqq \measuredangle\alpha^*\beta$.
\end{enumerate}
Fix a $\theta \in [0,2\pi)$ and a sample $(X,Y) \in D_{A,B}$ and let $X = \sum_r \alpha_r P_r$ and $Y = \sum_s \beta_s Q_s$ be the spectral decompositions. The \emph{relative weight of eigenspaces\footnote{This is somewhat reminiscent of the construction of Nussbaum-Szko\l{}a distributions \cite{NussbaumSkola}.} of $X$ and $Y$ on the angle $\theta$} is given by \[w_{X,Y}(\theta) \coloneqq \sum_{\substack{r,s:\\ \alpha_r^*\beta_s = \exp(i\theta)}} \ip{P_r}{Q_s}.\]
If there is no pair of eigenvalues $\alpha_r$ of $X$ and $\beta_s$ of $Y$ such that $\alpha_r^*\beta_s = \exp(i\theta)$, we let $w_{X,Y}(\theta) = 0$. It should be intuitively clear (and this is made formal in Definition \ref{def:relative-measure}) that $\Delta_{A,B}(\theta)$ is the average of $w_{X,Y}(\theta)$ over all samples $(X,Y) \in D_{A,B}$.\footnote{This relationship can be formally understood by treating the distributions as measures, \emph{i.e.} $\Delta_{A,B}(E) = \int_{U} w_{UA,UB}(E)dU$ where $w_{A,B}(E)=\sum_{\theta\in E}w_{A,B}(\theta)$ for every measurable $E \subset [0,2\pi).$}
In Theorem \ref{thm:relative-distribution}, we showed that, in the limit of large dimension, the relative distribution $\Delta_{A,B}$ approaches the wrapped Cauchy distribution $\Delta_\lambda$ where $\lambda = \ip{A}{B}$. This distribution $\Delta_\lambda$ is defined in Preliminaries \ref{sec:probability-theory}. Note that we can always artificially increase the dimension of $A$ and $B$, by tensoring them with the identity operator, without changing the inner product $\ip{A}{B}$ (and hence without changing the objective value of the solution in the CSP). Theorem \ref{thm:relative-distribution} is therefore proving that $\Delta_{A\otimes I_m,B\otimes I_m}\rightarrow \Delta_\lambda$ as $m\rightarrow \infty$. The plot of the wrapped Cauchy distribution $\Delta_\lambda$ for some real values of $\lambda = \ip{A}{B}$ is given in Figure \ref{fig:relative-in-innovation}. 

\end{paragraph}
It turns out that the expectation $\expect[2 - \ip{\tilde{X}}{\tilde{Y}} - \ip{\tilde{Y}}{\tilde{X}}]$ is a simple formula in terms of the fidelity function and relative distribution:
\begin{lemma}\label{lem:fidelity-rel-dist-formula} It holds that
\begin{align*}
\expect[2 - \ip{\tilde{X}}{\tilde{Y}} - \ip{\tilde{Y}}{\tilde{X}}] &= \int_\theta \fid(\theta) d\Delta_{A,B}(\theta).
\end{align*}
\end{lemma}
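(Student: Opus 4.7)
The plan is to expand the inner products spectrally, then exploit the global-phase invariance of $D_{A,B}$ to reduce the expectation to a one-variable quantity depending only on relative eigenvalue angles, and finally recognize this quantity as $\fid$ integrated against $\Delta_{A,B}$.

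Writing the spectral decompositions $X = \sum_r \alpha_r P_r$ and $Y = \sum_s \beta_s Q_s$, the rounded unitaries become $\tilde X = \sum_r \tilde\alpha_r P_r$ and $\tilde Y = \sum_s \tilde\beta_s Q_s$, so
\[
2 - \ip{\tilde X}{\tilde Y} - \ip{\tilde Y}{\tilde X} = \sum_{r,s}\bigparen{2 - \tilde\alpha_r^*\tilde\beta_s - \tilde\beta_s^*\tilde\alpha_r}\ip{P_r}{Q_s}.
\]
The scalar prefactor depends on $\alpha_r$ and $\beta_s$ individually, not only on the relative phase $\theta_{rs} \coloneqq \measuredangle \alpha_r^*\beta_s$, because the rounding $\alpha \mapsto \tilde\alpha$ breaks rotation invariance. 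To restore this symmetry I will use the left-invariance of the Haar measure: since $U \mapsto e^{i\phi}U$ preserves the Haar measure, $(X,Y)$ and $(e^{i\phi}X, e^{i\phi}Y)$ are identically distributed, so one may average over an independent uniform $\phi \in [0,2\pi)$ inside the expectation without changing its value. The eigenspaces $P_r,Q_s$ are unaffected by this rotation, while the eigenvalues shift to $e^{i\phi}\alpha_r$ and $e^{i\phi}\beta_s$.

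Pushing the $\phi$-average inside the finite sum, for each pair $(r,s)$ one is left with
\[
\frac{1}{2\pi}\int_0^{2\pi}\bigparen{2 - \tilde{e^{i\phi}\alpha_r}^*\,\tilde{e^{i\phi}\beta_s} - \tilde{e^{i\phi}\beta_s}^*\,\tilde{e^{i\phi}\alpha_r}}\,d\phi.
\]
The substitution $\alpha \coloneqq e^{i\phi}\alpha_r$ is uniform on the unit circle and satisfies $e^{i\phi}\beta_s = \alpha \cdot e^{i\theta_{rs}}$, so $(\alpha, e^{i\phi}\beta_s)$ is uniformly distributed on the fixed-relative-angle locus $\alpha^*\beta = e^{i\theta_{rs}}$. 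By the very definition of the fidelity function, this integral equals $\fid(\theta_{rs})$. Substituting back and applying Definition~\ref{def:relative-distribution-using-process}, which says that sampling $(r,s)$ with probability $\ip{P_r}{Q_s}$ from $(X,Y)\sim D_{A,B}$ and recording $\theta_{rs}$ yields a variable with law $\Delta_{A,B}$, gives
\[
\expect\bigbrac{2 - \ip{\tilde X}{\tilde Y} - \ip{\tilde Y}{\tilde X}} = \expect_{(X,Y)}\sum_{r,s}\ip{P_r}{Q_s}\,\fid(\theta_{rs}) = \int_0^{2\pi}\fid(\theta)\,d\Delta_{A,B}(\theta).
\]

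The only delicate point I anticipate is the discontinuity of the rounding map on the Voronoi boundaries of the third roots of unity: the set of $\phi$ for which $e^{i\phi}\alpha_r$ or $e^{i\phi}\beta_s$ lies on such a boundary is finite, hence of Lebesgue measure zero, so it does not affect the $\phi$-integral. The exchange of the expectation over $(X,Y)$ and the $\phi$-average is justified by boundedness of the integrand, and the exchange with the finite sum over $(r,s)$ is immediate. Once the global-phase average is inserted, the rest of the argument is a direct unwinding of definitions.
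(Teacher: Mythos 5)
Your argument is correct and is essentially the paper's own proof: the paper establishes the more general Theorem~\ref{lem:noncommutative-fid-delta-integral} by exactly this route --- spectral expansion of $\tilde{X},\tilde{Y}$, insertion of a uniform global phase via Haar invariance to turn each eigenvalue-pair term into $\fid(\measuredangle\alpha_r^*\beta_s)$, and then identification of the $\ip{P_r}{Q_s}$-weighted sum with integration against the weight measure and hence, after averaging over $U$, against $\Delta_{A,B}$. Your added remarks on the measure-zero Voronoi boundaries and the interchange of expectations are harmless refinements of the same argument.
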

We prove a stronger version of this in Theorem \ref{lem:noncommutative-fid-delta-integral}. Since we can always increase the dimension of $A$ and $B$, in this integral formula we can replace $\Delta_{A,B}$ with the wrapped Cauchy distribution $\Delta_\lambda$.
Now to prove \eqref{eq:inequality_cut3} we just need to show that
\[\int_\theta \fid(\theta)d\Delta_\lambda(\theta) \geq 0.864 \paren{2 - 2\lambda},\]
for all $\lambda \in [-1/2,1]$. We prove this inequality by elementary means in Section \ref{sec:algorithm} and Appendix~\ref{sec:increasing-ratio}. This completes the sketch of the proof of Theorem \ref{thm:ratio-max-cut3}.

\subsection{Algebraic Approach}\label{sec:case-study-algebraic-approach}
The strength of the analytic approach is its generality: it can be applied to any set of unitary operators $X_1,\ldots,X_N$ to find a nearby set of order-$3$ unitaries. The drawback of the analytic approach is that, as in the example we just saw, the approximation ratio of $0.864$ is obtained only in the limit of large dimension. This is due to the fact that $\Delta_{A \otimes I_m,B\otimes I_m} \rightarrow \Delta_\lambda$ only as $m\rightarrow \infty$. 

We now present an outline of the algebraic approach that resolves this issue. Using this approach, we prove that an approximate solution of the same quality as the one in the analytic approach exists on a Hilbert space of dimension $2^{O(N)}$. The details of the algebraic approach are presented in Sections \ref{sec:algebraic-reldist} and \ref{sec:efficient-algorithm}.

The key idea is to change the definition of Weyl-Brauer operators in the vector-to-unitary construction. The generalized Weyl-Brauer (GWB) of order $3$, introduced in Section \ref{sec:innovation-generalized-anticommutation}, are a set of operators $\sigma_1,\ldots,\sigma_N$ that are unitary matrices of order $3$, i.e. $\sigma_i^*\sigma_i = \sigma_i^3 = 1$ and they pairwise satisfy the relation $\sigma_i^* \sigma_j = - \sigma_j^*\sigma_i$ for all $i\neq j$. In Corollary \ref{cor:fin-dim-vector-to-unitary} we prove that this set of operators exists on a Hilbert space of dimension $2^{O(N)}$. Similarly we can define order-$k$ generalized Weyl-Brauer operators denoted by $\gwb^k$ for any $k$.

Let $\vec{x}\mapsto U_{\vec{x}} \coloneqq \sum_i x_i \sigma_i$ be the vector-to-unitary construction where $\sigma_i$ are the $\gwb^k$ operators. Then, as before, we can show that $U_{\vec{x}}$ is a unitary whenever $\vec{x} \in \R^N$ is a unit vector. Additionally, in Section \ref{sec:gwb}, we show that this generalized vector-to-unitary construction satisfies a \emph{strong isometry} property, \emph{i.e.} $\ip{U_{\vec{x}}^s}{U_{\vec{y}}^s} = \ip{\vec{x}}{\vec{y}}^s$ for all $0\leq s < k$ and all vectors $\vec{x},\vec{y}\in \R^N$. This is a key property used in our analysis.

Now consider the distribution $(U_{O\vec{x}},U_{O\vec{y}})$ where $O$ is a Haar random orthogonal matrix acting on $\R^N$. Note the similarity to the fixed inner product distribution $(R U_{\vec{x}},RU_{\vec{y}})$ where $R$ is Haar random unitary matrix acting on $\C^{2^{O(N)}}$ (this is the distribution used in the analytic approach). Notice also the difference between these two distributions: the order of the randomness step and the vector-to-unitary step is switched. In particular in $(U_{O\vec{x}},U_{O\vec{y}})$ we are using far less randomness. Despite this, using the strong isometry property, in Section \ref{sec:algebraic-reldist} on \emph{algebraic relative distribution}, we prove that the two distributions are the same as far as the relative distribution is concerned. This paves the path for the following algorithm.

In Section \ref{sec:efficient-algorithm} we show that, for sufficiently large $k$, a slightly modified version of Algorithm \ref{alg:efficient-nmax-cut-3} achieves an approximation ratio of $0.864$:

\vspace{10pt}
\IncMargin{1em}
\begin{algorithm}[H]\label{alg:efficient-nmax-cut-3}
\DontPrintSemicolon
Solve the SDP relaxation to obtain vectors $\vec{x}_1,\ldots,\vec{x}_N \in \R^N$.

Sample an orthogonal matrix $O$ acting on $\R^N$ from the Haar measure.  

Apply the vector-to-unitary construction using $\gwb^k$ to obtain unitary operators $X_i = U_{O\vec{x}_i}$.

Round to the nearest order-$3$ unitary and return $\tilde{X}_1,\ldots,\tilde{X}_N$.
\caption{Dimension-efficient algorithm}
\end{algorithm}\DecMargin{1em}
\vspace{10pt}

The dimension of the Hilbert space of this approximate solution is $2^{O(kN)}$ where $k$ is a sufficiently large constant. 
\begin{question}\label{q:dimension-efficient-algorithm}
Could it be that even for $k = 3$, Algorithm \ref{alg:efficient-nmax-cut-3} achieves the approximation ratio $0.864$? This indeed matches numerical evidence.
\end{question}

As mentioned above, the primary difference between Algorithms \ref{alg:max-cut3} and \ref{alg:efficient-nmax-cut-3}, is that the order of the randomness and vector-to-unitary steps is switched. In Algorithm \ref{alg:max-cut3}, we first apply the vector-to-unitary construction (producing exponential-sized matrices), then sample a Haar random unitary acting on a Hilbert space of large dimension. In Algorithm \ref{alg:efficient-nmax-cut-3}, we first sample from the Haar measure on a Hilbert space of small dimension $N$, then apply the vector-to-unitary construction.  

\begin{question}\label{q:reduce-randomness}
Can we reduce the amount of randomness needed in Algorithm \ref{alg:max-cut3}? How about Algorithm \ref{alg:efficient-nmax-cut-3}? Could we switch Haar random unitaries with unitary designs in Algorithm \ref{alg:max-cut3}? Could we derandomize these algorithms altogether?
\end{question}

Both algorithms apply the nonlinear operation $X \mapsto \tilde{X}$ on matrices of exponential dimension in $N$. We need the spectral decomposition of $X$ to calculate $\tilde{X}$. So it is natural to ask the following question about the vector-to-unitary construction.
\begin{question}\label{q:efficient-computation-of-spectrum}
Is there an efficient algorithm that given $\vec{x}$ and $t$ finds the $t$-th eigenvalue (in some ordering) of the operator $U_{\vec{x}}$? 
\end{question}

Given that the solutions are exponential-sized, one cannot hope to even write down the matrices $\tilde{X}_1,\ldots,\tilde{X}_N$. From the nonlocal games perspective, these operators are observables of Alice and Bob in a quantum strategy with $O(N)$ qubits. Could these players be implemented efficiently on quantum computers?
\begin{question}\label{q:efficient-implementation-on-quantum-computers}
Could the order-$3$ observable $\tilde{U_{\vec{x}}}$ be implemented using a polynomial-sized quantum circuit?
\end{question}

\section{Preliminaries}\label{sec:prilim}

\subsection{Notation}
We use shorthand $[n]$ for the set $\{1,\ldots,n\}$. For $z\in\C$, we write $z^\ast$ for its complex conjugate, $\Re(z)$ for its real part, and $\measuredangle z$ for its argument, \emph{i.e.} angle in the interval $[0,2\pi)$ such that $z = |z|e^{i \measuredangle z}$. Write $S^1\subseteq\C$ for the circle of radius $1$. We often represent $S^1$ by the interval $[0,2\pi)$. When doing so, we always assume addition is modulo $2\pi$.

The $k$-th roots of unity are $\set{e^{\frac{2\pi i l}{k}}}{l=0,\ldots,k-1}\subseteq S^1$ and a primitive $k$-th root of unity is a $k$-th root that is not an $l$-th root for any $l<k$; we write $\omega_k=e^{2\pi i/k}$, and drop the subscript when clear from the context. We let $\Omega_k$ denote $k$-simplex that is the convex hull of $k$th roots of unity.

We use vector notation $\vec{x}$ for elements of $\R^d$ or $\C^d$, and write the standard inner product $\ip{\vec{x}}{\vec{y}}=\sum_{i=1}^d x_i^\ast y_i$. We denote the \emph{Euclidean norm} of a vector $\vec{x}$ as $\norm{\vec{x}}=\sqrt{\ip{\vec{x}}{\vec{x}}}$.

We denote the set of $d\times d$ over a field $\mathbb{F}$ matrices by $\mathrm{M}_d(\mathbb{F})$. Write $\Id\in\mathrm{M}_d(\mathbb{F})$ for the identity element. Alternatively we write $I_d$ for the identity element when we want to be explicit about its dimension. More generally we denote the set of $d\times d'$ matrices over a field $\mathbb{F}$ by $\mathrm{M}_{d\times d'}(\mathbb{F})$. Sometimes we want to index entries of a matrix by finite sets that are not $[n]$; we write $\mathrm{M}_{E\times F}(\mathbb{F})$ for matrices in $\mathbb{F}$ with rows and columns indexed by sets $E$ and $F$, respectively. Important subsets of $\mathrm{M}_d(\C)$ are the positive (semidefinite) matrices $\mathrm{M}_d(\C)_{\geq 0}$ and the positive definite matrices $\mathrm{M}_d(\C)_{>0}$; write $X\geq 0$ to mean $X\in\mathrm{M}_d(\C)_{\geq 0}$ when the dimension of $X$ is evident. Write also $\mathrm{GL}_d(\C)$ for the group of invertible matrices and $\mathrm{U}_d(\C)$ for the subgroup of unitary matrices. As for complex numbers, we write $X^\ast$ for the adjoint (hermitian conjugate) of a matrix $X$. We write $\tr:\mathrm{M}_d(\C)\rightarrow\C$ for the normalised trace $\tr(X)=\frac{1}{d}\sum_{i=1}^d X_{ii}$; this gives rise to the Hilbert-Schmidt inner product $\ip{X}{Y}=\tr(X^\ast Y)$. We make use of the \emph{operator norm} $\norm{X}_{\mathrm{op}}=\sup\set[\big]{\norm{Xv}}{\norm{v}\leq 1}$ and the \emph{little Frobenius norm} $\norm{X}=\sqrt{\ip{X}{X}}$.

For a Hilbert space $\mc{H}$, we denote the set of bounded linear operators on $\mc{H}$ as~$\mrm{B}(\mc{H})$. These are exactly the linear operators that are continuous with respect to the norm topology on $\mc{H}$. If $\mc{H}$ is finite-dimensional, then $\mrm{B}(\mc{H})\cong\mrm{M}_d(\C)$ for $d=\dim\mc{H}$; we often make no distinction between these.

\subsection{Probability Theory}\label{sec:probability-theory}
A \emph{measurable space} is a pair $(\Omega,\scr{S})$ where the set $\Omega$ is called the sample space and $\scr{S}$, a $\sigma$-algebra on $\Omega$, is called the event space. For a topological space $\Omega$, there is a canonical choice of $\sigma$-algebra to give it a measurable structure, the \emph{Borel algebra} $\scr{B}(\Omega)$. See~\cite{Axl20} for a reference on measure theory.

A \emph{probability space} is $(\Omega,\scr{S},\mc{D})$ where $\mc{D}$ is a measure on $(\Omega,\scr{S})$ such that $\mc{D}(\Omega)=1$. We call $\mc{D}$ a \emph{distribution} on $\Omega$. 

Given another measurable space $(M,\scr{F})$, a \emph{random variable} on $(\Omega,\scr{S},\mc{D})$ is a measurable function $X:\Omega\rightarrow M$, \emph{i.e.} a function for which $X^{-1}(E) \in \scr{S}$ for all $E \in \scr{F}$. This induces a distribution on $M$ where $\mu_X(E)\coloneqq\mc{D}(X^{-1}(E))$ for all $E\in\scr{F}$; this is called the \emph{distribution} of $X$. We also use probability notation for this distribution, writing $\Pr\squ*{X\in E}\coloneqq\mu_X(E)$. Two random variables $X,Y$ are \emph{independent} if for all measurable $E,F$ we have $\Pr\squ*{X\in E\land Y\in F}=\Pr\squ*{X\in E}\Pr\squ*{Y\in F}$. 

If $M$ is in fact a measure space with its own measure $\nu$ --- for example if $M$ is a finite-dimensional real vector space with the Lebesgue measure $\Lambda$ --- then we can compare this to the measure $\mu_X$. If $\mu_X$ is absolutely continuous with respect to $\nu$ (for all $E\in\scr{F}$, $\nu(E)=0$ implies $\mu_X(E)=0$), then we can define the Radon-Nikodym derivative of $\mu_X$ with respect to $\nu$, which we refer to as the \emph{probability density function (PDF)} $p_X=\frac{d\mu_X}{d\nu}:M\rightarrow\R_{\geq 0}$. Another way to state this relationship is $\mu_X(E) = \int_{E} p_X(x)d\nu(x)$. More generally, we can also consider the PDF of any distribution $\mc{D}$ on $M$ as the Radon-Nikodym derivative $p_{\mc{D}}=\frac{d\mc{D}}{d\nu}$, assuming $\mc{D}$ is absolutely continuous with respect to $\nu$.

If $M$ is a real or complex vector space, the \emph{expectation} of $X$ is given by the integral $\expect X=\int_\Omega X(\omega)d\mu(\omega)=\int_Mxd\mu_X(x)$. Also, for a function $f:M\rightarrow\C$, the composition $f\circ X$ is also a random variable, so we can always write $\expect_X f\circ X = \expect f\circ X =\int_\Omega f(X(\omega))d\mu(\omega)$. We write simply $f(X)$ when referring to $f\circ X$.

\begin{definition}\label{def:converge-in-distribution}
    A sequence of random variables $(X_n)$ \emph{converges in distribution} to a random variable $X$ if for all measurable $E\subseteq M$,
    $$\lim_{n\rightarrow\infty}\Pr\squ*{X_n\in E}=\Pr\squ*{X\in E}.$$
    This is equivalent to $\expect f(X_n)\rightarrow\expect f(X)$ for all bounded continuous functions $f:M\rightarrow\R$.
\end{definition}

It is not necessary to make reference to an underlying sample space to define and work with a random variable, so we generally avoid it. Instead, we define random variables according to a distribution. For a distribution $\mc{D}$ (probability measure) on a measurable space $M$, we say a random variable $X:\Omega\rightarrow M$ is distributed according to $\mc{D}$ if $\mu_X=\mc{D}$. We denote this $X\sim\mc{D}$. Below, we present the standard distributions we use.
\begin{itemize}
    \item \emph{Dirac Delta Distribution}: For any measurable space $(M,\scr{F})$ and $m\in M$, this is the distribution
    \begin{align}
        \delta_{m}(E)=\begin{cases}1&\quad m\in E,\\0&\quad\text{otherwise}.\end{cases}\label{eq:dirac}
    \end{align}

    \item \emph{Complex Normal Distribution}: $\mc{CN}(m,\sigma)$ for mean $m\in\C$ and standard deviation $\sigma>0$ is the distribution on $\C$ with PDF
    \begin{align}
        p_{\mc{CN}(m,\sigma)}(z)=\frac{1}{2\pi\sigma^2}e^{-\frac{|z-m|^2}{2\sigma^2}},
    \end{align}
    with respect to the Lebesgue measure on $\C$. We call $\mc{CN}=\mc{CN}(0,1)$ the \emph{standard complex normal distribution}.
    \item \emph{Vector Complex Normal Distribution}: $\mc{CN}(d,\vec{m},\Sigma)$ for dimension $d\in\N$, mean $\vec{m}\in\C^d$, and covariance matrix $\Sigma\in\mathrm{M}_d(\C)_{>0}$ is the distribution on $\C^d$ with PDF
    \begin{align}
        p_{\mc{CN}(d,\vec{m},\Sigma)}(\vec{z})=\frac{1}{(2\pi\det\Sigma)^d}e^{-\frac{1}{2}\ip{\vec{z}-\vec{m}}{\Sigma^{-1}(\vec{z}-\vec{m})}},
    \end{align}
    with respect to the Lebesgue measure on $\C^d$. We call $\mc{CN}(d)=\mc{CN}(d,\vec{0},\Id)$ the \emph{standard vector complex normal distribution}. Note that if $\Sigma$ is diagonal, then the components of a random variable $\vec{Z}\sim\mc{CN}(d,\vec{m},\Sigma)$ are independent.
    \item \emph{Wrapped Cauchy Distribution}: $\mc{W}(\theta_0,\gamma)$ for peak position $\theta_0\in[0,2\pi)$ and scale factor $\gamma>0$ is the distribution on $[0,2\pi)\cong S^1$ with PDF
    \begin{align}
        p_{\mc{W}(\theta_0,\gamma)}(\theta)=\frac{\sinh(\gamma)}{2\pi(\cosh(\gamma)-\cos(\theta-\theta_0))}, \label{eq:cauchy}
    \end{align}
    with respect to the Lebesgue measure on $[0,2\pi)$. For $\lambda\in\C$ with $|\lambda|\leq1$ we define the distribution $\Delta_\lambda \coloneqq \mc{W}(\measuredangle \lambda,-\ln|\lambda|)$ when $|\lambda|<1$ and $\Delta_\lambda\coloneqq\delta_{\measuredangle\lambda}$ when $|\lambda|=1$.
    
    \item \emph{Haar Distribution}: For any compact topological group $G$, there exists a unique distribution $\mathrm{Haar}(G)$ on $(G,\scr{B}(G))$ such that for all $E\subseteq G$ Borel-measurable and $g\in G$, $$\mathrm{Haar}(G)(E)=\mathrm{Haar}(G)(gE)=\mathrm{Haar}(G)(Eg).$$ We make use of $\mathrm{Haar}(\mathrm{U}_d(\C))$.
\end{itemize}

We will be largely interested in random variables $X$ on the circle $[0,2\pi)\cong\R/2\pi\Z\cong S^1$. Here, we take the \emph{characteristic function} of $X$ to be the Fourier series $$\chi_X:\Z\rightarrow\C,\quad\chi_X(n)=\expect(e^{i n X}).$$ If $\mu_X$ is absolutely continuous, then we have $\chi_X=\hat{p}_X$, the Fourier series of the PDF. The \emph{Fourier series} of an integrable function $f:[0,2\pi)\rightarrow\C$ is the function $\hat{f}:\Z\rightarrow\C$ defined as $$\hat{f}(n)=\int_0^{2\pi}f(\theta)e^{in\theta}d\theta.$$ 

The characteristic function can similarly be defined for distributions. For example the characteristic function of the Dirac delta distribution $\delta_{\theta_0}$ on $[0,2\pi)$ is
$$\chi_{\delta_{\theta_0}}(n)=e^{in\theta_0};$$
and the characteristic function of the wrapped Cauchy distribution $\mc{W}(\theta_0,\gamma)$ is
$$\chi_{\mc{W}(\theta_0,\gamma)}(n)=e^{-|n|\gamma+in\theta_0}.$$
This implies that $\chi_{\Delta_\lambda}(n)=\lambda^n$ for $n\geq 0$ and $\chi_{\Delta_\lambda}(n)=(\lambda^\ast)^{-n}$ for $n<0$.

\begin{theorem}[Parseval]
    Let $f,g:[0,2\pi)\rightarrow\C$ be square-integrable functions on the circle ($f,g\in L^2[0,2\pi)$). Then
    \begin{equation}\label{eq:parseval}
    \int_0^{2\pi}f(\theta)^\ast g(\theta)d\theta = \frac{1}{2\pi}\sum_{n\in\Z}\hat{f}(n)^\ast\hat{g}(n).
    \end{equation}
    In particular, the series $\frac{1}{2\pi}\sum_{n\in\Z}\hat{f}(n)^\ast\hat{g}(n)$ converges.
\end{theorem}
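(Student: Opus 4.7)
The plan is to derive Parseval's identity from the abstract theory of orthonormal bases in Hilbert spaces. The space $L^2[0,2\pi)$ is a Hilbert space under $\ip{f}{g}_{L^2}=\int_0^{2\pi}f(\theta)^\ast g(\theta)\,d\theta$, so it suffices to exhibit an orthonormal basis whose expansion coefficients match the Fourier convention $\hat f(n)=\int_0^{2\pi}f(\theta)e^{in\theta}d\theta$ fixed in the excerpt.

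First I would introduce $e_n(\theta)=e^{in\theta}/\sqrt{2\pi}$ for $n\in\Z$ and verify orthonormality by the direct computation
\[
\ip{e_m}{e_n}_{L^2}=\frac{1}{2\pi}\int_0^{2\pi}e^{i(n-m)\theta}d\theta=\delta_{m,n}.
\]
The only nontrivial step is completeness of $\{e_n\}_{n\in\Z}$, for which I would invoke the Stone--Weierstrass theorem on $S^1$: the $\C$-algebra of trigonometric polynomials separates points on $S^1$ and is closed under complex conjugation, hence is uniformly dense in $C(S^1)$. Since $C(S^1)$ is $L^2$-dense in $L^2[0,2\pi)$, the trigonometric polynomials are $L^2$-dense as well, which is the required completeness.

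With an orthonormal basis in hand, I would relate Fourier coefficients to basis coefficients through the direct identity $\hat f(n)=\sqrt{2\pi}\,\ip{e_{-n}}{f}_{L^2}$. Applying the abstract Parseval identity for orthonormal bases in a Hilbert space yields
\[
\ip{f}{g}_{L^2}=\sum_{m\in\Z}\ip{f}{e_m}_{L^2}\ip{e_m}{g}_{L^2},
\]
with the right-hand series convergent by Bessel's inequality together with Cauchy--Schwarz on $\ell^2(\Z)$. Substituting $m=-n$ and replacing each basis coefficient by $\hat f(n)/\sqrt{2\pi}$ (respectively $\hat g(n)/\sqrt{2\pi}$) produces exactly the claimed identity $\int_0^{2\pi}f(\theta)^\ast g(\theta)\,d\theta=\frac{1}{2\pi}\sum_{n\in\Z}\hat f(n)^\ast\hat g(n)$, and simultaneously certifies convergence of the series. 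The main obstacle is the completeness step; orthonormality, the coefficient identification, and the abstract Parseval identity itself are all routine. A more self-contained alternative, if one prefers to avoid Stone--Weierstrass, is to first establish the identity on the dense subspace of $C^1$ functions via uniform convergence of their Fejér means, then extend by continuity using that $f\mapsto(\hat f(n))_{n\in\Z}$ is a bounded linear map into $\ell^2(\Z)$.
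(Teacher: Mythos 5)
Your argument is correct. Note, however, that the paper does not prove this statement at all: Parseval's theorem appears in the preliminaries as standard background (alongside L\'evy's continuity theorem), stated without proof, so there is no proof of record to compare against. Your proposal is the textbook Hilbert-space argument — orthonormality of $e_n(\theta)=e^{in\theta}/\sqrt{2\pi}$, completeness via Stone--Weierstrass (or, in your alternative, Fej\'er means plus density), and the abstract Parseval identity — and it correctly accounts for the paper's sign convention $\hat{f}(n)=\int_0^{2\pi}f(\theta)e^{+in\theta}d\theta$ through the identification $\hat{f}(n)=\sqrt{2\pi}\,\ip{e_{-n}}{f}$, with convergence of the bilinear series following from Bessel and Cauchy--Schwarz in $\ell^2(\Z)$. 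Nothing is missing; this fills in a step the authors simply took for granted.
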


This theorem states that the operation of taking the Fourier series is an isometry $L^2[0,2\pi)\rightarrow\ell^2\Z$ with the correct normalisation on the inner products. Using Parseval, when $X$ is a random variable on $[0,2\pi)$ with square-integrable PDF and $f:[0,2\pi)\rightarrow\C$ is a square-integrable function, we have
\begin{align}\label{eq:parseval-for-distributions}
    \int_0^{2\pi} f(\theta)d\mu_X(\theta)=\frac{1}{2\pi}\sum_{n\in\Z}\hat{f}(-n)\chi_{X}(n).
\end{align}

\begin{theorem}[L\'evy's continuity theorem on the circle]\label{thm:levy}
    A sequence of random variables $(X_n)$ on the $[0,2\pi)$ converges in distribution if and only if the sequence of characteristic functions $(\chi_{X_n})$ converges pointwise.
\end{theorem}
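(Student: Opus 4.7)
The plan is to prove the two directions separately. The forward direction is immediate: if $X_n \to X$ in distribution, then applying Definition~\ref{def:converge-in-distribution} to the bounded continuous function $\theta \mapsto e^{in\theta}$ (for each fixed $n \in \Z$) yields $\chi_{X_n}(n) = \mathbb{E}[e^{inX_n}] \to \mathbb{E}[e^{inX}] = \chi_X(n)$.

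The substantive direction is the converse: assuming $\chi_{X_n}(n)$ converges to some value $\chi(n)$ for every $n \in \Z$, I need to produce a limiting random variable. The key fact I would invoke is that, by the Stone--Weierstrass theorem on the compact abelian group $S^1$, the trigonometric polynomials---finite $\C$-linear combinations of $\{e^{in\theta}\}_{n\in\Z}$---are dense in $C(S^1)$ in the sup norm.

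I would then proceed in three steps. First, for any trigonometric polynomial $g(\theta) = \sum_{|n|\leq N} a_n e^{in\theta}$, the sequence $\int g\, d\mu_{X_n} = \sum_{|n|\leq N} a_n \chi_{X_n}(n)$ converges to $\sum_{|n|\leq N} a_n \chi(n)$ by hypothesis. Second, for general $f \in C(S^1)$ and $\varepsilon > 0$, pick a trigonometric polynomial $g$ with $\|f - g\|_\infty < \varepsilon$; since each $\mu_{X_n}$ has total mass $1$,
\[\Bigl|\int f\, d\mu_{X_n} - \int f\, d\mu_{X_m}\Bigr| \leq 2\varepsilon + \Bigl|\int g\, d\mu_{X_n} - \int g\, d\mu_{X_m}\Bigr|,\]
which is less than $3\varepsilon$ for all $n, m$ large enough. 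Hence $(\int f\, d\mu_{X_n})_n$ is Cauchy in $\C$, so the limit $L(f) := \lim_n \int f\, d\mu_{X_n}$ exists. Third, $L : C(S^1) \to \C$ is a positive linear functional with $L(1) = 1$, so by the Riesz--Markov representation theorem there is a unique probability measure $\mu$ on $S^1$ with $L(f) = \int f\, d\mu$; taking any $X \sim \mu$ yields $X_n \to X$ in distribution.

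The main obstacle lies in step two: promoting pointwise convergence on the dense subspace of trigonometric polynomials to convergence on all of $C(S^1)$, and then recognising the limiting functional as integration against a probability measure. This works cleanly because $S^1$ is compact, so $C(S^1)$ is itself a Banach algebra generated by the characters and every positive normalised linear functional on it corresponds to a probability measure. This compactness is what bypasses the continuity-at-zero hypothesis that appears in the classical L\'evy continuity theorem on $\R$.
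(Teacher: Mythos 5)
Your proof is correct and follows exactly the route the paper has in mind: the paper offers no details beyond remarking that the theorem is a consequence of the Stone--Weierstrass theorem, and your argument (density of trigonometric polynomials in $C(S^1)$, a Cauchy argument to extend convergence from that dense subspace, and Riesz--Markov to identify the limiting functional with a probability measure) is a correct and complete fleshing-out of that remark. The forward direction via testing against $\theta\mapsto e^{in\theta}$ is likewise the standard step and matches the paper's characterisation of convergence in distribution by bounded continuous functions.
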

The above theorem is a consequence of the Stone-Weierstrass theorem.

\subsection{Free Probability}\label{sec:free-probability}
We will also make use of results from free probability theory. This is the study of noncommutative random variables and our main reference for this section is~\cite{NS06}.

A \emph{noncommutative probability space} is a unital algebra $\mc{A}$ equipped with a linear functional $\tau:\mc{A}\rightarrow\C$ such that $\tau(1)=1$, called the \emph{expectation}. We will always assume, as is commonly done, that $\mc{A}$ is a $C^\ast$-algebra, \emph{i.e.} $\mc{A}$ has a conjugate-linear involution $\ast$ and is complete with respect to a submultiplicative norm satisfying $\norm{a^\ast a}=\norm{a}^2$; and that $\tau$ is a faithful tracial state, \emph{i.e.} that $\tau(a^\ast a)\geq 0$, if $\tau(a^\ast a)=0$ then $a=0$, and $\tau(ab)=\tau(ba)$. Elements of a noncommutative probability space are called \emph{noncommutative random variables}. As a first example, it is worth noting that commutative probability theory, as seen in the previous section, fits as a special case of noncommutative probability theory. In fact, given a probability space $(\Omega,\scr{S},\mc{D})$, its set of (bounded, complex-valued) random variables corresponds to the noncommutative probability space $\mc{A}=L^\infty(\Omega,\mc{D})$ with $\tau(X)=\int X(\omega)d\mc{D}(\omega)=\expect X$. Another example of noncommutative probability space is $(\mathrm{M}_d(\C),\tr)$.

A \emph{mixed $\ast$-moment} of a collection of  noncommutative random variables $a_1,\ldots,a_n \in\mc{A}$ is an expectation of the form $\tau(a_{i_1}^{e_1}\cdots a_{i_m}^{e_m})$, where $i_j\in[n]$ and $e_j\in\{1,\ast\}$. For a single element $a\in\mc{A}$, the \emph{$\ast$-moments} of $a$ are the mixed $\ast$-moments of $a$. For normal $a\in\mc{A}$, \emph{i.e.} $aa^\ast=a^\ast a$, the $\ast$-moments are simply $\tau((a^\ast)^ka^{l})$ for integers $k,l\geq 0$. In this case, the idea of a \emph{probability distribution} for random variables generalizes quite directly. There exists a unique measure $\mu_a$ on (some compact subset of) $\C$ such that all the $\ast$-moments can be written as
$$\tau((a^\ast)^ka^l)=\int (z^\ast)^k z^ld\mu_a(z).$$
Knowing the $\ast$-moments, the expectation of any element of $C^\ast\!\!\group{a}$, the unital subalgebra generated by $a$, can be computed. 

The $\ast$-algebra of univariate $\ast$-polynomials, denoted $\C^\ast\!\!\group{x}$, is the $\ast$-algebra generated by the free variable $x$. This extends to multivariate $\ast$-polynomials $\C^\ast\!\!\group{x_1,\ldots,x_n}$.

\begin{definition}
    A set of noncommutative random variables $a_1,\ldots,a_n\in\mc{A}$ is \emph{freely independent} (or \emph{free} for short) if, for any $i_1,\ldots,i_m\in[n]$ such that $i_j\neq i_{j+1}$ and $\ast$-polynomials $P_1,\ldots,P_m\in\C^\ast\!\!\group{x}$ such that $\tau(P_j(a_{i_j}))=0$, we have that
    $$\tau(P_1(a_{i_1})\cdots P_m(a_{i_m}))=0.$$
\end{definition}
The usual commutative notion of independence guarantees the above property only in the case where all the $i_j$ are distinct, which can be supposed due to commutativity. If the $\ast$-moments of a set of free random variables are known, then any of the mixed $\ast$-moments can be computed. 

\begin{definition}
    A collection of sequences $(a_{1,n}),\ldots,(a_{l,n})$ of noncommutative random variables such that $a_{i,n}\in(\mc{A}_n,\tau_n)$ \emph{converges in $\ast$-distribution} to a collection of noncommutative random variables $a_1,\ldots,a_l\in(\mc{A},\tau)$ if all the mixed $\ast$-moments converge, \emph{i.e.} $$\lim_{n\rightarrow\infty}\tau_n(a_{i_1,n}^{e_1}\cdots a_{i_m,n}^{e_m})=\tau(a_{i_1}^{e_1}\cdots a_{i_m}^{e_m}).$$ We write this as $a_{1,n},\ldots,a_{l,n}\rightarrow a_1,\ldots,a_l$ ($\ast$-dist).
\end{definition}

The idea of a Haar-random unitary generalizes naturally to the context of free probability. We say a noncommutative random variable $u\in\mc{A}$ is a \emph{Haar unitary} if $u$ is unitary ($u^\ast u=uu^\ast=1$) and the $\ast$-moments $\tau(u^k)=\delta_{k,0}$ for all $k\in\Z$. This generalizes the Haar distribution on the unitary group as a Haar-random unitary $U$ on $\mathrm{U}_d(\C)$ satisfies $\int\tr(U^k)dU=\delta_{k,0}$. A Haar-random unitary can be seen as a noncommutative random variable on the space $\mc{A}=\mathrm{M}_d(L^\infty(\Omega,\mc{D}))$ for some probability space $(\Omega,\scr{S},\mc{D})$. This space has natural expectation $\tau(a)=\int\tr(a(\omega))d\mc{D}(\omega)$.

We make use of the following theorem.
\begin{theorem}[23.13 in \cite{NS06}]\label{thm:asymptotically-free}
    Let $(d_n)$ be a sequence of natural numbers such that $d_n\rightarrow\infty$. Let $(U_n)$ be a sequence of random variables such that $U_n\sim\mathrm{Haar}(\mathrm{U}_{d_n}(\C))$, and let $(A_n)$ be a sequence of matrices such that $A_{n}\in\mathrm{M}_{d_n}(\C)$. Suppose that the limiting $\ast$-moments $\lim_{n\rightarrow\infty}\tr(A_n^{e_1}\cdots A_n^{e_m})$ for $e_i\in\{1,\ast\}$ all exist. Then, there exists a noncommutative probability space $\mc{A}$ and freely independent random variables $u,a\in\mc{A}$ such that $U_n,A_n\rightarrow u,a$ ($\ast$-dist).
\end{theorem}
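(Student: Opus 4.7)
The plan is to construct the limiting space explicitly and then establish convergence of all mixed $\ast$-moments via Weingarten calculus. First I would build $(\mc{A},\tau)$ as follows: the assumed limits $m_{e_1,\ldots,e_r}=\lim_n\tr(A_n^{e_1}\cdots A_n^{e_r})$ define a positive tracial state on the $\ast$-algebra of polynomials $\C^\ast\!\!\group{x}$, so the GNS construction produces a tracial $C^\ast$-probability space $(\mc{A}_0,\tau_0)$ containing an element $a$ with precisely these $\ast$-moments. Independently, let $(\mc{A}_1,\tau_1)$ be a tracial space containing a Haar unitary $u$ (for instance $L^\infty(S^1)$ with normalized Lebesgue measure). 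Define $(\mc{A},\tau)$ as the reduced free product of $(\mc{A}_0,\tau_0)$ and $(\mc{A}_1,\tau_1)$; by construction $u$ and $a$ are free in $(\mc{A},\tau)$, and $u$ remains Haar while $a$ retains its $\ast$-distribution.

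Next I would reduce convergence in $\ast$-distribution to a concrete moment identity. Since $u$ and $a$ are normal, and by the tracial and cyclic properties, it suffices to show
\begin{equation*}
\lim_{n\to\infty}\expect\tr\bigl(P_1(A_n)\,U_n^{\epsilon_1}\,P_2(A_n)\,U_n^{\epsilon_2}\cdots P_k(A_n)\,U_n^{\epsilon_k}\bigr)
= \tau\bigl(P_1(a)\,u^{\epsilon_1}\cdots P_k(a)\,u^{\epsilon_k}\bigr)
\end{equation*}
for all $k\geq 0$, all $\epsilon_i\in\{1,\ast\}$, and all $\ast$-polynomials $P_i\in\C^\ast\!\!\group{x}$. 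The right-hand side, by the moment-cumulant formula and the characterization of Haar unitaries via their free cumulants, reduces to a sum over non-crossing pair partitions of the $U$-positions that alternate $U$ with $U^\ast$, weighted by certain products of traces of products of the $P_i(a)$ grouped by the partition blocks.

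The main analytic step is then the Weingarten calculation. Split the expectation using
\begin{equation*}
\int_{\mathrm{U}_d}\!\! U_{i_1j_1}\!\cdots U_{i_pj_p}\overline{U_{i'_1j'_1}}\cdots\overline{U_{i'_qj'_q}}\,dU
= \delta_{p,q}\!\sum_{\sigma,\tau\in S_p}\!\!\prod_{l=1}^p\delta_{i_l,i'_{\sigma(l)}}\delta_{j_l,j'_{\tau(l)}}\,\mathrm{Wg}(d,\sigma\tau^{-1}),
\end{equation*}
which forces $k$ to be even with an equal number of $U$ and $U^\ast$ factors, and expresses the trace as a sum over pairs $(\sigma,\tau)\in S_{k/2}\times S_{k/2}$ of products $d^{-1}\prod\tr(\cdots)$ of traces of sub-products of the $P_i(A_n)$ along the cycles of certain permutations derived from $\sigma,\tau$ and the fixed pairing between the $U$ and $U^\ast$ slots. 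Using the leading asymptotic $\mathrm{Wg}(d,\pi)=d^{-k/2-|\pi|}(\mathrm{Moeb}(\pi)+O(d^{-2}))$, where $|\pi|$ is the minimal number of transpositions needed, together with the fact that each trace of a product of $A_n$'s remains bounded as $d_n\to\infty$, one shows that only terms with $\sigma=\tau$ and for which the induced pairing is non-crossing and alternates $U,U^\ast$ contribute to the limit; all other terms are $O(d_n^{-2})$.

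The hard part will be the combinatorial bookkeeping: identifying which pairs $(\sigma,\tau)$ survive the $d_n\to\infty$ limit, and verifying that the surviving sum equals the free-product expression on the right-hand side. I expect to handle this by invoking the standard correspondence (essentially due to Collins) between non-crossing pairings arising in the Weingarten expansion and the free cumulant expansion of $\tau$ on the free product, so that after cancellations one obtains precisely the moment-cumulant formula for $(\mc{A},\tau)$. Convergence of the $A_n$-factors by hypothesis then completes the identification, establishing $U_n,A_n\to u,a$ in $\ast$-distribution.
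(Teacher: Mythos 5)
The paper never proves this statement: it is imported verbatim as Theorem 23.13 of \cite{NS06}, so there is no internal proof to compare against. Your plan is the standard Weingarten-calculus proof of asymptotic freeness of Haar unitaries from deterministic matrices (essentially Collins and Collins--\'Sniady, also presented in Mingo--Speicher), which is a legitimate and self-contained route, and in outline it is sound: realize the limit as $u$ Haar and $a$ with the limiting $\ast$-distribution, free in a free product, and verify convergence of all mixed $\ast$-moments by integrating out $U_n$. Two points need repair, though. First, the GNS step as you state it is not justified by the hypotheses: the theorem only assumes the limiting $\ast$-moments of $A_n$ exist, with no uniform operator-norm bound, and a limiting moment sequence need not be realizable by a bounded operator, so you cannot in general land in a $C^\ast$-probability space. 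The fix is to work, as \cite{NS06} does, with purely algebraic $\ast$-probability spaces (the free product exists at that level, and only moments are at stake); in this paper's application the $A_n$ are unitaries, so the issue is invisible, but your construction should not rely on that.

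Second, and more substantively, the combinatorial criterion you propose for the surviving terms is wrong as stated. It is not true that only pairs with $\sigma=\tau$ contribute in the limit: the order in $d$ of a term is governed jointly by $|\sigma\tau^{-1}|$ (through $\mathrm{Wg}$) and by the number of index loops the pair produces, and off-diagonal pairs on a geodesic contribute at leading order. Indeed this is exactly why the M\"obius/Weingarten weights survive into the limit and produce the signed-Catalan free cumulants $\kappa_{2m}(u,u^\ast,\ldots,u,u^\ast)=(-1)^{m-1}C_{m-1}$ of a Haar unitary; an unweighted sum over non-crossing alternating pairings with block traces, which is what your description of the right-hand side suggests, already disagrees with $\tau\bigl(u\,b_1\,u^\ast\,b_2\,u\,b_3\,u^\ast\,b_4\bigr)$ for centered $b_i$, where the $\kappa_4=-1$ correction enters. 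Since you explicitly defer this bookkeeping to the known correspondence of Collins, the plan is completable, but the intermediate claim ``only $\sigma=\tau$ and non-crossing alternating pairings survive'' would fail if carried out literally and must be replaced by the geodesic-pair (annular non-crossing) analysis together with the surviving M\"obius weights.
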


Note that, by the convergence in $\ast$-distribution, we have in particular that $u$ is Haar unitary and the $\ast$-moments of $a$ are the limiting $\ast$-moments of $A_n$. In this context, we say that $(U_n)$ and $(A_n)$ are \emph{asymptotically free}.

\subsection{Group Theory}\label{sec:group-presentations}

For any group $G$ and subset $A\subseteq G$, we write $\group{A}$ for the subgroup generated by $A$, and $\llangle A \rrangle$ for its normal closure. The identity element of a group is denoted $e$. For two group elements $g,h$ we let $[g,h]$ denote the commutator $ghg^{-1}h^{-1}$.

\vspace{0.2cm}

\noindent\textbf{Presentations. } We present groups we work with by means of generators and relations. A set of \emph{generators} is a set of symbols $S$, and a set of \emph{relators} is a set $R$ of words in $S\cup S^{-1}$, where $S^{-1}=\set*{x^{-1}}{x\in S}$. Each relator $r\in R$ corresponds to the \emph{relation} $r=e$. The free group generated by $S$, $\langle S\rangle$ is the group of words in $S\cup S^{-1}$ under concatenation --- the identity element is the empty word $e$ and inverses exist since we include $S^{-1}$ in the generating set. Then, the \emph{group generated by generators $S$ and relators $R$} is the quotient group $\group{S}{R}:=\langle S\rangle/\langle\!\langle R\rangle\!\rangle$. As noted, we can equivalently phrase the relators as relations on the words in $S$. A group $G$ is \emph{finitely-presented} if $G=\group{S}{R}$ where $S$ and $R$ are both finite sets.

\vspace{0.2cm}

\noindent\textbf{Free and amalgamated products. } Given two groups $G$ and $H$, their \emph{free product} is the group $G\ast H$ generated by $G\cup H$ with no relations between the elements of $G$ and $H$ except that the identities $e_G=e_H$. The elements of $G\ast H$ are words in $G\cup H$. Given subgroups $A\leq G$ and $B\leq H$ that are isomorphic, the \emph{amalgamated product} $G\ast_A H$ is the quotient $(G\ast H)/\langle\!\langle R\rangle\!\rangle$ where $R$ is the set of relators $a^{-1}\phi(a)$ for $a\in A$, where $\phi:A\rightarrow B$ is the isomorphism. 

\vspace{0.2cm}

\noindent\textbf{Representations. } A ($\C$-linear) \emph{representation} of a group $G$ is a homomorphism $\varrho:G\rightarrow\mathrm{GL}_d(\C)$. Suppose $G$ is a finite group. Then, $\varrho$ is equivalent to a unitary representation $\varrho(G)\in\mathrm{U}_d(\C)$ by conjugation, so we may assume any representation is unitary. Also, $\varrho$ decomposes as a direct sum of irreducibles (representations with no nontrivial invariant subspaces). Every finite group has finitely many classes of irreducible representations, modulo equivalence by conjugation. For conciseness, we refer to these classes as \emph{irreps}. The number of these irreps is equal to the number of conjugacy classes, subsets $\mathrm{Cl}(g)=\set*{hgh^{-1}}{G}$ that partition $G$. See \cite{Ser77} for a full treatment.

\vspace{0.2cm}

\noindent\textbf{Group algebras. } For any group $G$, the \emph{group algebra} $\C[G]$ is the set of finitely-supported functions $G\rightarrow\C$. We write an element $x\in\C[G]$ as $x=\sum_{g\in G}x(g)g$. $\C[G]$ is an algebra as it has a natural structure as a $\C$-vector space, and it also has a product given by convolution: $xy=\sum_{g,h\in G}x(g)y(h)gh=\sum_{g\in G}\sum_{h\in G}x(gh^{-1})y(h)g$. The multiplicative identity in $\C[G]$ is the group identity $e$. Further, $\C[G]$ is a $\ast$-algebra with involution $x^\ast=\sum_gx(g)^\ast g^{-1}$. The $\ast$-homomorphisms $\C[G]\rightarrow\mathrm{B}(\mc{H})$ correspond exactly to the unitary representations of $G$.

\subsection{Von Neumann Algebras}
A \emph{von Neumann algebra} is a unital $\ast$-subalgebra $\mc{M}\subseteq \mrm{B}(\mc{H})$ for some Hilbert space $\mc{H}$ that is equal to its double commutant $\mc{M}=\mc{M}''$, where $S':=\set*{x\in\mrm{B}(\mc{H})}{xy=yx\;\forall\,y\in S}$ for any $S\subseteq\mrm{B}(\mc{H})$. The double commutant condition is equivalent to having $\mc{M}$ be closed in the strong, weak, or weak-$\ast$ operator topologies on $\mathrm{B}(\mc{H})$. In particular, the weak-$\ast$ topology is the smallest topology making any bounded linear functional continuous. A sequence $(x_n)$ converges to $x$ in the weak-$\ast$ topology if, for any bounded linear functional $\phi:\mrm{B}(\mc{H})\rightarrow\C$, the sequence $\phi(x_n)\rightarrow\phi(x)$ in $\C$.

We focus mainly on \emph{finite} von Neumann algebras: $\mc{M}$ is finite iff it has a weak-$\ast$ continuous tracial state $\tau:\mc{M}\rightarrow\C$, that is a faithful positive linear functional such that $\tau(xy)=\tau(yx)$. A functional $\tau$ is weak-$\ast$ continuous if for any weak-$\ast$ convergent sequence $(x_n)\rightarrow x$, then $\tau(x_n)\rightarrow\tau(x)$. An important fact about finite von Neumann algebras is that any compression remains finite. That is, for any nonzero projection $p\in\mc{M}$, \emph{i.e.} $p=p^\ast=p^2$, $p\mc{M}p\subseteq\mrm{B}(p\mc{H})$ is a von Neumann algebra with trace~$\frac{1}{\tau(p)}\tau|_{p\mc{M}p}$, so it is finite. Note that if $p$ is also central, then $p\mc{M}p = p\mc{M}$.  

The most important examples of von Neumann algebras we use are von Neumann group algebras. Let $G$ be an (infinite discrete) group, and consider the evaluation map $\tau:\C[G]\rightarrow\C$, $x\mapsto x(e)$. $\tau$ is a unital linear functional, and it is positive and faithful as
$$\tau(x^\ast x)=\sum_{g,h}x(g)^\ast x(h)\tau(g^{-1}h)=\sum_g|x(g)|^2,$$
which is positive, and $0$ if and only if all the terms are $0$. Also, $\tau$ is tracial as
$$\tau(xy)=\sum_{g}x(g)y(g^{-1})=\sum_gy(g)x(g^{-1})=\tau(yx).$$
We call this the canonical trace on $\C[G]$. Further, $\ip{x}{y}:=\tau(x^\ast y)$ gives an inner product on $\C[G]$. Consider the completion of $\C[G]$ with respect to the topology generated by the inner product, denoted $\ell^2 G$. This is a Hilbert space. $\C[G]$ acts by left multiplication on $\ell^2 G$, therefore $\C[G]$ embeds into $\mrm{B}(\ell^2G)$. We define the \emph{von Neumann group algebra} $\mrm{L}(G)$ by the double commutant of the image of $\C[G]$ in $\mrm{B}(\ell^2G)$. $\mrm{L}(G)$ is a finite von Neumann algebra since the canonical trace $\tau$ of $\C[G]$ extends to a weak-$\ast$ continuous trace on $\mrm{L}(G)$. For finite $G$ we have $\ell^2 G = \C[G] = \mrm{L}(G)$ as every finite vector space is complete.

For a complete introduction to von Neumann algebras, see~\cite{Bla06}

\subsection{Semidefinite Programming}
A \emph{semidefinite program} (SDP) is any optimization problem of the form:
\begin{equation}\label{eq:SDPdef}
\openup\jot
\begin{aligned}[t]
\quad\text{ maximize:}\quad &\ip{C}{X} \\
        \text{subject to:}\quad & \mc{A}_i(X) = B_i \text{ for } i \in \{1,\ldots,m\}\\
        & X \in \mathrm{M}_d(\C)_{\geq 0},
\end{aligned}
\end{equation}
where $C \in \mathrm{M}_d(\C), B_i \in \mathrm{M}_{d_i}(\C)$ are Hermitian and $\mc{A}_i: \mathrm{M}_d(\C) \to \mathrm{M}_{d_i}(\C)$ are linear maps. Since any feasible solution $X$ is positive semidefinite, we can use its Gram matrix characterisation to equivalently write the SDP in a vector form. Gram vectors are $\vec{v}_i\in \mathbb{C}^d$ such that $X_{ij}=\ip{\vec{v}_i}{\vec{v}_j}$, which always exist by spectral decomposition. SDPs can be approximated to any precision $\epsilon$ in time $\poly(d,\log(1/\epsilon))$ \cite{VandenbergheBoydSDP}. For more on SDPs, see \cite{boyd2004convex,watrousAdvQIT}.

\section{Main Definitions and Results} \label{sec:csp-definitions}

In this section, we introduce the classes of CSPs that we study, and state our results in terms of the approximation ratios we obtain for them. At first we define the class of \emph{linear $2$-CSPs} known as $\linlegacy{2}{k}$. We then discuss two natural \emph{noncommutative generalizations} of them. In the following two subsections we introduce the subclasses of \emph{homogeneous CSPs} and \emph{smooth CSPs}, where we also present our main results about them.

$\linlegacy{2}{k}$ \cite{hastad,goemansmax3cut,khot}, or $\lin{2}{k}$ for short (since we only consider $2$-CSPs in this work), is a class of CSPs that includes the Unique Games\footnote{To be more precise, the class $\lin{2}{k}$ only includes those Unique-Games that are linear. However by \cite{khot,mossel}, the inapproximability of linear Unique-Games is equivalent to that of general Unique-Games.} \cite{khot_original} as well as optimization problems such as $\cut{k}$ \cite{papadimitriou,goemansmaxcut,frieze,goemansmax3cut,klerk}. Given $N$ variables $x_1,\ldots,x_N$ taking values in $\Z_k = \{0,1,\ldots,k-1\}$ and a system of linear equations $x_j - x_i = c_{ij}$ and inequations $x_j - x_i \neq c_{ij}$ for constants $c_{ij} \in \Z_k$, each with an associated weight $w_{ij} \geq 0$, the goal of $\lin{2}{k}$ is to maximize the total weight of satisfied constraints.\footnote{We may choose the constraints to be symmetric in the sense that $c_{ij}=-c_{ji}$ and $w_{ij}=w_{ji}$, because these constraints involve the same variables. Hence, we always assume that these symmetry conditions hold.}

$\ugame{k}$ are those instances of $\lin{2}{k}$ with only equation constraints.\footnote{The term unique refers to the fact that in equation constraints involving only two variables, any assignment to one variable determines uniquely the choice of satisfying assignment to the other variable.} On the other end, instances with only inequation constraints and $c_{ij} = 0$ are the $\cut{k}$ problems. 

We can equivalently state a $\lin{2}{k}$ instance in multiplicative form where the variables $x_1,\ldots,x_N$ take values that are $k$-th roots of unity, and the constraints are $x_i^{-1}x_j = \omega^{c_{ij}}$ or $x_i^{-1}x_j \neq \omega^{c_{ij}}$ for constants $c_{ij} \in \Z_k$. Here $\omega$ is a primitive $k$-th root of unity. This multiplicative framing is also sometimes preferred in the literature, see for example~\cite{goemansmax3cut}. When $x,y$ are $k$-th roots of unity, the polynomial $\frac{1}{k}\sum_{s=0}^{k-1} \omega^{-cs}x^{-s}y^s$ indicates whether the equation $x^{-1}y = \omega^c$ is satisfied, since
\begin{align*}
\frac{1}{k}\sum_{s=0}^{k-1} \omega^{-cs}x^{-s}y^s = \begin{cases}0 & x^{-1}y \neq \omega^c,\\ 1 & x^{-1}y = \omega^c.\end{cases}
\end{align*}
So, we can frame $\lin{2}{k}$ as a \emph{polynomial optimization} problem
\begin{equation}\label{eq:lin2k}
\openup\jot
\begin{aligned}[t]
\text{ maximize:}\quad &\sum_{(i,j)\in \mathcal{E}} \frac{w_{ij}}{k} {\sum_{s=0}^{k-1} \omega^{-c_{ij}s} x_i^{-s} x_j^{s}} + \sum_{(i,j)\in \mathcal{I}} w_{ij} \bigparen{1 - \frac{1}{k}{\sum_{s=0}^{k-1} \omega^{-c_{ij}s} x_i^{-s} x_j^{s}}} \\
        \text{subject to:}\quad & x_i^k = 1 \text{ and } x_i \in \C,
\end{aligned}
\end{equation}
where $\mathcal{E}$ is the set of equation constraints and $\mathcal{I}$ is the set of inequation constraints.

$\lin{2}{k}$ is famously $\NP$-hard~\cite{karp}. However, efficient approximation algorithms for these problems have long been known. To understand the quality of these approximations, we need the following definition.
\begin{definition}\label{def:approximation-ratio}
The \emph{approximation ratio} of an algorithm $\mathcal{A}$ for a CSP such as $\lin{2}{k}$ is the quantity $\inf_I \frac{\mathcal{A}(I)}{\opt(I)}$
where $I$ ranges over all possible instances of the problem and $\opt(I)$ is the optimal value of the instance $I$. 
\end{definition}
Most of these approximation algorithms for $\lin{2}{k}$ are based on semidefinite programming (SDP) relaxations. We will see some of these relaxations later in this section. The following definition captures the quality of an SDP relaxation:

\begin{definition}\label{def:integrality-gap} The \emph{integrality gap} of an SDP relaxation for a CSP is $\inf_I\frac{\opt(I)}{\sdp(I)}$, where the infimum ranges over all instances $I$ of the CSP and $\sdp(I)$ is the optimal value of the SDP on the instance $I$.
\end{definition}

We now discuss the noncommutative variants of $\lin{2}{k}$. One such variant is obtained from the classical problem by relaxing the commutativity constraint in the polynomial \eqref{eq:lin2k}. This leads us to the following \emph{noncommutative polynomial optimization} problem
\begin{equation}\label{eq:nlin2k_norm}
\openup\jot
\begin{aligned}[t]
\text{ maximize:}\quad & \norm[\Big]{\sum_{(i,j)\in \mathcal{E}} \frac{w_{ij}}{k} {\sum_{s=0}^{k-1} \omega^{-c_{ij}s} X_i^{-s} X_j^{s}} + \sum_{(i,j)\in \mathcal{I}} w_{ij} \bigparen{1 - \frac{1}{k}{\sum_{s=0}^{k-1} \omega^{-c_{ij}s} X_i^{-s} X_j^{s}}}}_{\mathrm{op}} \\
        \text{subject to:}
                          \quad & X_i^* X_i = X_i^k = 1,
\end{aligned}
\end{equation}
where the maximization ranges over all finite-dimensional Hilbert spaces and unitary operators of order $k$ acting on them. The value of \eqref{eq:nlin2k_norm} is an upper bound on the classical value since \eqref{eq:lin2k} is the restriction of \eqref{eq:nlin2k_norm} to the one-dimensional Hilbert space. 

It turns out that this noncommutative problem is equivalent to an SDP, so it can be solved efficiently. For example in the special case of $\cut{k}$, this SDP is the canonical SDP relaxation of the classical problem. This indicates that there is a close relationship between noncommutative CSPs and classical CSPs, that we will elaborate on in Section \ref{sec:brave-new-world}.

There are many other variations of noncommutative CSPs. For the rest of the paper, we will focus on the following tracial variant of the problem, since it is well-motivated in quantum information:
\begin{equation}\label{eq:nlin2k}
\openup\jot
\begin{aligned}[t]
\text{ maximize:}\quad & \tr\Bigparen{\sum_{(i,j)\in \mathcal{E}} \frac{w_{ij}}{k} {\sum_{s=0}^{k-1} \omega^{-c_{ij}s} X_i^{-s} X_j^{s}} + \sum_{(i,j)\in \mathcal{I}} w_{ij} \bigparen{1 - \frac{1}{k}{\sum_{s=0}^{k-1} \omega^{-c_{ij}s} X_i^{-s} X_j^{s}}}} \\
        \text{subject to:}
                          \quad & X_i^*X_i = X_i^k = 1.\\
\end{aligned}
\end{equation}
From here on this is what we refer to as the noncommutative $\lin{2}{k}$, or $\nlin{2}{k}$ for short. When there is an ambiguity we refer to \eqref{eq:nlin2k_norm} as norm $\lin{2}{k}$ and \eqref{eq:nlin2k} as tracial $\lin{2}{k}$.

\begin{remark*}
There are other noncommutative generalizations of classical CSPs, such as the problem known as Quantum $\maxcut$ (which is closely related to local Hamiltonian problems). See Section~\ref{sec:quantum-max-cut} for more discussion.   
\end{remark*}

An equivalent and fruitful way of looking at CSPs is their formulation as nonlocal games, better known as multiprover interactive proofs in theoretical computer science. For example, although we did not present Unique Games as games here, they were historically presented as a $1$-round-$2$-player nonlocal game \cite{khot_original}. It turns out that the commutative polynomial optimization~\eqref{eq:lin2k} captures what is known as the \emph{synchronous classical value} of the nonlocal game arising from the CSP instance. Analogously the noncommutative polynomial optimization \eqref{eq:nlin2k} captures the \emph{synchronous quantum value} of the same game. For a brief introduction to nonlocal games, see Section \ref{sec:nonlocal_games}, where we also explain the connection between $2$-CSPs and $2$-player nonlocal games.

\subsection{Homogeneous CSPs and Max-$k$-Cut}\label{sec:def-homogeneous-csps}
If in the definition of $\lin{2}{k}$ we restrict $c_{ij}$'s to be all zero we arrive at the subclass of homogeneous $\lin{2}{k}$, or $\homlin{2}{k}$ for short, which is expressed as
\begin{equation}\label{eq:def-homlin2k}
\openup\jot
\begin{aligned}[t]
\text{ maximize:}\quad &\sum_{(i,j)\in \mathcal{E}} \frac{w_{ij}}{k} {\sum_{s=0}^{k-1} x_i^{-s} x_j^{s}} + \sum_{(i,j)\in \mathcal{I}} w_{ij} \parens[\Big]{1 - \frac{1}{k}{\sum_{s=0}^{k-1} x_i^{-s} x_j^{s}}} \\
        \text{subject to:}\quad & x_i^k = 1 \text{ and } x_i \in \C.
\end{aligned}
\end{equation}
The noncommutative $\homlin{2}{k}$, denoted by $\nhomlin{2}{k}$, takes the form
\begin{equation}\label{eq:def-nhomlin2k}
\openup\jot
\begin{aligned}[t]
\text{ maximize:}\quad &\sum_{(i,j)\in \mathcal{E}} \frac{w_{ij}}{k} {\sum_{s=0}^{k-1} \ip{X_i^{s}}{X_j^{s}}} + \sum_{(i,j)\in \mathcal{I}} w_{ij} \parens[\Big]{1 - \frac{1}{k}{\sum_{s=0}^{k-1} \ip{X_i^{s}}{X_j^{s}}}} \\
        \text{subject to:}\quad & X_i^*X_i = X_i^k = 1.
\end{aligned}
\end{equation}
The canonical SDP relaxation of both the classical and noncommutative problems is given by 
\begin{equation}\label{eq:def-homlin2k-sdp}
\openup\jot
\begin{aligned}[t]
\text{ maximize:}\quad &\sum_{(i,j)\in \mathcal{E}} \frac{w_{ij}}{k} \paren{1 + (k-1)X_{ij}} + \sum_{(i,j)\in \mathcal{I}} w_{ij}\frac{k-1}{k}(1 - X_{ij})\\
        \text{subject to:}\quad & X \in \mathrm{M}_N(\R)_{\geq 0},\\
                          \quad & X_{ii} = 1,\\
                          \quad & X_{ij} \geq -\frac{1}{k-1},     
\end{aligned}
\end{equation}
which can be equivalently written in the vector form as
\begin{equation}\label{eq:def-homlin2k-sdp-vector}
\openup\jot
\begin{aligned}[t]
\text{ maximize:}\quad &\sum_{(i,j)\in \mathcal{E}} \frac{w_{ij}}{k} \paren{1 + (k-1)\ip{\vec{x}_i}{\vec{x}_j}} + \sum_{(i,j)\in \mathcal{I}} w_{ij}\frac{k-1}{k}(1 - \ip{\vec{x}_i}{\vec{x}_j})\\
        \text{subject to:}\quad & \vec{x}_i \in \mathbb{R}^{N},\\
                          \quad & \|\vec{x}_i\| = 1,\\
                          \quad & \ip{\vec{x}_i}{\vec{x}_j} \geq -\frac{1}{k-1}.  
\end{aligned}
\end{equation}
\begin{proposition} \eqref{eq:def-homlin2k-sdp} is a relaxation of \eqref{eq:def-nhomlin2k}.
\end{proposition}
\begin{proof}
Suppose $X_1,\ldots,X_n$ is any feasible solution to $\nhomlin{2}{k}$. We construct a feasible solution $X$ in the SDP achieving the same objective value as $X_1,\ldots,X_n$. For $0\leq s<k$ let $W_s$ be the matrix defined so that the $(i,j)$-th entry is $\ip{X_i^s}{X_j^s}$. Then we clearly have $W_s \geq 0$ with diagonal entries that are all $1$. Now let $X = \frac{1}{k-1} \sum_{s=1}^{k-1} W_s$. It should be clear that $X$ is real, symmetric, and positive semidefinite $X \geq 0$. Also clearly $X_{ii} = 1$ for all $i$. Also since $\sum_{s=0}^{k-1} \ip{X_i^s}{X_j^s} \geq 0$ we have $X_{ij} = \frac{1}{k-1} \sum_{s=1}^{k-1} \ip{X_i^s}{X_j^s} \geq -\frac{1}{k-1}.$
Thus $X$ is a feasible solution in the SDP and it achieves the same objective value as that of $X_1,\ldots,X_n$ in $\nhomlin{2}{k}$. 
\end{proof}

$\cut{k}$ is the special case of $\homlin{2}{k}$ with only inequation constraints
\begin{equation}\label{eq:def-maxcutk}
\openup\jot
\begin{aligned}[t]
\text{ maximize:}\quad &\sum_{(i,j)\in E} w_{ij} \parens[\Big]{1 - \frac{1}{k}{\sum_{s=0}^{k-1} x_i^{-s} x_j^{s}}} \\
        \text{subject to:}\quad & x_i^k = 1 \text{ and } x_i \in \C \text{ for all } i \in V,
\end{aligned}
\end{equation}
where $V = \{1,\ldots,N\}$ and $E$ are the vertex and edge sets of the underlying graph $G=(V,E)$ of the instance. The noncommutative variant, denoted $\ncut{k}$, is
\begin{equation}\label{eq:def-nmaxcutk}
\openup\jot
\begin{aligned}[t]
\text{ maximize:}\quad &\sum_{(i,j)\in E} w_{ij} \parens[\Big]{1 - \frac{1}{k}{\sum_{s=0}^{k-1} \ip{X_i^{s}}{X_j^{s}}}} \\
        \text{subject to:}\quad & X_i^*X_i = X_i^k = 1 \text{ for all } i \in V.
\end{aligned}
\end{equation}
The canonical SDP relaxation of both these problems is
\begin{equation}\label{eq:def-maxcutk-sdp}
\openup\jot
\begin{aligned}[t]
\text{ maximize:}\quad &\frac{k-1}{k}\sum_{(i,j)\in E} w_{ij}(1 - X_{ij})\\
        \text{subject to:}\quad & X \in \mathrm{M}_N(\R)_{\geq 0},\\
                          \quad & X_{ii} = 1,\\
                          \quad & X_{ij} \geq -\frac{1}{k-1},     
\end{aligned}
\end{equation}

We mentioned earlier our $0.864$-approximation algorithm for $\ncut{3}$. This means that, if the value of the canonical SDP relaxation~\eqref{eq:vectorcut3-in-approximate-isometry} for an instance $I$ is $\sdp(I)$, the optimal value of $\ncut{3}$ is in between $0.864 \sdp(I)$ and $\sdp(I)$. Compare the ratio $0.864$ for the noncommutative problem with the best approximation ratio for classical $\cut{3}$, which stands at $0.836$~\cite{goemansmax3cut}. These results are comparable because we make use of the same SDP relaxation and every classical solution is a one-dimensional noncommutative solution. 

Our algorithm and its analysis extend directly to all $\ncut{k}$ problems, giving closed-form expressions for the approximation ratios (see Section \ref{sec:algorithm}). In Table \ref{tab:ratios} we list these ratios for some $k$ and compare them with those of the best known algorithms for the classical variant. We also calculate our approximation ratios for homogeneous problems for some values of $k$ in Table \ref{tab:ratios-homogeneous} and compare them with those of the classical variant.\footnote{In this paper we only prove the noncommutative ratios. The classical values that are not cited in Table \ref{tab:ratios-homogeneous} follow from a variant of our approximation framework for noncommutative problems tailored to the commutative special case. The main idea behind this classical variant is explained in Appendix \ref{sec:classical-rel-dist}.}

\begin{table}[h!]
\begin{center}
\begin{tabular}{ p{2.5cm}||p{3.5cm}|p{3.5cm} } 
$\cut{k}$ & Classical & Noncommutative \\
\hline
$k = 2$ & $.878$ \cite{goemansmaxcut} & $1$ \cite{tsirelson1}\\ 
\hline
$k = 3$ & $.836$ \cite{goemansmax3cut} & $.8649$\\
\hline
$k = 4$ & $.857$ \cite{klerk} & $.8642$ \\
\hline
$k = 5$ & $.876$ \cite{klerk} & $.8746$\\
\hline
$k = 10$ & $.926$ \cite{klerk} & $.9195$\\
\end{tabular}
\end{center}
\caption{Approximation ratios for classical and noncommutative $\cut{k}$ for various $k$.}\label{tab:ratios}
\end{table}

\begin{table}[h!]
\begin{center}
\begin{tabular}{ p{2.5cm}||p{3.5cm}|p{3.5cm} } 
$\homlin{2}{k}$ & Classical & Noncommutative \\
\hline
$k=2$ & $.878$ \cite{goemansmaxcut} & $1$ \cite{tsirelson1} \\ 
\hline
$k=3$ & $.793$ \cite{goemansmax3cut} & $.864$ \\
\hline
$k=4$ & $.708$ & $.813$ \\
\hline
$k=5$ & $.629$ & $.738$\\
\hline
$k=10$ & $.383$ & $.475$\\
\end{tabular}
\end{center}
\caption{Approximation ratios for classical and noncommutative $\homlin{2}{k}$ for various $k$.}\label{tab:ratios-homogeneous}
\end{table}

As indicated in Table \ref{tab:ratios}, when $k$ increases, our algorithm for the noncommutative problem does slightly worse than the best known algorithm for the classical problem. Therefore, for those problems, the best noncommutative algorithm known is still the one for the one-dimensional classical problem.

\begin{question}\label{q:best-ratio-for-cut-5}
Is it possible to approximate $\ncut{5}$ with a guarantee that is strictly better than the best-known algorithm for classical $\cut{5}$?
\end{question}

\begin{remark*}
    The reason for focusing on the $\homlin{2}{k}$ is that their canonical SDP relaxation \eqref{eq:def-homlin2k-sdp-vector} is real. As discussed in the introduction, so far we only know approximate isometries from real vector spaces. Since approximate isometries is a crucial ingredient in our approximation framework, one needs new ideas to be able to tackle all $\nlin{2}{k}$.
\end{remark*}

\subsection{Smooth CSPs}\label{sec:smooth-csps}
We now discuss a class of CSPs we call \emph{smooth CSPs} that fits naturally within our approximation framework. The key properties of this variant of $\lin{2}{k}$ are that the objective function is quadratic and rewards assignments based on their proximity to the perfectly satisfying assignment. This variant agrees with $\lin{2}{k}$ when $k\leq 3$. Smooth CSPs are motivated by a connection to Grothendieck inequalities, which is discussed further in Section~\ref{sec:grothendieck}. 

The smooth $\lin{2}{k}$, or $\slin{2}{k}$ for short, is the following optimization problem:
\begin{equation}\label{eq:slin2k}
\openup\jot
\begin{aligned}[t]
\text{ maximize:}\quad &\sum_{(i,j)\in \mathcal{E}} w_{ij} \Bigparen{{1 - \frac{1}{a_k} |x_j - \omega^{c_{ij}}x_i|^2}} + \sum_{(i,j)\in \mathcal{I}} \frac{w_{ij}}{a_k}|x_j - \omega^{c_{ij}}x_i|^2 \\
        \text{subject to:}\quad & x_i^k = 1 \text{ and } x_i \in \C,
\end{aligned}
\end{equation}
where the normalization factor $a_k$ ensures that the terms take only values between $0$ and $1$.\footnote{The normalization constant is $a_k=|1-\omega^{\floor{k/2}}|^2$, which is the largest distance between any two $k$-th roots of unity. For example $a_3 = 3$ and $a_k=4$ when $k$ is even.} Comparing this with \eqref{eq:lin2k}, almost-satisfying assignments are not rewarded in $\lin{2}{k}$ but they receive a non-zero reward in the smooth case.\footnote{The constraint $x_i^{-1}x_j=\omega^{c_{ij}}$ corresponds to $1-\frac{1}{a_k}\abs{x_j-\omega^{c_{ij}}x_i}^2$ in the objective function~\eqref{eq:slin2k}. For this constraint, the assignment $x_i = 1$ and $x_j = \omega^{c_{ij}\pm\delta}$ is almost satisfying if $\delta$ is small. The objective function of the smooth CSP rewards the assignment with small $\delta$ more than the assignment with large~$\delta$.} The objective function of $\lin{2}{k}$ is a polynomial of degree $2\floor{k/2}$ in the $2N$ variables $x_1, \ldots,x_N,x_1^*, \ldots, x_N^*$. On the other hand, the objective function of $\slin{2}{k}$ is always a quadratic polynomial.

The noncommutative analogue of $\slin{2}{k}$, denoted by $\nslin{2}{k}$, is 
\begin{equation}\label{eq:nslin2k}
\openup\jot
\begin{aligned}[t]
\text{ maximize:}\quad &\sum_{(i,j)\in \mathcal{E}} w_{ij} \paren{{1 - \frac{1}{a_k} \|X_j - \omega^{c_{ij}}X_i\|^2}} + \sum_{(i,j)\in \mathcal{I}} \frac{w_{ij}}{a_k}\|X_j - \omega^{c_{ij}}X_i\|^2 \\
        \text{subject to:}\quad & X_i^*X_i = X_i^k = 1,\\
\end{aligned}
\end{equation}
where $\|\cdot\|$ is the dimension-normalized Frobenius norm. This can be equivalently written as
\begin{equation}\label{eq:def-nslin2k-using-inner-product}
\openup\jot
\begin{aligned}[t]
\text{ maximize:}\quad &\sum_{(i,j)\in \mathcal{E}} w_{ij} \parens[\Big]{1 -\frac{2}{a_k} + \frac{2}{a_k}\Re\parens[\Big]{\omega^{-c_{ij}}\ip{X_i}{X_j}}}+ \sum_{(i,j)\in \mathcal{I}} \frac{2w_{ij}}{a_k}\parens[\Big]{1 - \Re\parens[\Big]{\omega^{-c_{ij}}\ip{X_i}{X_j}}} \\
        \text{subject to:}\quad & X_i^*X_i = X_i^k = 1,\\
\end{aligned}
\end{equation} 

Finally consider the unitary relaxation of $\nslin{2}{k}$, denoted by $\ulin{2}{k}$: 
\begin{equation}\label{eq:ulin2k}
\openup\jot
\begin{aligned}[t]
\text{ maximize:}\quad &\sum_{(i,j)\in \mathcal{E}} w_{ij} \paren{{1 - \frac{1}{a_k} \|X_j - \omega^{c_{ij}}X_i\|^2}} + \sum_{(i,j)\in \mathcal{I}} \frac{w_{ij}}{a_k}\|X_j - \omega^{c_{ij}}X_i\|^2 \\
        \text{subject to:}\quad & X_i^*X_i = 1,\\
                           \quad & \ip{X_i}{X_j} \in \Omega_k,
\end{aligned}
\end{equation}
where we replaced the order-$k$ constraint in \eqref{eq:nslin2k} with $\ip{X_i}{X_j} \in \Omega_k$. Here $\Omega_k$ denotes the convex hull of $k$-th roots of unity. This is a relaxation of the noncommutative problem \eqref{eq:nslin2k} because, if $X,Y$ are any order-$k$ unitary operators, their inner product $\ip{X}{Y}$ lies within $\Omega_k$. 

The approximation framework applied to smooth CSPs gives the following inequalities
\begin{align*}
\nslin{2}{3} &\geq 0.862\times \ulin{2}{3},\\
\nslin{2}{4} &\geq 0.919\times \ulin{2}{4},\\
\nslin{2}{5} &\geq 0.928\times \ulin{2}{5},\\
\nslin{2}{6} &\geq 0.951\times \ulin{2}{6},\\
\nslin{2}{7} &\geq 0.959\times \ulin{2}{7}.
\end{align*}
We prove these in Section \ref{sec:algorithm-smooth}.

The concept of unitary relaxation of smooth CSPs extends identically to all CSPs.
\begin{remark*} 
The discussion above shows that in the case of smooth $\lin{2}{k}$ the optimal and unitary values are close. In Table \ref{tab:ratios}, we saw that the SDP and optimal values of $\ncut{k}$ are close. Even though we do not expect the same to be true for homogeneous $\nlin{2}{k}$ (by for example looking at Table \ref{tab:ratios-homogeneous}), we can show that the SDP and unitary values are the same for homogeneous problems. This is due to the construction of approximate isometry, \emph{i.e.} the vector-to-unitary construction.
\end{remark*}

We end this section with a couple open problems.
\begin{question}\label{q:noncommutative-smooth-csp}
The unitary and optimal values of noncommutative smooth CSPs are close. How close are the SDP and unitary values of noncommutative smooth CSPs?
\end{question}
\begin{question}\label{q:classical-smooth-csp}
What are the best approximation ratios for classical smooth CSPs? 
\end{question}

\section{Generalized Weyl-Brauer Operators} \label{sec:gwb}

In this section, we formally define the generalized Weyl-Brauer operators we first introduced in Section \ref{sec:innovation-generalized-anticommutation}. In Section~\ref{sec:v-u-intro}, we define the generalized Weyl-Brauer group and vector-to-unitary constructions. In Sections~\ref{sec:gwb-inf} and~\ref{sec:construct-gwb} we discuss infinite-dimensional and finite-dimensional representations, respectively.

\subsection{Vector-to-Unitary Construction for Higher Orders}\label{sec:v-u-intro}

First, we recall the important properties of the vector-to-unitary construction introduced in Section~\ref{sec:approximate-isometries}. The construction critically hinges on the properties of the Weyl-Brauer operators $\sigma_1,\ldots,\sigma_n$:
\begin{itemize}
    \item Hermitian unitary: $\sigma_i^2=\sigma_i^\ast \sigma_i=\Id$,
    \item Anticommutation: $\sigma_i\sigma_j=-\sigma_j\sigma_i$ for all $i\neq j$.
\end{itemize}
Then, the \emph{vector-to-unitary construction}, mapping real vectors $\vec{x}=(x_1,\ldots,x_n)$ to operators $\sum_ix_i\sigma_i$ is isometric and sends every unit vector to an order-$2$ unitary.

\begin{remark}
One can construct Weyl-Brauer operators using Pauli matrices. Pauli matrices are
\[\sigma_x = \begin{bmatrix}0 &1\\1 &0\end{bmatrix},\sigma_y = \begin{bmatrix}0 &-i\\i &0\end{bmatrix},\sigma_z = \begin{bmatrix}1 &0\\0 &-1\end{bmatrix}\]
Combining these matrices with the $2$-by-$2$ identity matrix $I$ in the following arrangement we can construct any number of pairwise anticommuting Hermitian unitary operators:
\begin{alignat*}{7}
\sigma_1 &\coloneqq \sigma_x \ &\otimes \ &I      \ &\otimes \ &I      \ &\otimes \ &I \ \otimes \ \cdots\\
\sigma_2 &\coloneqq \sigma_y \ &\otimes \ &I      \ &\otimes \ &I      \ &\otimes \ &I \ \otimes \ \cdots\\
\sigma_3 &\coloneqq \sigma_z \ &\otimes \ &\sigma_x \ &\otimes \ &I      \ &\otimes \ &I \ \otimes \ \cdots\\
\sigma_4 &\coloneqq \sigma_z \ &\otimes \ &\sigma_y \ &\otimes \ &I      \ &\otimes \ &I \ \otimes \ \cdots\\
\sigma_5 &\coloneqq \sigma_z \ &\otimes \ &\sigma_z \ &\otimes \ &\sigma_x \ &\otimes \ &I \ \otimes \ \cdots\\
\sigma_6 &\coloneqq \sigma_z \ &\otimes \ &\sigma_z \ &\otimes \ &\sigma_y \ &\otimes \ &I \ \otimes \ \cdots
\end{alignat*}
\end{remark}

Could we come up with a vector-to-unitary construction that produces higher-order unitaries? Namely, could we find operators $\sigma_1,\ldots,\sigma_n$ such that, for every real unit vector $\vec{x}\in\R^n$, the operators $U_{\vec{x}}\coloneqq \sum_i x_i\sigma_i$ are order-$k$ unitaries and the mapping $\vec{x} \mapsto U_{\vec{x}}$ is an isometry? 

For this to happen, we clearly need $\sigma_i$ to be order-$k$ unitaries themselves (to see this let $\vec{x}$ be the standard basis vectors $\vec{e}_i$). Additionally, to guarantee unitarity of $U_{\vec{x}}$, we need for all real unit vectors $\vec{x}$ that
\begin{align*}
    \Id=U_{\vec{x}}^\ast U_{\vec{x}}=\sum_{i,j}x_ix_j \sigma_i^\ast \sigma_j.
\end{align*}
This holds if and only if $\sigma_i^\ast \sigma_j = - \sigma_j^\ast \sigma_i$ for every $i\neq j$. As in Section~\ref{sec:innovation-generalized-anticommutation}, we call this relation \emph{$*$-anticommutation}.

On the other hand, to guarantee the order-$k$ requirement we need
\begin{align*}
    U_{\vec{x}}^k=\sum_{i_1,\ldots,i_k}x_{i_1}\cdots x_{i_k}\sigma_{i_1}\cdots \sigma_{i_k}=\Id.
\end{align*}
It is however not clear how to satisfy this in general due to the normalisation $\|\vec{x}\|^2 = x_1^2 + \cdots + x_n^2 = 1$. We can instead guarantee that $U_{\vec{x}}^k\propto\Id$ by having $\sigma_i\sigma_j=\omega \sigma_j\sigma_i$ when $i<j$, for $\omega$ a primitive $k$-th root of unity. Then, taking the $k$-th power of $U_{\vec{x}}$, the cross terms cancel out and we get
\begin{align*}
    U_{\vec{x}}^k=\sum_{i_1,\ldots,i_k}x_{i_1}\cdots x_{i_k}\sigma_{i_1}\cdots \sigma_{i_k}=\sum_{i}x_i^k \sigma_i^k=\parens*{\sum_{i}x_i^k}\Id.
\end{align*}
But the two anticommutation-like properties 
\begin{itemize}
        \item $\sigma_i\sigma_j=\omega \sigma_j\sigma_i$ for $i < j$ and
        \item $\sigma_i^\ast \sigma_j=-\sigma_j^\ast \sigma_i$
    \end{itemize}
are incompatible. We show this with the following simple lemma.
\begin{lemma}\label{lem:impossiblity}
    Let $k>2$ and let $\omega$ be a primitive $k$-th root of unity. There are no unitary operators $X,Y$ satisfying both
    \begin{itemize}
        \item $XY=\omega YX$
        \item $X^\ast Y=-Y^\ast X$
    \end{itemize}
\end{lemma}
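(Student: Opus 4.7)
The plan is to consolidate the two hypotheses into a single algebraic constraint on an auxiliary unitary, and then pull a contradiction out of the order of $\omega$. Concretely, I would introduce $Z := Y^\ast X$. The $\ast$-anticommutation hypothesis $X^\ast Y = -Y^\ast X$ is exactly the statement $Z^\ast = -Z$, so $Z$ is anti-Hermitian. Since $X$ and $Y$ are unitary, so is $Z$, and for a unitary anti-Hermitian operator we have $I = Z^\ast Z = -Z^2$, giving $Z^2 = -I$. Thus the second hypothesis is encoded entirely in the single clean identity $Z^2 = -I$.

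Next I would translate the first relation $XY = \omega YX$ into a commutation relation between $Z$ and $Y$. Left-multiplying $XY = \omega YX$ by $Y^\ast$ gives $Y^\ast X Y = \omega X$, i.e.\ $ZY = \omega X$; and directly $YZ = Y Y^\ast X = X$. Combining these two identities yields $ZY = \omega YZ$.

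To finish I would simply iterate the commutation once:
\begin{equation*}
Z^2 Y = Z(ZY) = \omega\, Z Y Z = \omega^2\, Y Z^2.
\end{equation*}
Substituting $Z^2 = -I$ on both sides turns this into $-Y = -\omega^2 Y$, so $\omega^2 = 1$. Since $\omega$ is a primitive $k$-th root of unity with $k > 2$, this is a contradiction.

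I do not foresee any real obstacle. The entire argument is a short algebraic manipulation that works uniformly in all (including infinite) dimensions, as no trace or determinant is used. The only creative step is spotting that $Z = Y^\ast X$ packages the $\ast$-anticommutation hypothesis into the statement that $Z$ is a unitary square root of $-I$; once that observation is in hand, the commutation $ZY = \omega YZ$ forces $\omega^2 = 1$ automatically.
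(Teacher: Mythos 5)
Your proof is correct and takes essentially the same short algebraic route as the paper: the paper derives $X^2=-\omega^\ast Y^2$ (so $X^2$ commutes with $Y$) and then uses $X^2Y=\omega^2 YX^2$ to force $\omega^2=1$, while you package the $\ast$-anticommutation as the statement that $Z=Y^\ast X$ is a unitary square root of $-I$ and conclude from $ZY=\omega YZ$ in the same two-step commutation. Your repackaging via $Z$ is a clean equivalent of the paper's manipulation; nothing is missing.
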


\begin{proof}
    Suppose such operators $X,Y$ exist. Using the second relation, $X^2=YY^\ast X^2=-Y X^\ast YX$. Then using the first relation $YX=\omega^\ast XY$, so $X^2=-\omega^\ast Y X^\ast XY=-\omega^\ast Y^2$. In particular, $X^2$ commutes with $Y$. However, using the first relation, $X^2 Y=\omega XYX=\omega^2 YX^2$, which implies that $\omega^2=1$, so $k\leq 2$, which is a contradiction.
\end{proof}
Though we omit the argument here, it is possible to strengthen Lemma~\ref{lem:impossiblity} to rule out the existence of any linear map $\vec{x} \mapsto U_{\vec{x}}$ that satisfies both the order-$k$ and unitary conditions. Nevertheless, we may choose either one of the two relations in Lemma \ref{lem:impossiblity} to generalize the Weyl-Brauer operators. Depending on the choice, one constructs two very different generalizations. From now on, we focus on the $\ast$-anticommutation.

\begin{definition}\label{def:infinite-gwb}
    The \emph{generalized Weyl-Brauer (GWB) group with $n$ generators} is
    \begin{align}
        \mathrm{GWB}_n=\group*{\sigma_1,\ldots,\sigma_n,J}{J^2,\,[\sigma_i,J],\,J(\sigma_i^{-1}\sigma_j)^2\;\forall\,i\neq j}.
    \end{align}
\end{definition}

Any unitary representation of $\gwb_n$ sending $J\mapsto -1$ gives a set of $n$ operators satisfying $\ast$-anticommutation relation. Conversely, any set of unitary operators satisfying $\ast$-anticommutation provides a representation of the group.

\begin{definition}\label{def:vector-to-unitary-construction}
    A \emph{vector-to-unitary construction} is a linear map $\R^n\to\mathrm{B}(\mc{H})$, $\vec{x}\mapsto\sum_ix_i\pi(\sigma_i)$, where $\pi:\gwb_n\to\mathrm{B}(\mc{H})$ is a unitary representation such that $\pi(J)=-1$.
\end{definition}

\begin{proposition}
    Any vector-to-unitary construction $\R^n\to\mathrm{B}(\mc{H})$ maps unit vectors to unitary operators.
\end{proposition}

\begin{proof}
    Write $u_{\vec{x}}=\sum_ix_i\pi(\sigma_i)$. Then, for any unit vector $\vec{x}\in\R^n$, 
    \begin{align*}
        &u_{\vec{x}}^\ast u_{\vec{x}}=\sum_{i,j}x_ix_j\pi(\sigma_i)^\ast\pi(\sigma_j)=\sum_ix_i^2+\sum_{i<j}x_ix_j(\pi(\sigma_i)^\ast\pi(\sigma_j)+\pi(\sigma_j)^\ast\pi(\sigma_i))=1\\
        &u_{\vec{x}}u_{\vec{x}}^\ast=\sum_{i,j}x_ix_j\pi(\sigma_i)\pi(\sigma_j)^\ast=1,
    \end{align*}
    so $u_{\vec{x}}$ is unitary.
\end{proof}

In Section \ref{sec:construct-gwb}, we introduce the order-$k$ \emph{generalized Weyl-Brauer (GWB) group with $n$ generators}, denoted by $\gwb_n^k$: this is obtained from $\gwb_n$ by adding relations that force the $\sigma_i$ to be order-$k$ as well as some extra relations to make the group finite. The group $\gwb_n^k$ is thus a homomorphic image of $\gwb_n$ (since it is a quotient group), and every representation of $\gwb_n^k$ is a representation of $\gwb_n$. See the formal definition of $\gwb_n^k$ (Definition~\ref{def:order-k-gwb}) in that section. 

\begin{definition}\label{def:order-k-vector-to-unitary-construction}
    An \emph{order-$k$ vector-to-unitary construction} is a vector-to-unitary construction where the representation $\pi$ in Definition \ref{def:vector-to-unitary-construction} is a finite-dimensional representation of $\gwb_n^k$. When it is clear from the context we may drop the order-$k$ attribute.
\end{definition}

We are interested in order-$k$ vector-to-unitary constructions $\vec{x}\mapsto U_{\vec{x}}$ that are isometric, \emph{i.e.} for every $\vec{x},\vec{y}\in\R^n$ we have $\ip{U_{\vec{x}}}{U_{\vec{y}}} = \ip{\vec{x}}{\vec{y}}$. In Section \ref{sec:construct-gwb} we present an order-$k$ vector-to-unitary construction that satisfies a much stronger isometry property. This property states that $\ip{U_{\vec{x}}^s}{U_{\vec{y}}^s} = \ip{\vec{x}}{\vec{y}}^s$ for every integer $1 \leq s < k$. This is an important property that is used in the analysis of our algorithm in Sections \ref{sec:algebraic-reldist} and \ref{sec:efficient-algorithm}.

\begin{definition}\label{def:strongly-isometric}
    An order-$k$ vector-to-unitary construction $U$ is \emph{strongly isometric} if $\ip{U_{\vec{x}}^s}{U_{\vec{y}}^s} = \ip{\vec{x}}{\vec{y}}^s$ for every integer $0 \leq s < k$.
\end{definition}

\begin{remark}
Even though it is impossible to have a construction such that $U_{\vec{x}}$ is always an order-$k$ unitary, there is a sense in which the strong isometry property is the next best thing we could hope for. Let us elaborate. Suppose $X$ and $Y$ are two order-$k$ unitaries. Then there exist some nonnegative numbers $a_0,a_1,\ldots,a_{k-1}$ that sum to $1$ such that 
\begin{equation}\label{eq:order-k-property}
\ip{X^s}{Y^s} = a_0 + a_1 \omega^s + \cdots + a_{k-1}\omega^{s(k-1)}
\end{equation}
for every integer $s$.\footnote{The coefficients are $a_s = \frac{1}{k} \sum_{t=0}^{k-1} \omega^{-ts}\ip{X^t}{Y^t}$. It is not hard to prove that these are nonnegative and sum to $1$.} So the inner products of powers of order-$k$ unitaries are tightly correlated.

Now if $U$ is strongly isometric, even though $X = U_{\vec{x}}$ and $Y = U_{\vec{y}}$ are not going to be order-$k$ unitaries in general, so they may not satisfy \eqref{eq:order-k-property}, we still have tight control over the inner products of their powers, namely that $\ip{X^s}{Y^s} = \ip{X}{Y}^s$ for $0\leq s <k$.
\end{remark}

\subsection{Infinite-Dimensional Case}\label{sec:gwb-inf}
This section could be skipped on a first reading. We supply it here because, for those readers familiar with the basics of von Neumann algebras, it gives insight into the construction of finite-dimensional representations of $\gwb_n^k$ in the next section. We collected the facts we use about von Neumann algebras in the preliminaries. 

The goal of this section is to characterise the infinite group $\gwb_n$ (Definition \ref{def:infinite-gwb}) and study its representation theory. We also exhibit a vector-to-unitary construction $u$ that satisfies $\ip{u_{\vec{x}}^s}{u_{\vec{y}}^s} = \ip{\vec{x}}{\vec{y}}^s$ for all non-negative integers $s$. The algorithmic implications of this construction to CSPs is presented in Section~\ref{sec:infinite-dimensional-algebraic-rel-dist}. 

It is evident that $\mathrm{GWB}_n$ is infinite: in fact, we have a surjective homomorphism $\mathrm{GWB}_n\rightarrow\Z$ such that $J\mapsto 0$ and $\sigma_i\mapsto 1$. In order to get a handle on the structure of $\gwb_n$, we consider an alternate presentation. In this presentation, the $\ast$-anticommutation relation $\sigma_i^{-1}\sigma_j=J\sigma_j^{-1}\sigma_i$ can be replaced by the standard anticommutation relation, and hence we can reason about this group using intuition from Pauli matrices.

\begin{lemma}
    The group $\mathrm{GWB}_n$ admits the presentation
    \begin{align}\label{eq:pauli-presentation}
        \mathrm{GWB}_n=\group*{p_1,\ldots,p_{n-1},c,J}{J^2,\,Jp_i^2,\,[p_i,J],\,[c,J],\,J(p_ip_j)^2\;\forall\,i\neq j}.
    \end{align}
    The new generators are related to the original ones by $c=\sigma_1$ and $p_i=\sigma_1^{-1}\sigma_{i+1}$. 
\end{lemma}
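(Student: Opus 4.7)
The plan is to prove the two presentations give isomorphic groups by constructing mutually inverse homomorphisms $\phi$ and $\psi$ between them, verifying in each direction that the relators of one presentation map to the identity in the other (a Tietze transformation argument). Let $G'$ denote the group defined by the presentation on the right-hand side of~\eqref{eq:pauli-presentation}.

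First I would define $\phi\colon G'\rightarrow\gwb_n$ by $c\mapsto\sigma_1$, $p_i\mapsto\sigma_1^{-1}\sigma_{i+1}$, $J\mapsto J$, and check that each relator of $G'$ evaluates to the identity in $\gwb_n$. The relators $J^2$, $[c,J]$, and $[p_i,J]$ are immediate from the corresponding relators $J^2$ and $[\sigma_j,J]$ of $\gwb_n$. The relator $Jp_i^2$ is exactly the instance of $J(\sigma_1^{-1}\sigma_{i+1})^2$ from the original presentation. The nontrivial check is $J(p_ip_j)^2 = e$ for $i\neq j$. Here I would use the identity $(\sigma_1^{-1}\sigma_{i+1})^2=J$, which (since $J$ is central) rewrites as $\sigma_1^{-1}\sigma_{i+1}\sigma_1^{-1}=J\sigma_{i+1}^{-1}$, to compute
\begin{align*}
(p_ip_j)^2 \;\longmapsto\; \sigma_1^{-1}\sigma_{i+1}\sigma_1^{-1}\sigma_{j+1}\sigma_1^{-1}\sigma_{i+1}\sigma_1^{-1}\sigma_{j+1} = J^2 (\sigma_{i+1}^{-1}\sigma_{j+1})^2 = J,
\end{align*}
which yields the desired $J(p_ip_j)^2 \mapsto e$.

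Next I would define $\psi\colon\gwb_n\rightarrow G'$ by $\sigma_1\mapsto c$, $\sigma_{i+1}\mapsto cp_i$ for $i=1,\ldots,n-1$, and $J\mapsto J$, and check that the relators $J^2$, $[\sigma_j,J]$, and $J(\sigma_i^{-1}\sigma_j)^2$ map to the identity in $G'$. The first two are again immediate from the centrality of $J$ and $[c,J]=[p_i,J]=e$. For the $\ast$-anticommutation relators there are three cases depending on whether each index is $1$ or $>1$. The cases $(1,j)$ and $(i,1)$ reduce directly to $Jp_k^{\pm 2}=e$, using that $p_k^2=J$ is central. The case $(i+1,j+1)$ with $i\neq j$ requires computing $\sigma_{i+1}^{-1}\sigma_{j+1}\mapsto p_i^{-1}p_j$, and then using $p_i^{-1}=Jp_i$ (a consequence of $p_i^2=J$) together with centrality of $J$ to reduce $(p_i^{-1}p_j)^2$ to $(p_ip_j)^2=J$.

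Finally, it suffices to verify that $\phi$ and $\psi$ act as mutual inverses on the given generators: $\phi\psi(\sigma_1)=\sigma_1$, $\phi\psi(\sigma_{i+1})=\sigma_1\sigma_1^{-1}\sigma_{i+1}=\sigma_{i+1}$, $\psi\phi(c)=c$, and $\psi\phi(p_i)=c^{-1}cp_i=p_i$, with $J\mapsto J$ throughout. This establishes the isomorphism and hence the presentation. The main obstacle is the derivation $(p_ip_j)^2\mapsto J$ in the verification of $\phi$, where it is crucial to exploit the centrality of $J$ to collapse the six-letter word $\sigma_1^{-1}\sigma_{i+1}\sigma_1^{-1}\sigma_{j+1}\sigma_1^{-1}\sigma_{i+1}\sigma_1^{-1}\sigma_{j+1}$ into a clean $\ast$-anticommutation square; every other step is a direct application of the defining relators.
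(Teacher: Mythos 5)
Your proposal is correct and follows essentially the same route as the paper: a change-of-generators (Tietze) argument in which the relators of each presentation are rewritten in terms of the other's generators, with the only nontrivial step being the cross-term computation that uses $(\sigma_1^{-1}\sigma_{i+1})^2=J$ and the centrality of $J$ to reduce $(p_ip_j)^2$ to $(\sigma_{i+1}^{-1}\sigma_{j+1})^2=J$. The only difference is presentational: you spell out both directions via explicit mutually inverse homomorphisms, whereas the paper verifies the rewriting in one direction and leaves the converse implicit.
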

We call this the \emph{Pauli presentation} of $\gwb_n$ because the generators $p_i$ could be represented by Pauli matrices; consider the representation that sends $c \mapsto 1, J \mapsto -1$ and generators $p_i$ to pairwise anticommuting and anti-Hermitian Pauli matrices. In fact it is possible to show that the subgroup $P_n = \group{p_1,\ldots,p_{n-1}}$ is isomorphic to some subgroup of index at most $2$ of the Pauli group on $\floor{\tfrac{n-1}{2}}$ qubits.
\begin{proof}
    Let $c=\sigma_1$ and $p_i=\sigma_1^{-1}\sigma_{i+1}$. This is evidently a set of generators for $\gwb_n$. It suffices to rewrite the relations of $\mathrm{GWB}_n$ in terms of these generators. First, note that the order relation $J^2=e$ remains the same, and the commutation relations $[\sigma_i,J]=e$ becomes commutation with the new generators $[c,J]=e$ and $[p_i,J]=e$. The final relations $J(\sigma_i^{-1}\sigma_j)^2 = e$, if $i=1,j > 1$, become $e=Jp_{j-1}^2$; we get the same relation for $j=1,i > 1$; and for $i,j\neq 1$, $e=J(cp_{i-1}p_{j-1}^{-1}c^{-1})^2=cJ(p_{i-1}p_{j-1})^2c^{-1}$, which is the last relation $J(p_{i-1} p_{j-1})^2=e$.
\end{proof}
The quotient by the subgroup $\group{J}$ is the free product $\gwb_n/\group{J}\cong \Z\ast\Z_2^{n-1}$. So, there are no relations between $c$ and $p_i$ in $\gwb_n/\group{J}$. Also, $c,p_i,p_i^{-1}p_j \notin \{e,J\}$ in $\gwb_n$ unless $i=j$. Finally, since the quotient has trivial center, $\group{J}$ is the centre of $\gwb_n$.

The subgroup $C = \group{c,J}$ is isomorphic to $\Z \times \Z_2$, and $\group{J}$ is a subgroup of both $C$ and $P_n$. So, by the definition of the amalgamated product, we arrive at the characterisation \[\gwb_n\cong (\Z\times\Z_2)\ast_{\Z_2}P_n\] where $\ast_{\Z_2}$ is the amalgamated product over the subgroups $\Z_2\cong\group{J}$. 

Our last goal is to construct a special vector-to-unitary construction $u$ with the strong isometry property $\ip{u_{\vec{x}}^s}{u_{\vec{y}}^s} = \ip{\vec{x}}{\vec{y}}^s$ for all orders mentioned earlier, using the group von Neumann algebra $L(\gwb_n) \subset \mathrm{B}(\cH)$ where $\cH=\ell^2\gwb_n$.

To construct a representation where $J\mapsto-1$, we can take the quotient algebra \[\mc{M}=L(\gwb_n)/\group{e+J}.\] The element $p = (e+J)/2$ is a projection in $L(\gwb_n)$, \emph{i.e.} $p^2 = p$ and $p^* = p$. Since $\group{e+J} = \group{p}$ we have $\mc{M} = L(\gwb_n)/\group{p}$, and since $p$ is central, we have $\mc{M} \cong (e-p)L(\gwb_n)$. 

\begin{definition}\label{def:inf-dim-vector-to-unitary}
    The \emph{infinite-dimensional vector-to-unitary construction} is the map $u:\R^n\to \mc{M}$, defined as
    \begin{align}
        \vec{x}\mapsto u_{\vec{x}}=\sum_{i=1}^nx_i\bar{\sigma}_i,
    \end{align}
    where $\bar{\sigma}_i$ is the representative of the group element $\sigma_i$ in the quotient algebra.
\end{definition}
First, it is direct to see that, via the relations $\bar{\sigma}_i^\ast\bar{\sigma}_j=-\bar{\sigma}_j^\ast\bar{\sigma}_i$ in $M$ (since $\bar{J}=-1$), the vector-to-unitary construction sends unit vectors $\vec{x}\in\R$ to unitary elements in $\mc{M}$:
\begin{align*}
    &u_{\vec{x}}^\ast u_{\vec{x}}=\sum_{i,j}x_ix_j\bar{\sigma}_i^\ast\bar{\sigma}_j=\sum_ix_i^2+\sum_{i<j}x_ix_j(\bar{\sigma}_i^\ast\bar{\sigma}_j+\bar{\sigma}_j^\ast\bar{\sigma}_i)=1\\
    &u_{\vec{x}}u_{\vec{x}}^\ast=\sum_{i,j}x_ix_j\bar{\sigma}_i\bar{\sigma}_j^\ast=\sum_ix_i^2+\sum_{i<j}x_ix_j(\bar{\sigma}_i\bar{\sigma}_j^\ast+\bar{\sigma}_j\bar{\sigma}_i^\ast)=1
\end{align*}

Since $L(\mathrm{GWB}_n)$ is a group von Neumann algebra, it has a trace $\tau$ that extends the canonical trace on the group algebra $\C[\gwb_n]$. Now the algebra $(e-p)L(\gwb_n)$ has a trace \[\tau_M=\frac{\tau|_{(e-p)L(\mathrm{GWB}_n)}}{\tau(e-p)}\] where $\tau|_{(e-p)L(\mathrm{GWB}_n)}$ is the restriction of $\tau$ to the compression $(e-p)L(\mathrm{GWB}_n)$. Thus, on $\mc{M}$, this becomes the tracial state \[\tau_{\mc{M}}(\bar{a})=2\tau((e-p)a)=\tau((e-J)a)\] where $\bar{a}$ is the element of the quotient algebra with representative $a$, since $\tau(p) = \tau(\frac{e+J}{2}) = \frac{1}{2}$. 

The trace $\tau_{\mc{M}}$ gives rise to an inner product on $\mc{M}$. We show that the infinite-dimensional vector-to-unitary construction is an isometry with respect to this inner product.
\begin{lemma}
    For all vectors $\vec{x},\vec{y}\in\R^n$,
    \begin{align}
        \tau_{\mc{M}}(u_{\vec{x}}^\ast u_{\vec{y}})=\ip{\vec{x}}{\vec{y}}
    \end{align}
\end{lemma}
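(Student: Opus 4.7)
The plan is to expand $u_{\vec{x}}^\ast u_{\vec{y}}$ linearly, evaluate the trace on the resulting bilinear combination of $\bar{\sigma}_i^\ast \bar{\sigma}_j$, and reduce the computation to the fact that the canonical trace on $L(\gwb_n)$ evaluates to $\delta_{g,e}$ on group elements.

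\medskip

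First, expand using linearity of the trace:
\[\tau_{\mc{M}}(u_{\vec{x}}^\ast u_{\vec{y}}) = \sum_{i,j=1}^n x_i y_j\, \tau_{\mc{M}}(\bar{\sigma}_i^\ast \bar{\sigma}_j).\]
The diagonal terms $i=j$ give $\tau_{\mc{M}}(\bar{\sigma}_i^\ast\bar{\sigma}_i) = \tau_{\mc{M}}(1) = 1$ since $\bar{\sigma}_i$ is unitary in $\mc{M}$. The task reduces to showing that the off-diagonal terms vanish, \emph{i.e.}\ $\tau_{\mc{M}}(\bar{\sigma}_i^\ast \bar{\sigma}_j) = 0$ for $i \neq j$. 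Unwinding the definition of $\tau_{\mc{M}}$, this is
\[\tau_{\mc{M}}(\bar{\sigma}_i^\ast \bar{\sigma}_j) = \tau\bigl((e-J)\sigma_i^{-1}\sigma_j\bigr) = \tau(\sigma_i^{-1}\sigma_j) - \tau(J\sigma_i^{-1}\sigma_j),\]
where $\tau$ is the canonical trace on $L(\gwb_n)$, which satisfies $\tau(g) = \delta_{g,e}$ for any $g \in \gwb_n$. So it suffices to verify that neither $\sigma_i^{-1}\sigma_j$ nor $J\sigma_i^{-1}\sigma_j$ equals $e$ in $\gwb_n$ when $i \neq j$, \emph{i.e.}\ that $\sigma_i^{-1}\sigma_j \notin \{e, J\}$.

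\medskip

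The key step is to translate this nonvanishing condition into the Pauli presentation~\eqref{eq:pauli-presentation}. Recall $c = \sigma_1$ and $p_k = \sigma_1^{-1}\sigma_{k+1}$, so $\sigma_k = c\,p_{k-1}$ for $k \geq 2$ and $\sigma_1 = c$ (with the convention $p_0 = e$). Then for any $i \neq j$,
\[\sigma_i^{-1}\sigma_j = (cp_{i-1})^{-1}(cp_{j-1}) = p_{i-1}^{-1} p_{j-1}.\]
The observation made immediately after introducing the Pauli presentation is precisely that $p_k, p_k^{-1} p_\ell \notin \{e, J\}$ in $\gwb_n$ for $k \neq \ell$ (this followed from the fact that the quotient $\gwb_n/\group{J} \cong \Z \ast \Z_2^{n-1}$ is a free product in which the classes of $c$ and the $p_k$ are distinct nonidentity elements). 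Applying this with $k = i-1$ and $\ell = j-1$ (or $\ell = j-1$ alone when $i=1$) gives $\sigma_i^{-1}\sigma_j \notin \{e, J\}$, hence both terms above vanish, and $\tau_{\mc{M}}(\bar{\sigma}_i^\ast \bar{\sigma}_j) = 0$.

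\medskip

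Combining the two cases yields $\tau_{\mc{M}}(u_{\vec{x}}^\ast u_{\vec{y}}) = \sum_i x_i y_i = \ip{\vec{x}}{\vec{y}}$, as desired. There is no serious obstacle here; the only subtle point is being careful to invoke the Pauli presentation to rewrite $\sigma_i^{-1}\sigma_j$ in a form where the earlier observation about distinctness of generators in the free product $\Z \ast \Z_2^{n-1}$ applies directly.
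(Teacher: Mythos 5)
Your proof is correct and follows essentially the same route as the paper's: expand the trace bilinearly, rewrite $\sigma_i^{-1}\sigma_j$ as $p_{i-1}^{-1}p_{j-1}$ via the Pauli presentation, and kill the off-diagonal terms using the fact (from the free-product quotient $\gwb_n/\group{J}\cong\Z\ast\Z_2^{n-1}$) that $p_{i-1}^{-1}p_{j-1}\notin\{e,J\}$ for $i\neq j$, so the canonical trace vanishes on those group elements. The only cosmetic difference is that you handle the diagonal terms via unitarity of $\bar{\sigma}_i$ and $\tau_{\mc{M}}(1)=1$ rather than computing $\tau((e-J)e)$ directly, which is equivalent given the paper's setup of $\tau_{\mc{M}}$.
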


\begin{proof}
    We can simply calculate the inner product using the Pauli presentation of $\mathrm{GWB}_n$: writing $e = p_0$,
    \begin{align*}
        \tau_{\mc{M}}(u_{\vec{x}}u_{\vec{y}})=\sum_{i,j=1}^n x_iy_j\tau((e-J)\sigma_i^{-1}\sigma_j)=\sum_{i,j=1}^n x_iy_j\tau((e-J)p_{i-1}^{-1}p_{j-1})=\sum_{i=1}^n x_iy_i=\ip{\vec{x}}{\vec{y}},
    \end{align*}
    since $p_{i-1}^{-1}p_{j-1}\in\{e,J\}$ in the group $\gwb_n$ if and only if $i=j$.
\end{proof}

In fact, we have the stronger property characterising the inner products of powers of the unitaries.

\begin{proposition}[Strong isometry]\label{prop:power-property}
    For all vectors $\vec{x},\vec{y}\in\R^n$ and $m\in\N$,
    \begin{align}
        \tau_{\mc{M}}((u_{\vec{x}}^{m})^\ast u_{\vec{y}}^m)=\ip{\vec{x}}{\vec{y}}^m.
    \end{align}
\end{proposition}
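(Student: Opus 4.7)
The plan is to expand $(u_{\vec{x}}^m)^\ast u_{\vec{y}}^m$ as a sum of monomials in the generators $\bar{\sigma}_i$, apply $\tau_{\mc{M}}$ term by term, and identify the surviving contributions. Using $\bar{\sigma}_i^\ast = \bar{\sigma}_i^{-1}$ (unitarity of $\bar{\sigma}_i$), the expansion is
\begin{align*}
(u_{\vec{x}}^m)^\ast u_{\vec{y}}^m = \sum_{\vec{i},\vec{j} \in [n]^m} x_{i_1}\cdots x_{i_m}\, y_{j_1}\cdots y_{j_m}\, \overline{g_{\vec{i},\vec{j}}},
\end{align*}
where $g_{\vec{i},\vec{j}} := \sigma_{i_m}^{-1}\cdots \sigma_{i_1}^{-1}\sigma_{j_1}\cdots \sigma_{j_m} \in \gwb_n$. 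Because the canonical trace on $L(\gwb_n)$ satisfies $\tau(g)=\delta_{g,e}$ on group elements, the induced trace on $\mc{M}$ evaluates to $\tau_{\mc{M}}(\bar{g}) = \tau((e-J)g) = \delta_{g,e}-\delta_{g,J}$. The whole computation therefore reduces to determining when $g_{\vec{i},\vec{j}} \in \{e,J\}$ in $\gwb_n$.

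The key lemma I would prove is that $g_{\vec{i},\vec{j}} \in \{e,J\}$ if and only if $\vec{i}=\vec{j}$, and that in that case $g_{\vec{i},\vec{j}}=e$. One direction is the telescoping $\sigma_{i_m}^{-1}\cdots \sigma_{i_1}^{-1}\sigma_{i_1}\cdots \sigma_{i_m} = e$. For the converse I would project to the quotient $\gwb_n \twoheadrightarrow \gwb_n/\langle J\rangle \cong \Z \ast \Z_2^{n-1}$ described in Section~\ref{sec:gwb-inf}, under which $\{e,J\}$ collapses to the identity. It then suffices to show that the map $\vec{i}\mapsto \sigma_{i_1}\cdots \sigma_{i_m}$ from $[n]^m$ into $\Z \ast \Z_2^{n-1}$ is injective. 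Using the Pauli form $\sigma_i = cp_{i-1}$ (with $p_0=e$), this amounts to the statement that the word $cp_{i_1-1}cp_{i_2-1}\cdots cp_{i_m-1}$ determines $\vec{i}$.

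This injectivity is a direct application of the normal form theorem for free products. After deleting the trivial $p_0$-entries, the expression $cp_{i_1-1}\cdots cp_{i_m-1}$ is already in reduced alternating form in $\Z \ast \Z_2^{n-1}$, and $\vec{i}$ can be recovered by reading the word from the left: the $k$-th letter $c$ marks the start of the factor $\sigma_{i_k}$, and the (possibly empty) $\Z_2^{n-1}$-syllable immediately following it reveals $i_k$ via $p_{i_k-1}$. This bookkeeping with the normal form is the only real technical step of the argument; nothing deeper is required, and it is the main obstacle only in the sense that one must carefully track the absent $p_0$'s to verify that distinct $\vec{i}$ give distinct reduced words.

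With the lemma in place, only diagonal multi-indices $\vec{i}=\vec{j}$ contribute to the trace, each with coefficient $\tau_{\mc{M}}(\bar{e})=1$, so
\begin{align*}
\tau_{\mc{M}}\bigparen{(u_{\vec{x}}^m)^\ast u_{\vec{y}}^m} = \sum_{\vec{i} \in [n]^m} \prod_{k=1}^m x_{i_k}y_{i_k} = \left(\sum_{i=1}^n x_i y_i\right)^{\!m} = \ip{\vec{x}}{\vec{y}}^m,
\end{align*}
completing the proof.
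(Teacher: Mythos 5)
Your proposal is correct and follows essentially the same route as the paper: expand $(u_{\vec{x}}^m)^\ast u_{\vec{y}}^m$, reduce the trace computation to deciding when the group element $\sigma_{i_m}^{-1}\cdots\sigma_{i_1}^{-1}\sigma_{j_1}\cdots\sigma_{j_m}$ lies in $\{e,J\}$, and settle that by passing to the quotient $\gwb_n/\group{J}\cong\Z\ast\Z_2^{n-1}$ where no relations hold between $c$ and the $p_i$. The only difference is that you spell out the free-product normal-form bookkeeping (recovering $\vec{i}$ from the reduced word) which the paper asserts in one line, so your write-up is a more detailed version of the same argument.
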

Note that by traciality of the state and the fact that the vectors are real, we get that $\tau_{\mc{M}}((u_{\vec{x}}^{m})^\ast u_{\vec{y}}^m)=\group{\vec{x},\vec{y}}^{|m|}$ for all $m\in\Z$.
\begin{proof}
    We calculate the inner product as in the previous lemma. Writing again $e=p_0$, we have
    \begin{align*}
        \tau_{\mc{M}}((u_{\vec{x}}^{m})^\ast u_{\vec{y}}^m)&=\sum_{\substack{i_1,\ldots,i_m\\j_1,\ldots,j_m}}x_{i_1}y_{j_1}\cdots x_{i_m}y_{j_m}\tau((e-J)\sigma_{i_m}^{-1}\cdots\sigma_{i_1}^{-1}\sigma_{j_1}\cdots\sigma_{j_m})\\
        &=\sum_{\substack{i_1,\ldots,i_m\\j_1,\ldots,j_m}}x_{i_1}y_{j_1}\cdots x_{i_m}y_{j_m}\tau((e-J)p_{i_m-1}^{-1}c^{-1}\cdots c^{-1}p_{i_1-1}^{-1}p_{j_1-1}c\cdots cp_{j_m-1})\\
        &=\sum_{i_1,\ldots,i_m}x_{i_1}y_{i_1}\cdots x_{i_m}y_{i_m}=\group{x,y}^m,
    \end{align*}
    as, for $(i_1,\ldots,i_m)\neq(j_1,\ldots,j_m)$, $p_{i_m-1}^{-1}c^{-1}\cdots c^{-1}p_{i_1-1}^{-1}p_{j_1-1}c\cdots cp_{j_m-1}\notin\{e,J\}$ owing to the fact that it is not identity in the quotient $\mathrm{GWB}_n/\group{J}$ since there are no relations between the $p_i$ and $c$ there.
\end{proof}

\subsection{Finite-Dimensional Case}\label{sec:construct-gwb}

In this section we add relations to $\gwb_n$ and obtain a finite group $\gwb_n^k$ with order-$k$ generators $\sigma_i$. We then study the representation theory of $\gwb_n^k$ and give an strongly isometric order-$k$ vector-to-unitary construction. 

We construct the group $\gwb_n^k$ in two steps. We first introduce the following variation of $\gwb_n$. Let $f_{i,j,t}\in\Z_2$ for $i,j\in \{1,\ldots,n-1\}$ and $t\in\Z_k$ be such that $f_{i,i,0}=0$ and $f_{i,j,t}=f_{j,i,-t}$. Define $G_{n,f}^k$ as the group
\begin{align}
    \group*{p_1,\ldots,p_{n-1},c,J}{J^2,c^k,Jp_i^2,[p_i,J],[c,J],J^{f_{i,j,t}}[p_i,c^{-t}p_jc^t]}.
\end{align}
The difference between this and $\gwb_n$ is the addition of the order-$k$ relation $c^k=e$ and the commutation relations $J^{f_{i,j,t}}[p_i,c^{-t}p_jc^t] = e$.
We show that this group is finite by introducing a normal form for its elements using the commutation relations in the presentation. Let $\leq$ denote the lexicographic order on $\Z\times\Z_k$. For $\alpha\in\mathrm{M}_{[n-1]\times\Z_k}(\Z_2)$, write \[p^\alpha=\prod_{(i,t)\in [n-1]\times\Z_k}c^{-t}p_i^{\alpha_{i,t}}c^t,\] where the terms are ordered with $\leq$. We also use the shorthand notation $p_{i,t}=c^{-t}p_ic^t$.

Every missing proof in this section is given in Appendix \ref{sec:representation-calculations}.
\begin{lemma}\label{lem:claim1} Every element of $G_{n,f}^k$ is of the form $J^{b}c^sp^\alpha$ for some $(b,s,\alpha) \in \Z_2\times\Z_k\times\mathrm{M}_{[n-1]\times\Z_k}(\Z_2)$. 
\end{lemma}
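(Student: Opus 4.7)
The plan is to show that the set $S = \{J^b c^s p^\alpha : (b,s,\alpha) \in \Z_2 \times \Z_k \times \mathrm{M}_{[n-1]\times\Z_k}(\Z_2)\}$ is closed under right multiplication by each generator of $G_{n,f}^k$. Since $S$ contains the identity (taking $b=0$, $s=0$, $\alpha=0$) and since $J^{-1} = J$, $c^{-1} = c^{k-1}$, and $p_i^{-1} = Jp_i$ (the last using $p_i^2 = J$) express all inverses as positive words in the generators, such closure suffices to conclude $S = G_{n,f}^k$.

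Two consequences of the defining relations will drive the argument. First, conjugating $p_i p_{j,t} = J^{f_{i,j,t}} p_{j,t} p_i$ by $c^{-s}(\cdot)c^s$ yields the generalised $J$-commutation
\begin{equation*}
p_{i,s}\, p_{j,t'} \;=\; J^{f_{i,j,\, t'-s}}\, p_{j,t'}\, p_{i,s}
\qquad \text{for all } i,j \in [n-1],\ s,t'\in\Z_k,
\end{equation*}
which allows any two factors to be swapped at the cost of a central $J$-factor. Second, from $p_i^2 = J$ and $[J,c] = e$ one obtains $p_{i,t}^2 = c^{-t} p_i^2 c^t = J$, so each exponent of $p_{i,t}$ can be reduced modulo $2$ at the cost of an extra $J$.

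With these tools, closure under right multiplication by $J$ is immediate from the centrality of $J$. For right multiplication by $c$, I will use the identity $p_{i,t}\, c = c\, p_{i,t+1}$, which follows directly from $p_{i,t} = c^{-t} p_i c^t$, to sweep $c$ leftward through every factor of $p^\alpha$, converting $p^\alpha c$ into $c\cdot q$, where $q$ is the same product with each index $t$ shifted to $t+1 \pmod k$. The shifted list is generally not sorted (the wrap-around sends $(i,k-1)$ to $(i,0)$), so I will resort it by adjacent transpositions using the generalised $J$-commutation, collecting the resulting central $J$'s into a single $J^{b'}$, and finally reduce exponents modulo $2$. Right multiplication by $p_i = p_{i,0}$ is analogous: move $p_{i,0}$ leftward into its canonical lex position by commuting past each $p_{j,t}$ with $(j,t) > (i,0)$, collecting $J$-factors as before, and use $p_{i,0}^2 = J$ if a copy is already present at position $(i,0)$.

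The main obstacle is not any single computation but the bookkeeping of the $J$-exponents during the resort. The centrality of $J$ together with $[c,J] = e$ ensures that every collected $J$ slides freely through both the prefix $c^s$ and the surviving factors $p_{j,t}$ without interference, so the final expression stays in the canonical form $J^{b+b'}\, c^{s+\epsilon}\, p^{\alpha'}$ with $\epsilon \in \{0,1\}$. Once this is checked, an induction on word length in the generators completes the proof.
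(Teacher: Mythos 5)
Your proof is correct and takes essentially the same route as the paper: the paper rewrites an arbitrary product directly, moving $J$'s and $c$'s to the front via $(c^{-t}p_ic^t)c^s=c^s(c^{-(t+s)}p_ic^{t+s})$ and reordering the $p_{i,t}$ factors with $[c^{-s}p_ic^s,c^{-t}p_jc^t]=J^{f_{i,j,t-s}}$ (together with $p_{i,t}^2=J$ for exponent reduction), which are exactly the identities driving your argument. Your packaging as closure of the normal-form set under right multiplication by the generators, with inverses expressed as positive words and induction on word length, is just a more explicit organization of the same rewriting.
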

We call $J^{b}c^sp^\alpha$ the normal form of an element in $G_{n,f}^k$. Since the index set $\Z_2\times\Z_k\times\mathrm{M}_{[n-1]\times\Z_k}(\Z_2)$ is of size $2k\cdot 2^{k(n-1)}$, we conclude that $|G_{n,f}^k| \leq 2k\cdot 2^{k(n-1)}$. In fact, we can prove equality.
\begin{lemma}\label{lem:unique-normal-form}
    $|G_{n,f}^k| = 2k\cdot 2^{k(n-1)}$. In particular every element in $G_{n,f}^k$ has a unique normal form.
\end{lemma}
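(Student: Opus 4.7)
By Lemma~\ref{lem:claim1}, every element of $G_{n,f}^k$ admits a normal form, which gives the upper bound $|G_{n,f}^k|\leq 2k\cdot 2^{k(n-1)}$. The plan is therefore to prove the matching lower bound by exhibiting a faithful action of $G_{n,f}^k$ on a set $S$ of size $2k\cdot 2^{k(n-1)}$ such that the various normal forms act as distinct permutations; uniqueness of the normal form is then immediate.

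I would take $S = \Z_2\times\Z_k\times\mathrm{M}_{[n-1]\times\Z_k}(\Z_2)$, thought of as the set of formal normal forms $(b,s,\alpha)\leftrightarrow J^b c^s p^\alpha$, and define the action of each generator by mimicking left multiplication followed by re-reduction to normal form. Explicitly: $J$ toggles $b$; $c$ increments $s$ modulo $k$; and $p_i$ acts by $p_i\cdot(b,s,\alpha)=(b',s,\alpha')$, where $\alpha'$ differs from $\alpha$ by flipping the $(i,s)$-entry, and $b'=b+\delta$ where $\delta\in\Z_2$ records the total sign accumulated from (i) commuting $p_{i,s}=c^{-s}p_ic^s$ past every $p_{j,t}$ with $(j,t)<(i,s)$ and $\alpha_{j,t}=1$, contributing $\sum f_{i,j,t-s}$ from the relation $p_i(c^{-t}p_jc^t)=J^{f_{i,j,t}}(c^{-t}p_jc^t)p_i$, and (ii) an extra $+1$ when $\alpha_{i,s}=1$, arising from $p_{i,s}^2=J$.

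The first verification to do is that these generator-maps satisfy the defining relators of $G_{n,f}^k$. The relations $J^2=e$, $c^k=e$, $[p_i,J]=[c,J]=e$ are immediate from the definition, and $p_i^2=J$ follows because applying $p_i$ twice cancels the flip in $\alpha_{i,s}$ while contributing exactly one extra $J$. The only nontrivial check is the signed commutation $J^{f_{i,j,t}}[p_i,c^{-t}p_jc^t]=e$: computing $p_ip_{j,t}$ and $p_{j,t}p_i$ on a generic $(b,s,\alpha)$ reduces, after all sign-accounting, to a single net contribution of $f_{i,j,t-s}$ coming from swapping the positions of $(i,s)$ and $(j,t+s)$ in the lex order, matching $f_{i,j,t}$ after the shift by $c^s$. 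This is where the hypotheses $f_{i,i,0}=0$ and $f_{i,j,t}=f_{j,i,-t}$ are used to guarantee consistency regardless of the relative lex-order of $(i,s)$ and $(j,t+s)$. Finally, applying $J^bc^sp^\alpha$ (in this order) to the basepoint $(0,0,\mathbf{0})\in S$ yields $(b,s,\alpha)$ by a direct induction on $\lvert\alpha\rvert$, so distinct normal forms act as distinct permutations and thus represent distinct group elements.

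The main obstacle is the bookkeeping in the commutation check in Step~2: one must carefully track how the sign $\delta$ produced when inserting $p_{i,s}$ into the lex-ordered product $p^\alpha$ interacts with the sign produced when subsequently inserting $p_{j,t+s}$, and verify that the difference is exactly $J^{f_{i,j,t}}$. An alternative route, which bypasses this combinatorial accounting but requires different machinery, is to construct a concrete matrix representation using Pauli-like operators on $\C^{k}\otimes(\C^2)^{\otimes k(n-1)}$ realizing the bicharacter $f$; the symmetry condition on $f$ is precisely what permits such a Pauli realization, so either approach ultimately rests on the same algebraic input.
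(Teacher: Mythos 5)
Your proposal is correct and follows essentially the same route as the paper: you define the permutation action of the generators on the set of formal normal forms $(b,s,\alpha)$ (with exactly the sign rule $b\mapsto b+\alpha_{i,s}+\sum_{(j,t)\leq(i,s)}\alpha_{j,t}f_{i,j,t-s}$ for $p_i$), check the relators, and evaluate at the basepoint to separate distinct normal forms. The only part you defer --- the sign bookkeeping for $J^{f_{i,j,t}}[p_i,c^{-t}p_jc^t]$, using $f_{i,i,0}=0$ and $f_{i,j,t}=f_{j,i,-t}$ --- is exactly the ``tedious but straightforward'' computation the paper also leaves to the reader, so the arguments match.
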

For the rest of this discussion fix $f_{i,j,t}$  as follows:
\begin{align}
f_{i,j,t} = \begin{cases}
1 & i \neq j, t = 0,\\
1 & i \leq j, t = 1,\\
1 & i \geq j, t = -1,\\
0 & \text{otherwise}.
\end{cases}
\end{align}
With this choice of $f_{i,j,t}$, we have that $G_{n,f}^k$ is a quotient of $\gwb_n$.

For every $i \in [n-1]$ let $r_i \coloneqq J^{k}p_i(c^{-1}p_i c)(c^{-2}p_i c^2)\cdots(c^{-(k-1)}p_i c^{k-1})$. Let $H$ be the subgroup generated by $r_1,\ldots,r_{n-1}$.
\begin{lemma}\label{lem:centrality}
$H$ is a central subgroup, and thus is trivially normal in $G_{n,f}^k$.
\end{lemma}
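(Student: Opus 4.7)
The plan is to show each generator $r_i$ of $H$ is central in $G_{n,f}^k$, which reduces to checking that $r_i$ commutes with each of the ambient generators $J$, $c$, and $p_j$. Commutation with $J$ is immediate since the defining relations $[p_i,J]=[c,J]=e$ already make $J$ itself central. The remaining two cases reduce to direct computations using the relation $[p_i,p_{j,t}]=J^{f_{i,j,t}}$ (read off from $J^{f_{i,j,t}}[p_i,c^{-t}p_jc^t]=e$) together with the cyclic shift $c^{-1}p_{i,t}c=p_{i,t+1}$, where indices lie in $\Z_k$ since $c^k=e$.

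For commutation with $c$, I would compute
$$c^{-1}r_ic=J^k\,p_{i,1}p_{i,2}\cdots p_{i,k-1}p_{i,0},$$
then migrate the trailing $p_{i,0}$ leftward past each $p_{i,s}$ in turn. Conjugating $[p_i,p_{i,t}]=J^{f_{i,i,t}}$ by $c^{-s}$ gives $[p_{i,s},p_{i,r}]=J^{f_{i,i,r-s}}$, so each swap contributes $J^{f_{i,i,-s}}$. By the definition of $f$ this factor is nontrivial iff $s\equiv\pm 1\pmod k$, and for $k\geq 3$ the two distinct values $s=1$ and $s=k-1$ each contribute one $J$; these combine to $J^2=e$, yielding $c^{-1}r_ic=r_i$.

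For commutation with $p_j$, the relation $[p_j,p_{i,t}]=J^{f_{j,i,t}}$ rearranges to $p_j^{-1}p_{i,t}p_j=J^{f_{j,i,t}}p_{i,t}$. Applying this factor by factor and using the centrality of $J$ yields
$$p_j^{-1}r_ip_j=J^{S}\,r_i,\qquad S=\sum_{t\in\Z_k}f_{j,i,t}.$$
Only $t=0$, $t=1$, and $t\equiv -1$ can contribute, producing $S=[j\neq i]+[j\leq i]+[j\geq i]$, which equals $2$ for every $j$ (checking the three cases $j<i$, $j=i$, $j>i$ separately). Hence $S$ is even and $p_j^{-1}r_ip_j=r_i$.

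The main subtlety is bookkeeping: one must track the modular arithmetic on $\Z_k$ and verify that the precise choice of $f$ is engineered so that every accumulated $J$-factor pairs up. Implicitly this requires $1\neq -1$ in $\Z_k$, which holds for $k\geq 3$; the $k=2$ case is already covered by Tsirelson's Weyl-Brauer construction. Normality of $H$ then follows automatically from centrality.
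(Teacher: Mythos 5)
Your proof is correct and takes essentially the same route as the paper's: conjugate each generator $r_i$ by $J$, $c$, and $p_j$, and use the relations $[p_{i,s},p_{j,r}]=J^{f_{i,j,r-s}}$ to check that the accumulated $J$-factors have even parity, which is exactly what the choice of $f$ guarantees. Your explicit bookkeeping (including the observation that the two contributions $s=1$ and $s=k-1$ in the $c$-conjugation are distinct only for $k\geq 3$, the regime the paper actually uses) is the same computation the paper carries out more tersely.
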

\begin{definition}\label{def:order-k-gwb}The order-$k$ generalized Weyl-Brauer group with $n$ generators is $\gwb_n^k = G_{n,f}^k/H.$
\end{definition}

Since $\gwb_n^k$ is also a quotient of $\gwb_n$, all the representation of $\gwb_n^k$ are also representations of $\gwb_n$. 

In the same way as for $\gwb_n$, $\gwb_n^k$ can be presented in terms of generators $\sigma_1=c$ and $\sigma_{i+1}=cp_{i}$ for $i \in [n-1]$. In particular, the relations $\sigma_i^k=e$ hold in $\gwb_n^k$. This follows by expressing the relation $r_{i-1} = e$ in terms of these generators.
\begin{theorem}\label{thm:gnk-order}
    The group $\gwb_n^k$ is finite of order $2k\cdot 2^{(k-1)(n-1)}$. Further, the words $p^{\alpha}$ for $\alpha\in\mathrm{M}_{[n-1]\times\Z_k}(\Z_2)$ with $\alpha_{i,k-1}=0$, for all $i \in [n-1]$, are equal to $e$ or $J$ if and only if $\alpha=0$.
\end{theorem}

Note that in this theorem, the non-trivial words are chosen to not involve any of the elements $p_{i,k-1}$ due to the relation $p_{i,k-1}=J^{k-1}p_{i,0}\cdots p_{i,k-2}$ for all $i \in [n-1]$.

\begin{corollary}\label{cor:fin-dim-vector-to-unitary}
    There exists a finite-dimensional representation $\pi$ of $\gwb_n^k$ of dimension $k\cdot2^{(k-1)(n-1)}$, where $\pi(J) = -1$ and $\tr(\pi(p^{\alpha}))=0$ for all nonzero $\alpha\in\mrm{M}_{[n-1]\times\Z_k}(\Z_2)$ such that $\alpha_{i,k-1}=0$ for all $i \in [n-1]$. 
\end{corollary}
This representation gives rise to the order-$k$ vector-to-unitary construction $U_{\vec{x}}=\sum_i x_i\pi(\sigma_i)$ (see Definition~\ref{def:order-k-vector-to-unitary-construction}). In the rest of this paper, when we talk about the order-$k$ vector-to-unitary construction we are always referring to $\vec{x} \mapsto U_{\vec{x}}$ just defined.
\begin{proof}
    The proof is similar to the infinite-dimensional case, except now the Hilbert space $\mc{H} = \ell^2 G = \C[G]$ is finite-dimensional of dimension $2k\cdot 2^{(n-1)(k-1)}$, and the von Neumann group algebra is simply $\C[G]$. The group algebra $\C[G]$ acts by left multiplication on $\mc{H}$, therefore it embeds into $\mathrm{B}(\mc{H})$. Let $\tr$ denote the canonical normalised trace on $\mathrm{B}(\mc{H})$.

    Notice that $q = (1-J)/2$ is a projection with trace $\tau(q) = \frac{1}{2}$. Therefore $\mc{H}_0 = q \mc{H}$ is a subspace of dimension $k\cdot 2^{(n-1)(k-1)}$. Let $\pi:\C[G] \to \mathrm{B}(\mc{H}_0)$, be defined so that $\pi(x) = q x$. Since $q$ is a central projection, it is easy to verify that $\pi$ is a representation and that $\pi(J) = -1$. 
    
    Finally notice that $\tr(\pi(x)) = \tr((1-J)x)$, defines a trace on $\mathrm{B}(\mc{H}_0)$. Now since $p^\alpha\notin\{e,J\}$ by Theorem \ref{thm:gnk-order}, $\tr(\pi(p^\alpha))=0$.
\end{proof}

For simplicity from here on we identify $\sigma_i$ with its image under the representation $\pi(\sigma_i)$.

\begin{corollary}[Strong isometry.]\label{cor:fin-dim-weak-k-power}
    The order-$k$ vector-to-unitary construction $U_{\vec{x}}=\sum_i x_i \sigma_i$ is strongly isometric, \emph{i.e.} it satisfies 
    \begin{align*}
        \tr\squ{(U_{\vec{x}}^s)^* U_{\vec{y}}^{s}}=\ip{\vec{x}}{\vec{y}}^s,
    \end{align*}
    for all $s=0,\ldots,k-1$ and $\vec{x},\vec{y} \in \R^n$.
\end{corollary}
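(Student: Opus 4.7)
The plan is to expand the trace combinatorially and reduce the statement to a question about when certain group elements equal $e$ or $J$ in $\gwb_n^k$. By multilinearity,
\[\tr\squ{(U_{\vec{x}}^s)^* U_{\vec{y}}^s} = \sum_{\vec{i}, \vec{j} \in [n]^s} x_{i_1}\cdots x_{i_s}\, y_{j_1}\cdots y_{j_s}\,\tr\bigparen{\pi(g_{\vec{i}, \vec{j}})},\]
where $g_{\vec{i}, \vec{j}} = \sigma_{i_s}^{-1} \cdots \sigma_{i_1}^{-1} \sigma_{j_1} \cdots \sigma_{j_s} \in \gwb_n^k$. The construction of $\pi$ in Corollary~\ref{cor:fin-dim-vector-to-unitary} gives $\tr(\pi(x)) = \tau((1-J)x)$ for $\tau$ the canonical trace on $\C[\gwb_n^k]$, so for any group element one has $\tr(\pi(g)) = 1\{g=e\} - 1\{g=J\}$. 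Hence it suffices to establish the combinatorial claim that, for $0 \leq s < k$, $g_{\vec{i}, \vec{j}} = e$ iff $\vec{i} = \vec{j}$, and $g_{\vec{i}, \vec{j}} \neq J$ always.

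For the case $\vec{i} = \vec{j}$, I would simply observe that the chain of cancellations used in the infinite-dimensional Proposition~\ref{prop:power-property} relies only on the identities $p_i^{-1}p_i = e$ and $c^{-1}c = e$, which continue to hold in the finite quotient $\gwb_n^k$. Thus $g_{\vec{i}, \vec{j}} = e$ directly, giving $b = 0$, $\alpha = 0$ in normal form.

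For the case $\vec{i} \neq \vec{j}$, I would pass to the Pauli presentation ($\sigma_1 = c$, $\sigma_{i+1} = cp_i$ with the convention $p_0 = e$), so $\sigma_i^{-1} = p_{i-1}^{-1}c^{-1}$. After the central $c^{-1}\cdot c$ cancels, I would put the word into normal form $J^bp^\alpha$ by using the rewriting rule $c^{-u}p_{a,t} = p_{a,u+t}c^{-u}$ (where $p_{a,t} := c^{-t}p_ac^t$) to push every remaining power of $c$ to the right. The net $c$-exponent vanishes, which can be seen from the homomorphism $\phi:\gwb_n^k \to \Z_k$ defined by $c \mapsto 1$, $p_i, J \mapsto 0$ (the relations $J^2$, $c^k$, $Jp_i^2$, $[\,\cdot\,,J]$, $J^{f_{i,j,t}}[p_i, c^{-t}p_jc^t]$, and $r_i$ all map to $0$). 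Tracking which $p_{a,t}$ appear, one finds that the column $t = s-r$ of $\alpha$ receives contributions only from the indices $i_r$ and $j_r$; since $r \mapsto s-r$ is injective on $[s]$, these columns are independent, and consequently $\alpha = 0$ forces $\vec{i} = \vec{j}$.

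The main obstacle, and the reason for the hypothesis $s < k$, is to confirm that $\alpha \neq 0$ implies $g_{\vec{i}, \vec{j}} \notin \{e, J\}$. Since $\alpha$ has nonzero entries only in the $s$ columns $\{0, 1, \ldots, s-1\} \subsetneq \Z_k$, it cannot lie in the relation subspace $R = \spann\{\sum_{t \in \Z_k} e_{(i,t)}\}_i$ (any nonzero element of which populates all $k$ columns in at least one row). After using the relations in $R$ to reduce to the canonical form with $\alpha_{i,k-1} = 0$, the resulting $\alpha$ is still nonzero, so by Theorem~\ref{thm:gnk-order} we have $p^\alpha \notin \{e, J\}$. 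Combining both cases, the sum collapses to the diagonal $\vec{i} = \vec{j}$ and yields $\sum_{\vec{i} \in [n]^s} x_{i_1}y_{i_1}\cdots x_{i_s}y_{i_s} = \ip{\vec{x}}{\vec{y}}^s$. The bookkeeping in the normal-form rewriting — in particular, ensuring that the extra relations $J^{f_{i,j,t}}[p_i, c^{-t}p_jc^t]$ and $r_i$ introduced when passing from $\gwb_n$ to $\gwb_n^k$ do not create spurious cancellations between distinct columns — is the most delicate step.
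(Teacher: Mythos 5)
Your proposal is correct and follows essentially the same route as the paper's proof: expand the trace over monomials, rewrite $\sigma_{i_s}^{-1}\cdots\sigma_{i_1}^{-1}\sigma_{j_1}\cdots\sigma_{j_s}$ in the Pauli presentation as a word in the $p_{a,t}$ occupying only columns $t\in\{0,\ldots,s-1\}$, and invoke Theorem~\ref{thm:gnk-order} (together with $\tr(\pi(g))=\tr((e-J)g)$) to kill all off-diagonal terms. The ``delicate'' bookkeeping you flag at the end --- that the relations $r_i$ cannot collapse a nonzero $\alpha$ supported away from column $k-1$ --- is exactly the content of the second statement of Theorem~\ref{thm:gnk-order}, so once you cite that theorem no further argument is needed.
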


\begin{proof}
    Writing $e = p_0$
    \begin{align*}
    \tr\squ{(U_{\vec{x}}^s)^*U_{\vec{y}}^{s}}&=\sum_{\substack{i_1,\ldots,i_s\\j_1,\ldots,j_s}}x_{i_{1}}y_{j_{1}}x_{i_{2}}y_{j_{2}}\cdots x_{i_{s}}y_{j_{s}} \tr\Bigparen{(1-J)\sigma_{i_s}^{-1}\cdots\sigma_{i_2}^{-1}\sigma_{i_1}^{-1}\sigma_{j_1}\sigma_{j_2}\cdots\sigma_{j_s}}\\
    &=\sum_{\substack{i_1,\ldots,i_s\\j_1,\ldots,j_s}}x_{i_{1}}y_{j_{1}}x_{i_{2}}y_{j_{2}}\cdots x_{i_{s}}y_{j_{s}}\tr\Bigparen{(1-J)p_{i_s-1}^{-1}c^{-1}\cdots p_{i_2-1}^{-1}c^{-1}p_{i_1-1}^{-1}c^{-1}c p_{j_1-1} c p_{j_2-1}\cdots c p_{j_s-1}}\\
    &=\sum_{\substack{i_1,\ldots,i_s\\j_1,\ldots,j_s}}x_{i_{1}}y_{j_{1}}x_{i_{2}}y_{j_{2}}\cdots x_{i_{s}}y_{j_{s}}\tr\Bigparen{(1-J)p_{i_s-1,0}^{-1}p_{i_{s-1}-1,1}^{-1}\cdots p_{i_1-1,s-1}^{-1} p_{j_1-1,s-1} \cdots p_{j_s-1,0}},
    \end{align*}
    where in the last line we used the shorthand notation $p_{i,t} = c^{-t} p_i c^t$. By Theorem \ref{thm:gnk-order} \[p_{i_s-1,0}^{-1}\cdots p_{i_1-1,s-1}^{-1} p_{j_1-1,s-1}\cdots p_{j_s-1,0}\in\{e,J\}\] if and only if $(i_1,\ldots,i_l)=(j_1,\ldots,j_l)$. Thus 
    \begin{align*}
    \tr\squ{(U_{\vec{x}}^s)^*U_{\vec{y}}^{s}} = \sum_{i_1,\ldots,i_s}x_{i_1}y_{i_1}\cdots x_{i_s} y_{i_s}=\ip{\vec{x}}{\vec{y}}^s.
    \end{align*}
\end{proof}

\section{Relative Distribution} \label{sec:relative-distribution}

In this section, we study the relative distribution introduced in Section~\ref{sec:innovation-relative-distribution}.

\subsection{Definition and Characterisation}\label{sec:relative-distribution-definition}
First, we formally define the relative distribution.
\begin{definition}\label{def:weight-measure}
    Let $A,B\in\mathrm{U}_d(\C)$. Let $A=\sum_{s}e^{i\phi_s}\Pi_{A,s}$ and $B=\sum_{t}e^{i\psi_t}\Pi_{B,t}$ be the spectral decompositions of $A$ and $B$.  Define the \emph{weight measure} $w_{A,B}:\scr{B}([0,2\pi))\rightarrow\R_{\geq 0}$ as
    \begin{align}
        w_{A,B}(E)=\sum_{s,t:\;\psi_t-\phi_s\in E}\ip{\Pi_{A,s}}{\Pi_{B,t}}.
    \end{align}
\end{definition}
Since the phases $\psi_t$ and $\phi_s$ are considered as angles, the expression $\psi_t-\phi_s\in E$ is modulo $2\pi$. It follows immediately from the definition that $w_{A,B}$ is a well-defined measure, since it is countably additive. Also, we can see that it is a probability distribution since $w_{A,B}([0,2\pi))=\sum_{s,t}\ip{\Pi_{A,s}}{\Pi_{B,t}}=1$.

Also, it is important to note that the weight measure can be simply expressed in terms of the Dirac delta measure on $[0,2\pi)$ (Eq. \ref{eq:dirac})
$$w_{A,B}=\sum_{s,t}\ip{\Pi_{A,s}}{\Pi_{B,t}}\delta_{\psi_t-\phi_s}.$$

\begin{definition}\label{def:relative-measure}
    Let $A,B\in\mathrm{U}_d(\C)$ and $U\sim\mathrm{Haar}(\mathrm{U}_d(\C))$. Define the \emph{relative distribution} of $(A,B)$ as the probability measure $\Rel_{A,B}:\scr{B}([0,2\pi))\rightarrow\R_{\geq 0}$ such that
    \begin{align}
        \Rel_{A,B}(E)=\expect_U(w_{UA,UB}(E)).
    \end{align}
\end{definition}

\begin{lemma}
    The relative distribution is a well-defined probability distribution.
\end{lemma}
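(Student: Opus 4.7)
The plan is to verify two things: that $\Rel_{A,B}(E)=\expect_U[w_{UA,UB}(E)]$ is well-defined for every Borel $E$ (i.e., the integrand is measurable), and that the resulting set function is a countably additive probability measure. The main obstacle is the first point, since the spectral projections $\Pi_{UA,s}, \Pi_{UB,t}$ depend discontinuously on $U$ at unitaries with degenerate spectrum, so measurability is not immediate from the formula in Definition~\ref{def:weight-measure}.

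First I would sidestep the discontinuity of the spectral decomposition by working with the characteristic function of $w_{A,B}$. A direct computation using the spectral decompositions gives, for every $n\in\Z$,
\begin{align*}
\chi_{w_{A,B}}(n) \;=\; \sum_{s,t} e^{in(\psi_t-\phi_s)}\ip{\Pi_{A,s}}{\Pi_{B,t}} \;=\; \tr(A^{-n}B^n) \;=\; \ip{A^n}{B^n},
\end{align*}
which is a polynomial, hence continuous, in the entries of $A$ and $B$. By linearity and Stone-Weierstrass, $(A,B)\mapsto \int f\,dw_{A,B}$ is continuous on $\mathrm{U}_d(\C)\times\mathrm{U}_d(\C)$ for every $f\in C(S^1)$; equivalently, $(A,B)\mapsto w_{A,B}$ is weak-$*$ continuous into the space of probability measures on $[0,2\pi)$.

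Next I would deduce measurability. Weak-$*$ continuity implies that for every open $E\subseteq[0,2\pi)$, the map $U\mapsto w_{UA,UB}(E)$ is lower semicontinuous (since $\mathbf{1}_E$ is a monotone limit of continuous functions), and therefore Borel measurable. The collection of Borel sets $E$ for which $U\mapsto w_{UA,UB}(E)$ is measurable forms a $\sigma$-algebra (using monotone convergence together with the uniform bound $0\leq w_{UA,UB}(\cdot)\leq 1$), so containing the open sets forces it to equal all of $\scr{B}([0,2\pi))$. Hence $\Rel_{A,B}(E)\in[0,1]$ is defined for every Borel $E$.

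Finally I would verify the measure axioms. For any disjoint Borel decomposition $E=\bigsqcup_n E_n$, the pointwise identity $w_{UA,UB}(E)=\sum_n w_{UA,UB}(E_n)$ combined with Tonelli (valid since the summands are nonnegative) gives $\Rel_{A,B}(E)=\sum_n\Rel_{A,B}(E_n)$, so $\Rel_{A,B}$ is countably additive; and $\Rel_{A,B}([0,2\pi))=\expect_U[w_{UA,UB}([0,2\pi))]=\expect_U[1]=1$ confirms that it is a probability distribution.
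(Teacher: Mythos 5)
Your proposal is correct, but it reaches measurability by a genuinely different route than the paper. The paper factors $U\mapsto w_{UA,UB}(E)$ through the spectral decomposition: multiplication $U\mapsto(UA,UB)$ is continuous, diagonalisation is continuous almost everywhere, and the final summation over eigenvalue pairs is measurable; this is short but leans on the delicate point that a spectral decomposition can be chosen measurably despite its discontinuity at degenerate spectra. You instead prove that $(A,B)\mapsto w_{A,B}$ is weak-$*$ continuous, using that $\chi_{w_{A,B}}(n)=\tr(A^{-n}B^{n})$ is polynomial in the matrix entries (the same computation the paper later records in Lemma~\ref{lem:fourier-coeffs}) together with density of trigonometric polynomials, and then get Borel measurability of $U\mapsto w_{UA,UB}(E)$ from lower semicontinuity on open $E$ plus a generating-class argument. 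This completely sidesteps the spectral-decomposition issue and is the more airtight treatment of the one nontrivial step; the paper's version is more direct but less rigorous there. One small repair to your write-up: the class of Borel sets $E$ for which $U\mapsto w_{UA,UB}(E)$ is measurable is not obviously closed under arbitrary countable unions (that would require knowing measurability on intersections), so you should phrase it as a $\lambda$-system containing the $\pi$-system of open sets and invoke Dynkin's $\pi$-$\lambda$ theorem, or equivalently apply the functional monotone class theorem to $f\mapsto\int f\,dw_{UA,UB}$; with that standard adjustment the argument is complete, and your countable additivity and normalisation steps coincide with the paper's (monotone convergence/Tonelli and $w_{UA,UB}([0,2\pi))=1$).
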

\begin{proof}
    First, to see that $\Delta_{A,B}(E)$ is well-defined for all $E$, we need that $U\mapsto w_{UA,UB}(E)$ is a measurable function. In fact, it is the composition of the multiplication $U\mapsto (UA,UB)$, which is continuous; the diagonalisation $(X,Y)\mapsto(\{\phi_s\},\{\psi_t\},\{\Pi_{X,s}\},\{\Pi_{Y,t}\})$ where $X=\sum_te^{i\phi_s}\Pi_{X,s}$ and $Y=\sum_te^{i\psi_t}\Pi_{Y,t}$, which is continuous almost everywhere; and the map $(\{\phi_s\},\{\psi_t\},\{\Pi_{X,s}\},\{\Pi_{Y,t}\})\mapsto\sum_{s,t:\;\phi_s-\psi_t\in E}\ip{\Pi_{X,s}}{\Pi_{Y,t}}$, which is measurable. Therefore, the Haar integral $\expec_U w_{UA,UB}(E)$ is well-defined.

    Next, since $w_{UA,UB}(E)\geq 0$ for all $U$, so $\Delta_{A,B}(E)\geq 0$. Also, $\Delta_{A,B}$ is countably additive by monotone convergence theorem. As such, $\Delta_{A,B}$ is also a measure. Finally, $\Delta_{A,B}([0,2\pi))=\expec_Uw_{UA,UB}([0,2\pi))=1$, so it is normalised, giving a distribution.
\end{proof}

Although $w_{A,B}$ is not absolutely continuous with respect to the Lebesgue measure $\Lambda$ on $[0,2\pi)$, the relative distribution $\Rel_{A,B}$ in fact is. For any measurable subset $E\subseteq[0,2\pi)$, if $E$ has measure $0$, the subset of $\mathrm{U}_d(\C)$ whose eigenvalues are in $E$ must also have measure $0$. As such $w_{UA,UB}(E)=0$ with probability $1$, so $\Rel_{A,B}(E)=0$. Therefore the Radon-Nikodym derivative $p_{\Rel_{A,B}}\coloneqq\frac{d\Rel_{A,B}}{d\Lambda}$ exists --- this is the PDF of the relative distribution.

\begin{lemma}\label{lem:fourier-coeffs}
    The characteristic function of the relative distribution is
    $$\chi_{\Delta_{A,B}}(n)=\int\tr\squ*{U^{-n}(UD)^{n}}dU,$$
    where $D$ is the diagonalisation of $A^\ast B$.
\end{lemma}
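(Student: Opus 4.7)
The plan is to unroll the definition of $\chi_{\Delta_{A,B}}(n) = \int_0^{2\pi} e^{in\theta} d\Delta_{A,B}(\theta)$ through the definition of $\Delta_{A,B}$ in terms of the weight measures $w_{UA,UB}$, identify the Fourier transform of a single weight measure as a trace, and then manipulate the resulting Haar integral using left-invariance of the Haar measure to put $A^*B$ into diagonal form.

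First I would observe, by Fubini (everything is bounded), that
$$\chi_{\Delta_{A,B}}(n) = \int_{\mathrm{U}_d}\parens[\bigg]{\int_0^{2\pi} e^{in\theta}\,dw_{UA,UB}(\theta)}dU.$$
For any fixed unitaries $X,Y$ with spectral decompositions $X=\sum_s e^{i\phi_s}\Pi_{X,s}$ and $Y=\sum_t e^{i\psi_t}\Pi_{Y,t}$, the formula $w_{X,Y}=\sum_{s,t}\ip{\Pi_{X,s}}{\Pi_{Y,t}}\,\delta_{\psi_t-\phi_s}$ gives immediately
$$\int_0^{2\pi} e^{in\theta}\,dw_{X,Y}(\theta)=\sum_{s,t}e^{in(\psi_t-\phi_s)}\ip{\Pi_{X,s}}{\Pi_{Y,t}} = \sum_{s,t}\tr\bigparen{e^{-in\phi_s}\Pi_{X,s}\cdot e^{in\psi_t}\Pi_{Y,t}} = \tr(X^{-n}Y^n),$$
using that projections are Hermitian so $\ip{\Pi_{X,s}}{\Pi_{Y,t}}=\tr(\Pi_{X,s}\Pi_{Y,t})$. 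Applied to $X=UA,\,Y=UB$, this yields $\chi_{\Delta_{A,B}}(n)=\int\tr\bigparen{(UA)^{-n}(UB)^n}\,dU$.

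Next, I would use left-invariance of the Haar measure under $U\mapsto UA^{-1}$ to rewrite this as $\int\tr\bigparen{U^{-n}(UA^*B)^n}\,dU$. Finally, writing the diagonalisation $A^*B=VDV^*$ for some unitary $V$ and diagonal unitary $D$, I substitute $U\mapsto VUV^*$ (again Haar-invariant since the Haar measure is bi-invariant under left and right multiplication and under conjugation by a fixed unitary). Then
$$\tr\bigparen{(VUV^*)^{-n}(VUV^*\cdot VDV^*)^n}=\tr\bigparen{V U^{-n} V^* \cdot V (UD)^n V^*}=\tr\bigparen{U^{-n}(UD)^n},$$
by cyclicity, producing exactly $\int\tr\bigparen{U^{-n}(UD)^n}\,dU$ as claimed.

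The only genuine subtlety is justifying the Fubini interchange and the Haar-invariant changes of variables, but both are routine: the integrand is a bounded continuous function on the compact group $\mathrm{U}_d(\C)$, so all integrals converge absolutely and the substitution rules apply directly. The identity $\int e^{in\theta}dw_{X,Y}=\tr(X^{-n}Y^n)$ is really the heart of the lemma --- it converts the spectral/geometric object $w_{X,Y}$ into an algebraic trace, after which Haar invariance does the rest of the work.
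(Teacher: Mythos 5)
Your proof is correct and follows essentially the same route as the paper: compute the Fourier coefficient of the weight measure as $\tr(X^{-n}Y^n)$, integrate over the Haar measure, and then use Haar invariance twice (translation by $A^\ast$ and conjugation by $V$) to reduce to the diagonalisation $D$ of $A^\ast B$. The only difference is that you make the Fubini interchange explicit, which the paper leaves implicit.
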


\begin{proof}
    First, we find the characteristic function of the weight distribution:
    \begin{align*}
        \chi_{w_{A,B}}(n)&=\sum_{s,t}\ip{\Pi_{A,s}}{\Pi_{B,t}}\chi_{\delta_{\psi_t-\phi_s}}(n)=\sum_{s,t}\ip{\Pi_{A,s}}{\Pi_{B,t}}e^{in(\psi_t-\phi_s)}\\
        &=\ip{\sum_s e^{in\phi_s}\Pi_{A,s}}{\sum_t e^{in\psi_t}\Pi_{B,t}}=\ip{A^{n}}{B^{n}}=\tr\squ*{A^{-n}B^{n}}.
    \end{align*}
    This immediately gives, by integrating over the Haar measure, that the characteristic function of the relative distribution is $$\chi_{\Delta_{A,B}}(n)=\int\tr\squ*{(UA)^{-n}(UB)^n}dU.$$

    To finish, we can simplify the expression using Haar invariance. First, we can replace $U$ by $UA^\ast$ and get $\chi_{\Delta_{A,B}}(n)=\int\tr\squ*{U^{-n}(UA^*B)^{n}}dU$. Next, we can diagonalise $A^\ast B=VDV^\ast$ and again use Haar invariance to replace $U$ by $VUV^\ast$, giving
    $$\chi_{\Delta_{A,B}}(n)=\int\tr\squ*{(VUV^\ast)^{-n}(VUV^*VDV^\ast)^{n}}dU=\int\tr\squ*{U^{-n}(UD)^n}dU$$
\end{proof}

In order to motivate the theorem below, we can compute some of the values of the characteristic function. First, note that $\chi_{\Delta_{A,B}}(-n)=\chi_{\Delta_{A,B}}(n)^\ast$ since $\Delta_{A,B}$ is real-valued. As such, we need only compute the values of the characteristic function for $n\geq0$. Next, it's easy to calculate the characteristic function for small $n$. Letting $\lambda=\tr(D)=\tr(A^*B)$,
\begin{align*}
    &\chi_{\Delta_{A,B}}(0)=\int\tr\squ*{U^{-0}(UD)^0}dU=\int dU=1\\
    &\chi_{\Delta_{A,B}}(1)=\int\tr\squ*{U^\ast UD}dU=\int\tr(D)dU=\lambda\\
    &\chi_{\Delta_{A,B}}(2)=\int\tr\squ*{(U^\ast)^2(UD)^2}dU=\tr\squ[\Big]{D\int UDU^\ast dU}=\tr\squ*{D\tr(D)}=\lambda^2.
\end{align*}
It is also feasible, though far less direct, to calculate the characteristic function at $n=3$. We get 
$$\chi_{\Delta_{A,B}}(3)=\lambda^3+\frac{\tr(D^3)+\lambda^3-2\tr(D^2)/d}{d^2-1}.$$
We see here that, unlike for smaller $n$, the value is not simply $\lambda^n$. However, it does converge there as $d\rightarrow\infty$. On the other hand, the distribution with characteristic function $\lambda^n$ for $n\geq 0$ and $(\lambda^\ast)^{-n}$ for $n< 0$ is exactly the wrapped Cauchy distribution $\Delta_\lambda = \mc{W}(\measuredangle\lambda,-\ln|\lambda|)$ (see Preliminaries \ref{sec:probability-theory}). As such, we have the following theorem. In the statement of the theorem $I_{m}$ is the identity operator on $\M_{m}(\C)$.

\begin{theorem}[Cauchy Law]\label{thm:relative-distribution} 
Let $A,B \in \mathrm{U}_{d}(\C)$ such that $\lambda = \tr(A^*B) = \ip{A}{B}$. In the limit of $m\to\infty$, the relative distribution $\Delta_{A\otimes I_{m},B\otimes I_{m}}$ converges in distribution to $\Delta_\lambda$.
\end{theorem}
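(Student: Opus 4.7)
The plan is to combine L\'evy's continuity theorem on the circle (Theorem~\ref{thm:levy}) with the asymptotic freeness of Haar unitaries (Theorem~\ref{thm:asymptotically-free}). L\'evy's theorem reduces the convergence $\Delta_{A\otimes I_m,B\otimes I_m}\to\Delta_\lambda$ to pointwise convergence of characteristic functions $\chi_{\Delta_{A\otimes I_m,B\otimes I_m}}(n)\to\lambda^n$ for every $n\geq 0$; the case $n<0$ follows by complex conjugation since $\Delta_{A,B}$ is a real-valued measure. The explicit formula from Lemma~\ref{lem:fourier-coeffs} lets me rewrite
\[
\chi_{\Delta_{A\otimes I_m,B\otimes I_m}}(n)=\int\tr[U^{-n}(UD_m)^n]\,dU,
\]
where $D_m=D\otimes I_m$ is the diagonalisation of $(A\otimes I_m)^*(B\otimes I_m)$ and $U\sim\mathrm{Haar}(\mathrm{U}_{dm}(\C))$.

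\textbf{Key steps.} First I would observe that the $*$-moments of $D_m$ are independent of $m$, since $\tr((D_m^*)^aD_m^b)=\tr((D^*)^aD^b)$. This places us exactly in the setting of Theorem~\ref{thm:asymptotically-free}, giving $(U,D_m)\to(u,d)$ in $*$-distribution as $m\to\infty$, where $u$ is Haar unitary, $d$ is unitary with $\tau(d)=\tr(D)=\lambda$, and $u,d$ are freely independent. Passing to the limit reduces the theorem to the single identity $\tau[u^{-n}(ud)^n]=\lambda^n$. To prove this I would centre $d=\lambda\cdot 1+p$ with $\tau(p)=0$ and expand
\[
(ud)^n=\prod_{i=1}^n(\lambda u+up)=\sum_{S\subseteq[n]}\lambda^{n-|S|}w_S,
\]
where for $S=\{i_1<\cdots<i_k\}$, $w_S=u^{i_1}pu^{i_2-i_1}p\cdots u^{i_k-i_{k-1}}pu^{n-i_k}$. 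The $S=\emptyset$ contribution is exactly $\lambda^n\tau[u^{-n}u^n]=\lambda^n$, and for every nonempty $S$ the cyclic invariance of $\tau$ yields
\[
\tau[u^{-n}w_S]=\tau[u^{i_1-i_k}pu^{i_2-i_1}p\cdots u^{i_k-i_{k-1}}p],
\]
which equals $\tau(p)=0$ when $|S|=1$, and for $|S|\geq 2$ is an alternating word in powers of $u$ and copies of $p$ that vanishes by freeness.

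\textbf{Main obstacle.} The delicate point will be verifying that the alternating-word argument really applies; this requires every $u$-exponent appearing in the cyclically rearranged word to be nonzero, as otherwise two consecutive copies of $p$ would collapse to $p^2$, an uncentred element of $C^*\!\!\group{d}$ whose trace need not vanish. The cyclic reorganisation must place the wrap-around exponent $i_1-i_k$ at the start, and I would need to check that this exponent is strictly negative (using $i_k>i_1$ for $|S|\geq 2$) and that each inner exponent $i_{j+1}-i_j\geq 1$ is strictly positive (using that $S$ is a set of distinct increasing integers). Once these inequalities are in hand, the word is genuinely alternating in centred letters from the two freely independent subalgebras $C^*\!\!\group{u}$ and $C^*\!\!\group{d}$, and freeness closes the argument.
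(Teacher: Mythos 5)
Your proposal is correct and follows essentially the same route as the paper: reduce via L\'evy's continuity theorem to convergence of characteristic functions, use Lemma~\ref{lem:fourier-coeffs} plus asymptotic freeness (Theorem~\ref{thm:asymptotically-free}) to pass to a Haar unitary $u$ free from $d$, and then show $\tau[u^{-n}(ud)^n]=\lambda^n$ by centring $d=\lambda+p$ and killing every cross term by freeness of alternating centred words. Your subset-indexed expansion and cyclic rearrangement, together with the check that all $u$-exponents are nonzero, is just a more explicit bookkeeping of the same cancellation the paper performs.
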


The proof makes use of techniques from free probability (Preliminaries \ref{sec:free-probability}). 

\begin{proof}
    Let $A_m$ be the diagonalisation of $A^\ast B \otimes I_m$, and let $U_m$ be a $dm$-dimensional Haar random unitary. Then, by Theorem \ref{thm:asymptotically-free}, we get $U_m,A_m\rightarrow u,a$ ($\ast$-dist) where $u$ and $a$ are free. As such, for every $n\geq 0$
    $$\chi_{\Delta_{A\otimes I_m,B\otimes I_m}}(n)\rightarrow \tau(u^{-n}(ua)^n ),$$
    as $m\rightarrow\infty$.
    This is a mixed $\ast$-moment of $u,a$, so we can use freeness to compute it in terms of the $\ast$-moments of $u$ and $a$. To rewrite this in terms of free variables of expectation $0$, note that $\tau(a)=\lim_{m\rightarrow\infty}\tr(A_m)=\lambda$. Then
    \begin{align*}
        \tau((ua)^n u^{-n})&=\tau((u(a-\lambda+\lambda))^nu^{-n})=\sum_{e_1,\ldots,e_n\in\{0,1\}}\lambda^{n-\sum e_i}\tau(u(a-\lambda)^{e_1}\cdots u(a-\lambda)^{e_n}u^{-n})
    \end{align*}
    If the $e_i$ are not all $0$, then the argument of $\tau$ is a product of free variables each with expectation $0$, so that term does not contribute. That is, we get
    \begin{align*}
        \tau((ua)^n u^{-n})&=\lambda^n\tau(1)+\sum_{k=1}^n\lambda^{n-k}\sum_{n_1,\ldots,n_k>0,\sum n_k\leq n}\tau(u^{n_1}(a-\lambda)\cdots u^{n_k}(a-\lambda)u^{-\sum n_k})\\
        &=\lambda^n+\sum_{k=1}^n\lambda^{n-k}\sum_{n_1,\ldots,n_k>0,\sum n_k\leq n}0=\lambda^n,
    \end{align*}
    since $\tau(a-\lambda)=0$ and $\tau(u^n)=0$ for all $n\neq 0$. This gives exactly the characteristic function of the wrapped Cauchy distribution. Using Theorem~\ref{thm:levy}, $\Delta_{A\otimes I_m,B\otimes I_m}$ converges in distribution to $\Delta_\lambda$
\end{proof}

We end this section with an open problem about the possibility of extending the relative distribution to the case of three operators. This is the main roadblock in extending our framework to $3$-CSPs.
\begin{question}\label{q:relative-dist-for-3}
Is there an analogue of relative distribution for three operators? 
\end{question}

\subsection{Integral Formula}\label{sec:fidelity-integral-formula}

In this section, we make use of the relative distribution introduced in the previous section to calculate the expected value of polynomials of rounded operators. This was first discussed in Section~\ref{sec:innovation-relative-distribution}. We show that these expected values can be expressed as an integral of a function depending only on the polynomial -- that we call the \emph{fidelity} function -- with respect to the relative distribution (see Eq. \ref{eq:integral-formula} in the introduction).

In the discussion below, for every $k$ we work with a $*$-polynomial of the form $P(x,y) = \sum_{s,t=0}^{k-1} c_{s,t} (x^*)^sy^t$. In our applications, the variables $x,y$ will always be substituted with order-$k$ unitaries. Because of this we also assume that the indices $s,t$ in $c_{s,t}$ are considered modulo $k$, \emph{i.e.} we let $c_{s+ak,t+bk} = c_{s,t}$ for all integers $a,b$.
\begin{definition}\label{def:fidelity}
    For every $k$ and a $*$-polynomial of the form $P(x,y) = \sum_{s,t=0}^{k-1} c_{s,t} (x^*)^sy^t$, the \emph{fidelity function} is the map $\fid_{k,P}:[0,2\pi)\rightarrow\C$ defined as
    \begin{align}
        \fid_{k,P}(\theta)=\frac{1}{2\pi}\int_0^{2\pi}\sum_{s,t=0}^{k-1}c_{s,t}\widetilde{e^{i\varphi}}^{-s}\widetilde{e^{i(\varphi+\theta)}}^td\varphi,
    \end{align}
    where $\tilde{z}$ denotes the closest $k$-th root of unity to $z$, and the integral is with respect to the Lebesgue measure over $[0,2\pi)$. When $k$ is clear from the context we let $\fid_P \coloneqq \fid_{k,P}$.
\end{definition}

\begin{theorem}\label{lem:noncommutative-fid-delta-integral} Let $A,B\in\mathrm{U}_d(\C)$ be two unitaries, and let $U\sim\mathrm{Haar}(\mathrm{U}_d(\C))$ be a Haar-random unitary. Define the random variables $\tilde{X}$ and $\tilde{Y}$ as the closest (in little Frobenius norm) order-$k$ unitaries to $X=UA$ and $Y=UB$, respectively. For any $\ast$-polynomial $P$ as above,
\begin{equation}
\expect P(\tilde{X},\tilde{Y}) = \int \fid_P(\theta) d\Rel_{A,B}(\theta).
\end{equation}
\end{theorem}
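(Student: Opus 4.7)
The plan is to prove the identity one monomial at a time, and glue them back together by linearity. Concretely, expand $P(x,y)=\sum_{s,t}c_{s,t}(x^\ast)^s y^t$ and observe that, by construction of $\fid_P$, it suffices to show for every pair $(s,t)\in\{0,\ldots,k-1\}^2$ the identity
\begin{equation*}
\expect_U \tr\!\left[(\tilde{X}^\ast)^s\tilde{Y}^t\right] \;=\; \int_0^{2\pi} f_{s,t}(\theta)\, d\Delta_{A,B}(\theta),\qquad f_{s,t}(\theta):=\tfrac{1}{2\pi}\int_0^{2\pi}\widetilde{e^{i\varphi}}^{-s}\widetilde{e^{i(\varphi+\theta)}}^{t}\,d\varphi.
\end{equation*}
The $\ast$-polynomial identity then follows by summing $c_{s,t}$ times the monomial identity.

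The next step is the spectral computation. Write $X=UA=\sum_r e^{i\varphi_r}\Pi_{X,r}$ and $Y=UB=\sum_l e^{i\psi_l}\Pi_{Y,l}$; then $\tilde{X}=\sum_r\widetilde{e^{i\varphi_r}}\Pi_{X,r}$ and similarly for $\tilde{Y}$ by the closest-$k$-th-root-of-unity rule of \cite{hoffman}. Using $\Pi_{X,r}^\ast=\Pi_{X,r}$ and expanding,
\begin{equation*}
\tr\!\left[(\tilde{X}^\ast)^s\tilde{Y}^t\right]=\sum_{r,l}\widetilde{e^{i\varphi_r}}^{-s}\widetilde{e^{i\psi_l}}^{t}\ip{\Pi_{X,r}}{\Pi_{Y,l}}.
\end{equation*}
The quantity inside the expectation therefore depends on $U$ only through the pairs of phases $(\varphi_r,\psi_l)$ and the overlap weights $\ip{\Pi_{X,r}}{\Pi_{Y,l}}$.

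The crux is a symmetrisation argument using Haar invariance. The Haar measure on $\mathrm{U}_d(\C)$ is invariant under the map $U\mapsto e^{i\alpha}U$ for any $\alpha\in[0,2\pi)$. Under this substitution, the pair $(X,Y)$ is replaced by $(e^{i\alpha}X,e^{i\alpha}Y)$, which leaves the spectral projectors $\Pi_{X,r},\Pi_{Y,l}$ untouched and shifts every eigenphase by $\alpha$, i.e. $(\varphi_r,\psi_l)\mapsto(\varphi_r+\alpha,\psi_l+\alpha)$; in particular the relative angle $\theta_{r,l}:=\psi_l-\varphi_r$ is preserved. Averaging the integrand over $\alpha\in[0,2\pi)$ before taking the expectation over $U$ therefore does not change $\expect_U\tr[(\tilde{X}^\ast)^s\tilde{Y}^t]$. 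Substituting $\varphi=\varphi_r+\alpha$ in the inner integral gives
\begin{equation*}
\frac{1}{2\pi}\int_0^{2\pi}\widetilde{e^{i(\varphi_r+\alpha)}}^{-s}\widetilde{e^{i(\psi_l+\alpha)}}^{t}\,d\alpha=\frac{1}{2\pi}\int_0^{2\pi}\widetilde{e^{i\varphi}}^{-s}\widetilde{e^{i(\varphi+\theta_{r,l})}}^{t}\,d\varphi=f_{s,t}(\theta_{r,l}),
\end{equation*}
so that
\begin{equation*}
\expect_U\tr\!\left[(\tilde{X}^\ast)^s\tilde{Y}^t\right]=\expect_U\sum_{r,l}f_{s,t}(\theta_{r,l})\ip{\Pi_{X,r}}{\Pi_{Y,l}}.
\end{equation*}

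To finish, recognise the inner sum as integration of $f_{s,t}$ against the weight measure of Definition~\ref{def:weight-measure}:
\begin{equation*}
\sum_{r,l}f_{s,t}(\theta_{r,l})\ip{\Pi_{X,r}}{\Pi_{Y,l}}=\int_0^{2\pi}f_{s,t}(\theta)\,dw_{X,Y}(\theta).
\end{equation*}
Fubini (valid because $f_{s,t}$ is bounded and the measures have finite total mass) then lets us swap the Haar expectation with the $\theta$-integral, and Definition~\ref{def:relative-measure} identifies the result as $\int f_{s,t}\,d\Delta_{A,B}$, completing the monomial case. The main obstacle is really the measurability/Fubini bookkeeping around the discontinuities of $z\mapsto\tilde{z}$ at the boundaries of the Voronoi cells of the $k$-th roots of unity; since those boundaries form a Haar-null set and $f_{s,t}$ is bounded and piecewise smooth, the exchange of integrals is justified, and the rest is bookkeeping.
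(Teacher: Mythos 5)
Your proposal is correct and follows essentially the same route as the paper's proof: spectral decomposition with the Hoffman--Wielandt eigenvalue-rounding fact, phase-averaging $U\mapsto e^{i\alpha}U$ via Haar invariance to produce the fidelity function at each relative angle, then reading the eigenvalue sum as an integral against the weight measure and taking the Haar expectation to obtain $\Delta_{A,B}$. The only cosmetic differences are that you argue monomial-by-monomial and make the Fubini/measurability bookkeeping explicit, which the paper leaves implicit.
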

\begin{proof}
    Let $X = UA=\sum_i\alpha_iP_{i}$ and $Y = UB=\sum_j\beta_jQ_{j}$ be the spectral decompositions. So, by Hoffmann-Wielandt theorem, $\tilde{X}=\sum_i\tilde{\alpha}_iP_{i}$, and similarly for $Y$. This gives that
    \begin{align*}
        \expect \bigparen{\sum_{s,t}c_{s,t} \ip{\tilde{X}^{s}}{\tilde{Y}^t}}= \sum_{i,j}\sum_{s,t}c_{s,t}\expect \bigparen{\tilde{\alpha}_i^{-s}\tilde{\beta}_j^t\ip{P_{i}}{Q_{j}}}
    \end{align*}
    Using Haar invariance, we can multiply $U$ by any root of unity $\zeta$, which sends $(\alpha_i,\beta_j)$ to $(\zeta\alpha_i,\zeta\beta_j)$. In particular, we can multiply by a random $e^{i\varphi}$, giving
    \begin{align*}
        \expect \bigparen{\sum_{i,j}\sum_{s,t}c_{s,t}\tilde{\alpha}_i^{-s}\tilde{\beta}_j^t\ip{P_{i}}{Q_{j}}}&=\frac{1}{2\pi}\int_0^{2\pi}\expect \bigparen{\sum_{i,j}\sum_{s,t}c_{s,t}\widetilde{e^{i\varphi}\alpha_i}^{-s}\widetilde{e^{i\varphi}\beta_j}^t\ip{P_{i}}{Q_{j}}}d\varphi\\
        &=\expect \bigparen{\sum_{i,j}\sum_{s,t}c_{s,t}\frac{1}{2\pi}\int_0^{2\pi}\widetilde{e^{i\varphi}}^{-s}\widetilde{e^{i\varphi}\alpha_i^\ast\beta_j}^td\varphi\ip{P_{i}}{Q_{j}}}\\
        &=\expect \bigparen{\sum_{i,j}\fid_P(\measuredangle \alpha_i^\ast\beta_j)\ip{P_{i}}{Q_{j}}}.
    \end{align*}
    Now, using the weight measure $w_{UA,UB}$ (Definition \ref{def:weight-measure}), we can view the sum over the eigenvalues as an integral with respect to this measure:
    \begin{align*}
        \sum_{i,j}\fid_P(\measuredangle(\alpha_i^\ast\beta_j))\ip{P_i}{Q_j}=\int\fid_P(\theta)dw_{UA,UB}(\theta).
    \end{align*}
    Finally, using the definition of the relative distribution (Definition \ref{def:relative-measure}), we get that
    \begin{align*}
        \expect_U\int\fid_P(\theta)dw_{UA,UB}(\theta)=\int\fid_P(\theta)d\Delta_{X,Y}(\theta).
    \end{align*}
\end{proof}

Due to the results of the previous section, we know that the relative distribution becomes a wrapped Cauchy distribution in the limit of large dimension. In this case, we can exactly compute the integral formula. First, we need a simpler expression of the fidelity function.

\begin{lemma}\label{lem:general-fidelity}
    The fidelity function defined in Definition~\ref{def:fidelity} simplifies to
    \begin{align}
        \fid_P(\theta)=\sum_{t=0}^{k-1} c_{t,t}\omega^{t\floor{\frac{k}{2\pi}\theta}}\parens*{1+(\omega^t-1)\parens*{\tfrac{k}{2\pi}\theta-\floor*{\tfrac{k}{2\pi}\theta}}}.
    \end{align}
    Its Fourier coefficients are $\widehat{\fid}_P(0)=2\pi c_{0,0}$ and for $n\neq 0$,
    \begin{align}
        \widehat{\fid}_P(n)=\frac{2k^2}{\pi n^2}\sin^2\parens*{\frac{\pi n}{k}} c_{n,n}.
    \end{align}
\end{lemma}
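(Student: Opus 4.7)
The plan is to reduce the statement to an explicit computation resting on two ingredients: the piecewise-constant structure of the rounding map $\varphi\mapsto\widetilde{e^{i\varphi}}$, and the orthogonality of the $\Z/k\Z$-characters. I first partition $[0,2\pi)$ into the arcs $J_j=[(2j-1)\pi/k,\,(2j+1)\pi/k)$ (indices taken mod $k$), on each of which $\widetilde{e^{i\varphi}}$ takes the constant value $\omega^j$. Writing $\theta=\tfrac{2\pi}{k}(m+\tau)$ with $m=\lfloor\tfrac{k}{2\pi}\theta\rfloor\in\{0,\ldots,k-1\}$ and $\tau=\tfrac{k}{2\pi}\theta-m\in[0,1)$, direct inspection of the arc boundaries shows that as $\varphi$ traverses $J_j$, the value $\widetilde{e^{i(\varphi+\theta)}}$ is $\omega^{j+m}$ on a sub-arc of length $\tfrac{2\pi}{k}(1-\tau)$ and $\omega^{j+m+1}$ on a sub-arc of length $\tfrac{2\pi}{k}\tau$.

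Plugging this into the definition of $\fid_P$, the $(s,t)$-contribution coming from the arc $J_j$ equals
\[\tfrac{1}{k}\,\omega^{j(t-s)}\bigl[(1-\tau)\omega^{mt}+\tau\omega^{(m+1)t}\bigr].\]
Summing over $j$, the identity $\sum_{j=0}^{k-1}\omega^{j(t-s)}=k$ if $s\equiv t\pmod{k}$ and $0$ otherwise kills every off-diagonal pair, and the surviving diagonal terms assemble into $\sum_t c_{t,t}\,\omega^{mt}[1+\tau(\omega^t-1)]$, which is the stated closed form.

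For the Fourier coefficients, I integrate $\fid_P(\theta)e^{in\theta}$ term by term. Substituting $u=\tfrac{k}{2\pi}\theta$ and splitting over the unit intervals $[j,j+1)$ gives
\[\widehat{\fid}_P(n)=\tfrac{2\pi}{k}\sum_{j,t}c_{t,t}\,\omega^{j(t+n)}\!\int_0^1[1+s(\omega^t-1)]\,e^{2\pi i n s/k}\,ds,\]
and the same character orthogonality isolates the unique $t\in\{0,\ldots,k-1\}$ with $t+n\equiv 0\pmod{k}$, which under the mod-$k$ indexing convention on the coefficients is $c_{n,n}$. For $n=0$ the inner integral is $(1+\omega^t)/2$ and only $t=0$ survives, yielding $2\pi c_{0,0}$. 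For $n\neq 0$, set $\alpha=2\pi n/k$; the surviving $\omega^t$ equals $e^{-i\alpha}$, and a single integration by parts collapses
\[\int_0^1[1+s(e^{-i\alpha}-1)]\,e^{i\alpha s}\,ds=\frac{4\sin^2(\alpha/2)}{\alpha^2},\]
which after substitution gives the claimed $\tfrac{2k^2}{\pi n^2}\sin^2(\pi n/k)\,c_{n,n}$. The only nontrivial bookkeeping is the geometric observation in the first paragraph; thereafter the argument is driven purely by $\Z/k\Z$-orthogonality and a routine integration by parts, so I foresee no substantive analytic obstacle.
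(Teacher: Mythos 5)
Your derivation of the closed form is correct and is essentially the paper's own argument: you split $[0,2\pi)$ into the $k$ arcs on which the rounding map is constant, observe that $\widetilde{e^{i(\varphi+\theta)}}$ takes the values $\omega^{j+m}$ and $\omega^{j+m+1}$ on sub-arcs of lengths $\tfrac{2\pi}{k}(1-\tau)$ and $\tfrac{2\pi}{k}\tau$, and let $\Z_k$-character orthogonality kill the off-diagonal $(s,t)$ pairs. The paper does the same, only organized slightly differently: it first collapses the double rounding integral to a single base arc (giving $k\delta_{s,t}$ times a one-arc integral) and then computes Fourier coefficients via a double integral, whereas you integrate the closed form directly and finish by parts; that variation is harmless.

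The one genuine problem is in the Fourier-coefficient step. Your displayed intermediate formula, with the factor $\omega^{j(t+n)}$ and kernel $e^{2\pi i n s/k}$, corresponds to the convention $\widehat{\fid}_P(n)=\int_0^{2\pi}\fid_P(\theta)e^{in\theta}d\theta$ (the one stated in the paper's preliminaries). Under that convention orthogonality selects the unique $t\in\{0,\ldots,k-1\}$ with $t\equiv -n \pmod{k}$, so the surviving coefficient is $c_{-n,-n}$ in the mod-$k$ indexing, not $c_{n,n}$; the sentence identifying it as $c_{n,n}$ is false as written (the two agree only when $c_{n,n}=c_{-n,-n}$). The rest of your computation is right ($\omega^t=e^{-i\alpha}$ and $\int_0^1[1+s(e^{-i\alpha}-1)]e^{i\alpha s}ds=4\sin^2(\alpha/2)/\alpha^2$), so what you have actually proved is $\widehat{\fid}_P(n)=\tfrac{2k^2}{\pi n^2}\sin^2(\tfrac{\pi n}{k})\,c_{-n,-n}$. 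To land on the stated $c_{n,n}$ you need the conjugate convention $\widehat{\fid}_P(n)=\int_0^{2\pi}\fid_P(\theta)e^{-in\theta}d\theta$, which is in fact what the paper's own proof uses (the preliminaries and the proof disagree on this sign, an inconsistency your write-up inherits rather than resolves). The fix is cosmetic — flip the convention at the outset, or state the result with $c_{-n,-n}$ and note the relabeling $n\mapsto -n$ — and nothing downstream is affected since in every application in the paper $c_{n,n}=c_{-n,-n}$; but as written the identification step does not follow from your computation.
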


\begin{proof}
    Consider the integral $f_{s,t}(\theta)=\int_0^{2\pi}\widetilde{e^{i\varphi}}^{-s}\widetilde{e^{i(\varphi+\theta)}}^td\varphi$. This simplifies as
    \begin{align*}
        f_{s,t}(\theta)&=\sum_{r=0}^{k-1}\int_{-\frac{\pi}{k}+\frac{2\pi}{k}r}^{\frac{\pi}{k}+\frac{2\pi}{k}r}\widetilde{e^{i\varphi}}^{-s}\widetilde{e^{i(\varphi+\theta)}}^td\varphi=\sum_{r=0}^{k-1}\omega^{(t-s)r}\int_{-\frac{\pi}{k}}^{\frac{\pi}{k}}\widetilde{e^{i\varphi}}^{-s}\widetilde{e^{i(\varphi+\theta)}}^td\varphi=k\delta_{s,t}\int_{-\frac{\pi}{k}}^{\frac{\pi}{k}}\widetilde{e^{i(\varphi+\theta)}}^td\varphi.
    \end{align*}
    Now, $\widetilde{e^{i(\varphi+\theta)}}^t$ is a piecewise constant function (\emph{i.e.} a step function), with the value changing at $x_\theta=\frac{2\pi}{k}\floor{\frac{k}{2\pi}\theta}-\theta+\frac{\pi}{k}$. More precisely, on the interval $\varphi\in(-\frac{\pi}{k},x_\theta]$, $\widetilde{e^{i(\varphi+\theta)}}^t=\widetilde{e^{i(\theta-\frac{\pi}{k})}}^t=\omega^{t\floor{\frac{k}{2\pi}\theta}}$ and on its complement $\varphi\in(x_\theta,\frac{\pi}{k}]$, $\widetilde{e^{i(\varphi+\theta)}}^t=\widetilde{e^{i(\theta+\frac{\pi}{k})}}^t=\omega^{t(\floor{\frac{k}{2\pi}\theta}+1)}$. Therefore, $f_{s,t}(\theta)=k\delta_{s,t}\omega^{t\floor{\frac{k}{2\pi}\theta}}\parens*{x_\theta+\frac{\pi}{k}+\omega^t(\frac{\pi}{k}-x_\theta)}$. Putting the terms together,
    \begin{align*}
        \fid_P(\theta)=\frac{1}{2\pi}\sum_{s,t=0}^{k-1}c_{s,t}f_{s,t}(\theta)=\frac{k}{2\pi}\sum_{t=0}^{k-1}c_{t,t}\omega^{t\floor{\frac{k}{2\pi}\theta}}\parens*{\tfrac{2\pi}{k}+(\omega^t-1)(\tfrac{\pi}{k}-x_\theta)}.
    \end{align*}
    To find the Fourier coefficients, consider the Fourier coefficients of $f_{t,t}$. They are
    \begin{align*}
        \hat{f}_{t,t}(0)&=k\int_0^{2\pi}\int_{-\frac{\pi}{k}}^{\frac{\pi}{k}}\widetilde{e^{i(\theta+\varphi)}}^t d\varphi d\theta=2k\pi \delta_{t,0} \int_{-\frac{\pi}{k}}^{\frac{\pi}{k}} d\varphi = 4\pi^2 \delta_{t,0},
    \end{align*}
    and for $n \neq 0$
    \begin{align*}
        \hat{f}_{t,t}(n)&=k\int_0^{2\pi}\int_{-\frac{\pi}{k}}^{\frac{\pi}{k}}\widetilde{e^{i(\theta+\varphi)}}^t e^{-in\theta}d\varphi d\theta\\&=k\int_0^{2\pi}\int_{-\frac{\pi}{k}}^{\frac{\pi}{k}}\widetilde{e^{i\theta}}^t e^{-in(\theta-\varphi)}d\varphi d\theta\\
        &=2k\frac{\sin\parens*{\frac{\pi n}{k}}}{n}\int_{0}^{2\pi} \widetilde{e^{i\theta}}^t e^{-in\theta} d\theta\\
        &=2k\frac{\sin\parens*{\frac{\pi n}{k}}}{n} \int_{-\frac{\pi}{k}}^{\frac{\pi}{k}} e^{-in\theta} d\theta \sum_{r=0}^{k-1}\omega^{r(t-n)} = \begin{cases} 4k^2\frac{\sin^2\parens*{\frac{\pi n}{k}}}{n^2},&\quad t=n\pmod{k},\\
            0,&\quad\text{otherwise.}
        \end{cases}
    \end{align*}
    Plugging these into $\widehat{\fid}_P(n)=\frac{1}{2\pi}\sum_{t=0}^{k-1} c_{t,t}\hat{f}_{t,t}(n)$ gives the wanted result.
\end{proof}

\begin{proposition}\label{prop:cauchy-integral-formula}
    For $\lambda\in\C$ with $|\lambda|\leq 1$ and $P(x,y) = \sum_{s,t} c_{s,t} (x^*)^sy^t$ we have
    \begin{align}
        \int \fid_P(\theta) d\Rel_{\lambda}(\theta)=c_{0,0}+\frac{k^2}{\pi^2}\sum_{n=1}^\infty\frac{\sin^2\parens*{\frac{\pi n}{k}}}{n^2}\parens*{c_{-n,-n}\lambda^n+c_{n,n}(\lambda^\ast)^n}.
    \end{align}
\end{proposition}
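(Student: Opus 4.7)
The strategy is to recognise this as a direct application of Parseval's identity (Eq.~\eqref{eq:parseval-for-distributions}) combined with the two inputs that are already available: the Fourier coefficients of $\fid_P$ computed in Lemma~\ref{lem:general-fidelity}, and the characteristic function of the wrapped Cauchy distribution $\Delta_\lambda$ recorded in Section~\ref{sec:probability-theory}.

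First I would note that $\fid_P$ is a bounded piecewise-linear function on $[0,2\pi)$ and hence square-integrable, and that for $|\lambda|<1$ the wrapped Cauchy density in Eq.~\eqref{eq:cauchy} is bounded and in particular square-integrable, so Parseval's formula for distributions~\eqref{eq:parseval-for-distributions} applies:
\[
\int \fid_P(\theta)\,d\Delta_\lambda(\theta)=\frac{1}{2\pi}\sum_{n\in\Z}\widehat{\fid}_P(-n)\,\chi_{\Delta_\lambda}(n).
\]
Next I would substitute the two sets of inputs. From the preliminaries, $\chi_{\Delta_\lambda}(n)=\lambda^n$ for $n\geq 0$ and $\chi_{\Delta_\lambda}(n)=(\lambda^\ast)^{-n}$ for $n<0$. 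From Lemma~\ref{lem:general-fidelity}, $\widehat{\fid}_P(0)=2\pi c_{0,0}$ and, using that $\sin^2(-\pi n/k)=\sin^2(\pi n/k)$ and $(-n)^2=n^2$, $\widehat{\fid}_P(-n)=\frac{2k^2}{\pi n^2}\sin^2(\pi n/k)\,c_{-n,-n}$ for $n\neq 0$.

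The $n=0$ term contributes $\frac{1}{2\pi}\cdot 2\pi c_{0,0}=c_{0,0}$. For $n>0$ the contribution is $\frac{k^2}{\pi^2 n^2}\sin^2(\pi n/k)\,c_{-n,-n}\lambda^n$, and for $n<0$, writing $m=-n>0$, the contribution is $\frac{k^2}{\pi^2 m^2}\sin^2(\pi m/k)\,c_{m,m}(\lambda^\ast)^m$. Collecting positive and negative indices yields the claimed formula. The fact that the series converges absolutely follows from $|c_{n,n}|,|c_{-n,-n}|$ being bounded (they vanish outside $|n|<k$ by the indexing convention for $c_{s,t}$) together with the $n^{-2}$ decay, so no subtle convergence issues arise.

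The one point that requires a remark is the boundary case $|\lambda|=1$, where $\Delta_\lambda=\delta_{\measuredangle\lambda}$ has no PDF. There the left-hand side is simply $\fid_P(\measuredangle\lambda)$, and the identity can either be obtained by continuity of both sides in $\lambda$ (the right-hand side being a uniformly convergent power series on the closed unit disk thanks to the $1/n^2$ weight), or verified directly from the closed-form expression for $\fid_P$ in Lemma~\ref{lem:general-fidelity} via a standard Fourier expansion of the piecewise-linear function. I do not foresee a genuine obstacle in the proof: everything reduces to bookkeeping of Fourier coefficients, with the only mild subtlety being the handling of the $|\lambda|=1$ boundary, which can be absorbed into a short continuity remark.
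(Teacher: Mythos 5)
Your proposal is correct and follows essentially the same route as the paper: apply Parseval's identity \eqref{eq:parseval-for-distributions}, substitute the Fourier coefficients of $\fid_P$ from Lemma~\ref{lem:general-fidelity} and the characteristic function $\chi_{\Delta_\lambda}$, and collect the $n=0$, $n>0$, $n<0$ terms. Your extra remarks on absolute convergence and on the $|\lambda|=1$ boundary case (where $\Delta_\lambda$ is a Dirac delta) are welcome additions that the paper's proof leaves implicit.
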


\begin{proof}
    By Parseval's theorem (Eq. \ref{eq:parseval-for-distributions}),
    \begin{align*}
        \int \fid_P(\theta) d\Rel_{\lambda}(\theta)=\frac{1}{2\pi}\sum_{n\in\Z}\widehat{\fid}_P(-n)\chi_{\Delta_\lambda}(n).
    \end{align*}
    The characteristic function of the wrapped Cauchy distribution $\Delta_\lambda$ is $\chi_{\Delta_\lambda}(n)=\lambda^n$ for $n\geq 0$ and $\chi_{\Delta_\lambda}(n)=(\lambda^\ast)^{-n}$ for $n<0$. Using this and the previous lemma,
    \begin{align*}
        \int \fid_P(\theta) d\Rel_{\lambda}(\theta)=c_{0,0}+\frac{1}{2\pi}\sum_{n=1}^\infty\frac{2k^2}{\pi n^2}\sin^2\parens*{\frac{\pi n}{k}}\parens*{c_{-n,-n}\lambda^n + c_{n,n}(\lambda^\ast)^n},
    \end{align*}
    giving the result
\end{proof}
\section{Algorithm} \label{sec:algorithm}

In this section, we give an approximation algorithm for noncommutative CSPs, and analyse its performance (prove its guarantees in terms of approximation ratio) on many classes of CSPs in a unified manner. We first give the algorithm and it analysis on $\ncut{k}$ and the larger class of $\nhomlin{2}{k}$ in Section~\ref{sec:algorithm-homogeneous}. In Section~\ref{sec:algorithm-smooth}, we introduce a variant of the algorithm for rounding unitary solutions of smooth CSPs to nearby feasible solutions. The analysis for the smooth variant will be similar to the analysis for $\ncut{k}$ and $\nhomlin{2}{k}$.

\subsection{Homogeneous CSPs}\label{sec:algorithm-homogeneous}
We present the algorithm for homogeneous $\nlin{2}{k}$:

\vspace{10pt}
\IncMargin{1em}
\begin{algorithm}[H]\label{alg:homlink}
\DontPrintSemicolon
Solve the canonical SDP~\eqref{eq:def-homlin2k-sdp} relaxation and obtain $n$ unit vectors $\vec{x}_i \in \R^n$.

Run the vector-to-unitary construction to obtain unitary operators $X_i = U_{\vec{x}_i}$ acting on $\C^M$ for some $M$.

Sample a unitary $U$ acting on $\C^M$ from the Haar measure.

Let $\hat{X}_i \coloneqq UX_i$ for all $i = 1,\ldots,n$.

Let $\tilde{X}_i$ be the \emph{closest} order $k$ unitary to $\hat{X}_i$.

Output the set of operators $\tilde{X}_i$ as the solution. 

\caption{Algorithm constructing a feasible solution for homogeneous $\nlin{2}{k}$.}
\end{algorithm}\DecMargin{1em}
\vspace{10pt}

\begin{itemize}
\item In Step $1$, it is crucial that the SDP relaxation of the problem is real, since these are the only vectors that when fed into the vector-to-unitary construction produce unitary operators. Luckily the canonical SDP relaxation is real for homogeneous CSPs.

\item In Step $2$, it does not matter which vector-to-unitary construction is used. Any isometric mapping from real unit vector to unitaries suffices.

\item Step $5$ can be achieved using Hoffman-Wielandt theorem \cite{hoffman}. This theorem states the following. Suppose $\hat{X}$ is a unitary and that $\sum \lambda_i\ketbra{\phi_i}{\phi_i}$ is its spectral decomposition. Then the closest order-$k$ unitary to $\hat{X}$ is the operator $\tilde{X} = \sum_i \tilde{\lambda}_i\ketbra{\phi_i}{\phi_i}$ where $\tilde{\lambda}_i$ is the closest $k$th root of unity to $\lambda_i$. The closeness is in the sense that $\|\hat{X} - \tilde{X}\| \leq \|\hat{X} - Y\|$ for any order $k$ unitary $Y$, where $\|\cdot\|$ is any unitarily invariant norm such as the Frobenius norm.
\end{itemize}

We now use the method of relative distribution from Section~\ref{sec:relative-distribution} to analyse this algorithm. First, we analyse the algorithm for $\cut{k}$ problems \eqref{eq:def-nmaxcutk}, and then finish this section by extending the analysis to general $\homlin{2}{k}$.

Since the objective function of $\cut{k}$ has the form $\sum_{i,j}w_{i,j}\parens*{1-\frac{1}{k}\sum_{s=0}^{k-1}x_i^{-s}x_j^s}$, we can find the relevant fidelity function (Definition~\ref{def:fidelity}) by considering the $\ast$-polynomial $P_k(x,y)=\frac{1}{k}\sum_{s=0}^{k-1}(x^\ast)^{s}y^s$.

\begin{corollary}[of Lemma \ref{lem:general-fidelity}]\label{cor:cut-fidelity}
    For $P_k(x,y)=\frac{1}{k}\sum_{s=0}^{k-1}(x^\ast)^{s}y^s$, the fidelity function $\fid_k:=\fid_{k,P_k}$ is
    \begin{align}
        \fid_k(\theta)=\begin{cases}1-\frac{k}{2\pi}\theta&0\leq\theta<\frac{2\pi}{k}\\0&\frac{2\pi}{k}\leq\theta<2\pi-\frac{2\pi}{k}\\1-\frac{k}{2\pi}(2\pi-\theta)&2\pi-\frac{2\pi}{k}\leq\theta<2\pi\end{cases}
    \end{align}
\end{corollary}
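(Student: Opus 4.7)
The plan is to prove this as an essentially direct substitution into Lemma~\ref{lem:general-fidelity}, followed by a three-case analysis on the value of $\floor{\tfrac{k}{2\pi}\theta}$. Since $P_k(x,y)=\tfrac{1}{k}\sum_{s=0}^{k-1}(x^\ast)^s y^s$, the coefficients in the notation of Definition~\ref{def:fidelity} are $c_{s,t}=\tfrac{1}{k}\delta_{s,t}$ for $0\le s,t\le k-1$, so in particular $c_{t,t}=\tfrac{1}{k}$ for every $t$. Plugging these into the formula of Lemma~\ref{lem:general-fidelity} gives
\[
\fid_k(\theta)=\frac{1}{k}\sum_{t=0}^{k-1}\omega^{tm}\bigl(1+(\omega^t-1)\alpha\bigr),
\]
where I write $m\coloneqq\floor{\tfrac{k}{2\pi}\theta}$ and $\alpha\coloneqq\tfrac{k}{2\pi}\theta-m\in[0,1)$ for brevity.

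Next I would rearrange the sum as
\[
\fid_k(\theta)=\frac{1-\alpha}{k}\sum_{t=0}^{k-1}\omega^{tm}+\frac{\alpha}{k}\sum_{t=0}^{k-1}\omega^{t(m+1)},
\]
and apply the standard identity $\sum_{t=0}^{k-1}\omega^{tj}=k$ if $j\equiv 0\pmod k$ and $0$ otherwise. For $\theta\in[0,2\pi)$, the exponent $m$ takes integer values in $\{0,1,\dots,k-1\}$. The three cases then correspond exactly to where this orthogonality can fire: $m=0$ (first sum contributes), $1\le m\le k-2$ (neither sum contributes), and $m=k-1$ (only the second sum contributes, since $m+1\equiv 0\pmod k$).

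Finally I would unpack these three cases to match the stated piecewise formula. For $m=0$, i.e., $\theta\in[0,2\pi/k)$, one gets $\fid_k(\theta)=1-\alpha=1-\tfrac{k}{2\pi}\theta$; for $1\le m\le k-2$, i.e., $\theta\in[2\pi/k,\,2\pi-2\pi/k)$, one gets $\fid_k(\theta)=0$; and for $m=k-1$, i.e., $\theta\in[2\pi-2\pi/k,2\pi)$, one gets $\fid_k(\theta)=\alpha=\tfrac{k}{2\pi}\theta-(k-1)=1-\tfrac{k}{2\pi}(2\pi-\theta)$. There is no real obstacle here — the only thing to be careful about is the bookkeeping at the endpoints of the intervals and the identification of the last case with the symmetric form $1-\tfrac{k}{2\pi}(2\pi-\theta)$, which is a one-line computation.
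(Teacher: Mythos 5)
Your proposal is correct and follows essentially the same route as the paper: substitute $c_{s,t}=\tfrac{1}{k}\delta_{s,t}$ into Lemma \ref{lem:general-fidelity} and split into the three cases $\floor{\tfrac{k}{2\pi}\theta}=0$, $\in\{1,\dots,k-2\}$, and $k-1$. Your rearrangement into the two geometric sums $\sum_t\omega^{tm}$ and $\sum_t\omega^{t(m+1)}$ is just a slightly tidier way of organizing the same cancellation the paper performs case by case.
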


\begin{proof}
    First, $\fid_k(\theta)=\frac{1}{k}\sum_{t=0}^{k-1}\omega^{t\floor{\frac{k}{2\pi}\theta}}\parens*{1+(\omega^t-1)\parens*{\tfrac{k}{2\pi}\theta-\floor{\tfrac{k}{2\pi}\theta}}}$ by Lemma~\ref{lem:general-fidelity}. If $\floor{\frac{k}{2\pi}\theta}\neq 0,k-1$, the terms all cancel and $\fid_k(\theta)=0$. Else, if $\floor{\frac{k}{2\pi}\theta}=0$, we have $$\fid_k(\theta)=\frac{1}{k}\sum_t\parens*{1+(\omega^t-1)\tfrac{k}{2\pi}\theta}=1-\tfrac{k}{2\pi}\theta.$$ Finally, if $\floor{\frac{k}{2\pi}\theta}=k-1$, we have $$\fid_k(\theta)=\frac{1}{k}\sum_t\parens*{\omega^{-t}+(1-\omega^{-t})\parens*{\tfrac{k}{2\pi}\theta-(k-1)}}=\tfrac{k}{2\pi}\theta-k+1$$
    Putting these three cases together gives the result.
\end{proof}

To make use of the fidelity, we use the integral formula of Theorem~\ref{lem:noncommutative-fid-delta-integral} to get that
$$\expect\parens*{P_k(\tilde{X}_i,\tilde{X}_j)}=\int\fid_k(\theta)d\Delta_{X_i,X_j}(\theta).$$
Due to Theorem~\ref{thm:relative-distribution}, the relative distribution $\Delta_{X_i,X_j}$ tends to a wrapped Cauchy distribution (in distribution). Further, since we can always increase the dimension of the $X_i$ while preserving their $\ast$-moments (\emph{i.e.} inner products of the form $\ip{X_i^s}{X_j^s}$) by tensoring on the identity $X_i \mapsto X_i\otimes I_d$ for larger and larger $d$, we can assume that we are working in the limit of large dimension, and hence replace the relative distribution by the wrapped Cauchy distribution. 

\begin{corollary}[Of Proposition \ref{prop:cauchy-integral-formula}]\label{prop:cut-fidelity-integral}
    For $\lambda$ real we have
    \begin{align}
        \int \fid_k(\theta) d\Rel_{\lambda}(\theta)=\frac{1}{k}+\frac{2k}{\pi^2}\sum_{n=1}^\infty\frac{\sin^2(\pi n/k)}{n^2}\lambda^n=\frac{1}{k}+\frac{k}{\pi^2}\Re\squ*{\mathrm{Li}_2(\lambda)-\mathrm{Li}_2(\lambda e^{i\frac{2\pi}{k}})},
    \end{align}
    where $\mathrm{Li}_2(x)=\sum_{n=1}^\infty\frac{x^n}{n^2}$ is the dilogarithm.
\end{corollary}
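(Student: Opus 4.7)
The plan is to derive this as a direct specialisation of Proposition~\ref{prop:cauchy-integral-formula} applied to the particular polynomial $P_k(x,y)=\frac{1}{k}\sum_{s=0}^{k-1}(x^\ast)^sy^s$, followed by a short rewriting using the dilogarithm identity. Essentially no new ideas are needed beyond the integral formula already established.

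First I would identify the coefficients $c_{s,t}$ of $P_k$: we have $c_{s,s}=\frac{1}{k}$ for $s=0,\ldots,k-1$ and all other $c_{s,t}=0$. Since the paper stipulates that indices are read modulo $k$ (because the variables are being substituted with order-$k$ unitaries, and this convention is fixed just above Definition~\ref{def:fidelity}), we in fact have $c_{n,n}=\frac{1}{k}$ for every integer $n$, and likewise $c_{-n,-n}=\frac{1}{k}$. Substituting into Proposition~\ref{prop:cauchy-integral-formula} gives
\begin{align*}
\int \fid_k(\theta)\,d\Rel_\lambda(\theta)
=\frac{1}{k}+\frac{k^2}{\pi^2}\sum_{n=1}^{\infty}\frac{\sin^2(\pi n/k)}{n^2}\cdot\frac{1}{k}\bigparen{\lambda^n+(\lambda^\ast)^n}.
\end{align*}
When $\lambda\in\R$, $(\lambda^\ast)^n=\lambda^n$, so the bracket contributes a factor of $2\lambda^n$, yielding the first claimed equality $\frac{1}{k}+\frac{2k}{\pi^2}\sum_{n\geq 1}\frac{\sin^2(\pi n/k)}{n^2}\lambda^n$.

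For the second equality I would use the half-angle identity $\sin^2(\pi n/k)=\frac{1}{2}(1-\cos(2\pi n/k))$, so that
\begin{align*}
\frac{2k}{\pi^2}\sum_{n=1}^{\infty}\frac{\sin^2(\pi n/k)}{n^2}\lambda^n
=\frac{k}{\pi^2}\sum_{n=1}^{\infty}\frac{\lambda^n}{n^2}-\frac{k}{\pi^2}\sum_{n=1}^{\infty}\frac{\cos(2\pi n/k)\,\lambda^n}{n^2}.
\end{align*}
Since $\lambda$ is real, $\cos(2\pi n/k)\lambda^n=\Re\bigparen{\lambda^n e^{i2\pi n/k}}=\Re\bigparen{(\lambda e^{i2\pi/k})^n}$, and similarly $\lambda^n=\Re(\lambda^n)$. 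Pulling the real part outside the sums (justified by absolute convergence, since $|\lambda|\leq 1$ and $\sum n^{-2}<\infty$) and recognising the dilogarithm $\mathrm{Li}_2(z)=\sum_{n\geq 1}z^n/n^2$ gives $\frac{k}{\pi^2}\Re\bigbrac{\mathrm{Li}_2(\lambda)-\mathrm{Li}_2(\lambda e^{i2\pi/k})}$, which is the second equality.

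There is no real obstacle here: the only subtle point is making sure one reads the coefficients $c_{n,n}$ with the modulo-$k$ convention (otherwise the sum would truncate after $n=k-1$ and give a different answer), and justifying the exchange of sum and real part via dominated convergence, both of which are routine. The whole argument is a few lines of algebra once Proposition~\ref{prop:cauchy-integral-formula} is in hand.
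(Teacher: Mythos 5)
Your proposal is correct and follows essentially the same route as the paper: substitute $c_{n,n}=\frac{1}{k}$ (with the modulo-$k$ convention) into Proposition~\ref{prop:cauchy-integral-formula}, use that $\lambda^n+(\lambda^\ast)^n=2\lambda^n$ for real $\lambda$, and then obtain the dilogarithm form by writing $\sin^2(\pi n/k)=\frac{1}{2}\Re[1-e^{i2\pi n/k}]$ inside the series. Your explicit remark about the modulo-$k$ reading of the coefficients is a fair point of care that the paper handles implicitly.
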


\begin{proof}
    With $c_{n,n} = \frac{1}{k}$, for all $n$, in Proposition \ref{prop:cauchy-integral-formula}, the integral is expressed as the power series
    \begin{align*}
        \int \fid_k(\theta) d\Rel_{\lambda}(\theta)=\frac{1}{k}+\frac{k^2}{\pi^2}\sum_{n=1}^\infty\frac{\sin^2\parens*{\frac{\pi n}{k}}}{n^2}\parens*{\frac{1}{k}\lambda^n+\frac{1}{k}(\lambda^\ast)^n}=\frac{1}{k}+\frac{2k}{\pi^2}\sum_{n=1}^\infty\frac{\sin^2\parens*{\frac{\pi n}{k}}}{n^2}\lambda^n.
    \end{align*}
    The closed form in terms of dilogarithms can be found by expanding $\sin^2\parens*{\frac{\pi n}{k}}=\frac{1}{2}\Re\squ*{1-e^{i\frac{2\pi n}{k}}}$ in the series.
\end{proof}

We need the following technical lemma. 
\begin{lemma}\label{lemma:increasing-ratio}
    The function
    \begin{align}
        \lambda\in[-1,1]\mapsto\frac{1-\frac{1}{k}-\frac{k}{\pi^2}\Re\squ*{\mathrm{Li}_2(\lambda)-\mathrm{Li}_2(\lambda e^{i\frac{2\pi}{k}})}}{1-\lambda}
    \end{align}
    is increasing for all $k\geq 3$.
\end{lemma}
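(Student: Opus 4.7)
The plan is to reduce monotonicity of $f$ to convexity of $g$, and to verify the latter via the closed-form expression of $g'$ in terms of logarithms, ultimately reducing to the positivity of an explicit cubic.

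\textbf{Step 1: Reduce to convexity.} One first checks that $g(1) = 1$ (which follows from the identity $\sum_{n \geq 1} \sin^2(\pi n/k)/n^2 = \pi^2(k-1)/(2k^2)$). Then
\begin{align*}
    f(\lambda) = \frac{g(1) - g(\lambda)}{1 - \lambda}
\end{align*}
is the slope of the secant of the graph of $g$ from $(\lambda, g(\lambda))$ to $(1, g(1))$. For a convex function, such secant slopes (with a fixed right endpoint) are non-decreasing in the left endpoint. So it suffices to show that $g$ is convex on $[-1,1]$.

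\textbf{Step 2: Closed form for $g'$.} Differentiating Proposition~\ref{prop:cut-fidelity-integral} via $\mathrm{Li}_2'(z) = -\log(1-z)/z$ gives, for $\lambda \in (-1,1)\setminus\{0\}$,
\begin{align*}
    g'(\lambda) = \frac{k}{2\pi^2 \lambda}\, B_k(\lambda), \qquad B_k(\lambda) := \log \frac{|1 - \lambda e^{i 2\pi/k}|^2}{(1-\lambda)^2}.
\end{align*}

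\textbf{Step 3: Reduce convexity of $g$ to convexity of $B_k$.} Let $H(\lambda) = \lambda B_k'(\lambda) - B_k(\lambda)$. Then $g''(\lambda) = \tfrac{k}{2\pi^2 \lambda^2}\, H(\lambda)$, while $H(0) = 0$ and $H'(\lambda) = \lambda B_k''(\lambda)$. If $B_k$ is convex on $[-1,1]$, then $H'$ is non-negative on $[0,1]$ and non-positive on $[-1,0]$, so $H$ attains its minimum at $\lambda = 0$, giving $H \geq 0$ and hence $g'' \geq 0$.

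\textbf{Step 4: Reduce convexity of $B_k$ to a cubic inequality.} Write $c = \cos(2\pi/k)$ and $P(\lambda) = |1 - \lambda e^{i 2\pi/k}|^2 = 1 - 2c\lambda + \lambda^2$. A direct differentiation, using $\sin^2(2\pi/k) = (1-c)(1+c)$ and $P(\lambda) - (1-\lambda)^2 = 2\lambda(1-c)$ to simplify, yields
\begin{align*}
    B_k''(\lambda) = \frac{4(1-c)\, R(\lambda)}{P(\lambda)^2 (1-\lambda)^2}, \qquad R(\lambda) := \lambda^3 + (1-c)\lambda^2 - (1+2c)\lambda + (1+c).
\end{align*}
For $k \geq 3$ we have $c \in [-\tfrac{1}{2},1)$, so $1-c > 0$ and the sign of $B_k''(\lambda)$ is that of $R(\lambda)$.

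\textbf{Step 5: Positivity of the cubic.} The derivative factors as $R'(\lambda) = 3(\lambda+1)(\lambda - \lambda_0)$ with $\lambda_0 = (1+2c)/3 \in [0,1)$, so on $[-1,1]$ the only local minimum of $R$ is at $\lambda_0$. A computation gives $R(-1) = 2(1+c) \geq 0$ and, after simplification,
\begin{align*}
    R(\lambda_0) = \frac{2(1-c)(2c^2 + 14c + 11)}{27}.
\end{align*}
The roots of $2c^2 + 14c + 11$ are $c = (-7 \pm 3\sqrt{3})/2 \approx -0.90,\, -6.10$, both below $-\tfrac{1}{2}$, so the quadratic is positive on $[-\tfrac{1}{2}, 1)$. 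Hence $R(\lambda_0) > 0$, and $R \geq R(\lambda_0) > 0$ on $[-1,1]$, completing the proof.

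The main technical obstacle is Step 4: identifying the clean cubic $R(\lambda)$ that controls the sign of $B_k''$. A more direct attack via the power series $f(\lambda) = \tfrac{2k}{\pi^2}\sum_{j \geq 0} a_j \lambda^j$, with $a_j = \sum_{n > j} \sin^2(\pi n/k)/n^2$, handles $\lambda \geq 0$ trivially (all $a_j \geq 0$) but becomes delicate for $\lambda < 0$ because the alternating series has vanishing terms whenever $k \mid n$, and simple pairwise grouping of $(j a_j)$ fails. The closed-form route via the logarithmic representation of $B_k$ cleanly bypasses this difficulty.
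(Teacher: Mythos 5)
Your proof is correct and follows essentially the same route as the paper: the endpoint identity $g(1)=1$ is the paper's Lemma~\ref{lem:positive-endpoint}, the secant-slope reduction is Lemma~\ref{lem:convex-condition}, and your auxiliary function $H(\lambda)=\lambda B_k'(\lambda)-B_k(\lambda)$ is exactly the function analysed in Lemma~\ref{lem:convex-function}, with your cubic $R$ equal to the paper's cubic $q$ up to a factor of $2$ and the substitution $C=2(1-c)$. The only difference is organisational: you conclude $H\geq 0$ from the sign of $H'=\lambda B_k''$ on either side of $\lambda=0$ (via convexity of $B_k$), while the paper checks $H$ at its boundary and critical points; the underlying computations, including the critical point $1-C/3=(1+2c)/3$ and the value of the cubic there, coincide.
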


The proof of the above is elementary but quite involved, so we give it in Appendix \ref{sec:increasing-ratio}. Now, we can find the approximation ratio of our algorithm for $\ncut{k}$. 

\begin{theorem} \label{thm:maxcut-ratio}
    The approximation ratio of Algorithm \ref{alg:homlink} in the case of $\ncut{k}$ for $k\geq 3$ tends to
    \begin{align}
        \frac{k}{(k-1)(1-\lambda)}\parens*{1 -\frac{1}{k}-\frac{k}{\pi^2}\Re\squ[\Big]{\mathrm{Li}_2\parens[\Big]{-\frac{1}{k-1}}-\mathrm{Li}_2\parens[\Big]{-\frac{e^{i\frac{2\pi}{k}}}{k-1}}}}
    \end{align}
    in the limit of large dimension. 
\end{theorem}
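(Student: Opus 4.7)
The plan is to perform an edge-by-edge comparison between the SDP's contribution and the algorithm's expected contribution, take the dimension limit using the Cauchy Law, and then invoke the monotonicity of Lemma \ref{lemma:increasing-ratio} to identify the worst case.

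First, fix an instance and let $\vec{x}_1,\ldots,\vec{x}_N \in \R^N$ be an optimal SDP solution of \eqref{eq:def-maxcutk-sdp}, so that for every edge $(i,j)\in E$, the inner product $\lambda_{ij}\coloneqq\ip{\vec{x}_i}{\vec{x}_j}$ satisfies $\lambda_{ij}\geq -\tfrac{1}{k-1}$ and the SDP value is $\sum_{(i,j)\in E}w_{ij}\tfrac{k-1}{k}(1-\lambda_{ij})$. Apply the order-$k$ vector-to-unitary construction to obtain unitaries $X_i = U_{\vec{x}_i}$; by the strong isometry property (Corollary \ref{cor:fin-dim-weak-k-power}) we have $\ip{X_i^s}{X_j^s}=\lambda_{ij}^s$ for each $s<k$. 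Thus we may, as remarked after Corollary \ref{cor:cut-fidelity}, tensor $X_i$ with $I_m$ and drive $m\to\infty$ without changing any relevant $\ast$-moment, so the SDP contribution of edge $(i,j)$ is preserved.

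Second, apply Theorem \ref{lem:noncommutative-fid-delta-integral} with the polynomial $P_k$ of Corollary \ref{cor:cut-fidelity} to rewrite, for each edge,
\begin{equation*}
\expect\bigl[1-P_k(\tilde X_i,\tilde X_j)\bigr]=1-\int_0^{2\pi}\fid_k(\theta)\,d\Delta_{X_i\otimes I_m,\,X_j\otimes I_m}(\theta).
\end{equation*}
Because $\fid_k$ is bounded and continuous almost everywhere, Theorem \ref{thm:relative-distribution} (the Cauchy Law) together with the portmanteau formulation of convergence in distribution (Definition \ref{def:converge-in-distribution}) lets us pass the limit $m\to\infty$ through the integral, yielding
\begin{equation*}
\lim_{m\to\infty}\expect\bigl[1-P_k(\tilde X_i,\tilde X_j)\bigr]=1-\int_0^{2\pi}\fid_k(\theta)\,d\Delta_{\lambda_{ij}}(\theta)=1-\tfrac1k-\tfrac{k}{\pi^2}\Re\bigl[\mathrm{Li}_2(\lambda_{ij})-\mathrm{Li}_2(\lambda_{ij}e^{i2\pi/k})\bigr],
\end{equation*}
where the last equality is Corollary \ref{prop:cut-fidelity-integral} (noting that $\lambda_{ij}$ is real).

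Third, define the per-edge ratio
\begin{equation*}
r_k(\lambda)\coloneqq\frac{k}{(k-1)(1-\lambda)}\Bigl(1-\tfrac1k-\tfrac{k}{\pi^2}\Re\bigl[\mathrm{Li}_2(\lambda)-\mathrm{Li}_2(\lambda e^{i2\pi/k})\bigr]\Bigr),
\end{equation*}
so that the expected algorithmic contribution on edge $(i,j)$ is $w_{ij}\tfrac{k-1}{k}(1-\lambda_{ij})\,r_k(\lambda_{ij})$. By linearity of expectation, the algorithm's expected value is
\begin{equation*}
\sum_{(i,j)\in E}w_{ij}\tfrac{k-1}{k}(1-\lambda_{ij})\,r_k(\lambda_{ij})\;\geq\;\Bigl(\inf_{\lambda\in[-1/(k-1),\,1]}r_k(\lambda)\Bigr)\cdot\SDP,
\end{equation*}
and since the optimum of $\ncut{k}$ is upper bounded by the SDP value, the ratio of the algorithm to the optimum is at least this infimum. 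By Lemma \ref{lemma:increasing-ratio}, $r_k$ is increasing on $[-1,1]$, so the infimum over the feasible SDP range is attained at $\lambda=-\tfrac{1}{k-1}$, giving exactly the claimed expression.

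The main technical hurdle is the limit exchange in the second step: one needs to verify that the integrals against the relative distribution converge to the integral against the wrapped Cauchy distribution, which follows because $\fid_k$ has only finitely many discontinuities (at the angles $\tfrac{2\pi}{k}$ and $2\pi-\tfrac{2\pi}{k}$) while the limit distribution $\Delta_\lambda$ is absolutely continuous with respect to Lebesgue measure, so the discontinuities carry no mass. The monotonicity assertion itself, Lemma \ref{lemma:increasing-ratio}, is the separate calculus input handled in Appendix \ref{sec:increasing-ratio}.
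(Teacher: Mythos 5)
Your proposal is correct and follows essentially the same route as the paper: per-edge comparison against the SDP via the integral formula of Theorem \ref{lem:noncommutative-fid-delta-integral}, passage to the wrapped Cauchy distribution through the Cauchy law and Corollary \ref{prop:cut-fidelity-integral}, and Lemma \ref{lemma:increasing-ratio} to locate the worst case at $\lambda=-\tfrac{1}{k-1}$. Your explicit justification of the limit exchange (boundedness of $\fid_k$ and absolute continuity of $\Delta_\lambda$ killing the finitely many discontinuities) is a welcome detail the paper leaves implicit, but it is the same argument.
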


Computing this function gives the values presented in Table~\ref{tab:ratios}. We can alternatively write this as the series
\[\frac{k}{(k-1)(1-\lambda)}\Bigparen{1-\frac{1}{k}-\frac{2k}{\pi^2}\sum_{n=1}^\infty\frac{\sin^2(\pi n/k)}{n^2(1-k)^n}}.\]
Note the rapid convergence of this series due to the exponential term $(1-k)^n$ in the denominator.

\begin{proof}
$\ncut{k}$ is defined in Eq. \ref{eq:def-nmaxcutk} and its canonical SDP relaxation is given in Eq. \ref{eq:def-maxcutk-sdp}.
From here, we use Algorithm \ref{alg:homlink} to get an order-$k$ unitary solution $\{\tilde{X}_i\}$. To calculate the expected value of our solution, by linearity of the expectation, we just need to calculate $\expect \bigparen{\frac{1}{k}\sum_{s=0}^{k-1}\ip{\tilde{X}_i^{-s}}{\tilde{X}_j^s}}$
for every $i,j$. By Theorem~\ref{lem:noncommutative-fid-delta-integral} it holds that
\begin{equation}\label{eq:objective}
\frac{1}{k} \expect \bigparen{\sum_{s=0}^{k-1} \ip{\tilde{X}_i^{-s}}{\tilde{X}_j^s}} = \int \fid_k(\theta) d\Rel_{X_i,X_j}(\theta).
\end{equation}
Therefore the contribution to the solution by edge $(i,j)$ is $w_{ij}\bigparen{1 - \int \fid_k(\theta) d\Rel_{X_i,X_j}(\theta)}$.
So the ratio of the value of the expected solution to the value of the SDP is at least the minimum of
\begin{align*}
\frac{k}{k-1}\frac{1 - \int \fid_k(\theta) d\Rel_{X_i,X_j}(\theta)}{1 - \ip{X_i}{X_j}}
\end{align*}
over all edges $(i,j)$. Note that we can always exclude the edges for which $\ip{X_i}{X_j} = 1$ from this analysis because the contribution of such edges to the SDP value is zero and the numerator is always nonnegative. 

Finally note that in all feasible solutions to SDP we have $\ip{\vec{x}_i}{\vec{x}_j} \geq -\frac{1}{k-1}$. Thus in the worst case over all instances, the quantity above is never less than
\[\min_{\substack{\lambda \in [-\frac{1}{k-1},1)\\X,Y:\ip{X}{Y}=\lambda}} \frac{k}{k-1}\frac{1-\int \fid_k(\theta) d\Rel_{X,Y}(\theta)}{1 - \lambda}\]

In the limit of large dimension, by Theorem \ref{thm:relative-distribution}, the relative distribution tends to the wrapped Cauchy distribution. As such, using Corollary~\ref{prop:cut-fidelity-integral}, the minimization becomes
\begin{align*}
    \min_{\lambda \in [-\frac{1}{k-1},1)}\frac{k}{(k-1)(1-\lambda)}\parens*{1-\frac{1}{k}-\frac{k}{\pi^2}\Re\squ*{\mathrm{Li}_2(\lambda)-\mathrm{Li}_2(\lambda e^{i\frac{2\pi}{k}})}}
\end{align*}
Finally, using Lemma \ref{lemma:increasing-ratio}, the function we are minimizing over is increasing, and as such the minimum is attained at $\lambda=-\frac{1}{k-1}$.
\end{proof}

The proof of the theorem generalizes to all homogeneous CSPs.

\begin{theorem}\label{thm:noncommutative-result} The approximation ratio of Algorithm \ref{alg:homlink} for homogeneous $\nlin{2}{k}$ for $k\geq 3$, in the limit of large dimension, is the minimum of
\[\frac{k}{(k-1)(1-\lambda)}\Bigparen{1 -\frac{1}{k}-\frac{k}{\pi^2}\Re\squ[\Big]{\mathrm{Li}_2\parens[\Big]{-\frac{1}{k-1}}-\mathrm{Li}_2\parens[\Big]{-\frac{e^{i\frac{2\pi}{k}}}{k-1}}}}\]
and 
\[\min_{\lambda \in [-\frac{1}{k-1},1)} \frac{1+\frac{k^2}{\pi^2}\Re\squ*{\mathrm{Li}_2(\lambda)-\mathrm{Li}_2(\lambda e^{i\frac{2\pi}{k}})}}{1 +(k-1)\lambda}\]
\end{theorem}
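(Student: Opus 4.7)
The plan is to mimic the proof of Theorem \ref{thm:maxcut-ratio} but now tracking equation and inequation constraints separately, since homogeneous $\nlin{2}{k}$ combines both and the approximation ratio must hold simultaneously on each term. Fix a feasible SDP solution $\vec{x}_1,\ldots,\vec{x}_N \in \R^N$ and run Algorithm~\ref{alg:homlink} to produce $\tilde{X}_1,\ldots,\tilde{X}_N$. Let $\lambda_{ij} = \ip{\vec{x}_i}{\vec{x}_j} = \ip{X_i}{X_j}$, which by the SDP constraints satisfies $\lambda_{ij} \in [-\tfrac{1}{k-1},1]$. By linearity of expectation, the approximation ratio is at least the worst edgewise ratio, so it suffices to compare, for each edge $(i,j)$, the expected contribution of $\tilde{X}_i,\tilde{X}_j$ to the corresponding SDP term.

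For each edge, Theorem~\ref{lem:noncommutative-fid-delta-integral} gives
\[
\expect\Bigparen{\frac{1}{k}\sum_{s=0}^{k-1}\ip{\tilde{X}_i^s}{\tilde{X}_j^s}} = \int \fid_k(\theta)\,d\Delta_{X_i,X_j}(\theta),
\]
and in the large-dimension limit (justified by tensoring the $X_i$ with identities, which preserves all the relevant inner products), Theorem~\ref{thm:relative-distribution} replaces $\Delta_{X_i,X_j}$ by the wrapped Cauchy distribution $\Delta_{\lambda_{ij}}$. Corollary~\ref{prop:cut-fidelity-integral} then evaluates the integral as
\[
F_k(\lambda) := \frac{1}{k} + \frac{k}{\pi^2}\Re\squ*{\mathrm{Li}_2(\lambda) - \mathrm{Li}_2(\lambda e^{i2\pi/k})}.
\]

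For an inequation constraint $(i,j) \in \mathcal{I}$, the SDP contribution is $w_{ij}\frac{k-1}{k}(1-\lambda_{ij})$ and the expected rounded contribution is $w_{ij}(1-F_k(\lambda_{ij}))$, so the per-edge ratio is
\[
\frac{k}{(k-1)(1-\lambda_{ij})}\bigparen{1 - F_k(\lambda_{ij})}.
\]
By Lemma~\ref{lemma:increasing-ratio} this is increasing in $\lambda$, so its minimum over $\lambda_{ij} \in [-\tfrac{1}{k-1},1)$ is attained at $\lambda_{ij} = -\tfrac{1}{k-1}$, yielding the first expression in the statement. For an equation constraint $(i,j) \in \mathcal{E}$, the SDP contribution is $\frac{w_{ij}}{k}(1 + (k-1)\lambda_{ij})$ and the expected rounded contribution is $w_{ij}F_k(\lambda_{ij})$, so after clearing the $\frac{1}{k}$, the ratio is
\[
\frac{1 + \frac{k^2}{\pi^2}\Re\squ*{\mathrm{Li}_2(\lambda_{ij}) - \mathrm{Li}_2(\lambda_{ij} e^{i2\pi/k})}}{1 + (k-1)\lambda_{ij}},
\]
matching the second expression; here I would not attempt a closed-form minimiser since Lemma~\ref{lemma:increasing-ratio} is tailored to the inequation ratio, so I leave the minimum over $\lambda \in [-\tfrac{1}{k-1},1)$ explicit in the statement. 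Taking the overall minimum over both constraint types gives the claimed approximation ratio.

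The computation itself is essentially a bookkeeping exercise on top of the machinery already established: the only substantive analytic step is the monotonicity in Lemma~\ref{lemma:increasing-ratio}, which is already proved and suffices to pin down the inequation term. The mild subtlety worth flagging is ensuring the ``worst-edge implies worst-instance'' reduction is valid for a mix of equation and inequation edges; this holds because both expected contributions and both SDP contributions are nonnegative (the numerator in the equation case uses $F_k(\lambda) \geq 0$, which follows directly from $\fid_k \geq 0$ and $\Delta_\lambda$ being a probability measure), so the ratio of a nonnegative weighted sum is bounded below by the minimum of the per-term ratios, completing the argument.
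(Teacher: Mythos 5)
Your proposal is correct and follows essentially the same route as the paper, which proves this theorem by repeating the argument of Theorem~\ref{thm:maxcut-ratio} edge by edge and only adding the equation-constraint ratio $\frac{k\int\fid_k\,d\Delta_{\lambda}}{1+(k-1)\lambda}$, left as an explicit minimisation because it is not monotone. Your extra bookkeeping (nonnegativity of the per-edge SDP and rounded contributions, including $F_k\geq 0$, to justify the worst-edge reduction for a mix of constraint types) is a slightly more careful write-up of exactly the same argument.
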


The proof is the same as for Theorem \ref{thm:maxcut-ratio} except that, in the case of equation constraints, the ratio for a given $\lambda=\ip{X_i}{X_j}$ is
$$\frac{k\int \fid_k(\theta) \Rel_{X_i,X_j}(\theta)d\theta}{1 +(k-1)\lambda}\rightarrow\frac{1+\frac{k^2}{\pi^2}\Re\squ*{\mathrm{Li}_2(\lambda)-\mathrm{Li}_2(\lambda e^{i\frac{2\pi}{k}})}}{1 +(k-1)\lambda},$$
which is not a monotone function, so the minimisation cannot be simplified in the same way as $\ncut{k}$.

This theorem gives the results presented in Table~\ref{tab:ratios-homogeneous}.

\subsection{Smooth CSPs}\label{sec:algorithm-smooth}

We can make use of a similar argument to the previous section to study smooth $\nlin{2}{k}$ defined in Section~\ref{sec:smooth-csps}. However unlike last section, we do not compare the optimal value against the SDP value. Here instead we show that the optimal and unitary values are close. The algorithm for $\nslin{2}{k}$ takes as input a solution to the unitary relaxation~\eqref{eq:ulin2k} and produces a nearby feasible solution for the original instance. 

\vspace{10pt}
\IncMargin{1em}
\begin{algorithm}[H]\label{alg:generic-smooth}
\DontPrintSemicolon

Take as input a solution $X_1,\ldots,X_N \subseteq\mrm{U}(\C^d)$, for some $d \in \N$, to the unitary relaxation~\eqref{eq:ulin2k}.

Sample a unitary $U$ acting on $\C^d$ from the Haar measure.

Let $\hat{X}_i \coloneqq UX_i$ for $i = 1,\ldots,N$.

Let $\tilde{X}_i$ be the \emph{closest} order $k$ unitary to $\hat{X}_i$.

Output the set of operators $\tilde{X}_i$ as the solution. 

\caption{Algorithm constructing a feasible solution for $\nslin{2}{k}$.}
\end{algorithm}\DecMargin{1em}
\vspace{10pt}

Since we are comparing the quality of the rounded solution against the unitary solution we define the \emph{approximation ratio} of this algorithm as $\inf_I \frac{\mathcal{A}(I)}{\mc{U}(I)}$ 
where $I$ ranges over all possible instances of the $\nslin{2}{k}$, $\mc{A}(I)$ is the expected value of the solution produced by the algorithm on $I$ and $\mc{U}(I)$ is the optimal value of the unitary relaxation.

By linearity of expectation, the approximation ratio is the minimum of
\[\frac{\expect\parens[\Big]{ 1 -\frac{2}{a_k} + \frac{2}{a_k}\Re\parens[\Big]{\omega^{-c}\ip{\tilde{X}}{\tilde{Y}}}} }{1 -\frac{2}{a_k} + \frac{2}{a_k}\Re\parens[\Big]{\omega^{-c}\ip{X}{Y}}}\]
(for equation constraints $x^*y = \omega^{c}$) and 
\[\frac{\expect\parens[\Big]{ \frac{2}{a_k} - \frac{2}{a_k}\Re\parens[\Big]{\omega^{-c}\ip{\tilde{X}}{\tilde{Y}}}} }{\frac{2}{a_k} - \frac{2}{a_k}\Re\parens[\Big]{\omega^{-c}\ip{X}{Y}}}\]
(for inequation constraints $x^*y \neq \omega^c$) over all $c\in \Z_k$ and all possible inputs $X,Y$ to the algorithm such that $\lambda = \ip{X}{Y} \in \Omega_k$. Here $\tilde{X}$ and $\tilde{Y}$ are the output of the algorithm. With a change of variable $X \rightarrow \omega^{c} X$, it is enough to consider only the case of $c = 0$. Therefore for the analysis we assume all the $c_{ij}$ in the instance (Eq. \ref{eq:nslin2k}) are zero.

As above, we analyse the algorithm using the relative distribution method of Section~\ref{sec:relative-distribution}. Here, the relevant $\ast$-polynomial is $P_k^S(x,y)=x^\ast y+xy^\ast$ and we let $\fid_k^S \coloneqq \fid_{P_k^S}$.

\begin{corollary}[of Proposition \ref{prop:cauchy-integral-formula}]\label{prop:cauchy-integral-formula-smooth}
    For $\lambda\in\C$ we have
    \begin{align}
    \begin{split}
        \int \fid_k^S(\theta)d\Delta_{\lambda}(\theta)=\frac{2k^2}{\pi^2}\sin^2\parens*{\frac{\pi}{k}}\Re\sum_{\substack{n=\pm 1\Mod{k}\\n>0}}\frac{\lambda^n}{n^2}
    \end{split}
    \end{align}
\end{corollary}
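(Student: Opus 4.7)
The plan is to deduce the corollary as a direct application of Proposition~\ref{prop:cauchy-integral-formula} to the specific polynomial $P_k^S(x,y)=x^\ast y+xy^\ast$. The only real work is identifying the coefficients $c_{s,t}$ in the canonical form $P_k^S(x,y)=\sum_{s,t=0}^{k-1}c_{s,t}(x^\ast)^sy^t$, after which the claim follows by bookkeeping.

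First I would rewrite $P_k^S$ in the required form. The term $x^\ast y$ is already of the form $(x^\ast)^1 y^1$, contributing $c_{1,1}=1$. For the term $xy^\ast$, I would use that the relevant operators are order-$k$ unitaries, so $x=(x^\ast)^{k-1}$ and $y^\ast=y^{k-1}$, giving $xy^\ast=(x^\ast)^{k-1}y^{k-1}$ and thus $c_{k-1,k-1}=1$. All other $c_{s,t}$ vanish, and in particular $c_{0,0}=0$. Using the periodicity convention $c_{s+ak,t+bk}=c_{s,t}$, the diagonal coefficient $c_{n,n}$ equals $1$ precisely when $n\equiv\pm1\pmod{k}$ and is zero otherwise (the same being true of $c_{-n,-n}$ by the same congruence).

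Next I would plug these coefficients into the formula of Proposition~\ref{prop:cauchy-integral-formula}, which then collapses to
\begin{align*}
\int \fid_k^S(\theta)\,d\Delta_\lambda(\theta)=\frac{k^2}{\pi^2}\sum_{\substack{n\geq 1\\n\equiv \pm 1\,(\mathrm{mod}\,k)}}\frac{\sin^2(\pi n/k)}{n^2}\bigparen{\lambda^n+(\lambda^\ast)^n}.
\end{align*}
The final simplification uses two elementary facts: (i) if $n\equiv\pm 1\pmod{k}$, then $\sin(\pi n/k)=\pm\sin(\pi/k)$, so $\sin^2(\pi n/k)=\sin^2(\pi/k)$, which can be pulled out of the sum; and (ii) $\lambda^n+(\lambda^\ast)^n=2\Re(\lambda^n)$, which lets me combine the two terms and pull out a factor of $2$. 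This produces exactly the stated expression
\begin{align*}
\frac{2k^2}{\pi^2}\sin^2\!\parens*{\tfrac{\pi}{k}}\Re\sum_{\substack{n\geq 1\\n\equiv\pm 1\,(\mathrm{mod}\,k)}}\frac{\lambda^n}{n^2}.
\end{align*}

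There is no real obstacle here; the only subtlety worth double-checking is the use of the order-$k$ convention in identifying $xy^\ast$ with $(x^\ast)^{k-1}y^{k-1}$. This is legitimate because the fidelity function (Definition~\ref{def:fidelity}) evaluates the coefficients only at $k$-th roots of unity under the rounding $\tilde{\cdot}$, so any two representatives of $P_k^S$ that agree on order-$k$ unitaries yield the same $\fid_k^S$. Once this rewriting is justified, the rest is routine.
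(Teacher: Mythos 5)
Your proposal is correct and matches the paper's own proof: the paper likewise reads off $c_{n,n}=1$ exactly for $n\equiv\pm1\pmod{k}$ (and $0$ otherwise, so $c_{0,0}=0$), substitutes into Proposition~\ref{prop:cauchy-integral-formula}, and simplifies using $\sin^2(\pi n/k)=\sin^2(\pi/k)$ together with $\lambda^n+(\lambda^\ast)^n=2\Re(\lambda^n)$. Your remark justifying the identification $xy^\ast=(x^\ast)^{k-1}y^{k-1}$ via the mod-$k$ indexing convention is a fine (and correct) elaboration of a point the paper leaves implicit.
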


Note that, with a few simple algebraic manipulations, we can also reexpress the integral as
\begin{align}
    \begin{split}
        \int \fid_k^S(\theta)d\Delta_{\lambda}(\theta)&=\frac{2k^2}{\pi^2}\sin^2\parens*{\frac{\pi}{k}}\Re\sum_{m=0}^\infty\parens*{\frac{\lambda^{km+1}}{(km+1)^2}+\frac{\lambda^{km+k-1}}{(km+k-1)^2}}\\
        &=\frac{4k}{\pi^2}\sin^2\parens*{\frac{\pi}{k}}\Re\sum_{s=0}^{k-1}\cos\parens*{\frac{2\pi s}{k}}\mathrm{Li}_2(\lambda\omega^s)
    \end{split}
    \end{align}

\begin{proof}
    With $c_{n,n}=1$ for $n=\pm1\;(\text{mod }k)$, and $c_{n,n}=0$ otherwise, in Proposition~\ref{prop:cauchy-integral-formula},  we have
    \begin{align*}
        \int \fid_k^S(\theta) d\Rel_{\lambda}(\theta)&=\frac{k^2}{\pi^2}\sum_{\substack{n=\pm 1\Mod{k}\\n > 0}}\frac{\sin^2\parens*{\frac{\pi n}{k}}}{n^2}\parens*{\lambda^n+(\lambda^\ast)^n}\\
        &=\frac{2k^2}{\pi^2}\sin^2\parens*{\frac{\pi}{k}}\Re\sum_{\substack{n=\pm 1\Mod{k}\\n>0}}\frac{\lambda^n}{n^2}.
    \end{align*}
\end{proof}

As in the homogeneous case, we can always increase the dimension of the unitary solution by tensoring on identity. Hence, the Cauchy law (Theorem~\ref{thm:relative-distribution}) applies and $\Delta_{X_i,X_j} \rightarrow \Delta_\lambda$ where $\lambda = \ip{X_i}{X_j}$. Also by the definition of unitary relaxation, we can assume $\lambda \in \Omega_k$, where $\Omega_k$ is the simplex of $k$-th roots of unity.

\begin{theorem}\label{thm:smooth-noncommutative-result}
The approximation ratio of our algorithm for  $\nslin{2}{k}$, in the limit of large dimension, is the minimum of
\[\min_{\lambda\in\Omega_k}\frac{1-\frac{k^2}{\pi^2}\sin^2\parens*{\frac{\pi}{k}}\Re\underset{n>0,n=\pm 1\Mod{k}}{\sum}\frac{\lambda^n}{n^2}}{1-\Re\lambda}\]
and 
\[\min_{\lambda\in\Omega_k} \frac{\frac{a_k}{2}-1+\frac{k^2}{\pi^2}\sin^2\parens*{\frac{\pi}{k}}\Re\underset{n>0,n=\pm 1\Mod{k}}{\sum}\frac{\lambda^n}{n^2}}{\frac{a_k}{2}-1+\Re\lambda}\]
\end{theorem}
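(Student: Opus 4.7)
The plan is to follow the same structure as in the proof of Theorem~\ref{thm:noncommutative-result} for homogeneous CSPs, but now comparing the rounded solution value against the unitary relaxation value rather than against the SDP value. The core of the argument is a per-edge analysis combined with the integral formula from Theorem~\ref{lem:noncommutative-fid-delta-integral} and the Cauchy Law (Theorem~\ref{thm:relative-distribution}).

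First, I would argue that, by linearity of expectation, the approximation ratio is the infimum over all instances and all edges $(i,j)$ of the per-edge ratio of the expected contribution after rounding to the contribution of the unitary solution. A simple change of variables $X_j \mapsto \omega^{-c_{ij}} X_j$ (which, being a global unitary rescaling applied to just one operator in a pair, does not affect the distribution of the algorithm's output in the relevant way once we absorb the phase into the constraint) reduces the analysis to the case $c_{ij}=0$ for every edge. With this reduction, using $\|X_j - X_i\|^2 = 2 - 2\Re\ip{X_i}{X_j}$, the per-edge contributions become $1 - \tfrac{2}{a_k} + \tfrac{2}{a_k}\Re\ip{X_i}{X_j}$ for equation constraints and $\tfrac{2}{a_k}(1-\Re\ip{X_i}{X_j})$ for inequation constraints, with the analogous expressions for the rounded operators $\tilde{X}_i,\tilde{X}_j$ in expectation.

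Next, I would identify $2\Re\ip{\tilde{X}}{\tilde{Y}} = \tr\bigl(P_k^S(\tilde{X},\tilde{Y})\bigr)$ where $P_k^S(x,y) = x^\ast y + x y^\ast$, so that Theorem~\ref{lem:noncommutative-fid-delta-integral} gives
\[
2\,\expect\Re\ip{\tilde{X}}{\tilde{Y}} \;=\; \int_0^{2\pi}\fid_k^S(\theta)\,d\Delta_{X,Y}(\theta).
\]
Since tensoring both $X$ and $Y$ with $I_m$ leaves every inner product $\ip{X^s}{Y^t}$ (and therefore every constraint contribution) unchanged, I may replace $(X,Y)$ by $(X\otimes I_m, Y\otimes I_m)$ and let $m\to\infty$. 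By the Cauchy Law (Theorem~\ref{thm:relative-distribution}), the relative distribution converges to $\Delta_\lambda$ with $\lambda = \ip{X}{Y}$, so in the limit the integral evaluates, via Corollary~\ref{prop:cauchy-integral-formula-smooth}, to
\[
\int_0^{2\pi}\fid_k^S(\theta)\,d\Delta_\lambda(\theta) \;=\; \frac{2k^2}{\pi^2}\sin^2\!\Bigl(\tfrac{\pi}{k}\Bigr)\Re\!\!\sum_{\substack{n>0\\ n\equiv\pm 1\!\!\!\pmod{k}}}\frac{\lambda^n}{n^2}.
\]
Hence $\expect\Re\ip{\tilde{X}}{\tilde{Y}} = \tfrac{k^2}{\pi^2}\sin^2(\pi/k)\,\Re\sum_{n>0,\,n\equiv\pm 1\!\!\pmod{k}} \lambda^n/n^2$ in the limit.

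Finally I would substitute this expression into the two per-edge ratios. For an inequation constraint the ratio becomes
\[
\frac{1-\tfrac{k^2}{\pi^2}\sin^2(\pi/k)\,\Re\sum_{n>0,\,n\equiv\pm 1\!\!\pmod{k}}\lambda^n/n^2}{1-\Re\lambda},
\]
and for an equation constraint, after multiplying numerator and denominator by $a_k/2$, it becomes
\[
\frac{\tfrac{a_k}{2}-1+\tfrac{k^2}{\pi^2}\sin^2(\pi/k)\,\Re\sum_{n>0,\,n\equiv\pm 1\!\!\pmod{k}}\lambda^n/n^2}{\tfrac{a_k}{2}-1+\Re\lambda}.
\]
The unitary relaxation constraint forces $\lambda \in \Omega_k$, so the worst-case approximation ratio is obtained by minimising each of these two functions over $\lambda \in \Omega_k$ and taking the smaller of the two minima, which is exactly the claimed bound.

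The main obstacle I anticipate is purely bookkeeping: one must verify carefully that the change of variables absorbing $\omega^{-c_{ij}}$ commutes with the rounding step $X \mapsto \tilde{X}$ (which it does, since multiplying a unitary by a scalar $k$-th root of unity simply shifts its eigenvalues by that root of unity, and the closest-$k$-th-root-of-unity map is equivariant under such shifts), and that the tensoring-with-identity trick genuinely preserves all per-edge contributions (which it does since $\ip{X\otimes I_m}{Y\otimes I_m} = \ip{X}{Y}$). Unlike the homogeneous case, there is no monotonicity argument (like Lemma~\ref{lemma:increasing-ratio}) reducing the minimisation to a boundary point, so the optimisation over $\lambda \in \Omega_k$ is stated implicitly; this is acceptable since the resulting numerical values for small $k$ are readily computed and listed after the statement in Section~\ref{sec:main-results}.
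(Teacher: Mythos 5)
Your proposal is correct and follows essentially the same route as the paper: a per-edge analysis via linearity of expectation, the reduction to $c_{ij}=0$ by absorbing $\omega^{-c_{ij}}$ into one operator, the identification $2\Re\ip{\tilde{X}}{\tilde{Y}}=\tr\bigl(P_k^S(\tilde{X},\tilde{Y})\bigr)$ combined with Theorem~\ref{lem:noncommutative-fid-delta-integral}, the tensoring-with-identity trick together with the Cauchy law so that Corollary~\ref{prop:cauchy-integral-formula-smooth} evaluates the integral, and finally the minimisation over $\lambda\in\Omega_k$ of the two per-edge ratios. Your added observation that the nearest-$k$-th-root rounding commutes with multiplication by a $k$-th root of unity (and that $\Omega_k$ is invariant under such rotations) is a correct justification of the bookkeeping step the paper leaves implicit.
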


Computing the minima gives the values presented in Section~\ref{sec:smooth-csps}.

\begin{proof}
    Let $\{X_i\}$ be a solution to the unitary relaxation~\eqref{eq:ulin2k}, and let $\{\tilde{X}_i\}$ be the order-$k$ solution constructed using Algorithm~\ref{alg:generic-smooth}. Due to Theorem~\ref{lem:noncommutative-fid-delta-integral}, the objective values of the inequation and equation constraints are, respectively,
    \begin{align*}
        &\expect\parens[\Big]{\frac{2}{a_k}-\frac{1}{a_k}\tr\squ{P_k^S(\tilde{X}_i, \tilde{X}_j)}}=\frac{2}{a_k}-\frac{1}{a_k}\int\fid_k^S(\theta)\Delta_{X_i,X_j}(\theta)d\theta
    \end{align*}
    and
    \begin{align*}
        &\expect\parens[\Big]{1-\frac{2}{a_k}+\frac{1}{a_k}\tr\squ{P_k^S(\tilde{X}_i, \tilde{X}_j)}}=1-\frac{2}{a_k}+\frac{1}{a_k}\int\fid_k^S(\theta)\Delta_{X_i,X_j}(\theta)d\theta.
    \end{align*}
    As such, the approximation ratio is lower-bounded by the minimum of
    \begin{align*}
        &\min_{X,Y:\ip{X}{Y}\in\Omega_k}\frac{\frac{2}{a_k}-\frac{1}{a_k}\int\fid_k^S(\theta)d\Delta_{X,Y}(\theta)}{\frac{2}{a_k}-\frac{2}{a_k}\Re\ip{X}{Y}}=\min_{X,Y:\ip{X}{Y}\in\Omega_k}\frac{1-\frac{1}{2}\int\fid_k^S(\theta)d\Delta_{X,Y}(\theta)}{1-\Re\ip{X}{Y}}
    \end{align*}
    and
    \begin{align*}
        &\min_{X,Y:\ip{X}{Y}\in\Omega_k}\frac{1-\frac{2}{a_k}+\frac{1}{a_k}\int\fid_k^S(\theta)d\Delta_{X,Y}(\theta)}{1-\frac{2}{a_k}+\frac{2}{a_k}\Re\ip{X}{Y}}=\min_{X,Y:\ip{X}{Y}\in\Omega_k}\frac{\frac{a_k}{2}-1+\frac{1}{2}\int\fid_k^S(\theta)d\Delta_{X,Y}(\theta)}{\frac{a_k}{2}-1+\Re\ip{X}{Y}}.
    \end{align*}
    In the limit of large dimension, the Cauchy law~(Theorem~\ref{thm:relative-distribution}) applies and thus these minimizations only depend on $\lambda=\ip{X}{Y}$. As such, since by Corollary~\ref{prop:cauchy-integral-formula-smooth} we have \[\int \fid_k^S(\theta) d\Rel_{X,Y}(\theta)\rightarrow\frac{2k^2}{\pi^2}\sin^2\parens*{\frac{\pi}{k}}\Re\sum_{n>0,n=\pm 1\Mod{k}}\frac{\lambda^n}{n^2}\] we get the wanted result.
\end{proof}
\section{Algebraic Relative Distribution} \label{sec:algebraic-reldist}

The mixed $\ast$-moments of noncommutative random variables can be used to define new probability distributions. In particular, we consider here an algebraic generalization of the relative distribution (Definition~\ref{def:relative-measure}).

\begin{definition}\label{def:algebraic-relative-distribution}
    Let $\mc{A}$ be a $C^\ast$-algebra with tracial state $\tau$ (noncommutative probability space). Let $u,v\in\mc{A}$ be two unitary elements (noncommutative random variables). The \emph{algebraic relative distribution} of $u$ and $v$ is the probability measure $\Delta_{u,v}$ on the circle $[0,2\pi)$ with characteristic function
    \begin{align}
        \chi_{\Delta_{u,v}}(n)=\tau(u^nv^{-n}).
    \end{align}
\end{definition}

Note that this in fact describes a distribution. This can be seen using the spectral decompositions $u=\int_{0}^{2\pi}e^{i\theta}dP(\theta)$ and $v=\int_{0}^{2\pi}e^{i\varphi}dQ(\varphi)$. Using these decompositions, for measurable sets $E,F\subseteq[0,2\pi)$, we define $\tau(PQ)(E\times F)=\tau(P(E)Q(F))$. This extends to a distribution on $[0,2\pi)\times[0,2\pi)$. Then, the characteristic function $$\chi_{\Delta_{u,v}}(n)=\int e^{in(\theta-\varphi)}d\tau(PQ)(\theta,\varphi)=\chi_{\tau(PQ)}(n,-n),$$ where $\chi_{\tau(PQ)}:\Z\times\Z\rightarrow\C$ is the characteristic function of the distribution $\tau(PQ)$. Due to this, the measure with respect to the algebraic relative distribution of a set $E\subseteq[0,2\pi)$ is
$\Delta_{u,v}(E)=\tau(PQ)(\set*{(\theta,\varphi)}{\varphi-\theta\in E})\geq 0$, from which one can see that $\Delta_{u,v}$ is postive and countably additive, hence a measure.

For a finite-dimensional noncommutative probability space, the algebraic relative distribution is simply the weight distribution (Defintion~\ref{def:weight-measure}). Also, due to the inclusion of Haar-random matrices as noncommutative random variables, the relative distribution (Definition~\ref{def:relative-measure}) can also be recovered as a special case of the algebraic relative distribution.

\subsection{Infinite-Dimensional Case}\label{sec:infinite-dimensional-algebraic-rel-dist}

The first special case of this construction we consider is the case where $u=u_{\vec{x}}$ and $v=u_{\vec{y}}$, unitaries constructed using the infinite-dimensional vector-to-unitary construction of Definition~\ref{def:inf-dim-vector-to-unitary}. In this case $\mc{A}=\mc{M}$ and $\tau=\tau_{\mc{M}}$, so the strong isometry property (Proposition~\ref{prop:power-property}) gives that the characteristic function
$$\chi_{\Delta_{u_{\vec{x}},u_{\vec{y}}}}(n)=\tau_{\mc{M}}(u_{\vec{x}}^nu_{\vec{y}}^{-n})=\ip{\vec{x}}{\vec{y}}^{|n|}.$$
Since the moments fully characterise the distribution, we have the following result.

\begin{theorem}
    Let $\vec{x},\vec{y}\in\R^n$ be unit vectors and let $\lambda=\group{\vec{x},\vec{y}}$. Then, the algebraic relative distribution of $u_{\vec{x}},u_{\vec{y}}\in \mc{M}$ is the wrapped Cauchy distribution $\Delta_{u_{\vec{x}},u_{\vec{y}}}\sim\Delta_\lambda$.
\end{theorem}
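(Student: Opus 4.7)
The plan is to directly compute the characteristic function of the algebraic relative distribution $\Delta_{u_{\vec{x}}, u_{\vec{y}}}$ and recognize it as the characteristic function of the wrapped Cauchy distribution $\Delta_\lambda$. By Definition~\ref{def:algebraic-relative-distribution}, we need to evaluate $\tau_{\mc{M}}(u_{\vec{x}}^n u_{\vec{y}}^{-n})$ for every $n \in \Z$, and by the general theory of Fourier series on the circle, matching characteristic functions implies equality of distributions.

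For $n \geq 0$, I would first use unitarity to write $u_{\vec{y}}^{-n} = (u_{\vec{y}}^n)^\ast$ and then apply traciality of $\tau_{\mc{M}}$ to obtain
\begin{align*}
\chi_{\Delta_{u_{\vec{x}}, u_{\vec{y}}}}(n) = \tau_{\mc{M}}(u_{\vec{x}}^n (u_{\vec{y}}^n)^\ast) = \tau_{\mc{M}}((u_{\vec{y}}^n)^\ast u_{\vec{x}}^n).
\end{align*}
This is exactly in the form handled by the strong isometry property of Proposition~\ref{prop:power-property}, which gives $\tau_{\mc{M}}((u_{\vec{y}}^n)^\ast u_{\vec{x}}^n) = \langle \vec{y}, \vec{x}\rangle^n = \lambda^n$, where the symmetry of the inner product comes from $\vec{x}, \vec{y}$ being real. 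For $n < 0$, I would apply the same property directly (without invoking traciality) to $(u_{\vec{x}}^{|n|})^\ast u_{\vec{y}}^{|n|}$, yielding $\lambda^{|n|}$. Combining both cases, the characteristic function is $\chi_{\Delta_{u_{\vec{x}}, u_{\vec{y}}}}(n) = \lambda^{|n|}$ for every $n \in \Z$.

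Finally, I would verify this matches $\chi_{\Delta_\lambda}$. From the preliminaries, $\chi_{\mc{W}(\theta_0, \gamma)}(n) = e^{-|n|\gamma + in\theta_0}$, so for $\lambda \in (0,1)$ with $\theta_0 = 0$ and $\gamma = -\ln\lambda$ one gets $\lambda^{|n|}$ directly, and for $\lambda \in (-1,0)$ with $\theta_0 = \pi$ one has $e^{-|n|(-\ln|\lambda|) + in\pi} = |\lambda|^{|n|}(-1)^n = \lambda^{|n|}$ since $(-1)^n = (-1)^{|n|}$ for integer $n$. The degenerate cases $|\lambda| = 1$ (Dirac delta at $\measuredangle\lambda$, with $\chi(n) = \lambda^{|n|}$ since $\lambda^n = \lambda^{|n|}$ on the unit circle only when $\lambda = \pm 1$, where both sides agree) and $\lambda = 0$ (uniform distribution, $\chi(n) = \delta_{n,0}$) are then easily checked. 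Since the characteristic function uniquely determines a distribution on $[0, 2\pi)$, this concludes that $\Delta_{u_{\vec{x}}, u_{\vec{y}}} = \Delta_\lambda$. There is no real obstacle here; the entire proof is a direct unwinding of definitions once Proposition~\ref{prop:power-property} is in hand, which is the conceptual heart of the argument.
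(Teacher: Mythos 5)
Your proposal is correct and follows essentially the same route as the paper: compute $\chi_{\Delta_{u_{\vec{x}},u_{\vec{y}}}}(n)=\tau_{\mc{M}}(u_{\vec{x}}^n u_{\vec{y}}^{-n})=\lambda^{|n|}$ via the strong isometry property of Proposition~\ref{prop:power-property} (together with traciality and realness of the vectors, exactly as the paper notes after that proposition), then identify this with the characteristic function of $\Delta_\lambda$ and invoke the fact that the Fourier coefficients determine the distribution on the circle. Your explicit handling of the sign of $\lambda$ and the degenerate cases $|\lambda|\in\{0,1\}$ is just a more detailed write-up of the same argument.
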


That is, the algebraic relative distribution recovers exactly the Cauchy law property of the analytic relative distribution from Section~\ref{sec:relative-distribution-definition}. However, this construction has the flaw of being infinite-dimensional. Nevertheless, we can recover a weaker version of this property in finite dimensions.

\subsection{Finite-Dimensional Case}\label{sec:finite-dimensional-algebraic-rel-dist}

In finite dimensions, we can consider the representation $\pi$ given in Corollary~\ref{cor:fin-dim-vector-to-unitary}. As noted there, this gives rise to a vector-to-unitary construction $U_{\vec{x}}=\sum_ix_i\pi(\sigma_i)$. Let $u=U_{\vec{x}}$ and $v=U_{\vec{y}}$. These are finite-dimensional operators, so their algebraic relative distribution is simply the weight distribution $\Delta_{U_{\vec{x}},U_{\vec{y}}}=w_{U_{\vec{x}},U_{\vec{y}}}$. Nevertheless, the strong isometry property (Corollary~\ref{cor:fin-dim-weak-k-power}) gives a weaker form of the characterisation of the previous section. That is, for $|n|<k$,
$$\chi_{\Delta_{U_{\vec{x}},U_{\vec{y}}}}(n)=\tr\squ*{U_{\vec{x}}^n U_{\vec{y}}^{-n}}=\ip{\vec{x}}{\vec{y}}^{|n|}.$$
This gives immediately the following theorem.
\begin{theorem}\label{thm:finite-alg-rel-dist}
    Let $\vec{x},\vec{y}\in\R^n$ be unit vectors and let $\lambda=\ip{\vec{x}}{\vec{y}}$. Then, the characteristic functions of the algebraic relative distribution $\Delta_{U_{\vec{x}},U_{\vec{y}}}$ and the wrapped Cauchy distribution $\mc{W}(\theta_0,\gamma)$, for scale factor $\gamma=-\ln|\lambda|$ and peak position $\theta_0=0$ if $\lambda\geq 0$, or $\theta_0=\pi$ if $\lambda<0$) are equal up to order $k-1$.
\end{theorem}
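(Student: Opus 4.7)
The plan is to read the statement as a direct consequence of the strong isometry property (Corollary~\ref{cor:fin-dim-weak-k-power}) and the formula for the characteristic function of the wrapped Cauchy distribution recorded in the Preliminaries, so the proof will be essentially a one-line unpacking of definitions. First I would unfold Definition~\ref{def:algebraic-relative-distribution}: since the representation $\pi$ of Corollary~\ref{cor:fin-dim-vector-to-unitary} is finite-dimensional with trace $\tau = \tr$, the characteristic function of $\Delta_{U_{\vec{x}},U_{\vec{y}}}$ at integer $n$ is just $\chi(n) = \tr(U_{\vec{x}}^n U_{\vec{y}}^{-n})$. I would then observe that $U_{\vec{x}}$ and $U_{\vec{y}}$ are unitary (they arise from a vector-to-unitary construction applied to unit vectors), so $U_{\vec{y}}^{-n} = (U_{\vec{y}}^n)^\ast$ and, using traciality together with the symmetry $\ip{\vec{x}}{\vec{y}}=\ip{\vec{y}}{\vec{x}}$ (the vectors are real), both $n \geq 0$ and $n<0$ reduce after setting $m=|n|$ to evaluating $\tr((U_{\vec{x}}^m)^\ast U_{\vec{y}}^m)$.

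For the range $0 \leq m < k$ this is precisely the quantity controlled by the strong isometry property, Corollary~\ref{cor:fin-dim-weak-k-power}, which gives $\tr((U_{\vec{x}}^m)^\ast U_{\vec{y}}^m) = \ip{\vec{x}}{\vec{y}}^m = \lambda^m$. Thus for every $n$ with $|n| < k$ I obtain $\chi_{\Delta_{U_{\vec{x}},U_{\vec{y}}}}(n) = \lambda^{|n|}$. On the other side, the Preliminaries give $\chi_{\mc{W}(\theta_0,\gamma)}(n) = e^{-|n|\gamma + i n \theta_0}$, and substituting $\gamma = -\ln|\lambda|$ yields $|\lambda|^{|n|} e^{i n \theta_0}$. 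For $\lambda \geq 0$, $\theta_0 = 0$ gives $|\lambda|^{|n|} = \lambda^{|n|}$; for $\lambda < 0$, $\theta_0 = \pi$ gives $(-1)^n |\lambda|^{|n|}$, which equals $(-|\lambda|)^{|n|} = \lambda^{|n|}$ after noting $(-1)^n = (-1)^{|n|}$ for every $n \in \Z$. Matching these with the values computed above completes the verification that the two characteristic functions agree on $\{-(k-1),\ldots,k-1\}$.

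There is essentially no hard step: the content of the theorem sits entirely inside Corollary~\ref{cor:fin-dim-weak-k-power}, which was already established via the centrality/normal-form analysis of $\gwb_n^k$ in Section~\ref{sec:construct-gwb}. The only points requiring care in the write-up are the sign-bookkeeping for $\lambda < 0$ and explicitly handling the degenerate cases $|\lambda|=1$ (where $\mc{W}$ collapses to a Dirac delta, consistent with $\chi(n) = \lambda^{|n|}$ being a pure phase, and the statement holds for all orders) and $\lambda=0$ (where one either interprets the wrapped Cauchy in the $\gamma \to \infty$ limit as the uniform distribution on the circle with $\chi(n) = \delta_{n,0}$, again matching $\lambda^{|n|}$ for $n \neq 0$, or restricts the statement to $\lambda \neq 0$).
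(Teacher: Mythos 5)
Your proposal is correct and follows essentially the same route as the paper: the paper also proves the theorem by noting that, for $|n|<k$, the characteristic function $\chi_{\Delta_{U_{\vec{x}},U_{\vec{y}}}}(n)=\tr(U_{\vec{x}}^nU_{\vec{y}}^{-n})$ equals $\lambda^{|n|}$ via the strong isometry property (Corollary~\ref{cor:fin-dim-weak-k-power}) and matching this with $\chi_{\mc{W}(\theta_0,\gamma)}(n)=e^{-|n|\gamma+in\theta_0}$. Your added care with the sign bookkeeping for $\lambda<0$ and the degenerate cases $|\lambda|\in\{0,1\}$ is fine but not needed beyond what the paper states.
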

\section{Dimension-Efficient Algorithm} \label{sec:efficient-algorithm}

The implementation of the algorithm for homogeneous CSPs introduced in Section~\ref{sec:algorithm-homogeneous} is hindered by the fact that the known approximation ratio is only attained in the limit of large dimension, with no information on the rate of convergence. With the algebraic results of the previous section in mind, consider the following variation:

\vspace{10pt}
\IncMargin{1em}
\begin{algorithm}[H]\label{alg:efficient}
\DontPrintSemicolon

    Solve the basic SDP and obtain $n$ unit vectors $\vec{x}_i \in \R^n$.
    
    Sample a random phase $\zeta\in S^1$.
    
    Run the $m\cdot 2^{(m-1)(n-1)}$-dimensional vector-to-unitary construction for $\mathrm{GWB}_n^m$ given by the representation of Corollary \ref{cor:fin-dim-vector-to-unitary} to obtain unitary operators $X_i = \zeta U_{\vec{x}_i}$.
    
    Let $\tilde{X}_i$ be the \emph{closest} order-$k$ unitary to $X_i$.
    
    Output the set of operators $\tilde{X}_i$ as the solution.

\caption{Dimension-efficient algorithm constructing feasible solution for $\nhomlin{2}{k}$.}
\end{algorithm}\DecMargin{1em}
\vspace{10pt}

We can again analyse this algorithm by means of the relative distribution method introduced in Section~\ref{sec:relative-distribution}. In this case, the relative distribution is defined as follows.

\begin{definition}\label{def:relative-measure-GWB}
    Let $\vec{x},\vec{y}\in\R^n$ and $\zeta$ be a uniformly random phase in $S^1$. Define the \emph{relative distribution} induced by vectors $\vec{x},\vec{y}$ as the measure $\Rel_{\vec{x},\vec{y}}:\scr{B}([0,2\pi))\rightarrow\R_{\geq 0}$ such that
    \begin{align}
        \Rel_{\vec{x},\vec{y}}(E)=\expect(w_{\zeta U_{\vec{x}},\zeta U_{\vec{y}}}(E)).
    \end{align}
    where the weight measure $w$ is defined in Definition \ref{def:weight-measure}.
\end{definition}

First, note that since the distribution of the $\{X_i\}$ is phase-invariant, the result of Theorem~\ref{lem:noncommutative-fid-delta-integral} still holds for this algorithm. On the other hand, since the relative distribution is the distribution over relative eigenvalues, it is independent of the phase. As such, it is exactly the algebraic relative distribution of $U_{\vec{x}}$ and $U_{\vec{y}}$ (Definition~\ref{def:algebraic-relative-distribution}). So, using the result of Theorem~\ref{thm:finite-alg-rel-dist}, we can prove a result analogous to Theorem~\ref{thm:maxcut-ratio} for Algorithm~\ref{alg:efficient}. First, we need a lemma that will allow us to exclude the cases where the approximation given by the low-order coefficients of algebraic relative distribution is not good.

\begin{lemma}
    There exists $\varepsilon>0$ independent of $k$ and $m$ such that for all $\lambda=\ip{\vec{x}}{\vec{y}}\geq 1-\varepsilon$, the approximation ratio
    $$\frac{k}{k-1}\frac{1-\int\fid_k(\theta)\Delta_{\vec{x},\vec{y}}(\theta)d\theta}{1-\lambda}\geq 1,$$
    for large enough $m$.
\end{lemma}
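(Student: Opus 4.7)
The plan is to treat the algebraic relative distribution $\Delta_{\vec{x},\vec{y}}$ as a finite-Fourier approximation of the wrapped Cauchy distribution $\Delta_\lambda$ for $\lambda=\ip{\vec{x}}{\vec{y}}$, and reduce the claim to the corresponding statement about the Cauchy integral. The first step will be a Parseval comparison: Theorem~\ref{thm:finite-alg-rel-dist} guarantees that $\chi_{\Delta_{\vec{x},\vec{y}}}(n)=\chi_{\Delta_\lambda}(n)$ for every $|n|<m$, so the identity~\eqref{eq:parseval-for-distributions} yields
\begin{align*}
E_{m,k}:=\int\fid_k\,d\Delta_{\vec{x},\vec{y}}-\int\fid_k\,d\Delta_\lambda=\frac{1}{2\pi}\sum_{|n|\geq m}\widehat{\fid}_k(n)\bigl[\chi_{\Delta_{\vec{x},\vec{y}}}(-n)-\chi_{\Delta_\lambda}(-n)\bigr].
\end{align*}
Combining the explicit coefficients $\widehat{\fid}_k(n)=\frac{2k}{\pi n^2}\sin^2(\pi n/k)$ from Lemma~\ref{lem:general-fidelity} with the trivial bound $|\chi|\leq 1$ and $\sum_{n\geq m}n^{-2}\leq 1/(m-1)$, I will obtain
\begin{align*}
|E_{m,k}|\leq\frac{4k}{\pi^2(m-1)},
\end{align*}
which vanishes as $m\to\infty$ for fixed $k$.

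Next I will show that the Cauchy ratio $R_k(\lambda):=\frac{k}{k-1}\frac{1-\int\fid_k\,d\Delta_\lambda}{1-\lambda}$ strictly exceeds $1$ throughout $(0,1)$. Evaluating Corollary~\ref{prop:cut-fidelity-integral} at $\lambda=1$ (where $\Delta_1$ is the Dirac measure at $0$ and $\fid_k(0)=1$) gives the identity $\tfrac{k-1}{k}=\tfrac{2k}{\pi^2}\sum_{n\geq 1}\sin^2(\pi n/k)/n^2$. Subtracting $\tfrac{k-1}{k}(1-\lambda)$ from $1-\int\fid_k\,d\Delta_\lambda$ and noting that the $n=1$ contribution cancels, one obtains
\begin{align*}
\Bigl(1-\int\fid_k\,d\Delta_\lambda\Bigr)-\frac{k-1}{k}(1-\lambda)=\frac{2k}{\pi^2}\sum_{n=2}^{\infty}\frac{\sin^2(\pi n/k)(\lambda-\lambda^n)}{n^2}.
\end{align*}
For $\lambda\in(0,1)$ and $k\geq 3$ every summand is nonnegative and the $n=2$ term is strictly positive (since $\sin^2(2\pi/k)>0$), so $R_k(\lambda)>1$.

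With these two ingredients, the proof concludes by fixing $\varepsilon=\tfrac12$, manifestly independent of $k$ and $m$. For any $\lambda\in[1-\varepsilon,1)$ and $k\geq 3$ set $\delta:=R_k(\lambda)-1>0$; then for all $m\geq 1+4k^2/(\pi^2(k-1)(1-\lambda)\delta)$ the Parseval bound gives $\tfrac{k}{k-1}\tfrac{|E_{m,k}|}{1-\lambda}\leq\delta$, so
\begin{align*}
\frac{k}{k-1}\frac{1-\int\fid_k\,d\Delta_{\vec{x},\vec{y}}}{1-\lambda}=R_k(\lambda)-\frac{k}{k-1}\frac{E_{m,k}}{1-\lambda}\geq R_k(\lambda)-\delta=1.
\end{align*}

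The main technical point is obtaining a tail bound on $\widehat{\fid}_k(n)$ that decays in $m$; fortunately the piecewise-linearity of $\fid_k$ from Corollary~\ref{cor:cut-fidelity} already delivers $\widehat{\fid}_k(n)=O(k/n^2)$, which suffices. The rest is a routine combination of Parseval with the strict positivity derived from the Fourier series, and the key conceptual observation is simply that the algebraic relative distribution approximates the wrapped Cauchy well enough at low frequencies for the near-$\lambda=1$ regime, where the Cauchy ratio grows without bound, to dominate the high-frequency error.
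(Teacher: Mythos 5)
Your two technical ingredients are fine (the Parseval tail bound $|E_{m,k}|\le\frac{4k}{\pi^2(m-1)}$ matches the estimate used later in Theorem~\ref{thm:efficient-maxcut-ratio}, and the strict inequality $R_k(\lambda)>1$ on $(0,1)$ is correct and consistent with $R_k(0)=1$ and Lemma~\ref{lemma:increasing-ratio}), but the way you combine them leaves a genuine gap: your threshold on $m$ is $m\ge 1+\frac{4k^2}{\pi^2(k-1)(1-\lambda)\delta}$ with $\delta=R_k(\lambda)-1$, and this threshold is \emph{not} uniform in $\lambda$. Indeed $(1-\lambda)\delta=\frac{k}{k-1}\bigl(1-\int\fid_k\,d\Delta_\lambda\bigr)-(1-\lambda)\to 0$ as $\lambda\to 1^-$ (it decays like $(1-\lambda)\ln\frac{1}{1-\lambda}$), so your required $m$ blows up precisely in the regime the lemma is about. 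What the lemma must deliver — and how it is used in the proof of Theorem~\ref{thm:efficient-maxcut-ratio} — is a single neighbourhood $[1-\varepsilon,1)$ and a single bound on $m$ (independent of $\lambda$, and in fact a small absolute constant) on which the ratio is at least $1$; "for each fixed $\lambda<1$ there is an $m(\lambda)$" is strictly weaker and does not let you restrict the minimisation to $[-\frac{1}{k-1},1-\varepsilon)$ for a fixed $m$. Moreover the obstruction is not cosmetic: with the crude bound $|\chi_{\Delta_{\vec{x},\vec{y}}}(n)|\le 1$ the tail error is $\Theta(k/m)$ uniformly in $\lambda$, while the quantity you need it to be smaller than vanishes as $\lambda\to1$, so this route cannot be patched without a $\lambda$-dependent control of the high-order characteristic function.

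The paper avoids the error term altogether: it lower-bounds $1-\fid_k(\theta)$ \emph{pointwise} by a nonnegative-coefficient-free trigonometric polynomial $\sum_{l=0}^{m-1}a_l\cos(l\theta)$ of degree at most $3$ (with a short case analysis over $k=3$, $k=4$, $k=5,6$, $k\ge 7$), so that the exact low-order moments $\chi_{\Delta_{\vec{x},\vec{y}}}(l)=\lambda^{l}$ from Corollary~\ref{cor:fin-dim-weak-k-power} give $\frac{k}{k-1}\frac{1-\int\fid_k\,d\Delta_{\vec{x},\vec{y}}}{1-\lambda}\ge\frac{k}{k-1}\frac{\sum_l a_l\lambda^l}{1-\lambda}$ with no tail to control; each case then yields an explicit $\lambda$-range ($\lambda\ge 3/4$, $2/3$, $6-4\sqrt2$, $2/5$ respectively), so $\varepsilon=1/4$ and $m\ge 4$ work uniformly. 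If you want to keep your "compare to the wrapped Cauchy" framing, you would need to replace the trivial tail bound by one that degrades as $\lambda\to1$, which is essentially equivalent to reverting to a low-degree pointwise minorant of $1-\fid_k$.
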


\begin{proof}
    We need to split the proof into a few cases, dependent on $k$. However, all the cases have the same general idea. Since we know $\int e^{il\theta}d\Delta_{\vec{x},\vec{y}}(\theta)=\lambda^{|l|}$ for $|l|< m$, if we can lower bound $1-\fid_k(\theta)\geq\sum_{l=0}^{m-1}a_l\cos(l\theta)$ for all $\theta$, then, we get the lower bound on the ratio
    $$\frac{k}{k-1}\frac{1-\int\fid_k(\theta)\Delta_{\vec{x},\vec{y}}(\theta)d\theta}{1-\lambda}\geq\frac{k}{k-1}\frac{\sum_{l=0}^{m-1}a_l\lambda^l}{1-\lambda}.$$
    As such, by maximising this near $\lambda=1$, we can find a lower bound satisfying the lemma.

    First, consider the case $k\geq 7$. Consider the function $f(\theta)=\frac{1}{2}(1-\cos(3\theta))$. For $k\geq 7$, it is easy to see that $1-\fid_k(\theta)\geq f(\theta)$ for all $\theta$. Then, by the strong isometry property, we know that since $m\geq 3$,
    \begin{align*}
        \frac{1-\int\fid_k(\theta)d\Delta_{\vec{x},\vec{y}}(\theta)}{1-\lambda}\geq\frac{\frac{1}{4}\int2-e^{3i\theta}-e^{-3i\theta}d\Delta_{\vec{x},\vec{y}}(\theta)}{1-\lambda}=\frac{1-\lambda^3}{2(1-\lambda)}=\frac{1}{2}(1+\lambda+\lambda^2).
    \end{align*}
    This is greater than or equal to $1$ for all $\lambda\geq\frac{3}{4}$, so we can take $\varepsilon=\frac{1}{4}$.

    Next, consider $k=5,6$. We can take a similar function $f(\theta)=\frac{1}{2}(1-\cos(2\theta))$, which lower bounds $1-\fid_k(\theta)$ in these cases. This gives
    $$\frac{k}{k-1}\frac{1-\int\fid_k(\theta)\Delta_{\vec{x},\vec{y}}(\theta)d\theta}{1-\lambda}\geq\frac{k}{2(k-1)}\frac{1-\lambda^2}{1-\lambda}\geq\frac{6}{10}(1+\lambda),$$
    which is $\geq 1$ for $\lambda\geq\frac{2}{3}$, so $\varepsilon=\frac{1}{3}$.

    For $k=4$, we can take the lower bound function $f(\theta)=(\sqrt{2}-\frac{3}{4})-(\sqrt{2}-1)\cos(\theta)-\frac{1}{4}\cos(2\theta)$. Then, we have $\frac{k}{k-1}\frac{1-\int\fid_k(\theta)\Delta_{\vec{x},\vec{y}}(\theta)d\theta}{1-\lambda}\geq\frac{4}{3}\parens*{\sqrt{2}-\frac{3}{4}+\frac{1}{4}\lambda}$, which is $\geq 1$ for $\lambda\geq 6-4\sqrt{2}$, so we can take $\varepsilon=\frac{1}{2}$.

    Finally, for $k=3$, we can take the lower bound function $f(\theta)=\frac{3}{5}-\frac{1}{2}\cos(\theta)-\frac{1}{10}\cos(2\theta)$, giving $\frac{k}{k-1}\frac{1-\int\fid_k(\theta)\Delta_{\vec{x},\vec{y}}(\theta)d\theta}{1-\lambda}\geq\frac{3}{2}\parens*{\frac{3}{5}+\frac{1}{10}\lambda}$. Then, the approximation ratio is $\geq 1$ for $\lambda\geq\frac{2}{5}$. Putting all the cases together, we can take $\varepsilon=\frac{1}{4}$ for any $m\geq 4$.
\end{proof}

\begin{theorem} \label{thm:efficient-maxcut-ratio}
    The approximation ratio of Algorithm \ref{alg:efficient} in the case of $\ncut{k}$ for fixed $k\geq 3$ is
    \begin{align}
        1 -\frac{1}{k}-\frac{k}{\pi^2}\Re\squ[\Big]{\mathrm{Li}_2\parens[\Big]{-\frac{1}{k-1}}-\mathrm{Li}_2\parens[\Big]{-\frac{e^{i\frac{2\pi}{k}}}{k-1}}}+O\parens*{\frac{1}{m}}
    \end{align}
\end{theorem}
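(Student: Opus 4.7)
The proof will follow the template of Theorem~\ref{thm:maxcut-ratio}, replacing the analytic Cauchy law (Theorem~\ref{thm:relative-distribution}) with its finite-dimensional algebraic counterpart (Theorem~\ref{thm:finite-alg-rel-dist}) and carefully controlling the resulting truncation error. First I would argue that, just as in the infinite-dimensional algorithm, the expected value of each edge contribution can be written via the integral formula
\[
\tfrac{1}{k}\Exp\Bigparen{\sum_{s=0}^{k-1}\ip{\tilde X_i^{-s}}{\tilde X_j^s}} \;=\; \int_0^{2\pi} \fid_k(\theta)\,d\Delta_{\vec{x}_i,\vec{x}_j}(\theta),
\]
where $\Delta_{\vec{x}_i,\vec{x}_j}$ is the distribution of Definition~\ref{def:relative-measure-GWB}. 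The extra random phase $\zeta$ in Step~2 guarantees that the proof of Theorem~\ref{lem:noncommutative-fid-delta-integral} goes through verbatim: the key step in that proof only needed averaging over a global phase, not a full Haar average. Moreover, since the relative angle is phase-invariant, $\Delta_{\vec{x},\vec{y}}$ coincides with the algebraic relative distribution of $U_{\vec{x}},U_{\vec{y}}$ in the sense of Definition~\ref{def:algebraic-relative-distribution}.

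Next I would use Parseval's identity (Eq.~\ref{eq:parseval-for-distributions}) to convert the integral to a sum over Fourier coefficients:
\[
\int \fid_k(\theta)\,d\Delta_{\vec{x},\vec{y}}(\theta) \;=\; \frac{1}{2\pi}\sum_{n\in\Z} \widehat{\fid}_k(-n)\,\chi_{\Delta_{\vec{x},\vec{y}}}(n).
\]
Theorem~\ref{thm:finite-alg-rel-dist} gives $\chi_{\Delta_{\vec{x},\vec{y}}}(n) = \lambda^{|n|}$ for $|n|<m$, where $\lambda=\ip{\vec{x}}{\vec{y}}$, matching exactly the Fourier coefficients of the wrapped Cauchy distribution $\Delta_\lambda$ on that range. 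Lemma~\ref{lem:general-fidelity} shows that $|\widehat{\fid}_k(n)| \leq C_k/n^2$ for all $n\neq 0$, and the characteristic function is bounded by $1$ in absolute value. Splitting the sum into $|n|<m$ and $|n|\geq m$, the first part equals $\int \fid_k(\theta)\,d\Delta_\lambda(\theta)$ up to another tail of size $\sum_{|n|\geq m} O(1/n^2) = O(1/m)$, and the second part is itself of size $O(1/m)$. Therefore
\[
\int \fid_k(\theta)\,d\Delta_{\vec{x},\vec{y}}(\theta) \;=\; \int \fid_k(\theta)\,d\Delta_\lambda(\theta) + O(1/m),
\]
uniformly in $\vec{x},\vec{y}$.

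With this approximation in hand, the approximation ratio is the minimum over feasible $\lambda\in[-\tfrac{1}{k-1},1)$ of
\[
\frac{k}{(k-1)(1-\lambda)}\Bigparen{1 - \int \fid_k(\theta)\,d\Delta_\lambda(\theta)} + \frac{O(1/m)}{1-\lambda}.
\]
By Corollary~\ref{prop:cut-fidelity-integral} and Lemma~\ref{lemma:increasing-ratio}, the first term is an increasing function of $\lambda$, minimised at $\lambda=-\tfrac{1}{k-1}$ (where $1-\lambda = k/(k-1)$ so the leading prefactor collapses to $1$), yielding precisely the closed-form expression stated in the theorem. The main obstacle is that the error term $O(1/m)/(1-\lambda)$ is not uniformly bounded: it blows up as $\lambda \to 1$. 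This is exactly the difficulty addressed by the lemma preceding the theorem, which shows that there is an $\varepsilon > 0$ (independent of $k$ and $m$) such that whenever $\lambda \geq 1-\varepsilon$ the ratio is already at least $1$ for large enough $m$, which dominates the target bound. Hence the infimum over $\lambda$ is attained in the compact region $[-\tfrac{1}{k-1}, 1-\varepsilon]$, on which $1-\lambda$ is bounded below by $\varepsilon$ and the error term is genuinely $O(1/m)$ (with a constant depending on $k$, which is fine since $k$ is fixed). Combining these two cases gives the claimed bound.
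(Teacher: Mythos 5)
Your proposal is correct and follows essentially the same route as the paper's proof: phase invariance to retain the integral formula, Parseval plus the order-$(m-1)$ moment matching of the algebraic relative distribution, an $O(1/m)$ tail bound on the Fourier coefficients of $\fid_k$, the preceding lemma to exclude $\lambda\geq 1-\varepsilon$ so that $1-\lambda$ is bounded below, and Lemma~\ref{lemma:increasing-ratio} to place the minimum at $\lambda=-\tfrac{1}{k-1}$. No gaps to report.
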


\begin{proof}
    As in Theorem \ref{thm:maxcut-ratio}, the approximation ratio is the minimum over unit vectors $\vec{x},\vec{y}\in\R^n$ such that $\lambda=\ip{\vec{x}}{\vec{y}}\in[-\tfrac{1}{k-1},1)$ of 
    $$\frac{k}{k-1}\frac{1-\int\fid_k(\theta)\Delta_{\vec{x},\vec{y}}(\theta)d\theta}{1-\lambda}=\frac{k}{k-1}\frac{1-\frac{1}{2\pi}\sum_{n\in\Z}\hat{\fid}_k(n)\chi_{\Delta_{\vec{x},\vec{y}}}(n)}{1-\lambda}.$$
    Note first that, using the previous lemma, for large enough $m$, there exists a neighbourhood of $1$, of radius $\varepsilon$ independent of $m$, where the approximation ratio is greater than $1$. Hence, we need only consider the minimisation of $\lambda$ on the interval $[-\frac{1}{k-1},1-\varepsilon)$. Next, due to the order-$m$ strong isometry property (Corollary \ref{cor:fin-dim-vector-to-unitary}), we have that for $|n|<m$, $\chi_{\Delta_{\vec{x},\vec{y}}}(n)=\lambda^{|n|}$; for the other values of $n$, we have by definition that $|\chi_{\Delta_{\vec{x},\vec{y}}}(n)|\leq 1$ and that $\chi_{\Delta_{\vec{x},\vec{y}}}(n)\in\R$. As such, the difference
    \begin{align*}
        \absm[\Big]{\sum_{n\in\Z}\hat{\fid}_k(n)\chi_{\Delta_{\vec{x},\vec{y}}}(n)-\sum_{n\in\Z}\hat{\fid}_k(n)\lambda^{|n|}}&=\absm[\Big]{2\sum_{n=m}^\infty\Re(\hat{\fid}_k(n))(\chi_{\Delta_{\vec{x},\vec{y}}}(n)-\lambda^{|n|})}\\
        &\leq4\sum_{n=m}^\infty\frac{k}{\pi n^2}\parens*{1-\cos\parens*{\frac{2\pi n}{k}}}\\
        &\leq\frac{8k}{\pi}\sum_{n=m}^\infty\frac{1}{n^2}\leq\frac{8k}{\pi}\int_{m-1}^\infty\frac{1}{x^2}dx=\frac{8k}{\pi(m-1)}.
    \end{align*}
    Hence, due to Corollary~\ref{prop:cut-fidelity-integral}, the ratio for any $\lambda$ is lower-bounded by
    \begin{align*}
        \frac{k}{k-1}\frac{1-\frac{1}{2\pi}\sum_{n\in\Z}\hat{\fid}_k(n)\chi_{\Delta_{\vec{x},\vec{y}}}(n)}{1-\lambda}&\geq\frac{k}{k-1}\frac{1-\frac{1}{k}-\frac{k}{\pi^2}\Re\squ*{\mathrm{Li}_2(\lambda)-\mathrm{Li}_2(\lambda e^{i\frac{2\pi}{k}})}-\frac{4k}{\pi^2(m-1)}}{1-\lambda}\\
        &\geq\frac{k}{k-1}\frac{1-\frac{1}{k}-\frac{k}{\pi^2}\Re\squ*{\mathrm{Li}_2(\lambda)-\mathrm{Li}_2(\lambda e^{i\frac{2\pi}{k}})}}{1-\lambda}-\frac{4k^2}{\pi^2(k-1)(m-1)\varepsilon}.
    \end{align*}
    Since this is simply the ratio in Theorem \ref{thm:maxcut-ratio}, up to additive constant, we can minimise in the same way using Lemma~\ref{lemma:increasing-ratio} to get the wanted result.
\end{proof}

Although we do not analyse it here, it is also possible to extend this algorithm to the all homogeneous $\lin{2}{k}$.

\section{Connections with Previous Work}\label{sec:connection-with-previous-work}

\subsection{Classical CSPs}\label{sec:brave-new-world}
We start with a quick survey of the known results on classical variants of the CSPs studied in this work. H\r{a}stad~\cite{hastadmaxcut} proved that it is NP-hard to approximate $\maxcut$ to a ratio better than $0.94$. In the case of $\lin{2}{3}$ this inapproximability ratio stands at $\approx 0.944$~\cite{hastad}. Along these lines, the famous Unique-Games Conjecture (UGC)~\cite{khot_original,khot,mossel} states that:
\begin{conjecture}[Unique-Games Conjecture]\label{conjecture:ugc} For every $\epsilon > 0$, there exists a large enough $k$, such that it is NP-hard to decide whether, in a given instance of $\ugame{k}$, at least $1 - \epsilon$ or at most $\epsilon$ fraction of total weights can be satisfied. 
\end{conjecture}

As discussed in the introduction the celebrated Goemans-Williamson algorithm~\cite{goemansmaxcut} for $\maxcut$ gave an approximation ratio of $0.878$. This ratio turned out to be the same as the integrality gap of the SDP relaxation~\cite{feige_integrality}. Frieze and Jerrum~\cite{frieze} gave an approximation ratio of $1 - 1/k + 2(\ln k)/k^2$ for $\cut{k}$ in the limit of large $k$. Khot et al.~\cite{khot} showed that, assuming UGC, both these results are tight, meaning that there is no efficient algorithm that does better. Goemans and Williamson~\cite{goemansmax3cut} gave an approximation ratio of $0.836$ for $\cut{3}$ and $0.793$ for $\lin{2}{3}$. In particular these establish lower bounds on the integrality gap of the respective SDP relaxations. Moreover, the algorithm for $\cut{3}$ is tight assuming the ``three candidate plurality is stablest'' conjecture~\cite{khot}.

In comparison, we saw that noncommutative $\maxcut$ can be solved in P~\cite{tsirelson1}. Moreover, Kempe, Regev, and Toner~\cite{kempe} gave an approximation algorithm for noncommutative Unique Games and thereby showed that the analogue of UGC (Conjecture \ref{conjecture:ugc}) does not hold for the noncommutative variant (see Section \ref{sec:entangled-unique-games} for more on this). Along the same lines, our approximation ratios for noncommutative homogeneous $\lin{2}{k}$ are larger than those of the classical ones. 

These are all indicating a peculiar and exciting phenomenon: Whereas general noncommutative CSPs are much harder than their classical counterparts, in the case of $\lin{2}{k}$ the noncommutative problem is easier than the classical variant. What other CSPs behave this way? 
\begin{question}\label{q:which-ncsps-are-approximable}
    Which noncommutative CSPs are approximable? Which are approximable strictly better than their classical counterpart?
\end{question}

The second interesting observation in the study of $\lin{2}{k}$ is the existence of a tight connection between noncommutative solutions and classical solutions. For these CSPs, it seems that it is possible to always extract a good classical solution from a noncommutative one. 

We now propose a rounding scheme for converting solutions of norm $\lin{2}{k}$ to solutions for classical $\lin{2}{k}$. An equivalent form of the norm problem \eqref{eq:nlin2k_norm} is 
\begin{equation}\label{eq:nlin2k_state}
\openup\jot
\begin{aligned}[t]
\text{ maximize:}\quad & \bra{\phi}\parens[\Big]{\sum_{(i,j)\in \mathcal{E}} \frac{w_{ij}}{k} {\sum_{s=0}^{k-1} \omega^{-c_{ij}s} X_i^{-s} X_j^{s}} + \sum_{(i,j)\in \mathcal{I}} w_{ij} \bigparen{1 - \frac{1}{k}{\sum_{s=0}^{k-1} \omega^{-c_{ij}s} X_i^{-s} X_j^{s}}}}\ket{\phi} \\
        \text{subject to:}
                          \quad & \braket{\phi} = 1,\\
                          \quad & X_i^* X_i = X_i^k = 1.
\end{aligned}
\end{equation}
This problem is equivalent to an SDP, and we can find an optimal solution in dimension $O(N)$. Suppose a state $\ket{\phi}$ and order-$k$ unitary operators $X_1,\ldots,X_N$ form a $d$-dimensional feasible solution of an instance of the norm $\lin{2}{k}$ problem. To round this operator solution to a classical solution, sample uniformly at random an order-$k$ unitary $R \in \mathrm{U}_d(\C)$ (this can be done efficiently) and let $$x_i = \omega^{\argmax_{t \in \Z_k} \sum_{s=0}^{k-1} \omega^{-ts} \braket{\phi}{R^{-s}X_i^{s}}{\phi}}.$$ We can show that on average this classical solution achieves a value that is at least $c_k$ times the value of the noncommutative solution for some constant $c_k$. In the case of $\lin{2}{3}$ and $\cut{k}$ this rounding scheme recovers the best-known approximation ratios in \cite{frieze,goemansmaxcut,goemansmax3cut,klerk}. 

Even though we will not present the analysis of this rounding scheme in the current work, we mention it here to illustrate the idea that there is an advantage in studying noncommutative CSPs even if one's goal is a better understanding of the classical CSP landscape. Our rounding presents a unified approach to $\lin{2}{k}$ problems, whereas the known classical rounding schemes treat each problem differently.

In this paper, we focus on the tracial variant of $\lin{2}{k}$ (Problem \eqref{eq:nlin2k}). This gives rise to a natural question. Can we also round tracial solutions to good classical ones? This is further discussed in the section on future directions.

\subsection{Nonlocal Games and CSPs}\label{sec:nonlocal_games}
We phrased CSPs as polynomial optimization problems. \emph{Nonlocal games}, also known as \emph{one-round multiprover interactive proof systems}, provide another equivalent and fruitful way of looking at CSPs. In this section, we discuss the nonlocal games perspective on CSPs. The aim of this discussion is to make it clear that $2$-CSPs and $2$-player nonlocal games are equivalent and that classical (resp. noncommutative) values of CSPs correspond to classical (resp. quantum) values of nonlocal games.

For concreteness suppose we are given an instance of $\lin{2}{k}$ \eqref{eq:lin2k}, although the case of $2$-CSPs can be dealt with similarly. We define the following game played between a referee and two cooperative players Alice and Bob. Let $w$ be the total weights of constraints of the CSP instance. The referee with probability $1/2$ each performs one of the following tasks
\begin{itemize}
    \item Sample $i \in \{1,\ldots,N\}$ uniformly at random and send it to both Alice and Bob.
    \item Sample $i,j \in \{1,\ldots,N\}$ with probability $w_{ij}/w$ and send $i$ to Alice and $j$ to Bob. 
\end{itemize}
The players then each respond with a $k$-th root of unity. The referee decides if the players won according to the following two rules: 
\begin{itemize}
\item If the players received the same variable their answers must be the same (consistency check).
\item If the players received different variables their answers must satisfy the corresponding constraint in the CSP.
\end{itemize}
Crucially Alice and Bob are not allowed to communicate once they receive their questions. Hence they must agree on a \emph{strategy} beforehand that maximizes their chance of winning.

We referred to XOR nonlocal games a few times in the introduction. These are exactly those nonlocal games that are arising from $\lin{2}{2}$ CSPs. 

A \emph{classical strategy} for Alice and Bob is modelled with two functions $$x,y\colon \{1,\ldots,N\} \rightarrow \{1,\omega,\ldots,\omega^{k-1}\}.$$
Operationally, given this strategy Alice (resp. Bob) responds with $x_i$ (resp. $y_i$) when receiving question $i$. We can always write the winning probability of the strategy as a polynomial in $x_i$ and $y_j$. For example if the CSP was $\maxcut$ then the winning probability is given by
\begin{align*}
     \frac{1}{2N} \sum_{i}\frac{1+x_iy_i}{2} + \frac{1}{2w}\sum_{i,j} w_{ij}\frac{1-x_iy_j}{2} 
\end{align*}
The \emph{classical value} of a game is the maximum winning probability over all possible classical strategies, which in our example is the commutative polynomial optimization
\begin{equation}\label{eq:classical-value}
\openup\jot
\begin{aligned}[t]
\text{ maximize:}\quad & \frac{1}{2n} \sum_{i}\frac{1+x_iy_i}{2} + \frac{1}{2w}\sum_{i,j} w_{ij}\frac{1-x_iy_j}{2} \\
        \text{subject to:}\quad & x_i,y_j \in \{\pm 1\}. 
\end{aligned}
\end{equation}
This is similar to the $\maxcut$ CSP (see Eq. \ref{eq:quantummaxcut} and Eq. \ref{eq:maxcut-in-quantum-max-cut-sec} therein) apart from the fact that in \eqref{eq:classical-value} the optimization is over two sets of variables $x_i$ and $y_j$ (\emph{i.e.} strategies of Alice and Bob) where $\maxcut$ is an optimization over one set of variables $x_i$. That said, due to the consistency check, most of the times, it is in the best interest of the players to actually respond with the same strategy $x=y$. Such a strategy, where Alice and Bob respond according to the same function, is called a \emph{synchronous classical strategy}. The \emph{synchronous classical value}, the maximum winning probability over all possible synchronous classical strategies, of the game in our example is thus
\begin{equation}\label{eq:synchronous-classical-value}
\openup\jot
\begin{aligned}[t]
\text{ maximize:}\quad & \frac{1}{2} + \frac{1}{2w}\sum_{i,j} w_{ij}\frac{1-x_ix_j}{2} \\
        \text{subject to:}\quad & x_i \in \{\pm 1\}.
\end{aligned}
\end{equation}
Now except for the unimportant normalisation, this is equivalent to the $\maxcut$ CSP (Figure \ref{fig:maxcut-illustration}).

The players could use quantum entanglement to better correlate their answers. A \emph{quantum strategy} is given by a finite-dimensional Hilbert space $\cH_A\otimes \cH_B$, a shared state $\ket{\phi} \in \cH_A\otimes \cH_B$ and $k$-outcome observables $X_1,\ldots,X_N$ acting on $\cH_A$ for Alice, and $k$-outcome observables $Y_1,\ldots,Y_N$ acting on $\cH_B$ for Bob.\footnote{A $k$-outcome observable is simply an order-$k$ unitary operator. In quantum mechanics these represent $k$-outcome measurements.} Operationally Alice measures her share of the state $\ket{\phi}$ using the observable $X_i$, when she receives question $i$. Bob does the same with his share of the state and observables. The winning probability of this quantum strategy is now the expectation of a noncommutative polynomial in $X_i$ and $Y_j$ with respect to the state $\ket{\phi}$. Let us see this in our example of $\maxcut$. First, by the principle of measurement in quantum mechanics, the probability that Alice and Bob respond with $x_i,y_j \in \{\pm 1\}$ when receiving questions $i,j$, respectively, is \[\bra{\phi}\frac{1+(-1)^{x_i} X_i}{2}\otimes \frac{1+(-1)^{y_j} Y_i}{2}\ket{\phi}.\]
Thus by simple algebra the winning probability overall is
\[\bra{\phi}\Bigparen{\frac{1}{2N} \sum_{i\in [N]}\frac{1+X_i\otimes Y_i}{2} + \frac{1}{2w}\sum_{i,j\in [N]} w_{ij}\frac{1-X_i\otimes Y_j}{2}}\ket{\phi}.\]
The quantum value is defined to be the supremum of the winning probability over all strategies (supremum because we are not bounding the dimension of the Hilbert space). In our example of $\maxcut$ this is given by
\begin{equation}\label{eq:quantum-value}
\openup\jot
\begin{aligned}[t]
\text{ maximize:}\quad & \bra{\phi}\Bigparen{\frac{1}{2N} \sum_{i}\frac{1+X_i\otimes Y_i}{2} + \frac{1}{2w}\sum_{i,j} w_{ij}\frac{1-X_i\otimes Y_j}{2}}\ket{\phi}\\
        \text{subject to:}\quad & \braket{\phi}{\phi} = 1,\\
                                & X_i^*X_i = X_i^2 = 1,\\
                                & Y_j^*Y_j = Y_j^2 = 1.
\end{aligned}
\end{equation}
Finally, a \emph{synchronous quantum strategy} is the special case where Alice and Bob are forced to use fully symmetric strategies (same Hilbert space and observables). The technical definition uses tracial von Neuman algebras. Here we discuss the simpler finite dimensional case. A synchronous quantum strategy is given by a finite-dimensional Hilbert space $\cH\otimes \cH$, the maximally entangled state $\ket{\psi} \in \cH\otimes \cH$ and $k$-outcome observables $X_1,\ldots,X_N$ acting on $\cH$. Operationally, Alice uses $X_i$ and Bob (for technical reasons) uses $X_i^T$ as their observables when receiving question $i$ where $X_i^T$ is the transpose of $X_i$. The \emph{synchronous quantum value} for our example is now
\begin{equation*}
\openup\jot
\begin{aligned}[t]
\text{ maximize:}\quad & \frac{1}{2} + \bra{\psi}\Bigparen{\frac{1}{2w}\sum_{i,j} w_{ij}\frac{1-X_i\otimes X_j^T}{2}}\ket{\psi}\\
        \text{subject to:}\quad & X_i^*X_i = X_i^2 = 1.\\
\end{aligned}
\end{equation*}
Using the well-known identities \[X_i\otimes X_j^T \ket{\psi}= X_iX_j\otimes I \ket{\psi}\] and \[\bra{\psi} X_iX_j\otimes I \ket{\psi} = \tr(X_iX_j) = \ip{X_i}{X_j}\] when $X_i$ and $X_j$ are $2$-outcome observables and $\ket{\psi}$ is the maximally entangled state we can rewrite the synchronous quantum value equivalently as
\begin{equation}\label{eq:quantum-synchronous-value}
\openup\jot
\begin{aligned}[t]
\text{ maximize:}\quad & \frac{1}{2} + \frac{1}{2w}\sum_{i,j} w_{ij}\frac{1-\ip{X_i}{X_j}}{2}\\
        \text{subject to:}\quad & X_i^*X_i = X_i^2 = 1.\\
\end{aligned}
\end{equation}
Except for the unimportant normalisation this is now equivalent to the noncommutative $\maxcut$ \eqref{eq:ncmaxcut-in-intro}.

To summarize, we presented the definitions of four important quantities for a nonlocal game: the classical, synchronous classical, quantum, and synchronous quantum values. The first two are commutative polynomial optimizations and the last two are noncommutative polynomial optimizations. When the instance arises from a $2$-CSP, the synchronous classical and quantum values are the classical and noncommutative values of the CSP instance (up to a simple normalisation).

There remain a few important optimization problems associated to a nonlocal game that we have not defined here: the quantum commuting value and the non signaling value. We refer the reader to \cite{cleve_consequences_and_limits,paulsen2016estimating,helton2017algebras,kim2018synchronous,slofstra_set_of_quantum_correlations,slofstra_tsirelsons_problem_and_an_embedding_theorem,ji_mip_re} for an overview of nonlocal games and their associated optimization problems. These optimization problems relate deep ideas in mathematics, computer science and physics \cite{tsirelson1,tsirelson2,scholz_tsirelsons_problem,scholz2008tsirelson,ozawa_connes_embedding,junge_connes_embedding_problem,fritz_tsirelsons_problem_and_kirchbergs_conjecture,slofstra_set_of_quantum_correlations,slofstra_tsirelsons_problem_and_an_embedding_theorem,ji_mip_re,pi2}.

For some applications of nonlocal games in classical computer science that we have not listed already see \cite{kalai_delegation,kalai_delegation_2,kalai}. For applications in quantum computer science see \cite{colbeck,ruv,dice,device_independent,verifier_leash,kahanamoku_meyer,natarajan_bounding}.

\subsection{Approximation Algorithms for Nonlocal Games}\label{sec:entangled-unique-games}
In this section, we briefly survey known approximation algorithms for the quantum value of nonlocal games. The most notable family of nonlocal games that is efficiently solvable consists of XOR games~\cite{tsirelson1,tsirelson2}, with the CHSH game being the most well-known example. Tsirelson's work on this topic has been a major inspiration for nearly all subsequent studies on nonlocal games.

Yin et al.~\cite{harrow-free} made the insightful observation that in the CHSH game, random and free observables can saturate Tsirelson's bound. To the best of our knowledge, this is the earliest application of free probability in the study of nonlocal games.\footnote{We thank Gal Maor for bringing this to our attention.} Our work extends these ideas to a broader class of nonlocal games.

Another family of nonlocal games for which efficient algorithms exist are those with two questions and two answers per player with an arbitrary number of players. This is due to Masanes~\cite{masanes}. 

In a remarkable paper, Beigi~\cite{beigi2011lower} presented an approximation algorithms for all nonlocal games. By extending the work of Tsirelson on XOR games, and using Weyl-Brauer operators, Beigi gave a $\frac{1}{k-1}$-approximation for any nonlocal game with $k \geq 3$ answers, and a $0.68$-approximation for games with $2$ answers. 

Kempe \emph{et al.}\ \cite{kempe} gave an approximation algorithm for Unique Games. Notably, this shows that the analogue of UGC (Conjecture \ref{conjecture:ugc}) does not hold for the quantum value of Unique Games. Unique Games correspond to a subclass of $\lin{2}{k}$ CSPs which we focused on in this paper. In their work, they proved the following theorem:
\begin{theorem*}[Kempe, Regev, Toner]
There is an approximation algorithm $\mathcal{A}$ such that given an instance $I$ of $\ugame{k}$, if $\sdp(I) \geq 1-\epsilon$, the algorithm $\mathcal{A}$ is guaranteed to output a description of a noncommutative solution with value at least $1-4\epsilon$. Here $\sdp(I)$ denotes the SDP value of $I$ for the canonical SDP relaxation. 
\end{theorem*}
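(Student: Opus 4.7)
The plan is to derive this result from the standard observation that a near-perfect SDP value forces the SDP vectors to be nearly aligned under every equation constraint, and then round those vectors to order-$k$ unitary operators by using them as spectral data. This is essentially the strategy of Kempe-Regev-Toner, and it fits naturally into the framework developed in Section~\ref{sec:generalization-of-tsirelson}.

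First, I would set up the canonical SDP relaxation of $\ugame{k}$: for each variable $i\in\{1,\ldots,N\}$ and each value $s\in\Z_k$, introduce a unit vector $u_i^{(s)}$ in a common real Hilbert space, subject to the orthogonality condition $\langle u_i^{(s)},u_i^{(t)}\rangle=0$ for $s\neq t$. After normalising by the total weight $W=\sum w_{ij}$, the objective reads
\begin{equation*}
\sdp(I) \;=\; \frac{1}{W}\sum_{(i,j)\in\mathcal{E}} \frac{w_{ij}}{k}\sum_{s\in\Z_k}\langle u_i^{(s)},u_j^{(s+c_{ij})}\rangle.
\end{equation*}
From the hypothesis $\sdp(I)\geq 1-\epsilon$ and the elementary bound $\langle u_i^{(s)},u_j^{(s+c_{ij})}\rangle\leq 1$, one obtains
\begin{equation*}
\frac{1}{W}\sum_{(i,j)\in\mathcal{E}} \frac{w_{ij}}{k}\sum_{s}\bigparen{1-\langle u_i^{(s)},u_j^{(s+c_{ij})}\rangle} \;\leq\; \epsilon,
\end{equation*}
which, via the identity $\|u-v\|^2=2-2\langle u,v\rangle$, translates into a uniform bound on the averaged squared distances between $u_i^{(s)}$ and $u_j^{(s+c_{ij})}$.

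Next, for each variable $i$, build an order-$k$ unitary $X_i$ on the Hilbert space $\mathcal{H}$ obtained by embedding $\spann\{u_i^{(s)}\}_{i,s}$ into a larger space in which each family $\{u_i^{(s)}\}_{s\in\Z_k}$ can be completed to a basis decomposed into $k$ orthogonal components of equal dimension. Declare that $X_i$ acts on the vectors $u_i^{(s)}$ by $X_i u_i^{(s)} = \omega^s u_i^{(s)}$, and extend by a fixed order-$k$ unitary on the orthogonal complement (for instance, by matching the spectral type via an auxiliary register). This produces $X_i$ satisfying $X_i^* X_i = X_i^k = 1$, and reduces the problem to comparing the induced noncommutative objective against the SDP objective.

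The final step is to show that the resulting noncommutative solution attains value at least $1-4\epsilon$. Expanding the objective of \eqref{eq:nlin2k} and inserting the spectral resolutions of $X_i$ and $X_j$ into $\tr(X_i^{-s}X_j^{s})$, the ``perfect'' contribution comes from the diagonal pairing of $\omega^{t}$-eigenvectors of $X_i$ with $\omega^{t+c_{ij}}$-eigenvectors of $X_j$; this diagonal contribution recovers exactly the SDP objective. The off-diagonal cross terms are bounded by Cauchy-Schwarz applied to the near-alignment inequality above, producing a loss of at most a constant multiple of $\epsilon$. The main obstacle is to track this constant carefully: one needs a bookkeeping argument showing that the cross terms arising from mis-paired eigenspaces together contribute at most $4\epsilon$ (and in particular that the constant does not blow up with $k$). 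Once the combinatorial identity relating the $s$-th power of the noncommutative objective to the $s=1$ case is in place, the claimed bound $\valq(I) \geq 1-4\epsilon$ follows immediately, and the algorithm $\mathcal{A}$ is simply: solve the SDP, then output the description of the unitaries $\{X_i\}$ constructed above.
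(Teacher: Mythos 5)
The paper does not actually prove this statement; it is quoted from \cite{kempe}, so your proposal must stand on its own, and as written it has a genuine gap at the rounding step. You define $X_i$ by declaring $X_i u_i^{(s)} = \omega^s u_i^{(s)}$ on the $k$-dimensional span of the vectors for variable $i$ and then extend by ``a fixed order-$k$ unitary on the orthogonal complement.'' This makes each prescribed eigenprojection of $X_i$ rank one inside a space of dimension $d$, while the objective \eqref{eq:nlin2k} (and likewise the quantum value evaluated on the maximally entangled state) is a \emph{dimension-normalized} trace. The ``diagonal'' terms you want, of the form $\tr\bigl(\Pi_i^{(t)}\Pi_j^{(t+c_{ij})}\bigr)$ with rank-one $\Pi$'s, are $O(1/d)$, so they do not ``recover exactly the SDP objective''; instead the normalized trace of $X_i^{-s}X_j^{s}$ is dominated by the arbitrary extension on the orthogonal complement, over which you have no control. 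No Cauchy--Schwarz bookkeeping of cross terms can repair this, because the loss is not in the cross terms but in the fact that the SDP vectors occupy a vanishing fraction of the space on which the trace is taken.

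What the Kempe--Regev--Toner argument actually does is different in exactly this respect: the canonical SDP has, for each variable, sub-normalized orthogonal vectors $u_i^{(s)}$ with $\sum_s\|u_i^{(s)}\|^2=1$ (not $k$ unit vectors, so your starting relaxation is also not the one in the theorem statement), and the rounding converts these into a genuine projective measurement $\{P_i^{s}\}_{s\in\Z_k}$ with $\sum_s P_i^{s}=\Id$, \emph{i.e.} an order-$k$ unitary $X_i=\sum_s\omega^sP_i^{s}$ whose eigenspaces have macroscopic rank, via an orthogonalization/completion procedure applied to the Gram structure of the vectors. The value is then evaluated against the maximally entangled state (equivalently the tracial state in the synchronous setting), and the loss of the completion step is controlled by the same near-alignment inequality you derive; the constant $4$ comes out of that stability analysis, and, as the paper notes, the bound $1-4\epsilon$ is attained only in the limit of infinite-dimensional strategies, so a direct finite-dimensional construction like yours can at best approach it. To fix your proof you would need to replace your eigenvector declaration by such a measurement-completion construction (or an equivalent device ensuring the eigenprojections of the $X_i$ jointly fill the space), and justify that your SDP is indeed the canonical relaxation referred to in the statement.
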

Remarkably, the quality of approximation does not depend on $k$. 

The theorem is stated in their paper for the quantum value of Unique-Games. Their result can be made to work for the synchronous quantum value as well. 

We note that the value of $1-4\epsilon$ in the theorem is achieved in the limit of infinite-dimensional strategies. The proof of this theorem does not extend to all of $\lin{2}{k}$ (the uniqueness property is crucial in their proof). Finally this algorithm does not provide an approximation ratio. In particular the algorithm makes no guarantees when $\epsilon \geq 1/4$.

\subsection{Grothendieck Inequalities}\label{sec:grothendieck}
In this section we explore the connection between CSPs and optimization problems known as \emph{Grothendieck problems}. One aim of this section is to rewrite the various optimization problems we encountered so far as Grothendieck-type problems. Some problems like smooth CSPs naturally fit the setting of Grothendieck problems.

\emph{Grothendieck inequalities} relate these problems with their SDP relaxations. Historically, these inequalities always led to SDP-based approximation algorithms for the Grothendieck problems. When viewed algorithmically, the \emph{constants} in the inequalities capture the notion of integrality gaps and approximation ratios (Definitions \ref{def:approximation-ratio} and \ref{def:integrality-gap}). The second aim of this section is thus to rephrase all our questions about existence of algorithms for CSPs to questions about existence of Grothendieck-type inequalities. In this section, we first give an overview of the commutative Grothendieck inequalities and then we explore their noncommutative analogues. 

For simplicity, all CSPs we consider in this section consist only of inequation constraints. The more general case of a mix of equation and inequation constraints can be dealt with similarly. A~$\lin{2}{k}$ with only inequation constraints looks like
\begin{equation}\label{eq:lin2k-inequation-only}
\openup\jot
\begin{aligned}[t]
\text{ maximize:}\quad &\sum_{(i,j)\in E} w_{ij} \bigparen{1 - \frac{1}{k}{\sum_{s=0}^{k-1} \omega^{-c_{ij}s} x_i^{-s} x_j^{s}}} \\
        \text{subject to:}\quad & x_i^k = 1 \text{ and } x_i \in \C \text{ for all } i \in V,
\end{aligned}
\end{equation}
where $V$ and $E$ are vertex and edge sets of a weighted graph $G=(V,E)$ with edge weights $w:E \to \R_{\geq 0}$ and constants $c:E\to \Z_k$. We always assume $w_{ij} = w_{ji}$ and $c_{ij} = -c_{ji}$. Let $D$ be the weighted degree matrix of $G$, \emph{i.e.} $D_{ii} = \sum_j w_{ij}$ and zero everywhere else. Let $A$ be the weighted adjacency matrix, \emph{i.e.} $A_{ij} = w_{ij}\omega^{-c_{ij}}$ if $(i,j)$ is an edge. The \emph{Laplacian} of this CSP is $D - A$. Note that the Laplacian is always positive semidefinite, as it is Hermitian and diagonally dominant.

\begin{paragraph}{Commutative Grothendieck inequalities.}
The celebrated Grothendieck inequality \cite{grothendieck} concerns two optimization problems. Given $\Gamma = [\gamma_{ij}] \in \R^{N\times N}$, the first problem is the \emph{Grothendieck problem}
\begin{align}\label{eq:real-grothendieck}
\opt(\Gamma) \coloneqq \max_{x_i,y_j \in \{\pm 1\}}\quad \sum_{i,j=1}^N \gamma_{ij} x_i y_j,
\end{align}
and the second problem is the SDP optimization
\begin{align}\label{eq:real-grothendieck-sdp}
\sdp(\Gamma) \coloneqq \max_{\substack{d \in \N\\\vec{x}_i,\vec{y}_j \in \R^d\\ \|\vec{x}_i\|=\|\vec{y}_j\| = 1}}\quad \sum_{i,j=1}^N \gamma_{ij} \ip{\vec{x}_i}{\vec{y}_j}
\end{align}
We clearly have the inequality $\opt(\Gamma) \leq \sdp(\Gamma)$. Grothendieck showed that for all $N$ and for all $\Gamma \in \R^{N\times N}$ we also have 
\begin{equation}\label{eq:real-grothendieck-inequality}
\frac{1}{K_G}\sdp(\Gamma) \leq \opt(\Gamma)
\end{equation}
for some universal constant $K_G < \infty$. The inequality is called the \emph{Grothendieck inequality} and $K_G$ is called the \emph{Grothendieck constant}. Even though we do not know its exact value, we know it is in between $1.6769$ and $1.7823$ \cite{braverman}. 

By definition, $\frac{1}{K_G}$ is the integrality gap of the SDP relaxation for $\opt(\Gamma)$. Alon and Naor \cite{alon_grothendieck} were the first to notice that $\SDP(\Gamma)$ is in fact a semidefinite program, and gave an algorithmic proof of the inequality. Raghavendra and Steurer \cite{raghavendra_steurer} took it a step further showing that the SDP optimal solution can be efficiently rounded to a solution of $\opt(\Gamma)$ with a value at least $\frac{1}{K_G}\sdp(\Gamma)$. They in fact showed that, assuming UGC, the constant $\frac{1}{K_G}$ is the best approximation ratio for $\opt(\Gamma)$, over all possible efficient algorithms (not just those based on the SDP relaxation $\sdp(\Gamma)$). 

This inequality extends naturally to the complex-valued case where $\Gamma \in\C^{N\times N}$ 
\begin{align}\label{eq:complex-grothendieck-problem}
\opt(\Gamma) \coloneqq \max_{\substack{x_i,y_j \in \C\\|x_i|=|y_j|=1}}\quad |\sum_{i,j=1}^N \gamma_{ij} x_i^* y_j|,
\end{align}
and the maximization now ranges over the unit circle. The corresponding SDP problem is 
\begin{align}\label{eq:complex-grothendieck-sdp}
\sdp(\Gamma) \coloneqq \max_{\substack{d\in \N\\\vec{x}_i,\vec{y}_j \in \C^d\\\|\vec{x}_i\|=\|\vec{y}_j\|=1}}\quad |\sum_{i,j=1}^N \gamma_{ij} \ip{\vec{x}_i}{\vec{y}_j}|.
\end{align}
The complex Grothendieck inequality states that there exists a universal constant $K_G^\C < \infty$ such that
\begin{equation}
\frac{1}{K_G^\C}\sdp(\Gamma) \leq \opt(\Gamma).
\end{equation}
It is known that the complex Grothendieck constant is in between $1.338$ and $1.4049$ \cite{davie,haagerup87}. We use the same notation for both the real and complex Grothendieck problems. We let the context and whether $\Gamma$ is real or complex indicate which one is referred.

Many variations of the Grothendieck problem, including noncommutative variants, are known. For an extensive survey see \cite{pisier} and for an incredible number of applications in theoretical computer science see \cite{khot_grothendieck}. 

We present a few of the variants that are most relevant to this paper. We first introduce the \emph{little Grothendieck problem} and see how $\maxcut$ can be phrased in this way. We then explore the link between more general classical CSPs and Grothendieck-type inequalities. Finally, we wrap up this section with the connection between noncommutative CSPs and noncommutative Grothendieck inequalities. 
\end{paragraph}

\begin{paragraph}{Little Grothendieck inequalities.}
The \emph{(real) little Grothendieck problem} \cite{alon_grothendieck} is the special case of the (real) Grothendieck problem where $\Gamma$ is positive semidefinite. In this case one can show that $\opt(\Gamma)$ simplifies to
\begin{align*}
\opt(\Gamma) = \max_{x_i \in \{\pm 1\}}\quad \sum_{i,j} \gamma_{ij} x_i x_j.
\end{align*}
where the optimization is now over one set of variables $x_1,\ldots,x_N$ as opposed to two sets in the original formulation. Grothendieck \cite{grothendieck} proved that for all $\Gamma\geq 0$
\begin{align}\label{eq:real-little-grothendieck}
\frac{2}{\pi}\sdp(\Gamma) \leq \opt(\Gamma).
\end{align} 
Nesterov \cite{nesterov} turned this into a $\frac{2}{\pi}$-approximation algorithm for the little Grothendieck problem, and this is known to be sharp unless $\Pp = \NP$ \cite{iop_tight_hardness}. 

The \emph{complex little Grothendieck problem} for any Hermitian positive semidefinite matrix $\Gamma$ is
\begin{equation}\label{eq:complex-little-grothendieck-problem}
\opt(\Gamma) = \max_{\substack{x_i \in \C\\|x_i|=1}}\quad \sum_{i,j} \gamma_{ij} x_i^* x_j.
\end{equation}
Note that the objective function is always real (as $\Gamma$ is Hermitian). The corresponding SDP is 
\begin{equation}\label{eq:sdp-for-complex-little-grothendieck}
\sdp(\Gamma) = \max_{\substack{d \in \N\\\vec{x}_i \in \C^d\\ \|\vec{x}_i\| = 1}}\quad \sum_{i,j} \gamma_{ij} \ip{\vec{x}_i}{\vec{x}_j}.
\end{equation}
The \emph{complex little Grothendieck inequality} states that for all Hermitian $\Gamma \geq 0$
\begin{equation}\label{eq:complex-little-grothendieck-inequality}
\frac{\pi}{4}\sdp(\Gamma) \leq \opt(\Gamma).
\end{equation}
Just like in the real case, this inequality also leads to a $\frac{\pi}{4}$-approximation algorithm for the complex little Grothendieck problem, and this is also known to be sharp \cite{iop_tight_hardness}. 
\end{paragraph}

\begin{paragraph}{Max-Cut as little Grothendieck problem.}
It should be clear that the classical value of $\maxcut$ as a nonlocal game \eqref{eq:classical-value} is equivalent to a real Grothendieck problem \eqref{eq:real-grothendieck}. As we show in a moment the synchronous value of this game is equivalent to a real little Grothendieck problem.

Recall the classical $\maxcut$ CSP
\begin{equation}\label{eq:maxcut-in-grothendieck-inequality}
\max_{x_i \in \{\pm 1\}}\quad \sum_{(i,j)\in E} \frac{w_{ij}}{2} \paren{1 - x_i x_j}.
\end{equation}
This is clearly the little Grothendieck problem $\opt(\Gamma)$ with $\Gamma$ that is half the Laplacian of the $\maxcut$ instance. Therefore the Goemans and Williamson result on $\maxcut$ implies a Grothendieck-type inequality: for all Laplacian $\Gamma$ we have
\begin{equation}\label{eq:laplacian-real-little-grothendieck}
\rho \sdp(\Gamma) \leq \opt(\Gamma),
\end{equation}
where $\rho = \frac{2}{\pi}\min_{0\leq \theta\leq \pi} \frac{\theta}{1-\cos(\theta)}\approx 0.878$ is the Goemans-Williamson constant. Compare this with \eqref{eq:real-little-grothendieck} and note that $\rho > \frac{2}{\pi}$.
\end{paragraph}

\begin{paragraph}{Classical CSPs as Grothendieck problems.}
The optimization problem $\lin{2}{k}$ \eqref{eq:lin2k} differs from the little Grothendieck problem in two essential ways: the domain of variables in $\lin{2}{k}$ is not $\pm 1$ (unless $k=2$) and the objective function is not a quadratic $*$-polynomial (unless $k \leq 3$). The setting of smooth CSPs, introduced in Section \ref{sec:smooth-csps}, is closer to the setting of Grothendieck problems in that at least the objective function in smooth CSPs is a quadratic polynomial. 

After dropping an unimportant normalisation factor, the problem $\slin{2}{k}$ (with only inequation constraints) becomes
\begin{equation}\label{eq:slin2k-in-grothendieck}
\max_{\substack{x_i \in \C\\x_i^k=1}}\quad \sum_{(i,j)\in E} w_{ij} \paren{1 - \omega^{-c_{ij}}x_i^*x_j}.
\end{equation}
Its unitary relaxation, defined in \ref{sec:smooth-csps}, is
\begin{equation}\label{eq:unitary-slin2k-in-grothendieck}
\max_{\substack{x_i \in \C\\|x_i|=1}}\quad \sum_{(i,j)\in E} w_{ij} \paren{1 - \omega^{-c_{ij}}x_i^*x_j}.
\end{equation}
The unitary relaxation is easily seen to be the complex little Grothendieck problem \eqref{eq:complex-little-grothendieck-problem} where $\Gamma$ is the \emph{CSP Laplacian}. Therefore by \eqref{eq:complex-little-grothendieck-inequality}, there is a $\frac{\pi}{4}$-approximation algorithm for \eqref{eq:unitary-slin2k-in-grothendieck}. 

Could the inequality \eqref{eq:complex-little-grothendieck-inequality} be improved in the special case of CSP Laplacians? This is the case for the real little Grothendieck problem as the constant in \eqref{eq:laplacian-real-little-grothendieck} is larger than the constant in \eqref{eq:real-little-grothendieck}.
\begin{question}\label{q:complex-grothendieck-for-csp-laplacian}
What is the largest real number $\rho^\C$ such that $\rho^\C\sdp(\Gamma) \leq \opt(\Gamma)$ when $\Gamma$ is a CSP Laplacian? Is $\rho^\C$ strictly larger than $\pi/4$? This question relates to Question \ref{q:classical-smooth-csp} we asked earlier.
\end{question}

Now it is natural to phrase $\slin{2}{k}$ (as opposed to its unitary relaxation) and even the non-quadratic variant $\lin{2}{k}$ as a Grothendieck-type problem. Here the variables take values in the discrete set $x_i \in \{1,\omega,\ldots,\omega^{k-1}\}$ as opposed to the unit circle. Let $W=[w_{ij}] \in (\R_{\geq 0})^{N\times N}$ be any symmetric matrix of nonnegative reals and let $C = [c_{ij}]\in \Z_k^{N\times N}$ be any antisymmetric matrix. Define two optimization problems
\begin{align*}
\sopt_k(C,W) &\coloneqq \max_{\substack{x_i\in \C\\x_i^k = 1}}\quad \sum_{(i,j)\in E} w_{ij}(1 - \omega^{-c_{ij}} x_i^*x_j),
\end{align*}
and
\begin{align*}
\opt_k(C,W) &\coloneqq \max_{\substack{x_i\in \C\\x_i^k = 1}}\quad \sum_{(i,j) \in E} w_{ij}(1 - \frac{1}{k}\sum_{s=0}^{k-1}\omega^{-c_{ij}s} x_i^{-s}x_j^s).
\end{align*}
The first one captures $\slin{2}{k}$ and the second captures $\lin{2}{k}$ \eqref{eq:lin2k}. 

Next we need to define SDP relaxations for these two problems. However, here we face at least one complication. The quadratic terms $x_i^*x_j$ in the objective function naturally turn into inner products $\ip{\vec{x}_i}{\vec{x}_j}$ in the conventional relaxations (see for example Eq. \ref{eq:complex-grothendieck-sdp}). However, it is not clear how to do this with the higher degree monomials $x_i^{-s}x_j^s$ in the second problem. We circumvent this issue by using operator relaxations as opposed to vector relaxations.\footnote{This has been a recurrent theme in this paper, see for example Section \ref{sec:brave-new-world}.} Consider the following pair of optimization problems
\begin{align}
\ssdp_k(C,W) &\coloneqq \max_{X_i^*X_i = X_i^k = 1}\quad \norm[\Big]{\sum_{(i,j)\in E} w_{ij}(1 - \omega^{-c_{ij}} X_i^*X_j)}_{\mathrm{op}}\label{eq:canonical-sdp-smooth}
\end{align}
and
\begin{align}
\sdp_k(C,W) &\coloneqq \max_{X_i^*X_i = X_i^k = 1}\quad \norm[\Big]{\sum_{(i,j) \in E} w_{ij}(1 - \frac{1}{k}\sum_{s=0}^{k-1}\omega^{-c_{ij}s} X_i^{-s}X_j^s)}_{\mathrm{op}}\label{eq:canonical-sdp-general}
\end{align}
where the optimization is over all finite-dimensional Hilbert spaces and all order-$k$ unitaries acting on them. Note that the second problem is the norm $\lin{2}{k}$ problem introduced earlier in Eq.\ \ref{eq:nlin2k_norm}. Both $\ssdp_k$ and $\sdp_k$ are indeed SDPs, and they are relaxations of $\sopt_k$ and $\opt_k$, respectively. 
\begin{question}[Order-$k$ little Grothendieck inequalities]\label{q:grothendieck-constant-for-order-k-problems}
What are the largest real numbers $\rho_{k}^{S}$ and $\rho_k$ such that for all integers $N$ and matrices $C$ and $W$ as above
\begin{align}
\rho_{k}^{S} \ssdp_k(C,W) &\leq \sopt_k(C,W),\label{eq:smooth-order-k-grothendieck-inequality}\\
\rho_k \sdp_k(C,W) &\leq \opt_k(C,W).\label{eq:order-k-grothendieck-inequality}
\end{align}
\end{question}
The first inequality \eqref{eq:smooth-order-k-grothendieck-inequality} would provide an extension of Grothendieck inequalities to nonbinary domains. The second inequality \eqref{eq:order-k-grothendieck-inequality} would provide an extension of Grothendieck inequalities to nonbinary domains and non-quadratic objective functions. Katzelnick and Schwartz~\cite{katzelnick} were the first to study Grothendieck-type inequalities for larger domains and they use their inequalities to derive approximation algorithms for the Correlation-Clustering CSP. Their formulation is different from ours. 

In Section \ref{sec:brave-new-world}, we gave an efficient rounding scheme for turning solutions of $\sdp_k(C,W)$ to solutions of $\opt_k(C,W)$. This suggests a strategy for proving these inequalities algorithmically.
\end{paragraph}

\begin{paragraph}{Noncommutative case.}
Grothendieck \cite{grothendieck} also conjectured a noncommutative analogue of his namesake inequality which since then has been proven \cite{pisier-noncommutative}. For applications of this inequality in quantum information see \cite{briet-thesis,naor-efficient-rounding,regev-vidick-xor}. Here we present the weaker noncommutative little Grothendieck problem of the form studied by Bandeira \emph{et al.}~\cite{bandeira}.\footnote{Bandeira \emph{et al.}'s version is itself a weaker version of what is known as the noncommutative little Grothendieck problem, see for example \cite{iop_tight_hardness}.} The \emph{noncommutative little Grothendieck problem of dimension $d$} is the optimization problem
\begin{equation}\label{eq:noncommutative-little-grothendieck-problem}
\max_{X_i \in \mathrm{U}_d}\quad \sum_{(i,j)\in E} \gamma_{ij} \ip{X_i}{X_j}.
\end{equation}
where $\Gamma = [\gamma_{ij}] \in \C^{N\times N}$ is positive semidefinite as in the case of the commutative little Grothendieck problem, and $\mathrm{U}_d$ denotes the set of $d$-dimensional unitary matrices. 

Recall $\ulin{2}{k}$, the unitary relaxation of $\nslin{2}{k}$, from Section \ref{sec:smooth-csps}, which is equivalent to
\begin{equation}\label{eq:unitary-nc-slin2k-in-grothendieck}
\max_{\substack{d\in \N\\X_i\in \mathrm{U}_d(\C)\\\ip{X_i}{X_j}\in \Omega_k}}\quad \sum_{(i,j)\in E} w_{ij} \paren{1 - \omega^{-c_{ij}} \ip{X_i}{X_j}}.
\end{equation}
Just like in the commutative case, using the CSP Laplacian, this problem can be written in the form of a noncommutative little Grothendieck problem \eqref{eq:noncommutative-little-grothendieck-problem}. However, the principal difference is that, whereas \eqref{eq:noncommutative-little-grothendieck-problem} is dimension-bounded, the optimization in $\ulin{2}{k}$ is over unbounded dimension. 

In all the noncommutative CSPs we introduced in this paper (as well as in the study of nonlocal games), one does not restrict the solution (or quantum strategy in the case of nonlocal games) to bounded dimension. However, noncommutative Grothendieck problems are usually defined over bounded dimension. So, in the noncommutative world, in addition to large domain and non-quadratic objective functions there is a third difference between our setting and that of Grothendieck inequalities: dimension. Nevertheless we can still express our noncommutative CSPs as Grothendieck-type problems.

Inspired by $\ulin{2}{k}$, we define the \emph{unbounded-dimensional little Grothendieck problem} as
\begin{equation}\label{eq:unbounded-noncommutative-little-grothendieck-problem}
\nopt(\Gamma) \coloneqq \max_{\substack{d\in \N\\X_i \in \mathrm{U}_d(\C)}}\quad \sum_{(i,j)\in E} \gamma_{ij} \ip{X_i}{X_j},
\end{equation}
for Hermitian and positive semidefinite $\Gamma$. We let the SDP relaxation $\sdp(\Gamma)$ be the same as the one in the complex little Grothendieck problem \eqref{eq:sdp-for-complex-little-grothendieck}. We copy it here for convenience
\begin{equation*}
\sdp(\Gamma) = \max_{\substack{d \in \N\\\vec{x}_i \in \C^d\\ \|\vec{x}_i\| = 1}}\quad \sum_{i,j} \gamma_{ij} \ip{\vec{x}_i}{\vec{x}_j}.
\end{equation*}

\begin{question}[Unbounded-dimensional little Grothendieck inequality]\label{q:unbounded-little-grothendieck-inequality} What is the largest real number $\eta$ such that for all Hermitian and positive semidefinite $\Gamma$
\begin{align*}
\eta \sdp(\Gamma) \leq \nopt(\Gamma).
\end{align*}
We could ask the same when restricting $\Gamma$ to the subset of CSP Laplacians. The restriction to Laplacians captures $\ulin{2}{k}$.\footnote{To be precise, to capture $\ulin{2}{k}$ one needs to add the additional constraint that inner products are in the simplex $\Omega_k$ when the CSP is defined over $k$-ary variables, \emph{i.e.}\  \[\nopt(\Gamma) = \max_{\substack{d\in \N\\X_i \in \mathrm{U}_d(\C)\\\ip{X_i}{X_j}\in \Omega_k}}\quad \sum_{(i,j)\in E} \gamma_{ij} \ip{X_i}{X_j}\]}
\end{question}
A remarkable property of unbounded-dimensional little Grothendieck problem is that if $\Gamma$ is any real symmetric matrix $\sdp(\Gamma) = \nopt(\Gamma)$. This follows from the vector-to-unitary construction and Tsirelson's theorem. Therefore the real unbounded-dimensional little Grothendieck inequality becomes an equality.

Finally let us phrase $\nslin{2}{k}$ and $\nlin{2}{k}$ as Grothendieck-type problems. For every $C$ and $W$ as in the set-up above, define two optimization problems
\begin{align}
\nsopt_k(C,W) &\coloneqq \max_{X_i^*X_i = X_i^k = 1}\quad \sum_{(i,j)\in E} w_{ij}(1 - \omega^{-c_{ij}} \ip{X_i}{X_j})\label{eq:nc-smooth-in-grothendieck}\\
\nopt_k(C,W) &\coloneqq \max_{X_i^*X_i = X_i^k = 1}\quad \sum_{(i,j) \in E} w_{ij}(1 - \frac{1}{k}\sum_{s=0}^{k-1}\omega^{-c_{ij}s} \ip{X_i^{s}}{X_j^s})\label{eq:nc-in-grothendieck}
\end{align}
where the optimization is over order-$k$ unitaries of any dimension. The first problem captures $\nslin{2}{k}$ and the second captures $\nlin{2}{k}$. It is easy to see that $\sdp_k(C,W)$ is also an SDP relaxation of $\nopt_k(C,W)$. This is also true of $\ssdp_k(C,W)$ and $\nsopt_k(C,W)$. 

\begin{question}[Order-$k$ unbounded-dimensional little Grothendieck inequalities]\label{q:order-k-unbounded-little-grothendieck}
What are the largest real numbers $\eta_{k}^{S}$ and $\eta_k$ such that for all integers $N$ and matrices $C$ and $W$ as above
\begin{align}
\eta_{k}^{S} \ssdp_k(C,W) &\leq \nsopt_k(C,W),\label{eq:nc-smooth-order-k-grothendieck-inequality}\\
\eta_k \sdp_k(C,W) &\leq \nopt_k(C,W).\label{eq:nc-order-k-grothendieck-inequality}
\end{align}
\end{question}

\end{paragraph}

\subsection{Quantum Max-Cut}\label{sec:quantum-max-cut}
The famous \emph{local Hamiltonian} problem is another physically-motivated generalization of CSPs to the noncommutative setting. Being the most natural $\QMA$-complete problem \cite{kitaev}, this problem has been studied extensively in the literature. See the survey \cite{quantum-hamiltonian-complexity} on quantum Hamiltonian complexity for an introduction to this area. 

The goal of this section is to compare this generalization of CSPs with the setting of noncommutative CSPs we study in this paper. To help with this comparison, we give the definition of quantum $\maxcut$, a $\QMA$-complete special case of the local Hamiltonian problem \cite{montanaro_quantum_max_cut,sevag}. See \cite{sevag,anshu,parekh,king,watts_quantum_max_cut} for some of the recent progress on approximation algorithms (often SDP-based) for this problem.

We begin by rephrasing the familiar classical $\maxcut$ in a language that is closer to the setting of quantum $\maxcut$. First, recall that $\maxcut$ is
\begin{equation}\label{eq:maxcut-in-quantum-max-cut-sec}
\openup\jot
\begin{aligned}[t]
\text{ maximize:}\quad &\sum_{(i,j) \in E} \frac{w_{ij}}{2} (1-x_ix_j) \\
        \text{subject to:}\quad & x_i^2 = 1 \text{ and } x_i \in \mathbb{R},
\end{aligned}
\end{equation}
An equivalent restatement of \eqref{eq:maxcut-in-quantum-max-cut-sec} using Pauli matrices $\sigma_z$ is
\begin{equation}\label{eq:maxcut-paulis}
\openup\jot
\begin{aligned}[t]
\text{ maximize:}\quad &\bra{\phi}\Bigparen{\sum_{(i,j) \in E} \frac{w_{ij}}{2} (1-\sigma_{z,i}\sigma_{z,j})}\ket{\phi} \\
        \text{subject to:}\quad & \ket{\phi} \in (\C^{2})^{\otimes N}, 
            \braket{\phi}{\phi} = 1,
\end{aligned}
\end{equation}
where the optimization is ranging over all $N$-qubit states $\ket{\phi}$
and where $\sigma_{z,i}$ is our notation for the operator \[\underbrace{\overbrace{I\otimes \cdots \otimes I \otimes \sigma_z}^{i} \otimes I \otimes \cdots \otimes I}_{N},\] that acts as identity everywhere except on the $i$-th qubit. Given an assignment $x_1,\ldots,x_N$ in \eqref{eq:maxcut-in-quantum-max-cut-sec}, the state $\ket{\phi} = \ket{x_1}\otimes\cdots\otimes\ket{x_N}$ achieves the same objective value in $\eqref{eq:maxcut-paulis}$. The converse can also be verified, and thus the two problems are equivalent.

Quantum $\maxcut$ is a generalization where one takes into consideration the other two Pauli matrices as well
\begin{equation}\label{eq:quantummaxcut}
\openup\jot
\begin{aligned}[t]
\text{ maximize:}\quad &\bra{\phi}\Bigparen{\sum_{(i,j) \in E} \frac{w_{ij}}{2} (1-\sigma_{x,i}\sigma_{x,j}-\sigma_{y,i}\sigma_{y,j}-\sigma_{z,i}\sigma_{z,j})}\ket{\phi} \\
        \text{subject to:}\quad & \ket{\phi} \in (\C^{2})^{\otimes N}, 
            \braket{\phi}{\phi} = 1.
\end{aligned}
\end{equation}
Whereas in quantum $\maxcut$ the operators are fixed and one is optimizing over states in a Hilbert space of dimension $2^N$, in noncommutative $\maxcut$ as we study in this paper we have
\begin{equation}\label{eq:nc-maxcut-in-quantum-max-cut-sec}
\openup\jot
\begin{aligned}[t]
\text{ maximize:}\quad &\tr\Bigparen{\sum_{i,j} w_{ij} \frac{1-X_iX_j}{2}} \\
        \text{subject to:}\quad & X_i^*X_i = X_i^2 = 1.
\end{aligned}
\end{equation}
where one optimizes over operators (of unbounded dimension). If in noncommutative $\maxcut$, one were also to optimize over the state (so that one is not limited to the tracial state) in addition to the operators, one recovers the norm $\maxcut$ problem 
\begin{equation}\label{eq:nmaxcut_state}
\openup\jot
\begin{aligned}[t]
\text{ maximize:}\quad &\bra{\phi}\Bigparen{\sum_{i,j} w_{ij} \frac{1-X_iX_j}{2}}\ket{\phi} \\
        \text{subject to:}\quad & X_i^*X_i = X_i^2 = 1,\\
                                & \braket{\phi}{\phi} = 1.
\end{aligned}
\end{equation}
Deriving this from the definition of norm variant \eqref{eq:nlin2k_norm} is straightforward. As mentioned earlier this variant is an efficiently-solvable SDP.

\appendix
\section{Proofs from Section \ref{sec:construct-gwb}}\label{sec:representation-calculations}

\begin{lemma}[Restatement of Lemma \ref{lem:claim1}] Every element of $G_{n,f}^k$ is of the form $J^{b}c^sp^\alpha$ for some $(b,s,\alpha) \in \Z_2\times\Z_k\times\mathrm{M}_{[n-1]\times\Z_k}(\Z_2)$.
\end{lemma}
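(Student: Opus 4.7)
The plan is to reduce an arbitrary word in the generators to the claimed normal form in three stages, using only the defining relations.

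First I would exploit the centrality of $J$. The relations $J^2=e$, $[p_i,J]=e$ and $[c,J]=e$ mean $J$ lies in the centre and has order $2$, so every occurrence of $J$ in a word can be slid to the leftmost position and all occurrences combined into a single $J^b$ with $b\in\Z_2$. Also, $p_i^2=J$ gives $p_i^{-1}=Jp_i$, so (after migrating the resulting $J$ to the left) we may assume every $p_i$ occurs with exponent $+1$. After this stage, an arbitrary element has the form $J^{b}\cdot w$ where $w$ is a word in $\{c^{\pm1},p_1,\ldots,p_{n-1}\}$ with positive $p_i$-exponents.

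Second, I would isolate a single power of $c$ on the left. Writing $w=c^{s_0}p_{i_1}c^{s_1}p_{i_2}\cdots p_{i_m}c^{s_m}$ and inserting cancelling factors $c^{-T_j}c^{T_j}$ with $T_j=s_j+s_{j+1}+\cdots+s_m$, one rewrites
\[w\;=\;c^{S}\,p_{i_1,T_1}p_{i_2,T_2}\cdots p_{i_m,T_m},\qquad S=\sum_{j=0}^m s_j,\]
where $p_{i,t}:=c^{-t}p_ic^t$. The relation $c^k=e$ then allows us to take $S\in\Z_k$ and each $T_j\in\Z_k$.

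Third, I would reorder the factors $p_{i_j,T_j}$ lexicographically and reduce each exponent mod $2$. The defining relator $J^{f_{i,j,t}}[p_i,c^{-t}p_jc^t]=e$ reads $p_{i,0}\,p_{j,t}=J^{f_{i,j,t}}p_{j,t}\,p_{i,0}$; conjugating by $c^{-s}(\cdot)c^{s}$ and using centrality of $J$ gives the general commutation relation
\[p_{i,s}\,p_{j,t}\;=\;J^{f_{i,j,t-s}}\,p_{j,t}\,p_{i,s}.\]
Together with $p_{i,t}^{2}=c^{-t}p_i^{2}c^{t}=J$, this shows that the $p_{i,t}$ generate (modulo the central $J$) an elementary abelian $2$-group, so a bubble sort on the pairs $(i_j,T_j)$ reorders them lexicographically at the cost of extra central factors $J$, which get absorbed into the leftmost $J^b$, and the doubled occurrences collapse via $p_{i,t}^2=J$. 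The remaining word is precisely $p^{\alpha}$ for some $\alpha\in\mathrm{M}_{[n-1]\times\Z_k}(\Z_2)$, yielding the normal form $J^{b'}c^{S}p^{\alpha}$.

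The main obstacle is the bookkeeping in the third step: one has to verify the twisted commutation relation $p_{i,s}\,p_{j,t}=J^{f_{i,j,t-s}}p_{j,t}\,p_{i,s}$ carefully (the index shift $t\mapsto t-s$ comes from conjugating the relator by $c^{s}$ and invoking $c^k=e$), and then check that the sorting procedure indeed respects this commutation with values of $f$ always lying in $\Z_2$. Once that is in place, everything else is routine rewriting using the centrality of $J$ and the orders of $c$ and $p_i$. Uniqueness of the form is a separate matter handled in Lemma~\ref{lem:unique-normal-form}; for the present lemma only existence is required.
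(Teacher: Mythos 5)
Your proposal is correct and follows essentially the same route as the paper's proof: collect the central $J$'s, push all powers of $c$ to the left via $(c^{-t}p_ic^t)c^s=c^s(c^{-(t+s)}p_ic^{t+s})$ (equivalently your cancelling-factor rewriting into the $p_{i,t}$), and sort the $p_{i,t}$ using the twisted commutation $p_{i,s}p_{j,t}=J^{f_{i,j,t-s}}p_{j,t}p_{i,s}$, absorbing the resulting $J$'s. Your treatment of inverse generators via $p_i^{-1}=Jp_i$ and of repeated factors via $p_{i,t}^2=J$ just makes explicit steps the paper leaves implicit.
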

In the following lemma, we will show these are the normal form words of $G_{n,f}^k$
\begin{proof}
Elements of $G_{n,f}^k$ are arbitrary products of terms of the form $c^{-t}p_ic^{t}$ and powers of $J$ and $c$. Since $J$ commutes with everything we can always bring terms that are powers of $J$ to the beginning. Similarly, powers of $c$ can always be moved to the beginning because $(c^{-t}p_i c^t)c^s = c^s(c^{-(t+s)}p_i c^{t+s})$.
Finally, using the commutation relations of $G_{n,f}^k$, we have $[c^{-s} p_i c^{s},c^{-t} p_j c^{t}] = J^{f_{i,j,t-s}}$, and therefore elements of the form $c^{-t} p_i c^{t}$ for all $i$ and $t$, can always be moved past each other, which changes the exponent of $J$.
\end{proof}

\begin{lemma}[Restatement of Lemma \ref{lem:unique-normal-form}]
    $|G_{n,f}^k| = 2k\cdot 2^{k(n-1)}$. In particular every element in $G_{n,f}^k$ has a unique normal form.
\end{lemma}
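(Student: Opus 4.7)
The upper bound $|G_{n,f}^k|\leq 2k\cdot 2^{k(n-1)}$ is already Lemma~\ref{lem:claim1}, so it remains to prove the matching lower bound, which is equivalent to showing that distinct normal forms $J^bc^sp^\alpha$ represent distinct group elements. The plan is to construct an explicit action of $G_{n,f}^k$ on the set $S\coloneqq\Z_2\times\Z_k\times\mathrm{M}_{[n-1]\times\Z_k}(\Z_2)$ of cardinality $2k\cdot 2^{k(n-1)}$ in which the word $J^bc^sp^\alpha$ moves $(0,0,0)$ to $(b,s,\alpha)$; this instantly yields faithfulness on normal forms, hence the lower bound.

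I would define the action on generators to mirror the normalisation procedure used in Lemma~\ref{lem:claim1}. Let $J$ act by $(b,s,\alpha)\mapsto(b+1,s,\alpha)$ and $c$ by $(b,s,\alpha)\mapsto(b,s+1,\alpha)$. For each $i\in[n-1]$, define
\begin{align*}
p_i\cdot(b,s,\alpha)\coloneqq\Bigparen{b+\alpha_{i,s}+\!\!\sum_{(j,t)<(i,s)}\!\!\alpha_{j,t}\,f_{i,j,t-s},\;s,\;\alpha+e_{i,s}},
\end{align*}
where $<$ is the lexicographic order on $[n-1]\times\Z_k$ from the definition of $p^\alpha$ and $e_{i,s}$ is the basis matrix with a single $1$ at position $(i,s)$. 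Morally, left-multiplying $J^bc^sp^\alpha$ by $p_i$ turns its $c^s$-prefix into $p_{i,s}$, and the update on $b$ records the $J$-factors picked up when transposing $p_{i,s}$ through the existing letters of $p^\alpha$ in order to return to normal form (including the $p_{i,s}^2=J$ contribution when $\alpha_{i,s}=1$).

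The technical heart of the proof is verifying that this action respects all defining relations of $G_{n,f}^k$. The relations $J^2=e$, $c^k=e$, $[c,J]=e$, and $[p_i,J]=e$ are immediate from the definitions. The relation $Jp_i^2=e$ follows from a one-line cancellation: applying $p_i$ twice restores $\alpha$ and contributes $2\sum_{(j,t)<(i,s)}\alpha_{j,t}f_{i,j,t-s}+\alpha_{i,s}+(\alpha_{i,s}+1)\equiv 1\pmod{2}$ to $b$. The main obstacle is the commutator relation $J^{f_{i,j,t}}[p_i,c^{-t}p_jc^t]=e$: after expanding both sides on $(b,s,\alpha)$ and cancelling common terms, the difference in the $b$-coordinate reduces modulo $2$ to
\begin{align*}
f_{i,j,t}\cdot\mathbf{1}\{(j,s+t)<(i,s)\}+f_{j,i,-t}\cdot\mathbf{1}\{(i,s)<(j,s+t)\},
\end{align*}
which equals $f_{i,j,t}$ by the hypothesis $f_{i,j,t}=f_{j,i,-t}$ combined with the trichotomy of the lex order (the degenerate case $(i,s)=(j,s+t)$ forces $i=j$ and $t=0$, where $f_{i,i,0}=0$ anyway). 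Having a well-defined action, a short induction on the number of letters in the normal form shows that $J^bc^sp^\alpha\cdot(0,0,0)=(b,s,\alpha)$, so the $2k\cdot 2^{k(n-1)}$ normal forms act as distinct permutations of $S$. This proves $|G_{n,f}^k|\geq 2k\cdot 2^{k(n-1)}$, and combined with Lemma~\ref{lem:claim1} yields both equality and uniqueness of normal forms.
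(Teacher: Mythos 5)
Your proposal is correct and follows essentially the same route as the paper: the paper also proves the lower bound by defining a permutation action of $G_{n,f}^k$ on the set of normal-form words (with exactly your generator formulas, the $\leq$ versus $<$ in the sum being immaterial since $f_{i,i,0}=0$), checking the relators act trivially, and applying the action to the empty word to separate distinct normal forms. Your verification of the $Jp_i^2$ and commutator relators matches the paper's computation, so there is nothing substantive to add.
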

\begin{proof}
    Let $X$ be the set of all $2k\cdot 2^{k(n-1)}$ normal form words $J^bc^sp^\alpha$. We let $\pi$ be the action of the group $G_{n,f}^k$ on the set $X$ defined on the generators so that
    \begin{align*}
        &\pi_J(J^{b}c^sp^\alpha)=J^{b+1}c^sp^\alpha,\\
        &\pi_c(J^{b}c^sp^\alpha)=J^{b}c^{s+1}p^\alpha,\\
        &\pi_{p_i}(J^{b}c^sp^\alpha)=J^{b+\alpha_{i,s}+\sum_{(j,t)\leq(i,s)}\alpha_{j,t}f_{i,j,t-s}}c^sp^{\alpha+E_{i,s}},
    \end{align*}
    where $E_{i,s}$ is the standard basis matrix that is $1$ at the entry $(i,s)$ and $0$ everywhere else. To show that $\pi$ indeed extends to an action of the group $G_{n,f}^k$ on $X$ we need to show that every relator in the presentation acts as the identity map on $X$.
    \begin{claim*} For every relator $r$ in the presentation of $G_{n,f}^k$, the map $\pi_r$ is the identity map on $X$.
    \end{claim*}
    \begin{proof}
    We can directly verify that 
    \begin{align*}
    &\pi_{J^2}(J^{b}c^sp^\alpha)=(J^{b}c^sp^\alpha)\\
    &\pi_{c^k}(J^{b}c^sp^\alpha)=(J^{b}c^sp^\alpha)\\
    &\pi_{p_iJ}(J^{b}c^sp^\alpha)=\pi_{Jp_i}(J^{b}c^sp^\alpha)\\
    &\pi_{cJ}(J^{b}c^sp^\alpha)=\pi_{Jc}(J^{b}c^sp^\alpha)
    \end{align*}
    Similarly we have \[\pi_{Jp_i^2}(J^{b}c^sp^\alpha)=\pi_J(J^{b + 1}c^sp^\alpha) =J^b c^sp^\alpha.\] Finally, a tedious but straightforward calculation verifies that
    $$\pi_{[p_i,c^{-t}p_jc^t]}(J^{b}c^sp^\alpha)=J^{b+\delta_{(i,s)\leq(j,s+t)}f_{i,j,t}+\delta_{(j,s+t)\leq(i,s)}f_{j,i,-t}}c^sp^\alpha,$$
    which is $J^{b+f_{i,j,t}}c^sp^\alpha$, as desired, owing to our assumptions on the coefficients $f_{i,j,t}$.
    \end{proof}
    Let $\epsilon = J^0 c^0 p^0 \in X$ denote the empty word and note that $J^bc^sp^\alpha=\pi_{J^bc^sp^\alpha}(\epsilon)$ are all distinct words in $X$. Therefore distinct normal form words in $X$ are also distinct as elements of $G_{n,f}^k$. We conclude that  $|G_{n,f}^k| = |X|$.
\end{proof}

\begin{lemma}[Restatement of Lemma \ref{lem:centrality}]
$H$ is a central subgroup, and thus is trivially normal in $G_{n,f}^k$.
\end{lemma}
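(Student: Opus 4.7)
The plan is to verify directly that each generator $r_i = J^k\, p_{i,0}\, p_{i,1}\,\cdots\, p_{i,k-1}$ of $H$, where $p_{i,t}:=c^{-t}p_ic^t$, commutes with each generator of $G_{n,f}^k$. Since $J$ is manifestly central by the defining relations $[p_i,J]=[c,J]=e$, it suffices to handle commutativity with $c$ and with each $p_j$ for $j=1,\dots,n-1$. The key preliminary identity I would use is the ``shifted'' commutation rule $p_{i,s}\,p_{j,t} = J^{f_{i,j,t-s}}\,p_{j,t}\,p_{i,s}$, which follows from conjugating the defining relation $J^{f_{i,j,t}}[p_i,p_{j,t}]=e$ by $c^s$ and using centrality of $J$.

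To show $[c,r_i]=e$, I would observe that conjugation by $c$ cyclically shifts the indices, $c\,p_{i,t}\,c^{-1}=p_{i,t-1}$ (mod $k$), so that $c\,r_i\,c^{-1} = J^k\,p_{i,k-1}\,p_{i,0}\,\cdots\,p_{i,k-2}$. Reducing this back to $r_i$ amounts to migrating $p_{i,k-1}$ past each of $p_{i,0},\dots,p_{i,k-2}$, which accumulates the $J$-phase $\sum_{u=1}^{k-1} f_{i,i,u}$. Inspecting the definition of $f$ gives $f_{i,i,u}=1$ exactly when $u\in\{1,-1\}=\{1,k-1\}$, so the total exponent is $2\equiv 0\pmod{2}$, and $c\,r_i\,c^{-1}=r_i$.

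For $[p_j,r_i]=e$, a parallel argument yields $p_j\, r_i = J^{\sum_{t=0}^{k-1} f_{j,i,t}}\, r_i\, p_j$, and I would verify that the sum vanishes modulo $2$ by case analysis: if $j=i$ the contributions come from $t=\pm 1$; if $j<i$ they come from $t=0$ and $t=1$; if $j>i$ they come from $t=0$ and $t=-1$; and in every case the sum equals $2$. Combining the three cases shows $[p_j,r_i]=e$ for all $j$, and putting everything together each $r_i$ is central, hence so is $H$.

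The whole argument is bookkeeping rather than conceptual, and the only ``hard'' part is laying out the case analysis for $f_{j,i,t}$ cleanly so that it is transparent that the three cases exhaust the possibilities and all yield an even total exponent on $J$. One minor item to flag is the degenerate range $k\le 2$, where the sets $\{1\}$ and $\{k-1\}$ coincide and the sums above need to be re-checked; but for the regime of interest $k\ge 3$, the argument goes through verbatim.
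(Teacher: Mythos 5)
Your argument is correct and is essentially the paper's own proof: the paper likewise checks centrality of each $r_i$ generator-by-generator, computing $Jr_iJ^{-1}$, $c^{-1}r_ic$ (your conjugation by $c$ is the same cyclic-shift computation), and $p_jr_ip_j^{-1}$, with the accumulated $J$-exponent $\delta_{i\neq j}+\delta_{i\leq j}+\delta_{j\leq i}=2$ vanishing mod $2$ exactly as in your case analysis of $\sum_t f_{j,i,t}$. Your caveat about $k\le 2$ is reasonable but immaterial for the paper's regime $k\ge 3$.
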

 \begin{proof}   
    Using the commutation relations, we can write
    \begin{align*}
        &Jr_iJ^{-1}=r_i\\
        &c^{-1}r_ic=J^{k}(c^{-1}p_ic)\cdots (c^{-(k-1)}p_ic^{k-1})p_i=J^3p_i(c^{-1}p_ic)\cdots (c^{-(k-1)}p_ic^{k-1})=r_i\\
        &p_jr_ip_j^{-1}=J^{k + \delta_{i\neq j} + \delta_{i\leq j} +\delta_{j\leq i}}p_i(c^{-1}p_ic)\cdots (c^{-(k-1)}p_ic^{k-1})=r_i.
    \end{align*}
\end{proof}
\begin{theorem}[Restatement of Theorem \ref{thm:gnk-order}]
    The group $\gwb_n^k$ is finite of order $2k\cdot 2^{(k-1)(n-1)}$. Further, the words $p^{\alpha}$ for $\alpha\in\mathrm{M}_{[n-1]\times\Z_k}(\Z_2)$ with $\alpha_{i,k-1}=0$ are equal to $e$ or $J$ if and only if $\alpha=0$.
\end{theorem}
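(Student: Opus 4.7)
The plan is to quotient $G_{n,f}^k$ by the central subgroup $H=\group{r_1,\ldots,r_{n-1}}$ and show $|H|=2^{n-1}$; combined with Lemma~\ref{lem:unique-normal-form} this yields $|\gwb_n^k| = 2k\cdot 2^{k(n-1)}/2^{n-1} = 2k\cdot 2^{(k-1)(n-1)}$, and the ``if and only if'' will fall out of identifying which normal-form words lie in $H$.

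First I would rewrite $r_i$ in normal form. By definition $r_i = J^k p_{i,0} p_{i,1} \cdots p_{i,k-1}$, and since $(i,0),(i,1),\ldots,(i,k-1)$ already appear in lexicographic order, no reordering is needed: $r_i = J^k p^{\beta_i}$, where $\beta_i\in\M_{[n-1]\times\Z_k}(\Z_2)$ has $\beta_{i;j,t}=\delta_{i,j}$. Next I would verify $r_i^2 = e$. The subgroup generated by $J$ and the $p_{i,t}$ is nilpotent of class~$2$ (every commutator lies in $\group{J}$), so the standard identity
\[\Bigl(\prod_{s=0}^{k-1} p_{i,s}\Bigr)^{\!2} = \prod_{s=0}^{k-1} p_{i,s}^2 \cdot \prod_{0\leq s<t\leq k-1}[p_{i,t},p_{i,s}]\]
applies. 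Each $p_{i,s}^2=J$ contributes $J^k$, and $[p_{i,t},p_{i,s}]=J^{f_{i,i,s-t}}$ is nontrivial exactly when $s-t\equiv\pm 1 \pmod{k}$, which happens for the $k-1$ consecutive pairs $(s,s+1)$ together with the single ``wrap-around'' pair $(0,k-1)$, contributing another $J^k$. Using $J^2=e$, this gives $r_i^2 = J^{4k} = e$.

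With the normal forms of the $r_i$ in hand, I would establish $|H|=2^{n-1}$ by exploiting centrality (Lemma~\ref{lem:centrality}). For $i_1<\cdots<i_m$, centrality and the lexicographic ordering of indices give
\[r_{i_1} r_{i_2} \cdots r_{i_m} = J^{km}\,p_{i_1,0}\cdots p_{i_1,k-1}\,p_{i_2,0}\cdots p_{i_m,k-1} = J^{km}\,p^{\sum_i c_i \beta_i},\]
already in normal form, where $c_i$ is the indicator of $i\in\{i_1,\ldots,i_m\}$. Because the $\beta_i$ have disjoint supports in $\M_{[n-1]\times\Z_k}(\Z_2)$, they are $\Z_2$-linearly independent; Lemma~\ref{lem:unique-normal-form} then forces this product to vanish only when all $c_i=0$. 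Hence $|H|=2^{n-1}$, yielding the claimed order of $\gwb_n^k$.

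Finally, for the characterisation, the coset of $g=J^b c^s p^\alpha$ contains exactly the elements $J^? c^s p^{\alpha+\sum_i c_i \beta_i}$ for $(c_i)\in\Z_2^{n-1}$; since $\bigl(\sum_i c_i \beta_i\bigr)_{j,t}=c_j$, the unique choice $c_i=\alpha_{i,k-1}$ produces the representative with all $(i,k-1)$-entries zero, confirming that the words $p^\alpha$ with $\alpha_{i,k-1}=0$ for all $i$ form a complete and non-redundant set of coset representatives (the count matches the order of $\gwb_n^k$). For the ``if and only if'', $p^\alpha \in \{e,J\}$ in $\gwb_n^k$ means $J^{-b}p^\alpha\in H$ for some $b\in\Z_2$; uniqueness of normal forms in $G_{n,f}^k$ then forces $\alpha=\sum_i c_i\beta_i$, so $\alpha_{i,k-1}=c_i$, and the hypothesis $\alpha_{i,k-1}=0$ gives $c_i=0$ for all $i$, hence $\alpha=0$. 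The main obstacle is the $r_i^2 = e$ computation: one must carefully enumerate the $k$ commutator pairs contributing~$J$ and combine them with the $k$ squares $p_{i,s}^2=J$; the remaining steps follow cleanly from Lemma~\ref{lem:unique-normal-form} and the centrality of $H$ given by Lemma~\ref{lem:centrality}.
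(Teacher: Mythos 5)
Your proposal is correct and follows essentially the same route as the paper: quotient by the central subgroup $H$, verify $r_i^2=e$ by counting the $J$-contributions from the squares $p_{i,s}^2$ and the $k$ noncommuting pairs, invoke uniqueness of normal forms (Lemma~\ref{lem:unique-normal-form}) to get $|H|=2^{n-1}$, and deduce the second statement by noting that elements of $H\cup JH$ have normal forms $J^{b}p^{\sum_i c_i\beta_i}$ whose $(i,k-1)$-entries are $c_i$. Your write-up just makes explicit the normal-form bookkeeping that the paper's proof leaves implicit.
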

\begin{proof}
    Since $\gwb_n^k=G_{n,f}^k/H$ we have $|\gwb_n^k|=|G_{n,f}^k|/|H|$. We calculate $|H|$. First note that from the commutation relations of $G_{n,f}^k$, we have
    \begin{align*}
    [c^{-s}p_ic^s,c^{-t}p_i c^t] &= c^{-s}[p_i,c^{-(t-s)}p_ic^{t-s}]c^s=\begin{cases} J &\quad\text{if } t-s = 1,k-1 \Mod{k},\\
    e &\quad\text{otherwise.}
    \end{cases}
    \end{align*}
    Now a repeated application of these commutation relations gives that $r_i^2 = 1$ for all $i \in [n-1]$.
    Therefore $|H| \leq 2^{n-1}$ as it is generated by $n-1$ commuting generators of order $2$.
    However by Lemma \eqref{lem:unique-normal-form} we know that all the normal form words are distinct in $G_{n,f}^k$. Thus $\Pi_{i=1}^{n-1} r_i^{e_i}$ for $e_i\in \{0,1\}$ are all distinct, and $|H| = 2^{n-1}$. This gives the result $|\gwb_n^k| = |G_{n,f}^k|/|H|=2k\cdot 2^{k(n-1)}/2^{n-1}=2k\cdot 2^{(k-1)(n-1)}$.
    
    For the second statement, none of the $p^\alpha$ considered are contained in $H$ or $JH$, so they are not equal to $e$ or $J$ in the quotient group $\gwb_n^k$.
\end{proof}

\section{Proof of Lemma \ref{lemma:increasing-ratio}}\label{sec:increasing-ratio}

\begin{lemma}\label{lem:convex-condition}
    Let $f:[-1,1]\rightarrow\R$ be a smooth function. If $f(1)\leq 1$ and $f$ is convex, then $g(x):=\frac{1-f(x)}{1-x}$ is increasing.
\end{lemma}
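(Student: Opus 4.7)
The plan is to compute $g'(x)$ directly and show it is non-negative on $[-1,1)$ by combining the convexity of $f$ (in the form of the tangent-line inequality) with the boundary hypothesis $f(1) \leq 1$.

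First I would differentiate: using the quotient rule,
\[
g'(x) \;=\; \frac{-f'(x)(1-x) + (1-f(x))}{(1-x)^2}\,.
\]
Since $(1-x)^2 > 0$ on $[-1,1)$, the sign of $g'(x)$ agrees with the sign of the numerator
\[
N(x) \;:=\; 1 - f(x) - f'(x)(1-x)\,.
\]
Thus it suffices to show $N(x) \geq 0$ for all $x \in [-1,1)$, i.e.\ $f(x) + f'(x)(1-x) \leq 1$.

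The key step is to invoke convexity of $f$ via the supporting-line inequality at the point $x$: for every $y \in [-1,1]$,
\[
f(y) \;\geq\; f(x) + f'(x)(y-x)\,.
\]
Specialising to $y = 1$ gives $f(1) \geq f(x) + f'(x)(1-x)$. Combined with the hypothesis $f(1) \leq 1$, this yields
\[
f(x) + f'(x)(1-x) \;\leq\; f(1) \;\leq\; 1\,,
\]
so $N(x) \geq 0$ and therefore $g'(x) \geq 0$. This shows $g$ is non-decreasing on $[-1,1)$, which is the desired conclusion. There is no real obstacle here; the only minor technicality is handling the domain at $x=1$, but since the lemma concerns monotonicity on the open interval where $g$ is defined (and the statement in Lemma~\ref{lemma:increasing-ratio} uses $\lambda \in [-1,1]$ effectively meaning where the ratio makes sense), the tangent-line argument suffices.
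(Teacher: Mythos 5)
Your proof is correct and takes essentially the same route as the paper: both compute $g'$ by the quotient rule and reduce to showing the numerator $1-f(x)-(1-x)f'(x)\geq 0$. You obtain this from the first-order (supporting-line) characterisation of convexity at $y=1$ together with $f(1)\leq 1$, whereas the paper derives the same inequality by observing that this numerator has derivative $-(1-x)f''(x)\leq 0$ and value $1-f(1)\geq 0$ at $x=1$ --- an equivalent argument.
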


\begin{proof}
    Since $f$ is smooth and convex, $f''(x)\geq 0$ for all $x$. Therefore, $h(x):=1-f(x)-(1-x)f'(x)$ is decreasing, as its derivative $h'(x)=-f'(x)+f'(x)-(1-x)f''(x)=-(1-x)f''(x)\leq 0$. Also, we have that $h(1)=1-f(1)\geq 0$. Putting these together, $h(x)\geq 0$ for all $x\in[-1,1]$. The derivative of $g$ is $g'(x)=\frac{-f'(x)(1-x)+(1-f(x))}{(1-x)^2}=\frac{h(x)}{(1-x)^2}\geq 0$, and as such $g$ is increasing. 
\end{proof}

\begin{lemma}\label{lem:positive-endpoint}
    Let $f:[-1,1]\rightarrow\R$ be $f(x)=\frac{1}{k}+\frac{k}{\pi^2}\Re\squ*{\mathrm{Li}_2(x)-\mathrm{Li}_2(x e^{i\frac{2\pi}{k}})}$ for $k\geq 3$. Then $f(1)=1$.
\end{lemma}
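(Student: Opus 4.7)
The plan is to evaluate each piece of $f(1)$ in closed form and show the sum telescopes to $1$.

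First, I would use the classical value $\mathrm{Li}_2(1) = \sum_{n \geq 1} 1/n^2 = \pi^2/6$. Second, for the term $\Re[\mathrm{Li}_2(e^{i2\pi/k})]$, I would use the power series definition to write
\[
\Re[\mathrm{Li}_2(e^{i\theta})] = \sum_{n=1}^{\infty} \frac{\cos(n\theta)}{n^2},
\]
which is the Fourier expansion of a well-known periodic quadratic polynomial: for $\theta \in [0, 2\pi]$,
\[
\sum_{n=1}^{\infty} \frac{\cos(n\theta)}{n^2} = \frac{\pi^2}{6} - \frac{\pi\theta}{2} + \frac{\theta^2}{4}.
\]
(This identity is standard; it is the Fourier series of the second periodic Bernoulli polynomial and can be verified either by recognising the Fourier coefficients or by integrating twice the identity $\sum_{n \geq 1} \sin(n\theta)/n = (\pi - \theta)/2$ on $(0, 2\pi)$.)

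Plugging in $\theta = 2\pi/k$, I would get
\[
\Re[\mathrm{Li}_2(e^{i 2\pi/k})] = \frac{\pi^2}{6} - \frac{\pi^2}{k} + \frac{\pi^2}{k^2},
\]
so that
\[
\Re[\mathrm{Li}_2(1) - \mathrm{Li}_2(e^{i 2\pi/k})] = \frac{\pi^2}{k} - \frac{\pi^2}{k^2} = \pi^2 \cdot \frac{k-1}{k^2}.
\]
Substituting this into $f(1)$ yields
\[
f(1) = \frac{1}{k} + \frac{k}{\pi^2} \cdot \pi^2 \cdot \frac{k-1}{k^2} = \frac{1}{k} + \frac{k-1}{k} = 1,
\]
as required.

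There is no real obstacle here; the only nontrivial ingredient is the Fourier identity for $\sum \cos(n\theta)/n^2$, which is a standard fact that I would either cite or justify in one line by differentiating term-by-term.
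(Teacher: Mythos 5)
Your proposal is correct and is essentially the paper's own argument: the paper proves the same Fourier identity by computing the Fourier coefficients of $x(1-x)$ on $[0,1]$, which after the substitution $\theta=2\pi x$ is exactly your expansion $\sum_{n\geq1}\cos(n\theta)/n^2=\frac{\pi^2}{6}-\frac{\pi\theta}{2}+\frac{\theta^2}{4}$, and then evaluates at $\theta=2\pi/k$ together with $\mathrm{Li}_2(1)=\pi^2/6$ just as you do. The only difference is presentational (you cite the Bernoulli-polynomial identity directly, the paper derives it), so no changes are needed.
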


\begin{proof}
    Consider the Fourier series of the function $p(x)=x(1-x)$ on the interval $[0,1]$. The Fourier coefficients are
    \begin{align*}
        \hat{p}(n)=\int_0^1x(1-x)e^{-2\pi i n x}dx=-\frac{1}{2\pi n^2},
    \end{align*}
    for $n\neq 0$, and $\hat{p}(0)=\frac{1}{6}$. Thus, $x(1-x)=\frac{1}{6}-\frac{1}{\pi^2}\sum_{n=1}^{\infty}\frac{\cos(2\pi n x)}{n^2}$. In particular, as $\frac{1}{k}\in[0,1]$, $-\sum_{n=1}^{\infty}\frac{\cos(2\pi n/k)}{n^2}=\pi^2\parens*{\frac{1}{k}\parens*{1-\frac{1}{k}}-\frac{1}{6}}$. Using this and the fact that $\sum_{n=1}^{\infty}\frac{1}{n^2}=\frac{\pi^2}{6}$,
    \begin{align*}
        f(1)=\frac{1}{k}+\frac{k}{\pi^2}\sum_{n=1}^{\infty}\frac{1-\cos(2\pi n/k)}{n^2}=\frac{1}{k}+\frac{k}{\pi^2}\squ*{\frac{\pi^2}{6}+\frac{\pi^2}{k}\parens*{1-\frac{1}{k}}-\frac{\pi^2}{6}}=1.
    \end{align*}
\end{proof}

\begin{lemma}\label{lem:convex-function}
    Let $f:[-1,1]\rightarrow\R$ be as in the previous lemma. Then, $f$ is convex.
\end{lemma}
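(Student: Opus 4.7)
The plan is to show $f''(x)\geq 0$ for all $x\in(-1,1)$, which together with continuity of $f$ on $[-1,1]$ (absolute convergence of $\sum a_n/n^2$, where $a_n := 1-\cos(2\pi n/k) = 2\sin^2(\pi n/k)\geq 0$) gives convexity on the closed interval. Expanding $\mathrm{Li}_2$ via its defining series, $f(x) = \frac{1}{k} + \frac{k}{\pi^2}\sum_{n=1}^\infty\frac{a_n}{n^2}x^n$, so $f''(x) = \frac{k}{\pi^2}\sum_{n=2}^\infty\frac{a_n(n-1)}{n}x^{n-2}$ has non-negative coefficients; the case $x\in[0,1)$ is therefore immediate, and the work lies in handling $x\in[-1,0)$.

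Set $c = \cos(2\pi/k)$ and
\[
p(x) := -\ln(1-x) + \tfrac{1}{2}\ln(1-2xc+x^2) = \sum_{n=1}^\infty\frac{a_n}{n}x^n,
\]
so that using $\frac{d}{dy}\mathrm{Li}_2(y)=-\ln(1-y)/y$ I get $f'(x) = \frac{k}{\pi^2 x}p(x)$ and $f''(x) = \frac{k}{\pi^2 x^2}h(x)$, where $h(x) := xp'(x)-p(x)$. The power series of $h$ is $\sum_{n=2}^\infty a_n\frac{n-1}{n}x^n$, so $h(0)=0$, and a direct computation yields $h'(x) = xp''(x)$. Consequently, if I can establish $p''(x)\geq 0$ on $[-1,1]$, then $h'$ is non-positive on $[-1,0]$ and non-negative on $[0,1]$, so the minimum of $h$ on $[-1,1]$ occurs at $x=0$ and hence $h\geq 0$, giving the desired bound $f''\geq 0$.

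The main obstacle is verifying $p''\geq 0$. Two differentiations give
\[
p''(x) = \frac{1}{(1-x)^2} - \frac{d - 2xc + x^2}{(1-2xc+x^2)^2},\qquad d := 2c^2-1 = \cos(4\pi/k).
\]
Placing these over the common denominator $(1-x)^2(1-2xc+x^2)^2$, which is non-negative on $[-1,1]$, and carrying out the expansion, the numerator should collapse to $2(1-c)\,q(x)$ where
\[
q(x) := x^3 + (1-c)x^2 - (2c+1)x + (1+c).
\]
Since $c<1$ for $k\geq 3$, the prefactor $2(1-c)$ is positive, so this reduces $p''\geq 0$ on $[-1,1]$ to $q\geq 0$ on the same interval.

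To finish, one analyses the cubic $q$. Its derivative factors nicely as $q'(x) = 3(x+1)\Bigparen{x-\tfrac{2c+1}{3}}$, so the critical points are $-1$ (a local maximum) and $\tfrac{2c+1}{3}$ (a local minimum); for $k\geq 3$ we have $c\in[-\tfrac12,1)$, so the latter lies in $[0,1)$. One checks $q(-1)=2(1+c)>0$ and $q(1)=2(1-c)>0$ directly, and using the critical-point relation $3y^2 = (2c+1)-2(1-c)y$ with $y=\tfrac{2c+1}{3}$ to reduce the degree,
\[
q\Bigparen{\tfrac{2c+1}{3}} = \frac{2(1-c)(2c^2+14c+11)}{27}.
\]
The quadratic $2c^2+14c+11$ has roots $\tfrac{-7\pm 3\sqrt{3}}{2}\approx -0.902,-6.098$, both strictly less than $-\tfrac12$, so it is positive on $[-\tfrac12,1)$. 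Thus $q$ is strictly positive at both critical points in $[-1,1]$ and at the endpoint $1$, hence $q\geq 0$ on $[-1,1]$, completing the chain of reductions and proving that $f$ is convex.
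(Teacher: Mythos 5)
Your proof is correct and follows essentially the same route as the paper's: after computing $f''$ in closed form, both arguments reduce to the positivity on $[-1,1]$ of the very same cubic (your $2q(x)$ is exactly the paper's $2x^3+Cx^2+(2C-6)x-C+4$ with $C=2(1-c)=4\sin^2(\pi/k)$), verified by evaluating it at its critical points $-1$ and $\tfrac{2c+1}{3}$. The difference is only organizational: you phrase the key step as convexity of the auxiliary function $p$ and track the sign of $h'=xp''$, while the paper evaluates the equivalent function $g=2h$ at its boundary and critical points.
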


\begin{proof}
    It suffices to show that the second derivative of $f$ is positive. Since the dilogarithm has integral expression $\mathrm{Li}_2(x)=-\int_0^x\frac{\ln(1-t)}{t}dt$, the first derivative of $f$ is
    \begin{align*}
        f'(x)=\frac{k}{\pi^2}\Re\squ*{\frac{-\ln(1-x)+\ln(1-xe^{i\frac{2\pi}{k}})}{x}}=\frac{k}{2\pi^2}\frac{1}{x}\ln\parens*{1+C\frac{x}{(1-x)^2}},
    \end{align*}
    where $C:=4\sin^2\parens*{\frac{\pi}{k}}\in(0,3]$. Differentiating again,
    \begin{align*}
        f''(x)=\frac{k}{2\pi^2}\squ*{-\frac{1}{x^2}\ln\parens*{1+C\frac{x}{(1-x)^2}}+C\frac{1+x}{x(1-x)\parens*{(1-x)^2+Cx}}}.
    \end{align*}
    This is positive if and only if the multiple by $2\pi^2x^2/k$, $g(x):=-\ln\parens*{1+C\frac{x}{(1-x)^2}}+C\frac{x(1+x)}{(1-x)\parens*{(1-x)^2+Cx}}$ is. To show $g$ is positive, we need only show that it is positive at the boundary and at any inflection point. At $x=-1$,
    $$g(-1)=-\ln\parens*{1-\frac{C}{4}}\geq 0,$$
    as $C\leq 3$ so $1-\frac{C}{4}\leq 1$. In the limit $x\rightarrow 1$, $-\ln\parens*{1+C\frac{x}{(1-x)^2}}\sim 2\ln(1-x)$ and $C\frac{x(1+x)}{(1-x)\parens*{(1-x)^2+Cx}}\sim\frac{2}{1-x}$, so $\lim_{x\rightarrow 1}g(x)=+\infty$. To find the values of $g$ at the inflection points, consider the derivative
    \begin{align*}
        g'(x)=\frac{Cx(2x^3+Cx^2+(2C-6)x-C+4)}{(1-x)^2((1-x)^2+Cx)^2};
    \end{align*}
    the inflection points are at the values of $x$ where $g'(x)=0$. First, we have a zero at $x=0$, in which case $g(0)=0\geq 0$. I claim that $g'$ has no other zero in the interval $[-1,1]$, and hence no other inflection points we must consider. In fact, the only zeros of $g'$ are the zeros of $q(x):=2x^3+Cx^2+(2C-6)x-C+4$. We can show that $q$ only has one real root at some $x\leq -1$ by showing that it is positive at its inflection points, and that its concave inflection point is at $x=-1$. To do so, $q'(x)=6x^2+2Cx+2C-6$, which has roots $-1$ and $1-\frac{C}{3}$. Thus, the values of $q$ at its inflection points are $q(-1)=-2+C-(2C-6)-C+4=8-2C\geq 0$, and $q(1-C/3)=\frac{C}{27}\parens*{C^2-18C+54}\geq 0$ as its smallest nonzero root is $9-3\sqrt{3}>3$ (seeing $C$ as a variable). As such, the only inflection point of $g$ in $[-1,1]$ is at $x=0$, where it is positive, so $g(x)\geq 0$ for all $x\in[-1,1]$, and so $f$ is convex.
\end{proof}

\begin{lemma}[Restatement of Lemma \ref{lemma:increasing-ratio}]
    The function
    \begin{align}
        \lambda\in[-1,1]\mapsto\frac{1-\frac{1}{k}-\frac{k}{\pi^2}\Re\squ*{\mathrm{Li}_2(\lambda)-\mathrm{Li}_2(\lambda e^{i\frac{2\pi}{k}})}}{1-\lambda}
    \end{align}
    is increasing for all $k\geq 3$.
\end{lemma}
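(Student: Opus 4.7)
The plan is to recognise that the three preceding lemmas in the appendix have been set up precisely to combine into this statement. Let $f:[-1,1]\to\R$ be the function
\[
f(\lambda) = \frac{1}{k} + \frac{k}{\pi^2}\Re\squ*{\mathrm{Li}_2(\lambda) - \mathrm{Li}_2(\lambda e^{i\frac{2\pi}{k}})},
\]
so that the ratio in the statement is exactly $\frac{1 - f(\lambda)}{1 - \lambda}$. The proof is then a one-line invocation: by Lemma~\ref{lem:positive-endpoint}, $f(1) = 1 \leq 1$; by Lemma~\ref{lem:convex-function}, $f$ is convex on $[-1,1]$; therefore by Lemma~\ref{lem:convex-condition}, the ratio $\frac{1 - f(\lambda)}{1 - \lambda}$ is increasing in $\lambda$, which is exactly the claim.

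The only thing to double-check before writing this down is that the ratio in Lemma~\ref{lem:convex-condition} is well-defined and the limit at $\lambda = 1$ is handled properly. Since $f$ is smooth with $f(1)=1$, L'Hôpital gives a finite limit at $\lambda = 1$, and the proof of Lemma~\ref{lem:convex-condition} actually shows that the derivative of $\frac{1-f(\lambda)}{1-\lambda}$ is nonnegative on all of $[-1,1)$, so monotonicity extends continuously to $\lambda = 1$. No further work is needed.

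So the proof will essentially be the two-sentence combination above; the real content of this lemma lives in Lemmas~\ref{lem:positive-endpoint} and~\ref{lem:convex-function}, whose proofs I would not redo here. The main obstacle in the overall argument is the convexity statement (Lemma~\ref{lem:convex-function}), which requires carefully showing that a certain auxiliary function $g$ arising from $f''$ is nonnegative by checking its values at the boundary of $[-1,1]$ and at its inflection points, but that is handled by the preceding lemma and is not part of the present proof.
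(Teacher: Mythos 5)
Your proposal matches the paper's proof exactly: the paper also defines $f(\lambda)=\frac{1}{k}+\frac{k}{\pi^2}\Re\squ*{\mathrm{Li}_2(\lambda)-\mathrm{Li}_2(\lambda e^{i\frac{2\pi}{k}})}$ and concludes by combining Lemma~\ref{lem:convex-condition} with Lemmas~\ref{lem:positive-endpoint} and~\ref{lem:convex-function}. Correct, and essentially the same (one-line) argument.
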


\begin{proof}
    Let $f(x)=\frac{1}{k}+\frac{k}{\pi^2}\Re\squ*{\mathrm{Li}_2(x)-\mathrm{Li}_2(x e^{i\frac{2\pi}{k}})}$. Then, using Lemma \ref{lem:convex-condition}, if $f(1)\leq 1$ and $f$ is convex, we get that $\frac{1-f(x)}{1-x}$ is increasing. To get these two properties, we can use Lemmas~\ref{lem:positive-endpoint} and~\ref{lem:convex-function}.
\end{proof}
\section{Vector Relative Distribution}\label{sec:classical-rel-dist}
In this section, we introduce the vector analogue of the (operator) relative distribution introduced in Section~\ref{sec:relative-distribution}. This provides an analogous relative distribution method for analysing approximation algorithms for classical CSPs.

Fix two unit vectors $\vec{a}$ and $\vec{b}$ in some Hilbert space $\C^d$. Consider the distribution $d_{\vec{a},\vec{b}}$ of pairs of unit vectors $(U\vec{a},U\vec{b})$ where $U$ is a Haar random unitary acting on $\C^d$. Indeed $d_{\vec{a},\vec{b}}$ is the uniform distribution over all pairs of unit vectors with the fixed inner product $\ip{\vec{a}}{\vec{b}}$.
\begin{definition}[Vector relative distribution]\label{def:vector-relative-distribution-using-process}
    The relative distribution of $(\vec{a},\vec{b})$, denoted by $\delta_{\vec{a},\vec{b}}$, is the distribution of the random variable $\theta$ in the following process:
\begin{enumerate}
    \item Sample $(\vec{x},\vec{y})$ from $d_{A,B}$.
    \item Sample a complex Gaussian vector $\vec{r}$ and let $(\alpha,\beta) = (\ip{\vec{r}}{\vec{x}},\ip{\vec{r}}{\vec{y}})$.
    \item Let $\theta \in [0,2\pi)$ be the relative angle between $\alpha$ and $\beta$, that is $\theta \coloneqq \measuredangle \alpha^*\beta$.\footnote{There is a simpler process that gives the same distribution for $\theta$:
    Sample a complex Gaussian vector $\vec{r}$ and let $\theta \coloneqq \measuredangle \ip{\vec{r}}{\vec{a}}^*\ip{\vec{r}}{\vec{b}}$.}
\end{enumerate}
\end{definition}

\begin{theorem}[Cauchy law for vectors]\label{thm:vector-relative-dist-pdf}
    Let $\lambda=|\lambda|e^{i\theta_0}=\ip{ \vec{a}}{ \vec{b}}$. If $|\lambda|\neq 1$, the PDF of the relative distribution induced by $\vec{a},\vec{b}$ exists with respect to the Lebesgue measure on $[0,2\pi)$ and is
    \begin{align}
        p_{\rel_{\vec{a},\vec{b}}}(\theta)=\frac{1-|\lambda|^2}{2\pi(1-|\lambda|^2\cos^2(\theta-\theta_0))}\squ*{\frac{|\lambda|\cos(\theta-\theta_0)}{\sqrt{1-|\lambda|^2\cos^2(\theta-\theta_0)}}\arccos(-|\lambda|\cos(\theta-\theta_0))+1}.
    \end{align}
    Else, the relative distribution is the Dirac delta distribution supported at $\theta=\theta_0$.
\end{theorem}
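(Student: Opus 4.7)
The plan is to compute the density explicitly by reducing to a ratio of two independent complex Gaussians and then integrating out the modulus. By the footnote observation, which follows from unitary invariance of the standard complex Gaussian ($U^\ast\vec{r}$ has the same law as $\vec{r}$), the random variable $\theta$ has the same distribution as $\measuredangle(\alpha^\ast\beta)$, where $\alpha = \ip{\vec{r}}{\vec{a}}$ and $\beta = \ip{\vec{r}}{\vec{b}}$ for a standard complex Gaussian $\vec{r}\in\C^d$. Since $(\alpha,\beta)$ is jointly complex Gaussian with $\expect|\alpha|^2 = \expect|\beta|^2 = 1$ and $\expect(\alpha^\ast\beta) = \lambda$, its joint law is equivalently encoded by $\beta = \lambda\alpha + \sqrt{1-|\lambda|^2}\,\gamma$ where $\gamma$ is independent of $\alpha$ and also standard complex Gaussian. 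The degenerate case $|\lambda|=1$ forces $\vec{b}=e^{i\theta_0}\vec{a}$, whence $\alpha^\ast\beta = e^{i\theta_0}|\alpha|^2$ has deterministic argument $\theta_0$, immediately giving the Dirac delta.

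Next, I would write $\theta = \arg W$ where $W := \beta/\alpha = \lambda + \sqrt{1-|\lambda|^2}\,Z$ and $Z = \gamma/\alpha$ is the ratio of two independent standard complex Gaussians. A direct calculation---expressing $|Z|^2$ as the ratio of two independent $\mathrm{Exp}(1/2)$ random variables to obtain $\Pr[|Z|^2 \le t] = t/(t+1)$, and using circular symmetry to decouple phase and modulus---yields the standard complex Cauchy density $p_Z(z) = \tfrac{1}{\pi}(1+|z|^2)^{-2}$. An affine change of variables then gives
\begin{equation*}
p_W(w) \;=\; \frac{1-|\lambda|^2}{\pi\bigl(1-|\lambda|^2 + |w-\lambda|^2\bigr)^2}.
\end{equation*}

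Finally, I would extract the marginal of $\arg W$ by integrating radially: $p_{\arg W}(\theta) = \int_0^\infty r\, p_W(re^{i\theta})\,dr$. Setting $\mu = |\lambda|\cos(\theta-\theta_0)$ and $c = \sqrt{1-\mu^2}$, and expanding $|re^{i\theta}-\lambda|^2 = r^2 - 2r\mu + |\lambda|^2$, the denominator becomes $\bigl((r-\mu)^2 + c^2\bigr)^2$ after using $1-|\lambda|^2 + |\lambda|^2 = 1$. The substitution $u = r-\mu$ splits the integral into two elementary pieces: $\int_{-\mu}^{\infty} u\,(u^2+c^2)^{-2}\,du = \tfrac{1}{2}$ (using $\mu^2+c^2 = 1$), and $\mu\int_{-\mu}^{\infty}(u^2+c^2)^{-2}\,du$, evaluated via the standard antiderivative $\tfrac{u}{2c^2(u^2+c^2)} + \tfrac{1}{2c^3}\arctan(u/c)$. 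Collecting terms and invoking the identity $\arctan(\mu/c) + \tfrac{\pi}{2} = \arccos(-\mu)$ produces the claimed closed form. The only real obstacle is the trigonometric bookkeeping in this final step: the integral itself is routine, but one must carefully rewrite the arctangent as $\arccos(-\mu)$ and simplify the remaining constant $\tfrac{1}{2} + \tfrac{\mu^2}{2c^2} = \tfrac{1}{2c^2}$ to match the stated expression.
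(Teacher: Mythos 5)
Your proposal is correct and follows essentially the same route as the paper's proof: decompose $\vec{b}=\lambda\vec{a}+\sqrt{1-|\lambda|^2}\,\vec{a}_\perp$, reduce to the ratio of two independent standard complex Gaussians with density $\frac{1}{\pi(1+|z|^2)^2}$, shift affinely, and integrate out the radius to obtain the angular density, with the degenerate $|\lambda|=1$ case giving the Dirac delta. Your radial integral and the $\arctan(\mu/c)+\tfrac{\pi}{2}=\arccos(-\mu)$ bookkeeping check out and match the paper's antiderivative computation, so there is nothing to fix.
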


\begin{proof}
    First, note that there exists a unit vector $\vec{a}_\perp\in\C^d$ orthogonal to $\vec{a}$ such that
    \begin{align*}
        \vec{b}=\lambda \vec{a}+\sqrt{1-|\lambda|^2}\vec{a}_\perp.
    \end{align*}
    Extending these vectors to an orthonormal basis, we see that $a=\ip{\vec{r}}{\vec{a}}$ and $b=\ip{\vec{r}}{\vec{b}}$ depend only on the first two components of $\vec{r}$: $r_1=\ip{\vec{r}}{\vec{a}}$ and $r_2=\ip{\vec{r}}{\vec{a}_\perp}$, which are disributed as independent standard complex normals. With this notation, $a=r_1$ and $b=\lambda r_1+\sqrt{1-|\lambda|^2}r_2$. As such, we get that the random variable
    \begin{align*}
        \measuredangle\left(a^\ast b\right)&=\measuredangle\left(|r_1|^2\lambda+r_1^\ast r_2\sqrt{1-|\lambda|^2}\right)\\
        &=\measuredangle\left(\lambda+\frac{r_2}{r_1}\sqrt{1-|\lambda|^2}\right).
    \end{align*}
    Thus, to find the relative distribution where $|\lambda|<1$, we first derive the distribution of $\lambda+\frac{r_2}{r_1}\sqrt{1-|\lambda|^2}$ on $\C$. As noted, the joint PDF $p_{(r_1,r_2)}(z_1,z_2)=\frac{1}{(2\pi)^2}e^{-\frac{|z_1|^2+|z_2|^2}{2}}$, so the PDF of quotient distribution is
    \begin{align*}
        p_{\frac{r_2}{r_1}}(u)=\int|z|^2p_{(r_1,r_2)}(z,zu)dz=\frac{1}{\pi(1+|u|^2)^2}.
    \end{align*}
    Effecting the affine transformation,
    \begin{align*}
        p_{\lambda+\frac{r_2}{r_1}\sqrt{1-|\lambda|^2}}(u)=\frac{1-|\lambda|^2}{\pi\parens*{1-|\lambda|^2+|u-\lambda|^2}^2}=\frac{1-|\lambda|^2}{\pi\parens*{1+|u|^2-2\Re(\lambda^\ast u)}^2}.
    \end{align*}
    Expanding in polar coordinates $u=re^{i\theta}$, we have that the distribution of the argument is
    \begin{align*}
        p_{\rel_{\vec{a},\vec{b}}}(\theta)=\int_0^\infty p_{\lambda+\frac{r_2}{r_1}\sqrt{1-|\lambda|^2}}(re^{i\theta})rdr=\frac{1-|\lambda|^2}{\pi}\int_0^\infty \frac{r}{\parens*{1+r^2-2|\lambda|\cos(\theta-\theta_\lambda)r}^2}dr.
    \end{align*}
    Note that an antiderivative of $\frac{r}{(r^2-2\beta r+1)^2}$ is $\frac{\beta}{2(1-\beta^2)^{3/2}}\arctan\parens*{\frac{r-\beta}{\sqrt{1-\beta^2}}}+\frac{\beta r-1}{2(1-\beta^2)(r^2-2\beta r+1)}$ for any $\beta\in(-1,1)$. Computing and simplyfing, one finds
    \begin{align*}
        \int_0^\infty \frac{r}{\parens*{1+r^2-2\beta r}^2}dr=\frac{1}{2(1-\beta^2)}\squ*{\frac{\beta}{\sqrt{1-\beta^2}}\arccos(-\beta)+1}.
    \end{align*}
    Replacing $\beta$ with $|\lambda|\cos(\theta-\theta_\lambda)$ and reinstating the normalisation constant $\frac{1-|\lambda|^2}{\pi}$ gives the wanted result.

    On the other hand, if $|\lambda|=1$, we have immediately that $\measuredangle(a^\ast b)=\measuredangle(\lambda)=\theta_0$, so $\delta_{\vec{a},\vec{b}}=\delta_{\theta_0}$.
\end{proof}

Similarly to the operator relative distribution, the vector relative distribution can be used in the integral formula of Section~\ref{sec:fidelity-integral-formula} to analyse the approximation algorithms for classical CSPs.

\bibliographystyle{unsrt}
\bibliography{main}

\end{document}